\newcommand\norm[1]{||#1||}
\newcommand\dist{\Delta}
\newcommand\asy{\,\overset{\infty}{=}\,}
\newcommand\gplus{\,\oplus\,}
\newcommand\gpconf{\,\overline\oplus\,}
\newcommand\oconf{\,\overline0\,}
\newcommand{\N}{\mathbb{N}}
\newcommand{\Z}{\mathbb{Z}}
\newcommand{\T}{\mathbb{T}}
\newcommand{\G}{\mathbb{G}}
\newcommand{\R}{\mathbb{R}}
\newcommand{\F}{\mathbb{F}}
\newtheorem{defi}{Definition}[section]
\newtheorem{theo}[defi]{Theorem}
\newtheorem{prop}[defi]{Proposition}
\newtheorem{lemm}[defi]{Lemma}
\newtheorem{coro}[defi]{Corollary}
\newtheorem{exam}[defi]{Example}
\newtheorem{rem}[defi]{Remark}
\newproof{pf}{Proof}
\begin{document}

\begin{frontmatter}
\title{Pre-Expansivity in Cellular Automata}

\author[conce,cmm]{A. Gajardo\fnref{fcyt,basal}}
\ead{angajardo@udec.cl}

\author[lyc]{V. Nesme}

\author[cnrs,cmm]{G. Theyssier\corref{cor}\fnref{basal}}
\ead{guillaume.theyssier@cnrs.fr}

\cortext[cor]{Corresponding author}
\fntext[fcyt]{This author acknowledges the support of CONICYT-FONDECYT\#1140684 and ENLACE project of Universidad de Concepción}
\fntext[basal]{This author acknowledges the support of CONICYT-Basal PFB03.}

\address[conce]{Departamento de Ingenier\'ia Matem\'atica and Centro de Investigaci\'on en Ingenier\'ia Matem\'atica (CI$^2$MA)\\ Universidad de Concepci\'on, Chile}
\address[cmm]{Centro de Modelamiento Matem\'atico (CMM), Universidad de Chile, Chile}
\address[lyc]{Lycée Georges Brassens, France}
\address[cnrs]{Aix Marseille Université, CNRS, Centrale Marseille, I2M, Marseille, France}

\begin{abstract}
 We introduce the notion of pre-expansivity for cellular automata (CA):
  it is the property of being positively expansive on asymptotic pairs of
  configurations (i.e. configurations that differ in only finitely many
  positions). Pre-expansivity therefore lies between positive expansivity and
  pre-injectivity, two important notions of CA theory.

  We show that there exist one-dimensional pre-expansive CAs which are
  not positively expansive and they can be chosen reversible (while
  positive expansivity is impossible for reversible CAs). We provide
  both linear and non-linear examples. In the one-dimensional setting,
  we also show that pre-expansivity implies sensitivity to initial
  conditions in any direction. We show however that no two-dimensional
  Abelian CA can be pre-expansive. We also consider the finer notion
  of $k$-expansivity (positive expansivity over pairs of
  configurations with exactly $k$ differences) and show examples of
  linear CA in dimension 2 and on the free group that are
  $k$-expansive depending on the value of $k$, whereas no (positively)
  expansive CA exists in this setting.
\end{abstract}

\begin{keyword}
cellular automata\sep linear cellular automata\sep 2-dimensional cellular automata\sep expansivity  \sep chaos \sep directional dynamics
\MSC[2010] 68Q80\sep 37B15
\end{keyword}

\end{frontmatter}


\section{Introduction}

The model of cellular automata is at the crossroads of several
domains and is often the source of surprisingly complex objects in
several senses (computationally, dynamically, etc). 

From the point of view of dynamical systems and symbolic dynamics, the
theory of cellular automata is very rich \cite{hedlund,kurkabook,ergca,cagroups} and tells us, on the one hand, that CA
are natural examples of chaotic systems that can perfectly fit the
standard notions developed in a general context, and, on the other
hand, that they have special properties allowing and justifying the
development of a refined and dedicated theory.
For instance, the
structure of the space of configurations allows to define the notion of
an asymptotic pair of configurations: two configurations that differ only 
on finitely many positions of the lattice.
The Garden of Eden theorem,
which has a long history \cite{hedlund,moore,myhill,AIF_1999,Bartholdi_2010,Bartholdi16,machi,cagroups} and is emblematic of this
CA specific theoretical development, then says that surjectivity is
equivalent to pre-injectivity (injectivity on asymptotic pairs) if and
only if the lattice is given by an amenable group.

Two important lines of questioning have been particularly developed
and provide some of the major open problems of the field \cite{BoyleOpen}:
\begin{itemize}
\item surjective CA and their dynamics;
\item how does CA theory changes when changing the lattice.
\end{itemize}
In particular, the classical notion of (positive) expansivity has been
applied to CA giving both a rich theory in the one-dimensional case
\cite{blanchardmaass,kurkabook,nasu} and a general inexistence result
in essentially any other setting \cite{shereshevsky,Pivato11}.
Even
in the one-dimensional case where positive expansivity is equivalent
to being conjugated to a one-sided subshift of finite type \cite{Kurka97}, it is
interesting to note that outside the linear and bi-permutative
examples, few construction
techniques are known to produce positively expansive CA
\cite{JadurYazlle}.
On the other hand, it is still unknown whether
positive expansivity is a decidable property, although it is indeed
decidable for some algebraic cellular automata \cite{dilena06,phdLukkarila}.

In this paper, we introduce a new dynamical property called \emph{pre-expansivity}
that both generalizes positive expansivity and refines pre-injectivity: it is
the property of being positively expansive on asymptotic pairs.
Our motivation
is to better understand surjective CA and expansive-like dynamics, in
particular in the higher-dimensional case or in lattices where the classical
notion of positive expansivity cannot be satisfied by any CA \cite{shereshevsky,Pivato11}.
Pre-expansivity is weaker than positive expansivity. 
We show examples of pre-expansive CAs which are not positively expansive.
In such CA, some perturbations on infinitely many cells does not propagate at all, while every finite perturbation will be eventually seen in the neighborhood of every cell: it is the finiteness of the perturbation that allows the propagation on every direction.

Pre-expansivity is interesting in that: 
\begin{enumerate}
\item a reversible CA can be pre-expansive (see section~\ref{sec:1D}), while none can be positively expansive~\cite{posexponetoone};
\item pre-expansivity implies sensitivity in all directions (see Proposition~\ref{prop:directional}), while some expansive CA (like the shift map) have equicontinuous directions.
\end{enumerate}
This shows that the notion is useful in the classical setting of one-dimensional cellular automata.

For other settings, the situation is left open: on one hand, we show an impossibility result for Abelian CA in dimension ${d\geq 2}$ (see Theorem~\ref{thm:abelian2D}). 
This means that for every Abelian CA and every finite window, there will be a finite configuration that will preserve the window in state 0 forever.
On the other hand, we give several examples of $k$-expansive CA in this setting and on the free group, where $k$-expansivity means positive expansivity over pairs of configurations with exactly $k$ differences.

The paper is organized as follows.
In Section~\ref{sec:def} we give
the main definitions and results we need to work on cellular automata
on groups.
In section~\ref{sec:abelian}, we focus on Abelian cellular automata and
develop a toolbox for this class that is used later in different
sections. As an aside, we prove that such CA are always predictable in logarithmic space complexity.
In Section~\ref{sec:prexp}, we introduce pre-expansivity and
$k$-expansivity, and we give some preliminary results which do not
depend on the group defining the space.
In Section~\ref{sec:1D},
we restrict to the group $\Z$ and give various examples of cellular automata
which are pre-expansive but not positively expansive, including a characterization for a particular family of non-linear CA, namely multiplication CA. 
In Section~\ref{sec:freegroup}, we consider the free group and show that
$k$-expansivity is possible for infinitely many values of $k$ although
positive expansivity is impossible.
Finally, in Section~\ref{sec:2D},
we restrict to the group $\Z^2$ and we study some $k$-expansive
examples for particular values of $k$ but also show that there is no
pre-expansive Abelian cellular automaton. 

\section{Formal Setting and Classical Definitions}
\label{sec:def}

We will work on cellular automata defined over a \emph{finitely generated group} $\G$.
 We will consider Abelian and non-Abelian groups, but since most of our examples are given for Abelian groups, we will prefer the additive notation for $\G$.

Fixing a generator set $G$, that is closed under inversion, a \emph{norm} can be defined in $\G$: given $z\in \G$, $\norm{z}$ is the length of the shortest sequence $g_1g_2...g_n$ of elements in $G$ such that $z=g_1+g_2+...+g_n$.
This norm induces a metric in $\G$ naturally, and given a non-negative integer $r$ we can define also the ball of radius $r$ and center $z$ as the set $B_r(z)=\{ x\in\G\ |\ \norm{-z+x}\le r\}$.
Given a point $z\in\G$ and two sets $X$, $Y\subseteq \G$, we accept the following notation.
$$ z+S=\{z+x\ |\  x\in S\},\quad \textrm{ and } \quad X+Y=\{x+y\ |\ x\in X, y\in Y\} $$

Cellular automata are functions defined on the \emph{symbolic space} $Q^\G=\{c:\G\rightarrow Q\ |\ c\textrm{ is a function}\}$.
An element $c$, called \emph{configuration}, assigns a symbol of $Q$ to each element of the group $\G$, sometimes called cells.
We will use both $c(z)$ and $c_z$ to denote the value of $c$ at the cell $z$.
A natural $\G$-action on $Q^\G$ is the \emph{shift}: given $z\in\G$, the function: $\sigma_z:Q^\G\rightarrow Q^\G$ is defined by $\sigma_z(c)(x)=c(z+x)$ for every $x\in\G$.
The \emph{Cantor distance} in $Q^\G$ is defined for any two configurations $c$, $d$ as follows.

$$\dist(c,d)=\begin{cases}
2^{-\min\{\norm{z}:c(z)\not=d(z)\}} &\textrm{ if }c\not = d\\
0 & \textrm{ if } c=d
\end{cases}$$

\begin{defi}
  Two configurations $c,d$ are \emph{asymptotic}, denoted ${c\asy d}$, if they differ only
  in finitely many positions: ${\{z\in\G: c(z)\not=d(z)\}}$ is finite.
\end{defi}

A cellular automaton (CA) is an endomorphism of $Q^\G$, compatible with the shift $\G$-action and continuous for the Cantor distance. 
From Curtis-Hedlund theorem~\cite{hedlund,cagroups}, every
cellular automaton $F$ is characterized by a \emph{local function}
$f:Q^V\rightarrow Q$, where $V\subset \G$ is finite and called \emph{neighborhood of $F$}, as follows.

$$ \forall c\in Q^\G, \forall z\in \G, F(c)(z)=f(\sigma_{z}(c)|_{V})$$ 

Every function defined in this way is a cellular automaton.

Basic properties of $F$ such as surjectivity and injectivity have been considered and played an important role in CA theory because they were proved to be efficiently decidable in dimension 1 but undecidable in higher dimensions \cite{Kari94,Amoroso72}.
The weaker notion of \emph{pre-injectivity} says that, for every
pair of different asymptotic configurations $c$ and $d$, their image by $F$
are different: 
\[c\asy d\text{ and }c\not=d\Rightarrow F(c)\not=F(d).\]
The so-called Garden-of-Eden theorem establishes that surjectivity is
equivalent to pre-injectivity, which in particular implies that
injective CAs are also bijective (equivalently reversible by
Curtis-Hedlund Theorem, i.e. having an inverse which is also a CA). It
was first proved in particular cases \cite{hedlund,moore,myhill} and
later it was shown that it holds exactly when the group $\G$ is
amenable, \textit{i.e.} when it admits a finitely additive measure
which is invariant under its action \cite{cagroups}.

The pair $(Q^\G,F)$ is a dynamical system and can be studied from the
point of view of topological dynamics.
The present work proposes a new particular kind of topological unpredictability.
Weaker and stronger notions in this area are the following.

A CA $F$ is \emph{sensitive} if there exists a number $\delta>0$, called \emph{sensitivity constant}, such that for every $c$ and every $\epsilon$ there exists an instant $t\in\N$ and a configuration $d\in B_\epsilon(c)$ such that $\dist(F^t(c),F^t(d))\ge\delta$.

A stronger notion is expansivity, which can be of two kinds, and depends on whether the CA is reversible or not.
Given $\T$ be equal to either $\N$ or $\Z$, a CA $F$ is called \emph{$\T$-expansive} if there exists a number $\delta>0$, called \emph{expansivity constant}, such that for every $c\not=d$ there exists an instant $t\in\T$ such that $\dist(F^t(c),F^t(d))\ge\delta$.
In this work we will almost always consider positive times only (\textit{i.e.} $\T=\N$) and refer to $\N$-expansivity as positive expansivity. Accordingly we will stick to positive times in the definition of pre-expansivity below. This choice is particularly relevant for one-dimensional reversible CA, since they all possess a direction of $\Z$-expansivity\footnote{It can be checked that any direction greater than both the radius of the CA and its inverse is a direction of $\Z$-expansivity, see \cite[Proposition 5.3]{Sablik08} for more details.}, none has a direction of $\N$-expansivity, but having directions of pre-expansivity is a non-trivial property illustrated for instance by Proposition~\ref{prop:directional}.

Denote by $T_m : Q^\G\rightarrow (Q^{B_m(0)})^\N$ the \emph{trace
  function} which to any configuration associates its orbit
restricted to $B_m(0)$:
\[T_m(c) = \left(t\mapsto \bigl(F^t(c)\bigr)|_{B_m}\right).\]

A CA $F$ is positively expansive if and only if $T_m$ is injective for some $m$, in which case we get a conjugacy between $F$ acting on $Q^\G$ and the one-sided shift acting on ${T_m\bigl(Q^\G\bigr)}$ \cite{kurkabook}.

If $\G=\Z$, given a one-dimensional CA with local rule
${f:Q^{[-l,r]}\rightarrow Q}$ with ${l, r>0}$, we
say that it is \emph{LR-permutive} when for any ${q_{-l},\ldots,q_r\in Q}$
the two following maps are bijective.
\begin{align*}
  a&\mapsto f(a,q_{-l+1},\ldots,q_r)\\
  a&\mapsto f(q_{-l},\ldots,q_{r-1},a)\\
\end{align*}

LR-permutive are always positively expansive.

\section{Abelian CA: Definitions and Toolbox}
\label{sec:abelian}

The class of Abelian CA will be an important source of examples in the sequel.
This section establishes a number of properties used later on for both positive and negative results.
These properties are essentially folklore knowledge or extensions of already published results, mainly in~\cite{fractal}.
We however give detailed proofs below because our setting is more general than the usual one.
In particular, as far as we know, Corollary~\ref{cor:logspace} was never written in this level of generality, and Lemma~\ref{lem:prefijotraza} is new.
The reader can skip this section in a first read: the following definition is used everywhere, but the main results established below are only used in section~\ref{sec:2D}. 

\begin{defi}
  Let $(Q,\gplus)$ be a finite group and denote by $\gpconf$
  the component-wise extension of $\gplus$ to $Q^\G$ and by $\oconf$ the configuration identically equal to 0. 
  A CA $F$ over $Q^\G$ is \emph{linear} if 
  \[\forall c,d\in Q^\G: F(c\gpconf d) = F(c)\gpconf F(d).\]
  When ${(Q,\gplus)}$ is an Abelian group we say that $F$ is \emph{Abelian}, which is equivalent to the fact that $F$ verifies an equation of the form: 
\[F(c)_z = \sum_{i\in V} h_i(c_{z+i})\]
for any configuration $c$, where the sum corresponds to the $\gplus$ law and where $V$ is the neighborhood of $F$ and $h_i$ are endomorphisms of ${(Q,\gplus)}$.
\end{defi}

Given $a\in Q$, we denote by $c^a$ the configuration that is equal to $e$ (identity of the group $Q$) everywhere except at cell $0$ where its value is $a$. Any space time diagram of a linear CA is a sum of translated copies of space-time diagrams with initial configuration of the form $c^a$.

The case where ${(Q,\gplus)}$ is a cyclic group has received much more attention in the literature than the general Abelian case. We would like to stress the importance of considering the general case. First, it was already established that some dynamical behaviors related to randomization are possible in the general case but not in the cyclic case \cite[Thms 3 and 4]{abelianrandomization}. Second, we will show in section~\ref{sec:1D} below (Proposition~\ref{prop:cyclic1D} and Theorem~\ref{prop:prexpexample}) that pre-expansivity is equivalent to positive expansivity in the cyclic case while there are reversible (therefore not positively expansive) pre-expansive CA in the general Abelian case.

The following lemma shows that Abelian CA can be
decomposed according to the structure of the group. It is a folklore
knowledge that appears often in the particular case of cyclic groups
\cite{itoosatonasu,Manzinimargara}, and also in the more general
Abelian case \cite{fractal}.

Recall that the product ${F\times G}$ of two CA $F$ and $G$ is the CA
defined on the product alphabet and applying $F$ and $G$ on each
component independently.

\begin{lemm}
  \label{lem:groupdecomp}
  Let ${Q=Q_p\times Q'}$ be an Abelian group (law $+$ and neutral
  element $(0,0)$) where $Q_p$ is a $p$-group (the order of every element is a
  power of $p$) for some prime $p$ and the order of $Q'$ is relatively
  prime with $p$.

  Then, any Abelian CA $F$ over $Q$ is isomorphic to ${F_p\times F'}$
  where $F_p$ is an Abelian CA over $Q_p$ and $F'$ is an Abelian CA over $Q'$.
\end{lemm}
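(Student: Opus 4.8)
The group decomposition $Q=Q_p\times Q'$ is part of the hypothesis, so the entire content of the lemma is to show that the linear CA $F$ respects it. The plan is to first observe that the two factors are \emph{canonical} subgroups of $Q$ (hence invariant under any endomorphism of $Q$), and then to lift this invariance from the alphabet to the configuration space using only the fact that $F$ is a group endomorphism of $(Q^\G,\gpconf)$. Throughout, for an integer $n$ and $c\in Q^\G$ I write $n\cdot c=c\gpconf\cdots\gpconf c$ ($n$ terms), and similarly $n\cdot x$ in $(Q,\gplus)$.

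First I would record the algebraic facts about the alphabet. Since $|Q_p|$ is a power of $p$ while $|Q'|$ is coprime to $p$, the subgroup $Q_p$ is exactly the set of elements of $Q$ whose order is a power of $p$, and $Q'$ is exactly the set of elements whose order is coprime to $p$. These characterizations make no reference to the chosen product, so both subgroups are preserved by every group endomorphism $\phi$ of $Q$: if $x$ has order $p^b$ then $\phi(x)$ has order dividing $p^b$, hence still a power of $p$, so $\phi(x)\in Q_p$, and symmetrically $\phi(Q')\subseteq Q'$. Setting $p^a=|Q_p|$ and $m=|Q'|$, Lagrange's theorem gives $p^a\cdot x=0$ for all $x\in Q_p$ and $m\cdot y=0$ for all $y\in Q'$; conversely $Q_p=\{x\in Q: p^a\cdot x=0\}$ and $Q'=\{y\in Q: m\cdot y=0\}$, using $\gcd(p^a,m)=1$.

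Next I would lift this to $Q^\G$. The letter-wise isomorphism $Q\cong Q_p\times Q'$ induces a shift-commuting group isomorphism and homeomorphism $\Phi:Q^\G\to Q_p^\G\times (Q')^\G$, i.e.\ a radius-$0$ CA relabeling, which will serve as the conjugacy. Under $\Phi$ the subgroups $Q_p^\G$ and $(Q')^\G$ (configurations valued in $Q_p$, resp.\ in $Q'$) correspond to the two factors, and by the previous paragraph they are characterized inside $Q^\G$ by $c\in Q_p^\G\iff p^a\cdot c=\oconf$ and $c\in (Q')^\G\iff m\cdot c=\oconf$. The crucial step is now immediate: since $F$ is linear it is a $\gpconf$-homomorphism with $F(\oconf)=\oconf$, so it commutes with $n\cdot(-)$; hence $p^a\cdot c=\oconf$ forces $p^a\cdot F(c)=F(p^a\cdot c)=\oconf$, giving $F(Q_p^\G)\subseteq Q_p^\G$, and symmetrically $F((Q')^\G)\subseteq (Q')^\G$. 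Each restriction defines a CA ($F_p$ over $Q_p$ and $F'$ over $Q'$, with the same neighborhood $V$ as $F$ and local rule restricted to the subalphabet, which is well defined into $Q_p$ resp.\ $Q'$ precisely by the invariance just proved), and both are linear because $F$ is. Finally, additivity applied to the cell-wise splitting $c=c_p\gpconf c'$ (with $c_p\in Q_p^\G$ and $c'\in (Q')^\G$) yields $F(c)=F(c_p)\gpconf F(c')$ with the summands in the respective invariant subspaces; read through $\Phi$, this is exactly $F\cong F_p\times F'$. The only genuine obstacle is the invariance step, and the point to get right is that it rests solely on $F$ preserving $\gpconf$-torsion orders and reduces to the finite order computation in $Q$ guaranteeing that $Q_p$ and $Q'$ are characteristic; no finiteness of the infinite group $Q^\G$ is invoked, and the remaining verifications (that $\Phi$ is a conjugacy and that the restrictions are CA) are routine consequences of Curtis--Hedlund.
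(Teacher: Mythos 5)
Your proof is correct and follows essentially the same route as the paper's: both characterize the factors $Q_p\times\{0\}$ and $\{0\}\times Q'$ by torsion conditions, use linearity of $F$ to preserve those conditions (hence obtain invariant subautomata), and conclude by the cellwise decomposition $c=c_p\gpconf c'$ together with additivity. Your write-up is in fact slightly more careful than the paper's on two minor points — using the uniform exponent $p^a=|Q_p|$ rather than a per-configuration $k$, and verifying that the restricted local rules are well defined into the subalphabets — but these are refinements of the same argument, not a different one.
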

\begin{proof}
  By linearity of $F$, if $c$ satisfies that ${n\cdot c
    = \overbrace{c + \cdots + c}^n = \overline{(0,0)}}$, then $F(c)$ must
  satisfy the same: ${n\cdot F(c)= \overline{(0,0)}}$. We deduce that the
  subset of states ${Q_1=Q_p\times\{0_{Q'}\}}$ induces a subautomaton
  $F_1$ of $F$ because any configuration $c\in Q_1^{\G}$ is such
  that ${p^k\cdot c=\overline{(0,0)}}$ for some $k$ and no configuration
  in $\bigl(Q\setminus Q_1\bigr)^{\G}$ has this property. Moreover
  if ${n\cdot c=\overline{(0,0)}}$ for $n$ relatively prime with $p$, it
  implies that ${c\in Q_2^\G=\bigl(\{0_{Q_p}\}\times Q'\bigr)^\G}$
  (because the order of an element must divide the order of the group
  it belongs to). Therefore $Q_2$ induces a subautomaton $F_2$ of $F$.

  Now, any ${c\in Q^\G}$ can be written ${c=c_1+c_2}$ where
  ${c_1\in Q_1^\G}$ and ${c_2\in Q_2^\G}$ through cellwise and
  componentwise decomposition, and ${F(c) = F_1(c_1) + F_2(c_2)}$. 
  $F_1$ is isomorphic to a linear CA $F_p$ over
  $Q_p$ and $F_2$ to a linear CA $F'$ over $Q'$, and then $F$ is
  isomorphic to ${F_p\times F'}$.
\end{proof}

Given an Abelian CA $F$, spot configurations (\textit{i.e.}
configurations $c$ everywhere $0$ except in one cell) form a basis of the
whole set of configurations and we get the orbit of any configuration
by summing the orbits of corresponding spot configurations.
The main
point of this section is the existence of a substitutive structure
describing the space time diagram of spot configurations in an Abelian
CA over $p$-groups when the underlying spatial structure is $\Z^d$. 
This in turn comes from the fact that such CA
verify \emph{multiscale additive identities}. 
Intuitively, a multiscale additive identity is the generalization to the Abelian setting of a linear dependency present in any space-time diagram between a finite set of cells whose relative positions correspond to a basic shape or a blowup of it of factor $\alpha^p$ for some given $\alpha$.

These facts were established in~\cite{fractal} in the one-dimensional
setting.
We give below a proof in any dimension $d$ using essentially the
same approach.
As it is usual in $\Z^d$, we define for $z\in\Z^d$, $\|z\|_\infty=\max\{|z_i|\ : i\in\{1,..,n\}$.

\begin{defi}
  \label{def:multiscaleidentity}
  Let $F$ be a $d$-dimensional Abelian CA over ${(Q,\oplus)}$. We say that $F$ has a \emph{multi-scale additive identity} if there is some scale factor ${\alpha\geq 2}$, ${M>0}$, a finite set ${X=\{(\vec{z_1},t_1),\ldots,(\vec{z_k},t_k)\}}$ with ${0\leq t_i<M}$ for all ${1\leq i\leq k}$ and endomorphisms ${(h_i:Q\rightarrow Q)_{1\leq i\leq k}}$ such that for any ${n\in\N}$ and any configuration $c$ it holds: 
  \[F^{M\alpha^{n}}(c) = \sum_{1\leq i\leq k} \overline{h_i}\circ F^{t_i\alpha^n}\circ \sigma_{\alpha^n\vec{z_i}}(c)\]
  where the sum correspond to law $\overline{\oplus}$ over configurations and $\overline{h_i}$ is the componentwise extension of $h_i$ to $Q^\G$.
\end{defi}

\begin{exam}
  Consider the CA $F:\Z_2^\Z\rightarrow \Z_2^\Z$, defined by ${F(c) = c \overline{+} \sigma(c)}$.
  It is straightforward to check that 
\[F^{2^{n}}(c) = c \overline{+} \sigma_{2^n}(c)\]
for all $n$.
In this case $k=2$, $M=1$, $X=\{(0,0),(1,0)\}$, $\alpha=2$ and $h_1=h_2=id$.
\end{exam}

Similar multi-scale additive identities where no $F$ appears on the right hand side can be derived using the binomial formula as soon as the Abelian CA
\[F(c)_z = \sum_{i\in V} h_i(c_{z+i})\] is such that the $h_i$ are commuting endomorphisms. The situation is a bit more complex in the general case when the $h_i$ do not commute. However,  we have the following lemma.

\newcommand\vecu[1]{u^{\vec{#1}}}
\begin{lemm}
  \label{lem:multiscaleidentities}
  Any $d$-dimensional Abelian CA is a Cartesian product of Abelian CA admitting multi-scale additive identities.
\end{lemm}
\begin{proof}
  First, it is sufficient to prove that Abelian CA over $p$-groups ($p$ prime) admit multi-scale additive identities since any Abelian CA is a Cartesian product of such CA (Lemma~\ref{lem:groupdecomp}).
  Second, it is sufficient to consider $Q$ a $p$-group of the form
  ${Q=\Z_{p^l}^D}$ because any Abelian CA on a $p$-group is a
  subautomaton of some Abelian CA on a group of this form (Proposition
  1 of \cite{fractal}), and a multi-scale additive identity in some CA holds in any of its subautomata.
  Then, an Abelian CA of dimension $d$ over the group ${Q=\Z_{p^l}^D}$ can be viewed as a ${D\times D}$
  matrix whose coefficients are Laurent polynomials with $d$ variables
  $u_1,\ldots, u_d$ and coefficients in $\Z_{p^l}$ (see for instance~\cite{kari00} for more details on this representation).

  Formally,  we denote by ${\Z_{p^l}[u_i,u_i^{-1}]_{1\leq i\leq d}}$ the ring of
  Laurent polynomials with variables $u_1,\ldots,u_d$, \textit{i.e.}
  the ring of linear combinations of monomials made with positive or negative
  powers of the variables and coefficients in ${\Z_{p^l}}$. A monomial corresponds to a vector of
  ${\Z^d}$, hence we use the notation $\vecu{i}$ for any
  ${\vec{i}=(i_1,\ldots,i_d)\in\Z^d}$ to denote the monomial
  ${u_1^{i_1}\cdots u_d^{i_d}}$. A linear cellular automaton is identified with some
  ${T\in\mathcal{M}_D\bigl(\Z_{p^l}[u_i,u_i^{-1}]_{1\leq i\leq d}\bigr)}$
  where the coefficient ${a_{\vec{z}}\in\Z_{p^l}}$ of the monomial $\vecu{z}$ of
  the coefficient $T_{i,j}$ of the matrix $T$ means that, when
  applying the CA, the layer
  $j$ of cell $\vec{z_0}$ receives $a_{\vec{z}}$ times the content of
  the layer $i$ of cell $\vec{z}+\vec{z_0}$, and all these individual
  contributions are summed-up. 
  This matrix representation is correct in the sense that $F^n$ is represented by $T^n$.

  By the Cayley-Hamilton theorem (Laurent polynomials form a
  commutative ring), the characteristic polynomial of $T$ gives a
  relation of the form:
  \[T^m = \sum_{j=0}^{m-1}\sum_{\vec{i}\in
    I}\lambda_{\vec{i},j}\vecu{i}T^j\]
  for some $m\le|D|$, some finite ${I\subseteq\Z^d}$, and where
  $\lambda_{\vec{i},j}\in \Z_{p^l}$. 

  By standard techniques (binomial theorem and Kummer's theorem), we
  have the following identity on any commutative ring of
  characteristic $p^l$ (this is done explicitly in Lemma~10 of \cite{abelianrandomization}):
  \[\left(\sum_i X_i\right)^{p^{n+l-1}} = \left(\sum_i
    X_i^{p^n}\right)^{p^{l-1}}\]
  for any positive $n$. Then, applying this to the expression of
  $T^m$ obtained above we get:
  \[T^{m\cdot p^{n+l-1}} =
  \left(\sum_{j=0}^{m-1}\sum_{\vec{i}\in
        I}\lambda_{\vec{i},j}^{p^n}u^{p^n\cdot\vec{i}}T^{p^{n}\cdot
      j}\right)^{p^{l-1}}.\]
  Noting that the sequence ${(\lambda_{\vec{i},j}^{p^n})_n}$ is
  ultimately periodic, we can choose a large enough common period $N$ so that for all $\vec{i},j$ and all $n\ge1$: 
  \[\lambda_{\vec{i},j}^{p^{nN}} = \lambda_{\vec{i},j}^{p^{N}}.\]
  Denoting ${\alpha=p^N}$ and expanding the right-hand side of the above equality, we have for some $M$ large enough, some finite ${I'\subseteq\Z^d}$, and ${\mu_{\vec{i},j}\in \Z_{p^l}}$ (${0\leq j<M}$ and ${\vec{i}\in I'}$):
  \[T^{M\alpha^n} = \sum_{j=0}^{M-1}\sum_{\vec{i}\in
    I'}\mu_{\vec{i},j}u^{\alpha^n\vec{i}}T^{j\alpha^n}\]
  for any ${n\geq 1}$.
  This is exactly a multi-scale additive identity of scale $\alpha$ expressed in the matrix representation of $F$ and the lemma follows.
\end{proof}

The following corollary shows that Abelian CA are computationally ``easy'' to predict. Particular cases of this statement is folklore knowledge (see \cite{moore98}) and it is mentioned in the generality of Abelian CA in \cite{fractal} (based on Lemma~\ref{lem:substitucion} bellow). We give here a simple proof of this fact using existence of multi-scale additive identities.

\begin{coro}
  \label{cor:logspace}
  For any $d$-dimensional Abelian CA $F$ with alphabet $Q$ and radius $r$, the following prediction problem is computable in LOGSPACE:
  \begin{itemize}
  \item \emph{input :} a finite pattern ${u\in Q^{B_{rn}(0)}}$ and ${q\in Q}$,
  \item \emph{question :} do we have ${F^n(u)=q}$, \textit{i.e.} ${F^n(c)=q}$ for any $c$ with ${c_{|B_{rn}(0)}=u}$? 
  \end{itemize}
\end{coro}
\begin{proof}
  Using Lemma~\ref{lem:multiscaleidentities} it is sufficient to give a LOGSPACE algorithm for Abelian CA with multi-scale additive identities (the Cartesian product is handled by sequentially computing each component which doesn't change the\break LOGSPACE complexity). Therefore let's suppose that $F$ has the following multi-scale additive identity (using notation of definition~\ref{def:multiscaleidentity}):
  \[F^{M\alpha^n}(c) = \sum_{1\leq i\leq k} \overline{h_i}\circ F^{t_i\alpha^n}\circ \sigma_{\alpha^n\vec{z_i}}(c).\]
  An algorithm to evaluate ${F^t(c)_{\vec{z}}}$ is a recursive application of the above identity with the maximal possible value for $n$ at each application and until reaching terms corresponding to time steps strictly less than $M\alpha$ (which can be evaluated in constant time). Concretely if ${n}$ is the largest integer with ${M\alpha^{n}\leq t}$ we get
  \[F^{t}(c)_{\vec{z}} = \sum_{1\leq i\leq k} \overline{h_i}\circ F^{\alpha^nt_i+t-M\alpha^{n}}(c)_{\alpha^n\vec{z_i}+\vec{z}}\]
  and ${\alpha^nt_i+t-M\alpha^n\leq t\frac{(M-1)\alpha^n+t-M\alpha^n}{t}\leq t\frac{M\alpha-1}{M\alpha}}$ since ${t<M\alpha^{n+1}}$ by hypothesis on $n$ and the ratio ${\frac{(M-1)\alpha^n+t-M\alpha^n}{t}}$ is increasing with $t$.
  This shows that the depth of recursive calls in the algorithm is logarithmic in $t$ (at most ${\lceil\log_{\frac{M\alpha}{M\alpha-1}}(t)\rceil}$) and since the recursive branching width is constant (exactly $k$) it is actually doable in LOGSPACE. More precisely, it can be done in the following way:
  \begin{center}
    \small\tt
    \begin{itemize}
    \item ${m\leftarrow \lceil\log_{\frac{M\alpha}{M\alpha-1}}(t)\rceil}$;
    \item ${\mathtt{sum}\leftarrow 0}$ (\emph{identity of group $(Q,\oplus)$});
    \item for each ${b\in\{1,\ldots,k\}^m}$ do:
      \begin{itemize}
      \item ${h\leftarrow Id\in Q^Q}$;
      \item ${t'\leftarrow t}$;
      \item ${\vec{z'}\leftarrow \vec{z}}$;
      \item for each $i$ from $1$ to $m$ and while ${t'\geq M\alpha}$ do:
        \begin{itemize}
        \item ${h\leftarrow h\circ h_{b_i}}$;
        \item ${n\leftarrow \max\{i : M\alpha^{i} \leq t'\}}$;
        \item ${r\leftarrow t' - M\alpha^{n}}$ ;
        \item ${t'\leftarrow \alpha^nt_{b_i} + r}$;
        \item ${\vec{z'}\leftarrow \vec{z'}+\alpha^n\vec{z_i}}$
        \end{itemize}
      \item ${\mathtt{sum}\leftarrow \mathtt{sum} \oplus h\circ F^{t'}(c)_{\vec{z'}}}$ (\emph{bounded computation since ${t'<M\alpha}$})
      \end{itemize}
    \item return ${\mathtt{sum}}$
    \end{itemize}
  \end{center}
  The algorithm explores successively each branch of the tree of recursive calls (variable $b$) and for each of them (which is of depth at most $m$) it does a descent from the root to the leaf (variable $i$) and accumulate the sequence of endomorphisms to be applied at each level (variable $h$) while computing the new current position of space-time (variables $t'$ and $z'$). For the prediction problem of checking whether ${F^n(u)=q}$ for some ${u\in Q^{B_{rn}(0)}}$, we apply the above algorithm with ${t=n}$ and ${z=\vec{0}}$ so $m$ is logarithmic in the input size and all variables used have a logarithmic size. 
\end{proof}

\newcommand\sub[2]{\Phi_{#1}^{#2}}

\begin{lemm}
  \label{lem:substitucion}
  Let $F$ be any $d$-dimensional Abelian CA with a multi-scale additive identity. Then there
  exists a substitution of factor $\alpha$ describing space-time dependency, that is to say, there exists:
\begin{itemize}
\item $\alpha\geq 2$,
\item a finite set $E$,
\item $e : \Z^d\times\N \rightarrow E$,
\item $\Psi:E\rightarrow (Q\rightarrow Q)$,
\item ${T\in\N}$ and, for any ${0\leq t_0 < \alpha}$ and ${\vec{z_0}\in\Z^d}$ with ${\|\vec{z_0}\|_\infty<\alpha}$, a function ${\sub{\vec{z_0}}{t_0} : E\rightarrow E}$ such that, for any ${t\geq T}$
\[e(\alpha\vec{z} + \vec{z_0},\alpha t+t_0) = \sub{\vec{z_0}}{t_0}\bigl(e(\vec{z},t)\bigr)\]
\item $\Gamma_{\vec{z}}^t = \Psi(e(\vec{z},t))$
\end{itemize}
where $\Gamma_{\vec{z}}^t$ is the space-time dependency function given by:
\[\Gamma_{\vec{z}}^t : q\mapsto \sigma_{\vec{z}}\circ F^t (c^q).\]
\end{lemm}
\begin{proof}
  Taking the notations of Definition~\ref{def:multiscaleidentity} we suppose that for any ${n\in\N}$ and any configuration $c$ it holds: 
  \[F^{M\alpha^{n}}(c) = \sum_{1\leq i\leq k} \overline{h_i}\circ F^{t_i\alpha^n}\circ \sigma_{\alpha^n\vec{z_i}}(c)\]
  where the sum correspond to law $\overline{\oplus}$ over configurations.
  This identity extends to the space-time dependency function by choosing ${c=c^q}$ and composing both sides by a common translation and a common power of $F$, so that for any ${n\in\N}$ and any ${r\in\N}$ and any ${\vec{z}\in\Z^d}$ we have:
  \begin{equation}
  \Gamma_{\vec{z}}^{\alpha^{n}M+r}(c) = \sum_{1\leq i\leq k} h_i\circ \Gamma_{\alpha^n\vec{z_i}+\vec{z}}^{\alpha^n t_i+r}.\label{eq:gammamultiscale}
\end{equation}
  Applying this identity recursively, we can reduce any ${\Gamma_{\vec{z}}^t}$ to a sum of terms of the form ${h\circ\Gamma_{\vec{z'}}^{t'}}$ where ${t'<M\alpha}$. More precisely, we define the labeled $k$-regular DAG $D_F$ whose vertex set is ${\Z^d\times\N}$ and such that each vertex ${(\vec{z},t)}$
  \begin{itemize}
  \item is a leaf if ${t<M\alpha}$;
  \item has the following $k$ children:
    \[\chi_i(\vec{z},t) = (\alpha^n\vec{z_i}+\vec{z}, \alpha^n t_i+r)\]
    for ${1\leq i\leq k}$ where ${n=\max\{m : M\alpha^m\leq t\}}$ and ${r=t-M\alpha^n}$, and the edge ${e=\bigl((\vec{z},t),\chi_i(\vec{z},t)\bigr)}$ is labeled by ${\lambda(e) = h_i}$.
  \end{itemize}
  The multi-scale property of Equation \ref{eq:gammamultiscale} translates into $D_F$ as follows. Consider any ${\vec{z_0}\in\Z^d}$ and any ${t_0<\alpha}$.
  \newcommand\tsfo{\tau_{\vec{z_0},t_0}}
  If we denote by ${\tsfo:\Z^d\times\N\rightarrow\Z^d\times\N}$ the transformation such that ${\tsfo(\vec{z},t)=(\alpha\vec{z}+\vec{z_0},\alpha t+t_0)}$, then we have:
  \begin{itemize}
  \item ${\chi_i(\tsfo(\vec{z},t)) = \tsfo(\chi_i(\vec{z},t))}$ when ${(\vec{z},t)}$ is not a leaf;
  \item if ${e=\bigl((\vec{z},t),\chi_i(\vec{z},t)\bigr)}$ and ${e'=\bigl(\tsfo(\vec{z},t),\tsfo\chi_i(\vec{z},t)\bigr)}$ then ${\lambda(e)=\lambda(e')}$.
  \end{itemize}
  Indeed, as soon as ${t\geq\alpha M}$, if ${n=\max\{m:M\alpha^m\leq t\}}$, we have that\break ${n+1=\max\{m:M\alpha^m\leq\alpha t+t_0\}}$ and ${\alpha r + t_0= \alpha t + t_0 - M\alpha^{n+1}}$, where\break ${r=t-M\alpha^n}$.
  From this we deduce that to any path from ${(\vec{z},t)}$ to a leaf ${l\in\Z^d\times\N}$ corresponds a path from ${\tsfo(\vec{z},t)}$ to ${\tsfo(l)}$ with same labels. Conversely any path from ${\tsfo(\vec{z},t)}$ to some leaf admits as prefix a path from ${\tsfo(\vec{z},t)}$ to ${\tsfo(l)}$ where $l$ is a leaf. Formally, if ${P^{\vec{z},t}_{\vec{z'},t'}}$ denotes the set of path from ${(\vec{z},t)}$ to ${(\vec{z'},t')}$ in $D_F$ and $L$ the set of leaves we have 
  \[\bigcup_{l\in L}P^{\tsfo(\vec{z},t)}_l = \bigcup_{l\in L}P^{\tsfo(\vec{z},t)}_{\tsfo(l)}\cdot\bigcup_{l'\in L}P^{\tsfo(l)}_{l'}\] 
  where '$\cdot$' denotes the concatenation of paths.
  
  For any ${M\in\N}$, let ${Y_M = \{(\vec{z},t) : t<M\alpha\text{ and }\|\vec{z}\|_\infty\leq M\}}$. For any $\vec{z_0}$ with ${\|\vec{z_0}\|_\infty<\alpha}$ and ${t_0<\alpha}$ the (Euclidean) distance between ${(\vec{z},t)}$ and ${\tsfo(\vec{z},t)}$ goes to infinity as ${\|\vec{z}\|_\infty}$ grows. On the other hand, when ${t<M\alpha}$, the set of positions ${(\vec{z'},t')}$ reachable from ${(\vec{z},t)}$ in $D_F$ is finite and they are all at bounded (Euclidean) distance from ${(\vec{z},t)}$. This implies that for any large enough $M$, we have the following property: if there is a path in $D_F$ from some $\tsfo(l)$ to $Y_M$ with $l\in L$ and ${\|\vec{z_0}\|_\infty<\alpha}$ and ${t_0<\alpha}$, then ${l\in Y_M}$.
  Let now choose ${X = Y_M}$ with $M$ large enough to have the above property and also such that any ${(\vec{z},t)\in L}$ with  ${\Gamma_{\vec{z}}^t\neq 0}$ (\textit{i.e.} is not the constant map equal to $0$) belongs to $Y_M$. From Equation~\ref{eq:gammamultiscale} and by definition of $D_F$ and $X$, we have for any ${\vec{z}\in\Z^d}$ and any ${t\in\N}$:
  \begin{equation}
    \label{eq:gammadag}
    \Gamma_{\vec{z}}^t = \sum_{(\vec{z'},t')\in X}\sum_{\substack{\rho\in P^{\vec{z},t}_{\vec{z'},t'}\\\rho=e_1,\cdots,e_m}}\lambda(e_1)\circ\cdots\circ\lambda(e_m)\circ\Gamma_{\vec{z'}}^{t'}.
  \end{equation}
  
  We define ${E = \bigl(Q^Q\bigr)^X}$ and ${e : \Z^d\times\N \rightarrow E}$ by 
  \[e(\vec{z},t) =
    \begin{cases}
      (\vec{z'},t') \mapsto
      \begin{cases}
        id & \text{ if }(\vec{z'},t')=(\vec{z},t),\\
        0 & \text{ else.}
      \end{cases} & \text{ if $(\vec{z},t)$ is a leaf,}\\
      (\vec{z'},t') \mapsto \sum_{\substack{\rho\in P^{\vec{z},t}_{\vec{z'},t'}\\\rho=e_1,\cdots,e_m}}\lambda(e_1)\circ\cdots\circ\lambda(e_m)&\text{ else.}
    \end{cases}
  \]
  Then the map ${\Psi : E\rightarrow (Q\rightarrow Q)}$ defined for any ${f\in \bigl(Q^Q\bigr)^X}$ by 
  \[\Psi(f) = \sum_{(\vec{z},t)\in X}f(\vec{z},t)\circ\Gamma_{\vec{z}}^t\]
  is such that ${\Psi\bigl(e(\vec{z},t)\bigr) = \Gamma_{\vec{z}}^t}$ by definition of $e$ and Equation~\ref{eq:gammadag}.

  Finally, for any ${(\vec{z},t)\not\in L}$ and any $\vec{z_0}$ with ${\|\vec{z_0}\|_\infty<\alpha}$ and ${t_0<\alpha}$, we have: 
  \begin{equation}
    e(\tsfo(\vec{z},t)) = (\vec{z'},t') \mapsto \sum_{\substack{\rho\in P^{\tsfo(\vec{z},t)}_{\vec{z'},t'}\\\rho=e_1,\cdots,e_m}}\lambda(e_1)\circ\cdots\circ\lambda(e_m).\label{eq:ealpha}
  \end{equation}
  But, as said above, any path ${\rho\in P^{\tsfo(\vec{z},t)}_{\vec{z'},t'}}$ decomposes as a concatenation of a path ${\rho_1\in P^{\tsfo(\vec{z},t)}_{\tsfo(l)}}$ followed by a path ${\rho_2\in P^{\tsfo(l)}_{\vec{z'},t'}}$. Since ${(\vec{z'},t')\in X}$ and by choice of $X$ then ${l\in X}$. So ${\rho_1}$ is just the transformation under $\tsfo$ of a path from ${(\vec{z},t)}$ to ${l\in X}$ and this transformation doesn't change the labels $\lambda$. We can then rewrite Equation~\ref{eq:ealpha} as:
   \[e(\tsfo(\vec{z},t)) = (\vec{z'},t') \mapsto \sum_{y\in Y}\sum_{\rho\in P^{\vec{z},t}_y}\sum_{\rho'\in P^{\tsfo(y)}_{\vec{z'},t'}}\lambda(\rho)\cdot\lambda(\rho')\]
   or equivalently
   \[e(\tsfo(\vec{z},t)) = (\vec{z'},t') \mapsto \sum_{y\in Y}e(\vec{z},t)(y)\sum_{\rho'\in P^{\tsfo(y)}_{\vec{z'},t'}}\lambda(\rho').\]
   This shows that there is a map ${\sub{\vec{z_0}}{t_0}}$ not depending on ${\vec{z}}$ or $t$ which satisfies ${e(\tsfo(\vec{z},t)) = \sub{\vec{z_0}}{t_0}(e(\vec{z},t))}$. The lemma follows by letting ${T=\alpha M}$.
\end{proof}

The existence of this substitution has strong consequences on the
structure of traces: the trace of a finite configuration is determined
by a prefix of linear size in the distance of the farthest non-zero cell.
Let us first define some notation.  The \emph{size of a configuration $c\asy\overline{0}$} is the smallest $n\in\N$ such that $c(z)\neq 0$ imply $\|z\|_\infty\le n$.

\begin{lemm}
  \label{lem:prefijotraza}
  Let $F$ be any $d$-dimensional Abelian CA having admitting multi-scale additive identity. Let $m>0$
  and denote by $T_m$ the trace function associated to $F$ and
  $m$. There exist a function $\lambda:\N\rightarrow\N$ with
  $\lambda\in O(n)$ and such that for any $n$ and for any pair of configurations $c_1,c_2$ with:
  \begin{itemize}
  \item the size of $c_i$ is less than or equal to $n$,
  \item $T_m(c_1)(t) = T_m(c_2)(t)$ for any $t\leq\lambda(n)$,
  \end{itemize}
  then $T_m(c_1) = T_m(c_2)$.
\end{lemm}
\begin{proof}
  First $F$ fulfills the hypothesis of Lemma~\ref{lem:substitucion}
  so we have the existence of the substitution and adopt the notations
  of the lemma. 

 Let's focus on the substitution given by the function $e$ and
  consider $k\geq 0$, $t\geq \alpha^kT$, and $\vec{z}\in\Z^d$ with
  $\|z\|_\infty\le \alpha^k$. 
  By ${k-1}$ applications of the
  substitution we get the following expression for $e(\vec{z},t)$:
  \[e(\vec{z},t) = \sub{\vec{z}\bmod \alpha}{t \bmod
    \alpha}\circ\cdots\circ\sub{\vec{z}/\alpha^{k-1}\bmod \alpha}{t/\alpha^{k-1}\bmod
    \alpha}\bigl(e(\rho(\vec{z}),t/\alpha^k)\bigr)\]
  where $\|\rho(\vec{z})\|_\infty\leq \alpha$, and where the
  division/modulus correspond to the standard Euclidean division on
  $\Z^d$.
  
  The sequence of superscripts in this expression only depends on
  $t\bmod \alpha^k$. The sequence of subscripts depends only on $\vec{z}$. Therefore we can write this functional dependency of
  $e(\vec{z},t)$ on $e(\rho(\vec{z}),t/\alpha^k)$ in the
  following way:
  \begin{equation}
  e(\vec{z},t)= \chi_{\vec{z}}^{t\bmod \alpha^k}\bigl(
  e(\rho(\vec{z}),t/\alpha^k)\bigr).
  \label{eq:prefix}
  \end{equation}
  
  Now consider a time $t_0$ sufficiently large to see before time $t_0$ any possible
  vector of the form $({e(\vec{z_0},t)})_{\|\vec{z_0}\|_\infty\leq \alpha}$ that occur after time $T$, precisely: 
  \[\forall t\geq T, \exists t',T\leq t'\leq t_0, \forall \vec{z_0},
  \|\vec{z_0}\|_\infty\leq \alpha : e(\vec{z_0},t) = e(\vec{z_0},t').\]
  Given an index set $I$, consider any tuple ${(\vec{z}_i)_{i\in I}}$, with ${\|\vec{z}_i\|_\infty\leq
    \alpha^k}$, and any $\mathcal{P}\subseteq E^I$. For any time $t$, we
  can define the property $\mathcal{P}_I(t)$ by:
  \[\mathcal{P}_I(t) \Leftrightarrow \bigl(e(\vec{z}_i,t)\bigr)_{i\in
    I}\in \mathcal{P}.\]
  \emph{Claim}: if $\mathcal{P}_I(t)$ holds for every
  ${t\leq (t_0+1)\cdot \alpha^k}$ then $\mathcal{P}_I(t)$ holds for every $t\in\N$.

  Indeed, take some time ${t> (t_0+1)\cdot \alpha^k}$. Then by choice of
  $t_0$ there exists ${t'\leq t_0}$ such that:
\[\forall \vec{z_0},  \|\vec{z_0}\|_\infty\leq \alpha : e(\vec{z_0},t/\alpha^k)
= e(\vec{z_0},t').\]
Now we can choose ${t''\leq (t_0+1)\cdot \alpha^k}$ with 
\begin{align*}
  t''/\alpha^k&=t'\text{ and}\\
  t''\bmod \alpha^k &= t\bmod \alpha^k
\end{align*}
and equation~\ref{eq:prefix} yields the equalities:
\[e(\vec{z_i},t) = \chi_{\vec{z}}^{t\bmod \alpha^k}\bigl(
  e(\rho(\vec{z_i}),t/\alpha^k)\bigr) = \chi_{\vec{z}}^{t''\bmod \alpha^k}\bigl(
  e(\rho(\vec{z_i}),t')\bigr) = e(\vec{z_i},t'').\]
It shows that ${\mathcal{P}_I(t)\Leftrightarrow\mathcal{P}_I(t'')}$
and the claim follows.

  Since the space-time dependency function is completely determined
  by the substitution $e$ (Lemma~\ref{lem:substitucion}), the fact
  that the trace of a finite configuration at time $t$ is null can be
  expressed by a property of the form $\mathcal{P}_I(t)$. 
  More precisely, for any configuration $c$ of size $\alpha^k$, we can define
 \begin{eqnarray*}
 D&=&\{z\in\Z^d\ :\ c(z)\neq 0\},\\
I&=&\displaystyle{\bigcup_{z\in D} B_m(z)},\\
\mathcal{P}&=&\{ f\in E^I\ :\ \forall x\in B_m(0), \displaystyle{\sum_{z\in D}\Psi(f_{z+x})(c(z))=0}\}.
\end{eqnarray*}
We then have that
\begin{eqnarray*}
\mathcal{P}_I(t)&\Leftrightarrow& \forall x\in B_m(0), \displaystyle{\sum_{z\in D}\Psi(e(z+x,t))(c(z))=0}\\
&\Leftrightarrow& \forall x\in B_m(0), \displaystyle{\sum_{z\in D}\Gamma_{z+x}^t(c(z))=0}\\
&\Leftrightarrow& \displaystyle{\sum_{z\in D}T_m(\sigma_{-z}(c^{c(z)}))_t=0}\\
&\Leftrightarrow& T_m(c)_t=0.
\end{eqnarray*}  
  We deduce that if $F^t(c)$ is null on $B_m(0)$ until time
  $(t_0+1)\cdot \alpha^k$ then it is null forever. By linearity of $F$,
  equality of two traces is equivalent to nullity of their difference.
  We have thus shown the lemma for $m\ge0$ by choosing $\lambda(n)
  = (t_0+1)\cdot \alpha^k$ for $k=\lceil log_\alpha(n)\rceil$.
\end{proof}

\section{Pre-expansivity}
\label{sec:prexp}

Pre-expansivity is the property of positive expansivity restricted to
asymptotic pairs of configurations.

\begin{defi}
  Let $F$ be a cellular automaton
  over $Q^\G$.
  $F$ is \emph{pre-expansive} if:
  \[\exists \delta>0: \forall c,d\in Q^\G, c\not=d\text{ and }c\asy d
  \Rightarrow \exists t\in\N,\dist(F^t(c),F^t(d))>\delta.\]
The value $\delta$ is the \emph{pre-expansivity constant}.
\end{defi}

\begin{rem}\
  
  \begin{itemize}
  \item Cellular automata can be seen as examples of two continuous commuting actions on a metric space: the spatial action $S$ (the $\G$-shift) and the temporal action $F$ (the cellular automaton itself). The definition of pre-expansivity can be adapted to this general settings by requiring that $F$ be expansive on pairs of $S$-asymptotic configurations. This goes far beyond the scope of the present paper which focuses on cellular automata. 
  \item Pre-expansivity is a conjugacy invariant: if a cellular automata $F$ and $F'$ on $Q^\G$ are conjugated via $\phi$ (i.e. $\phi$ is an automorphism of the full-shift $Q^\G$ with ${F\circ\phi = \phi\circ F'}$), then $F$ is pre-expansive if and only if $F'$ is. Indeed, ${c\asy d}$ is equivalent to ${\phi(c)\asy\phi(d)}$ and for all $\epsilon$ there is $\delta$ such that ${\dist(\phi(c),\phi(d))>\epsilon}$ implies ${\dist(c,d)>\delta}$. 
  \end{itemize}
\end{rem}

The notion of pre-expansivity can be further refined by considering only pairs of
configurations with a fixed finite number of differences.
\newcommand\di[1]{\,{\neq}_{#1}\,}
Given ${c,d\in Q^\G}$, we denote ${c\di{k} d}$ if
${\#\bigl\{z\in\G: c(z)\not=d(z)\bigr\}=k}$, \textit{i.e.} if $c$ and $d$ differ in exactly $k$ positions.

\begin{defi}
  Let $F$ be a cellular automaton
  over $Q^\G$ and let $k>0$.
  $F$ is \emph{$k$-expansive} if:
  \[\exists \delta>0: \forall c,d\in Q^\G, c\di{k}d
  \Rightarrow \exists t\in\N,\Delta(F^t(c),F^t(d))>\delta.\]
\end{defi}

\begin{prop}
  \label{prop:general} 
  Let $F$ be any CA over $Q^\G$, it holds:
  \begin{enumerate}
  \item $F$ is pre-expansive $\Rightarrow$ $\forall k>0$ $F$ is $k$-expansive,
  \item $F$ is $k$-expansive $\Rightarrow$ $F$ is sensitive to initial configurations,
  \item $F$ is pre-expansive $\Leftrightarrow$ $T_m$ is pre-injective
    for some $m$.
  \item If $L$ is a CA over $\tilde{Q}^\G$, then $F\times L$ is $k'$-expansive for every $k'\le k$ if and only if $L$ and $F$ are $k'$-expansive for every $k'\le k$.
  \item ${F \text{positively expansive} \Rightarrow F\text{
        pre-expansive}\Rightarrow F\text{ pre-injective}}$, moreover if $\G$ is amenable then ${F\text{ pre-expansive}\Rightarrow
      F\text{ surjective}}$.
  \end{enumerate}
\end{prop}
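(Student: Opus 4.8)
The plan is to prove the five items essentially independently, since items~1,~2 and~3 follow by unwinding the definitions, item~4 needs only careful bookkeeping about the product alphabet, and item~5 is the only one hiding a genuine idea.

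Items~1 and~2 I would obtain directly. Item~1 is immediate: if $c\di{k}d$ with $k>0$ then $c\asy d$ and $c\neq d$, so the pre-expansivity constant witnesses $k$-expansivity unchanged. For item~2, fix the $k$-expansivity constant $\delta$; given $c$ and $\epsilon>0$, choose $R$ with $2^{-R}<\epsilon$ and pick $k$ distinct cells of norm $>R$ (possible since $\G$ is infinite), altering $c$ to a different symbol at each (possible since $|Q|\geq 2$) to produce $d$ with $c\di{k}d$ and $\dist(c,d)<\epsilon$; then $k$-expansivity yields $t$ with $\dist(F^t(c),F^t(d))>\delta$, which is exactly sensitivity with constant $\delta$. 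Item~3 is a dictionary between the distance threshold and the observation window: the key identity is that $F^t(c)|_{B_m}\neq F^t(d)|_{B_m}$ holds exactly when $\dist(F^t(c),F^t(d))\geq 2^{-m}$. Reading it both ways, pre-expansivity with constant $\delta$ gives pre-injectivity of $T_m$ for any $m$ with $2^{-m}\leq\delta$, and pre-injectivity of $T_m$ gives pre-expansivity with constant $2^{-m-1}$; only the translation of indices needs checking.

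For item~4 I would project onto and lift from the two factors, the central observation being that a difference in either component is a difference in the product, so the distance between two product configurations is at least the distance between each pair of projections. For the ``if'' direction, a product pair with exactly $k'\leq k$ differences projects to first and second components differing in $j$ and $j'$ cells with $j,j'\leq k'$ and $\max(j,j')\geq 1$; applying $j$- or $j'$-expansivity of the appropriate factor and using this domination of distances separates the product pair, the constant being the (positive) minimum of the finitely many factor constants for levels $\leq k$. For the ``only if'' direction, to test $j$-expansivity of $F$ with $j\leq k$, I would embed a pair $c\di{j}d$ as $(c,\tilde e),(d,\tilde e)$ with a common second component $\tilde e$: this product pair again has exactly $j$ differences, and since the second coordinate never separates, the product distance equals $\dist(F^t(c),F^t(d))$, transferring expansivity of $F\times L$ to $F$ (and symmetrically to $L$). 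I would stress why the family must be downward closed on both sides: isolating one factor at level $j<k$ forces product pairs at level exactly $j$, so expansivity only at the top level would not suffice.

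Item~5 carries the only real obstacle. The outer implications are nearly free: expansivity quantifies over all pairs $c\neq d$ while pre-expansivity only over asymptotic ones, so the former gives the latter after halving the constant (to pass from $\geq$ to $>$); and pre-injective $\Rightarrow$ surjective is exactly the Garden-of-Eden theorem, which is where amenability of $\G$ enters. The delicate step is pre-expansive $\Rightarrow$ pre-injective, which I would prove by contraposition. If $F(c)=F(d)$ with $c\asy d$ and $c\neq d$, then $F^t(c)=F^t(d)$ for all $t\geq 1$, so any separating time must be $t=0$; hence pre-expansivity forces $\dist(\sigma_z(c),\sigma_z(d))>\delta$ simultaneously for every shift $z$, because each translated pair still has equal images, is asymptotic and distinct. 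But the difference set of $c$ and $d$ is finite, so translating it far from the origin makes $\dist(\sigma_z(c),\sigma_z(d))$ arbitrarily small --- a contradiction. Pinning down this shift-and-finiteness argument is the main work; it also tacitly uses that $\G$ is infinite (for finite $\G$ all these notions degenerate and would be handled separately).
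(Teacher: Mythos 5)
Your proposal is correct and follows essentially the same route as the paper's proof: items 1--3 by unwinding definitions and the equivalence between distance thresholds and trace windows, item 4 by splitting/embedding differences component-wise, and item 5 via the shift-and-finiteness argument for pre-expansive $\Rightarrow$ pre-injective plus the Garden-of-Eden theorem. Your write-up is in fact more explicit than the paper's (notably the careful handling of constants and the contraposition in item 5), but there is no substantive difference in approach.
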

\begin{proof}~
  \begin{enumerate}
  \item It follows directly from definitions.
  \item It is enough to note that for any
    configuration $c$, any ${\delta>0}$ and any ${k\geq 1}$ there
    always exist a configuration $c'$ with ${c\di{k}c'}$ and
    ${\dist(c,c')\leq\delta}$.
  \item For the third item, it is sufficient to note that the existence
    of some time $t$ such that
    ${\dist\bigl(F^t(c),F^t(c')\bigr)>\delta}$ is equivalent to
    ${T_m(c)\not= T_m(c')}$ for a suitable choice of $m$.
  \item If $F\times L$ is $k'$-expansive, it is enough to take two configurations with their differences in only one of their components, since both automata act independently, $k'$-expansivity of $F\times L$ imply that the perturbations will arrive to the center at the same component, proving the $k'$-expansivity of the corresponding automaton.\\
  If now $L$ and $F$ are $k'$-expansive for every $k'\le k$, we take two configurations with $k'$ differences. They may lay in one or both of their components, in any case there will be $0< k''\le k'$ differences in one of the components of $F\times L$. By the $k''$-expansivity of the corresponding automaton, we show the $k'$-expansivity of $F\times L$.
  \item It is clear that positive expansivity implies
    pre-expansivity (restriction of the universal
    quantification). Then pre-expansivity implies pre-injectivity
    because if there is a pair of configurations $c,c'$ with ${c\asy
      c'}$ and ${F(c)=F(c')}$ then, eventually applying a translation,
    we can also suppose them such that ${\dist(c,c')}$ is arbitrarily
    small. Finally, if $\G$ is amenable we also have that pre-injectivity
    implies surjectivity by Garden of Eden Theorem \cite{cagroups}.
  \end{enumerate}
\end{proof}

Note however that $k$-expansivity does not generally imply
pre-injectivity or surjectivity as shown by the following example.

\begin{prop}
  \label{prop:onexpnonsurj}
  For any $k\geq1$ there exists a one-dimensional CA which is not surjective but
  $k'$-expansive for any ${k'\leq k}$.
\end{prop}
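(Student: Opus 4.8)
The plan is to exhibit an explicit linear CA over $\Z$ whose failure of surjectivity is confined to a single ``large'' collision, while every small perturbation is detected expansively. I work over the alphabet $Q=\F_2\times\F_2$, writing a configuration as a pair $(a,b)$ with $a,b\in\F_2^\Z$, and I identify the $\F_2^\Z$-operations with multiplication by Laurent polynomials in $\F_2[X,X^{-1}]$ (the shift is $X$). I fix $\alpha=X^{-1}+X$ and a second polynomial $\beta$ to be chosen, and set
\[F(a,b)=\bigl(\alpha a+\beta b,\ \oconf\bigr).\]
The second component of $F(a,b)$ is identically $\oconf$, so no configuration whose second component is nonzero somewhere lies in the image: $F$ is not surjective. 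Observe also that $\alpha=X^{-1}+X$ gives an \emph{LR-permutive} rule (local map $(c_{-1},c_0,c_1)\mapsto c_{-1}+c_1$), so multiplication by $\alpha$ is $\N$-expansive; let $m_0$ be a radius with expansivity constant $2^{-m_0}$.

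The key reduction I would carry out is: \emph{$F$ is $k'$-expansive for every $k'\le k$ if and only if every nonzero $e\in\ker F$ differs from $\oconf$ in more than $k$ positions.} By linearity it suffices to separate each nonzero $e=(e_1,e_2)$ (playing the role of the difference of two configurations) from $\oconf$. A direct computation gives $F^t(e)=(\alpha^{t-1}w,\oconf)$ for $t\ge1$, where $w=\alpha e_1+\beta e_2$. If $w\neq\oconf$, then by $\N$-expansivity of multiplication by $\alpha$ there is $t'$ with $(\alpha^{t'}w)|_{B_{m_0}}\neq\oconf$, so $F^{t'+1}(e)$ is nonzero on $B_{m_0}$ and $e$ is detected with the uniform constant $\delta=2^{-m_0}$. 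If instead $w=\oconf$, i.e. $e\in\ker F$, then $F^t(e)=\oconf$ for all $t\ge1$ while $F^0(e)=e$; since $\ker F$ is shift-invariant, a sufficiently far translate of $e$ is never $\delta$-separated from $\oconf$ for any fixed $\delta$, so such an $e$ breaks $k'$-expansivity for $k'$ equal to its number of differences. Hence $F$ is $k'$-expansive for all $k'\le k$ exactly when $\ker F$ contains no nonzero element differing from $\oconf$ in at most $k$ positions.

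It remains to choose $\beta$ so that this minimum number of differences exceeds $k$, and this is the step I expect to be the main obstacle. Taking $\beta$ coprime to $\alpha$ — i.e. imposing $(1+X)\nmid\beta$, equivalently $\beta(1)=1$, so that the factor $(1+X)^2$ of $\alpha$ is not shared — the kernel becomes the rate-$1/2$ convolutional code $\ker F=\{(\beta g,\alpha g):g\in\F_2[X,X^{-1}]\}$, and the number of differences of $(\beta g,\alpha g)$ from $\oconf$ is $|\mathrm{supp}(\beta g)\cup\mathrm{supp}(\alpha g)|$; its minimum over $g\neq0$ is precisely the free distance of the generator $[\beta\ \alpha]$. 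Endpoint arguments only yield weight $\ge2$, and each of $\alpha g$, $\beta g$ can individually be as light as weight $2$, so the bound must genuinely exploit the interplay of the two components (when one is light, the other must be heavy). I would therefore choose $\beta$ of sufficiently large degree so that $[\beta\ \alpha]$ has free distance greater than $k$, which is possible since rate-$1/2$ convolutional codes attain arbitrarily large free distance; for a self-contained argument I would fix such a $\beta$ explicitly and prove the lower bound by a trellis/state-space argument, showing that any finitely supported $g$ forces at least $k+1$ nonzero output positions.
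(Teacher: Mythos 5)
Your reduction is correct and clean: non-surjectivity is immediate, and by linearity the $k'$-expansivity of $F(a,b)=(\alpha a+\beta b,\oconf)$ for all $k'\le k$ is indeed equivalent to the statement that no nonzero finitely supported element of $\ker F$ has at most $k$ nonzero cells, with detection of non-kernel differences handled uniformly by the LR-permutivity (hence $\N$-expansivity) of multiplication by $\alpha=X^{-1}+X$. The genuine gap is the step you yourself flag as the main obstacle: you never produce a $\beta$ for which the kernel $\{(\beta g,\alpha g):g\neq 0\}$ has minimum weight exceeding $k$, and the fact you invoke to guarantee its existence does not apply to your setting. The standard existence results for rate-$1/2$ convolutional codes with arbitrarily large free distance (random-coding/Gilbert--Varshamov-type bounds, Costello-type bounds) allow \emph{both} generator polynomials to be chosen; in your construction one generator is frozen to the weight-$2$ polynomial $\alpha=X^{-1}(1+X)^2$, because you need it to drive the expansive detection. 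Whether a companion $\beta$ with large enough ``free distance'' exists for this \emph{fixed} $\alpha$ is exactly the combinatorial core of the proof, and it is nontrivial: one must rule out inputs $g$ for which $\alpha g$ is light (e.g.\ $g=1+X^2+\cdots+X^{2n-2}$, giving $\alpha g$ of weight $2$) \emph{and} $\beta g$ is simultaneously light, and no pigeonhole or degree argument in your sketch does this. (There is also a minor conflation: the number of differing cells is $|\mathrm{supp}(\beta g)\cup\mathrm{supp}(\alpha g)|$, whereas the free distance is the \emph{sum} of the two weights; this only costs a factor $2$, but it should be stated.) As written, the proof establishes an equivalence and then defers its decisive half to an unproved claim.

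For contrast, the paper avoids any such weight-distribution analysis. It takes a pre-expansive CA $F$ on $\{0,1\}$ (e.g.\ bi-permutative), runs $k$ independent copies of it, and adds a $(k{+}1)$-th layer which, in one step, flips layer $i$ at position $z+3i$ whenever it holds a $1$ at position $z$, and is then erased (this is what kills surjectivity). A pigeonhole argument shows that a pair with at most $k$ differences cannot block all $k$ flip positions, so $\Psi$ is injective in one step on such pairs; after that step the configurations live in the subalphabet $\{0,1\}^k\times\{0\}$, where $\Psi$ acts as the pre-expansive $F^k$, which propagates the surviving difference to the origin. If you want to salvage your linear approach, the cheapest repair is to let \emph{both} generators vary: take any non-catastrophic pair $[g_1,g_2]$ with free distance $>2k$ (now the standard existence results, or an explicit family, do apply), shift $g_1$ so that its support straddles $0$ (any polynomial with invertible extremal coefficients is then LR-permutive, hence $\N$-expansive), and set $F(a,b)=(g_1a+g_2b,\oconf)$; but even then the free-distance input must be proved or precisely cited rather than asserted.
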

\begin{proof}
  Consider any pre-expansive one-dimensional CA $F$ of radius ${1}$
  over state set $Q=\{0,1\}$ (for instance a bi-permutative CA), and
  define a CA $\Psi$ over state set $Q^{k+1}$ as follows. It has $k+1$
  ``layers'' and to any configuration $c$ we associate its projection
  $\pi_i(c)$ on the $i$th layer. Intuitively it behaves on the $k$
  first layers as $k$ independent copies of $F$, except that the
  $(k+1)$th layer induce a state flip in the image in the following
  way: if it has a $1$ at position $z$ then, in the image, layer $i$
  is flipped at position $z+3i$. Moreover, $(k+1)$th layer is
  uniformly reset to $0$ after one step. Formally, $\Psi$ is defined by:
  \[\Psi(c)_z = \bigl(F(\pi_1(c))_z + \pi_{k+1}(c)_{z-3} \bmod 2, \ldots,
  F(\pi_k(c))_z + \pi_{k+1}(c)_{z-3k} \bmod 2, 0\bigr)\]

  First it is clear from the definition that it is not surjective
  since the image of any configuration is always $0$ on layer $k+1$.
  Note also that, when reduced to state set ${\{0,1\}^k\times\{0\}}$,
  $\Psi$ is isomorphic to $F^k$ which is pre-expansive. Therefore, to
  show that $\Psi$ is $k'$-expansive for any ${1\leq k'\leq k}$, it is sufficient to show that for any
  pair of configurations $c$ and $d$ with ${c\di{k'}d}$ we have
  ${\Psi(c)\neq \Psi(d)}$.

  So consider such a pair $(c,d)$. $\Psi$ was defined such that, if
  $c$ and $d$ differ on the $(k+1)$th layer at position $z$, then, on
  the $i$th layer, $F(c)$ and $F(d)$ will differ at position ${z+3i}$
  as soon as $c$ and $d$ are the same on the $i$th layer at positions
  ${z+3i-1}$, ${z+3i}$ and ${z+3i+1}$. Therefore, supposing that $c$
  and $d$ indeed differ on the $(k+1)$th layer at position $z$, it
  implies that $F(c)$ and $F(d)$ differ because $c$ and $d$ having
  only ${k'\leq k}$ differences, they can not differ at $z$ and at one
  of the positions ${z+3i-1}$, ${z+3i}$ or ${z+3i+1}$ for each ${1\leq
    i\leq k}$.

  Finally, suppose that $c$ and $d$ are equal on the $(k+1)$th
  layer. Then they must differ on some layer $i$ with ${1\leq i\leq
    k}$. Therefore, we must have ${F(\pi_i(c))\neq F(\pi_i(d))}$. We
  deduce that ${\Psi(c)\neq \Psi(d)}$ because their respective $i$th layers
  are $F(\pi_i(c))$ and $F(\pi_i(d))$ up to some modification by the
  $(k+1)$th layer which are identical in $c$ and $d$.
\end{proof}


The next lemma talks about \emph{linear} CA.
When $F$ is supposed to be linear (for law $\gplus$), then $T_m$ is
also linear, \textit{i.e.} $T_m(c\gpconf d) = T_m(c)\gpconf T_m(d)$
where $\gpconf$ denotes the component-wise application of $\gplus$
either on $Q^\G$ or on $\bigl(Q^{B_m}\bigr)^\N$.

\begin{prop}
  \label{prop:linear}
  Let $F$ be a linear CA for law $\gplus$ and neutral element $0$. Let
  $I$ be the set 
  \[I = \left\{k\in \N : F\text{ is not $k$-expansive}\right\}\]
  \begin{itemize}
  \item\label{item:ideal} if ${k_1,k_2\in I}$ then ${k_1+k_2\in I}$,
  \item $F$ is pre-expansive if and only if for some $m>0$ there is no
    finite sequence $(g_1,\ldots, g_n)$ of different cells in $\G$ and states $(q_1,\ldots,q_n)$  in $Q$  such that
    \[T_m(\sigma_{g_1}(c^{q_1}))\gpconf\cdots\gpconf T_m(\sigma_{g_n}(c^{q_n})) = \overline{0}.\]
  \end{itemize}
\end{prop}
\begin{proof}
  First, by linearity of the trace functions $T_m$, we have that
  ${T_m(c)=T_m(c')}$ if and only if
  ${T_m\bigl(c'\gpconf(-c)\bigr)=T_m(\overline{0})}$ where $-c$ is the
  configuration such that ${c\gpconf (-c)=\overline{0}}$. Moreover we
  also have ${c\di{k}c'}$ if and only if ${c'\gpconf
    (-c)\di{k}\overline{0}}$. Hence $F$ is pre-expansive
  (resp. $k$-expansive) if and only if there is $m$ such that no
  ${c\not=\overline{0}}$ with ${c\asy\overline{0}}$
  (resp. ${c\di{k}\overline{0}}$) can verify ${T_m(c)=T_m(\overline{0})}$.

  From this we deduce the second item of the proposition.

  For the first item, consider $k_1$ and $k_2$ in $I$. From what we
  said above, for any
  $m_1$ there is ${c_1}$ such that
  ${c_1\di{k_1}\overline{0}}$ and
  ${T_{m_1}(c_1)=T_{m_1}(\overline{0})}$. Now choose $m_2$ large
  enough so that any non-zero state of $c_1$ appears at distance at
  most $m_2$ from the center. 
  Let us remark that the differences between $c_1$ and $\overline{0}$ are outside $B_{m_1}$, otherwise $T_{m_1}(c_1)\not = T_{m_1}(\overline{0})$, thus $m_2>m_1$.
  Since ${k_2\in I}$ we deduce from what
  we said earlier that there is ${c_2}$ such that
  ${c_2\di{k_2}\overline{0}}$ and
  ${T_{m_2}(c_2)=T_{m_2}(\overline{0})}$.
  By our choice of $m_2$, this
  implies that ${T_{m_1}(c_1\gpconf
    c_2)=\T_{m_1}(\overline{0})}$. Moreover ${c_1\gpconf
    c_2\di{k_1+k_2}\overline{0}}$. Since $m_1$ was arbitrary, we
  deduce that $F$ is not ${(k_1+k_2)}$-expansive.
\end{proof}

\section{1-dimensional Cellular Automata}
\label{sec:1D}

The 1-dimensional setting is particular in our study since it allows examples of all properties considered in this paper, and gives additional structure to analyze them.

The first goal of this section is to show that the notion of
$k$-expansivity, pre-expansivity and positive expansivity all differ
and interact differently with properties of bijectivity and
surjectivity. More precisely we will show the following existential
result.

\newcommand\bij{\mathbb{B}}
\newcommand\surj{\mathbb{S}}
\newcommand\onex{\mathbb{X}_1}
\newcommand\prex{\mathbb{X}_{pre}}
\newcommand\posx{\mathbb{X}_{pos}}

\newcommand\Ts{\rule{0pt}{2.6ex}}       
\newcommand\Bs{\rule[-1.2ex]{0pt}{0pt}} 

\begin{theo}\label{theo:examples}  
  Let $\bij$ and $\surj$ denote the set of bijective and surjective CA respectively. Let $\onex$, $\prex$ and $\posx$ denote the set of $1$-expansive, pre-expansive and positively expansive CA respectively. It holds:
  \begin{itemize}
  \item ${\prex\cap(\bij\setminus\posx)\neq\emptyset}$,
  \item ${\prex\setminus(\bij\cup\posx)\neq\emptyset}$,
  \item ${\onex\cap(\bij\setminus\prex)\neq\emptyset}$,
  \item ${\onex\setminus(\prex\cup\bij)\neq\emptyset}$,
  \item ${\onex\setminus\surj\neq\emptyset}$.
  \end{itemize}
  We therefore have the following situation:
  \begin{center}
    \begin{tabular}{r|c|c|c}
     $\cap$ &$\bij$ & $\surj\setminus\bij$ & $\surj^c$ \Bs\\
      \hline
      $\onex\setminus\prex$ & $\exists$ & $\exists$ & $\exists$ \Ts\\
      $\prex\setminus\posx$ & $\exists$ & $\exists$ & $\emptyset$\Ts\\
      $\posx$ & $\emptyset$ & $\exists$ & $\emptyset$\Ts
    \end{tabular}
  \end{center}
\end{theo}

For this purpose it will be sufficient to focus on linear cellular
automata. At the end of the section, we consider a well-known class of non-linear
bijective CA where pre-expansivity is a relevant property. 
But before the study of examples, we give some additional results which
hold in dimension 1.

\subsection{Left/right propagation and directional dynamics}

The pre-expansivity constant can be fixed canonically as we will prove in Lemma~\ref{lem:lr}.
The next lemma is direct and it expresses the locality of CAs.

\begin{lemm}\label{th:locality}
Let $F$ be a CA in $\Z$ with neighborhood $[-l,r]$ the next assertions hold.
\begin{itemize}
\item If $c_{]-\infty,n]}=d_{]-\infty,n]}$ and there exists an iteration $t$ such that $F^t(c)_{]-\infty,n]}\not = F^t(d)_{]-\infty,n]}$, then there is an iteration $t'\le t$ such that $F^{t'}(c)_{[n-r,n]}\not = F^{t'}(d)_{[n-r,n]}$.
\item If $c_{[n,\infty[}=d_{[n,\infty[}$ and there exists an iteration $t$ such that $F^t(c)_{[n,\infty[}\not = F^t(d)_{[n,\infty[}$, then there is an iteration $t'\le t$ such that $F^{t'}(c)_{[n,n+l]}\not = F^{t'}(d)_{[n,n+l]}$.
\end{itemize}
\end{lemm}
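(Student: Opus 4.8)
The plan is to isolate the first iteration at which the discrepancy ``crosses'' the position $n$ and then invoke the locality of the local rule $f$ to confine that crossing to a window whose width is dictated by the neighborhood. Recall that $F(c)(z)=f\bigl(\sigma_z(c)|_{[-l,r]}\bigr)$ depends only on the cells in $[z-l,z+r]$, so a single-step discrepancy at a cell can only be inherited from a discrepancy inside its dependency window.

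For the first assertion, I would let $t'$ be the least iteration for which $F^{t'}(c)$ and $F^{t'}(d)$ differ somewhere in $]-\infty,n]$. By hypothesis this set of iterations is non-empty and contains $t$, so $t'\le t$ is well defined; and since $c_{]-\infty,n]}=d_{]-\infty,n]}$ we have $t'\ge 1$. By minimality, $F^{t'-1}(c)$ and $F^{t'-1}(d)$ coincide on all of $]-\infty,n]$. Now fix any position $w\le n$ at which $F^{t'}(c)$ and $F^{t'}(d)$ differ. Since the value $F^{t'}(\cdot)(w)=f\bigl(\sigma_w(F^{t'-1}(\cdot))|_{[-l,r]}\bigr)$ depends only on the cells of $[w-l,w+r]$ at iteration $t'-1$, a discrepancy at $w$ forces $F^{t'-1}(c)$ and $F^{t'-1}(d)$ to differ at some cell of $[w-l,w+r]$. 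That cell cannot lie in $]-\infty,n]$ (where they agree at $t'-1$), hence it is strictly greater than $n$, which gives $w+r>n$, i.e. $w\ge n-r+1$. Combined with $w\le n$ this yields $w\in[n-r,n]$, so $F^{t'}(c)_{[n-r,n]}\neq F^{t'}(d)_{[n-r,n]}$, as desired.

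The second assertion follows from the same scheme reflected left-to-right. Here I would take $t'$ minimal such that $F^{t'}(c)$ and $F^{t'}(d)$ differ in $[n,\infty[$; again $1\le t'\le t$, and at iteration $t'-1$ they agree on $[n,\infty[$. Choosing a differing cell $w\ge n$ at iteration $t'$, its value still depends on $[w-l,w+r]$ at iteration $t'-1$, but the forced discrepancy must now sit strictly to the left of $n$, yielding $w-l<n$, i.e. $w\le n+l-1$, whence $w\in[n,n+l]$.

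There is no genuine obstacle here: the content is entirely index bookkeeping, and the only point requiring care is that the width of the boundary window is governed by $r$ in the first (leftward-propagating) case and by $l$ in the second (rightward-propagating) case, matching the asymmetry of the neighborhood $[-l,r]$. Everything else is the observation that a difference can only originate from the dependency window of the local rule, together with the minimality of $t'$.
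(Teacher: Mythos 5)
Your proof is correct and follows essentially the same route as the paper's: take the minimal iteration $t'$ at which a discrepancy appears in the half-line, use minimality to get agreement on that half-line at step $t'-1$, and then invoke locality of the rule to force the differing cell into the boundary window (the paper's version even states the slightly weaker bound $i\ge n-r$ where you derive $w\ge n-r+1$). Nothing further to add.
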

\begin{proof}
We will only prove the first assertion, the second one is completely analogous.
Let $t'$ be the first time such that $F^{t'}(c)_{]-\infty,n]}\not = F^{t'}(d)_{]-\infty,n]}$, and let $i\in ]-\infty,n]$ be a position such that  $F^{t'}(c)_{i}\not = F^{t'}(d)_{i}$.
Since $F^{t'-1}(c)_{]-\infty,n]} = F^{t'-1}(d)_{]-\infty,n]}$, and only the cells in $]n-r,n]$ depend on cells in $]n,\infty[$, $i\ge n-r$.
\end{proof}

This lemma shows a particularity of dimension 1: expansivity properties can be understood through left/right propagation of information.
Let us precise this notion.

\begin{defi}
Given two configurations $c\not = d$, and a CA $F$, we define the left and right propagation sequences as follows.
\begin{align*}
  l^d_t(c) &= \inf \{z\in\Z : \bigl(F^t(c)\bigr)(z)\neq \bigl(F^t(d)\bigr)(z)\}\\
  r^d_t(c) &= \sup \{z\in\Z : \bigl(F^t(c)\bigr)(z)\neq \bigl(F^t(d)\bigr)(z)\}
\end{align*}
\end{defi}

Note that if ${c\asy d}$ then ${l^d_t(c)}$ and ${r^d_t(c)}$ are always finite integers.

\begin{lemm}\label{lem:lr}
Given a CA $F$ of neighborhood $[-l,r]$ and $k\in\N$,  the next assertions hold.
\begin{enumerate}
\item If $F$ is $k$-expansive, then $\forall c\di{k} d, (l_t^d(c))_{t\in\N}$ is not lower bounded and $(r_t^d(c))_{t\in\N}$ is not upper bounded.
\item If $\forall k'\le k, \forall c\di{k'} d, (l_t^d(c))_{t\in\N}$ is not lower bounded and $(r_t^d(c))_{t\in\N}$ is not upper bounded, then $F$ is $k$-expansive with pre-expansivity constant $2^{-max\{l,r\}}$.
\end{enumerate}
\end{lemm}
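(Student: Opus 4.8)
The plan is to recast the distance condition throughout in geometric terms: for $M=\max\{l,r\}$, we have $\dist(F^t(c),F^t(d))\ge 2^{-M}$ exactly when $F^t(c)$ and $F^t(d)$ differ at some cell of the ball $B_M(0)$. Both implications then become statements about how the extremal difference positions $l_t^d(c),r_t^d(c)$ travel, and the only dynamical input needed is the bounded-speed inequalities $r_{t+1}^d(c)\le r_t^d(c)+l$ and $l_{t+1}^d(c)\ge l_t^d(c)-r$, which are immediate from the neighborhood $[-l,r]$ (an image cell $z$ can differ only if some cell of $z+[-l,r]$ differs).

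For the first item, suppose $F$ is $k$-expansive with constant $\delta$, fix $c\di{k}d$, and set $R=\lceil-\log_2\delta\rceil$. Since $F$ commutes with every shift $\sigma_w$ and $\sigma_w(c)\di{k}\sigma_w(d)$, expansivity applied to the shifted pair gives, for each $w\in\Z$, a time $t_w$ at which $F^{t_w}(\sigma_w c)$ and $F^{t_w}(\sigma_w d)$ differ inside $B_R(0)$. Translating back via $F^{t}(\sigma_w c)=\sigma_w F^{t}(c)$, this means $F^{t_w}(c)$ and $F^{t_w}(d)$ differ at some cell of $[w-R,w+R]$. Letting $w\to+\infty$ forces $r_t^d(c)$ to take arbitrarily large values, so it is not upper bounded; letting $w\to-\infty$ forces $l_t^d(c)$ to take arbitrarily negative values, so it is not lower bounded. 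This is the clean half of the lemma.

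For the converse, assume the propagation hypothesis for every $k'\le k$ and fix $c\di{k}d$; arguing by contradiction, suppose $F^t(c)$ and $F^t(d)$ agree on $B_M(0)$ for all $t$, so that at time $0$ every difference lies in $(-\infty,-M-1]\cup[M+1,\infty)$. The heart of the argument is a one-sided \emph{sweeping} statement: if all differences of a pair start on the right, i.e. $l_0\ge M+1$, and the associated sequence $l_t$ is not lower bounded, then letting $\tau$ be the first time $l_\tau\le M$ gives $l_\tau\ge l_{\tau-1}-r\ge 1$, so that pair differs at a cell of $[1,M]\subseteq B_M(0)$ at time $\tau$; the mirror argument on $r_t$ treats a pair whose differences start on the left. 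Applied to the full pair $(c,d)$ when it is one-sided, this entry into $B_M(0)$ contradicts cleanliness, settling the all-right and all-left cases.

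The remaining, genuinely delicate, case is when the differences \emph{straddle} the central ball, $l_0\le -M-1$ and $r_0\ge M+1$: here the extremal positions only move further from the center, the one-sided sweeping does not apply directly, and the full strength of the hypothesis over all $k'\le k$ is required. The plan is to split $c$ at the clean site $0$ into one-sided perturbations $c^+$ (equal to $c$ on $[0,\infty)$ and to $d$ elsewhere) and its mirror $c^-$, each asymptotic to $d$ with strictly fewer than $k$ differences. Applying the sweeping statement to $(c^+,d)$ produces a time $\tau$ at which $F^\tau(c^+)$ and $F^\tau(d)$ differ inside $B_M(0)$; since $(c,d)$ is clean there, $F^\tau(c)$ and $F^\tau(c^+)$ differ at that cell, and the locality Lemma~\ref{th:locality} then traces this difference of the pair $(c,c^+)$ — whose initial discrepancies lie entirely on one side of $0$ — back to an earlier difference confined to a fixed window of width $l$ or $r$ at the splitting boundary. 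Exhibiting from this a difference of the \emph{original} pair $(c,d)$ inside $B_M(0)$ is the main obstacle: it is precisely the point where a far-propagated perturbation must cross the clean band, and the argument must preclude that the left and right contributions cancel there at every time. Once this band-crossing step is secured, the contradiction is complete and $F$ is $k$-expansive, with the attained separation satisfying $\dist\ge 2^{-\max\{l,r\}}$, i.e. with pre-expansivity constant $2^{-\max\{l,r\}}$.
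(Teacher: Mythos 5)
Your first item is correct: combining shift-invariance with the expansivity constant is exactly the content of the paper's own argument (the paper phrases it contrapositively --- if $l_t^d(c)$ were bounded below by $n$ then $T_m(\sigma_{n-m}(c))=T_m(\sigma_{n-m}(d))$ --- while you shift the pair and collect difference positions directly). Your speed bounds $l_{t+1}^d(c)\ge l_t^d(c)-r$ and $r_{t+1}^d(c)\le r_t^d(c)+l$, and the resulting one-sided ``sweeping'' statement, are also correct, and they do settle the cases where all differences of $(c,d)$ lie on one side of the clean ball.

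The second item, however, contains a genuine gap --- one you yourself flag --- and it sits at the crux of the lemma: the straddling case. After producing a time at which $(c^+,d)$ differ inside $B_M(0)$ and transferring this by cleanliness to a difference of $(c,c^+)$, you stop at the step you call ``band-crossing'' and never derive the contradiction; nothing in your text excludes the scenario you worry about, namely that the perturbation carried by $(c,c^+)$ and the one carried by $(c^+,d)$ cancel inside the band at every time. The missing idea is to exploit the minimality of the first crossing times of \emph{both} auxiliary pairs simultaneously, which is exactly what the paper's choice $\overline t=\min\{t_r,t_l\}$ accomplishes. Concretely: let $\tau_R$ be the first time $(c^+,d)$ has a difference at a position $\le M$ (it exists by the hypothesis applied to the pair $(c^+,d)$, which has at most $k$ differences), and let $\tau_L$ be the first time $(c,c^+)$ has a difference at a position $\ge -M$. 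Your speed bounds give that at time $\tau_R$ every difference of $(c^+,d)$ lies at positions $\ge M+1-r\ge 1$, with one in $[1,M]$; and at time $\tau_L$ every difference of $(c,c^+)$ lies at positions $\le -M-1+l\le -1$, with one in $[-M,-1]$. Cleanliness of $(c,d)$ on $B_M(0)$ turns the difference of $(c^+,d)$ in $[1,M]$ at time $\tau_R$ into a difference of $(c,c^+)$ at the same cell, whence $\tau_L\le\tau_R$; symmetrically, it turns the difference of $(c,c^+)$ in $[-M,-1]$ at time $\tau_L$ into a difference of $(c^+,d)$ there, whence $\tau_R\le\tau_L$. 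So $\tau_R=\tau_L$, and at this common time the pair $(c^+,d)$ simultaneously has all of its differences at positions $\ge 1$ and a difference at a position $\le -1$: contradiction. Alternatively (and this is the rigorous content behind the paper's terse ``by the choice of $m$''), one can prove by induction on $t$, using $M\ge\max\{l,r\}$ to protect the interface at $0$, that under the cleanliness assumption $F^t(c^+)=F^t(d)$ on $(-\infty,0]$ and $F^t(c^+)=F^t(c)$ on $[0,\infty)$ for every $t$; then $l_t^d(c^+)\ge 1$ for all $t$, contradicting the unboundedness hypothesis for $(c^+,d)$ outright. Without one of these arguments, your proof of item 2 is incomplete.
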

\begin{proof}
\begin{enumerate}
\item Let us suppose that $F$ is $k$-expansive with pre-expansivity constant $m$.
Let $c\di{k} d$ be two configurations, and let us assume that $l_t^d(c)> n$ for some $n\in \Z$;
this means that $F^t(c)_{]-\infty,n]}=F^t(d)_{]-\infty,n]}$ for every $t\in\N$.
Thus $T_m(\sigma_{n-m}(c))=T_m(\sigma_{n-m}(d))$, which is a contradiction.
The analogous happens if $(r_t^d(c))_{t\in\N}$ is upper bounded.
\item We need to prove that $F$ is $k$-expansive with pre-expansivity constant $2^{-m}=2^{-max\{l,r\}}$.
Let $c\di{k}d$ be two configurations, and let us define the next two additional configurations.

\begin{align*}
c^l(i)&=\left\{\begin{array}{rr} c(i)&\textrm{if } i<0\\ d(i)&\textrm{if } i\ge 0\end{array}\right.\\
c^r(i)&=\left\{\begin{array}{rr} d(i)&\textrm{if } i<0\\ c(i)&\textrm{if } i\ge 0\end{array}\right.
\end{align*}

If $c_{[-m,m]}\neq d_{[-m,m]}$, $T_m(c)\neq T_m(d)$ and we are done, so let us suppose that $c_{[-m,m]}=d_{[-m,m]}$.
Let $k'$ and $k''$ be such that $c^r\di{k'}d$, $c^l\di{k''}d$ and $k'+k''=k$.

If $k'\neq 0$, $(l_t^d(c^r))_{t\in\N}$ is not lower bounded, thus by Lemma~\ref{th:locality} and the fact that $c^r$ is equal to $d$ below position $m$, there is a minimal iteration $t_r$ such that $F^{t_r}(c^r)_{[m-r,m]}\neq F^{t_r}(d)_{[m-r,m]}$.
Analogously, if $k''\neq 0$ there is a minimal iteration $t_l$ such that $F^{t_l}(c^l)_{[-m,-m+l]}\neq F^{t_l}(d)_{[-m,-m+l]}$.

Let us take $\overline{t}=min\{t_r,t_l\}$, by the choice of $m$ we have that $F^{\overline{t}}(c)_{[0,m]}=F^{\overline{t}}(c^r)_{[0,m]}$ and $F^{\overline{t}}(c)_{[-m,0]}=F^{\overline{t}}(c^l)_{[-m,0]}$, and at least one of them is different from $F^{\overline{t}}(d)$ between $[-m,m]$, thus $T_m(c)\neq T_m(d)$.
\end{enumerate}
\end{proof}

The last lemma establishes that the pre-expansivity constant is uniform in dimension one (it does not depends on $k$), thus we can conclude the next corollary.

\begin{coro}\label{th:e-constant}
If $F$ is $k$-expansive for every $k\in\N$ then it is pre-expansive.
\end{coro}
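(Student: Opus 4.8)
The goal is to show that for a one-dimensional CA $F$, being $k$-expansive for every $k\in\N$ implies pre-expansivity. Two asymptotic configurations $c\asy d$ with $c\neq d$ differ in exactly $k$ positions for some finite $k\geq 1$, so the hypothesis already gives, for each $k$, an expansivity constant $\delta_k$ witnessing separation. The subtlety that makes this a genuine corollary rather than a triviality is the quantifier order: pre-expansivity demands a \emph{single} constant $\delta$ that works uniformly over all asymptotic pairs, i.e. over all values of $k$ at once, whereas a naive reading of "$k$-expansive for every $k$" only supplies a possibly $k$-dependent constant $\delta_k$ with no guarantee that $\inf_k \delta_k > 0$.

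The plan is to exploit Lemma~\ref{lem:lr} to collapse this quantifier dependence. First I would observe that the second item of Lemma~\ref{lem:lr} pins the pre-expansivity constant to the canonical value $2^{-\max\{l,r\}}$, which depends only on the neighborhood $[-l,r]$ of $F$ and not on $k$. So the strategy is: from $k$-expansivity for every $k$, extract the propagation statement that for every $c\di{k}d$ the sequences $(l_t^d(c))_t$ and $(r_t^d(c))_t$ are respectively not lower and not upper bounded (this is the first item of Lemma~\ref{lem:lr}), and then feed this into the second item to recover $k$-expansivity with the uniform constant $2^{-\max\{l,r\}}$.

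Concretely, fix any asymptotic pair $c\asy d$ with $c\neq d$, and let $k$ be the number of positions where they differ, so $c\di{k}d$. By hypothesis $F$ is $k'$-expansive for every $k'\leq k$, so item~1 of Lemma~\ref{lem:lr} tells us that for each such $k'$ and each pair differing in exactly $k'$ positions the left propagation sequence is not lower bounded and the right one is not upper bounded. This is precisely the antecedent of item~2 of Lemma~\ref{lem:lr} for this $k$, whose conclusion is that $F$ is $k$-expansive with pre-expansivity constant $2^{-\max\{l,r\}}$. Since this constant is the same for every $k$, setting $\delta = 2^{-\max\{l,r\}}$ yields a single $\delta$ separating every asymptotic pair, which is exactly pre-expansivity.

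The main point to handle carefully is the bookkeeping on the propagation sequences when passing between the two items of Lemma~\ref{lem:lr}: item~2 requires the unboundedness hypothesis for \emph{all} $k'\leq k$ simultaneously, not merely for the target $k$, so I must invoke the full strength of "$k$-expansive for every $k$" (equivalently, for every $k'\leq k$) rather than $k$-expansivity alone. There is no real obstacle beyond this quantifier matching, since both halves of Lemma~\ref{lem:lr} are already established; the corollary is essentially the remark that the uniform constant furnished by that lemma is independent of $k$.
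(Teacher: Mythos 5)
Your proof is correct and takes essentially the same route as the paper: the paper derives this corollary directly from Lemma~\ref{lem:lr}, observing that the constant $2^{-\max\{l,r\}}$ furnished by its second item depends only on the neighborhood and not on $k$, so it serves as a single pre-expansivity constant for all asymptotic pairs. Your careful handling of the quantifiers (that item~2 needs the propagation hypothesis for all $k'\leq k$, supplied via item~1 from $k'$-expansivity for each $k'\leq k$) is precisely the content the paper leaves implicit.
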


\begin{lemm} 
$F$ is pre-expansive if and only if for all $c\asy d$, $(l_t^d(c))_{t\in\N}$ is not lower bounded and $(r_t^d(c))_{t\in\N}$ is not upper bounded.
\label{lem:leftright}
\end{lemm}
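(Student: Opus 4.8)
The plan is to prove Lemma~\ref{lem:leftright} by connecting pre-expansivity to the boundedness of the propagation sequences, exploiting the machinery already built in Lemma~\ref{lem:lr} and Corollary~\ref{th:e-constant}. The statement is a biconditional, so I would treat the two directions separately, and in fact most of the work has already been done.

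For the forward direction, suppose $F$ is pre-expansive. By the first item of Proposition~\ref{prop:general}, $F$ is then $k$-expansive for every $k>0$. Fix any asymptotic pair $c\asy d$ with $c\neq d$; then $c\di{k}d$ for some finite $k$, so $k$-expansivity applies. The first item of Lemma~\ref{lem:lr} then gives directly that $(l_t^d(c))_{t\in\N}$ is not lower bounded and $(r_t^d(c))_{t\in\N}$ is not upper bounded. Since the asymptotic pair was arbitrary, this half is complete essentially by quoting earlier results.

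For the converse, assume that for every asymptotic pair $c\asy d$ the sequence $(l_t^d(c))_{t\in\N}$ is not lower bounded and $(r_t^d(c))_{t\in\N}$ is not upper bounded. I want to show $F$ is pre-expansive. The natural route is through Corollary~\ref{th:e-constant}: it suffices to prove that $F$ is $k$-expansive for every $k\in\N$. Fix $k$. To invoke the second item of Lemma~\ref{lem:lr}, I need the unboundedness hypothesis not just for pairs differing in exactly $k$ positions but for all $k'\le k$ simultaneously. This is immediate, because any pair $c\di{k'}d$ with $k'\le k$ is in particular an asymptotic pair, so the blanket hypothesis covers it. Thus the premise of Lemma~\ref{lem:lr}(2) holds for this $k$, yielding $k$-expansivity (with the uniform constant $2^{-\max\{l,r\}}$). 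As $k$ was arbitrary, $F$ is $k$-expansive for every $k$, and Corollary~\ref{th:e-constant} delivers pre-expansivity.

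The main thing to be careful about is the quantifier bookkeeping rather than any genuine obstacle: the hypothesis is phrased over all asymptotic pairs at once, whereas Lemma~\ref{lem:lr} is stated per fixed $k$ (and, in its second item, per $k'\le k$). The key observation making the converse work cleanly is that ranging over all asymptotic pairs automatically supplies the ``for all $k'\le k$'' clause needed by Lemma~\ref{lem:lr}(2), since finitely-differing pairs are exactly the asymptotic ones. Provided this alignment is stated explicitly, the proof is a short assembly of Proposition~\ref{prop:general}(1), Lemma~\ref{lem:lr}, and Corollary~\ref{th:e-constant}, with no new calculation required.
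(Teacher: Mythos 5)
Your proof is correct and follows essentially the same route as the paper's: the forward direction combines pre-expansivity $\Rightarrow$ $k$-expansivity with item (1) of Lemma~\ref{lem:lr}, and the converse applies item (2) of Lemma~\ref{lem:lr} for each $k$ (the hypothesis over all asymptotic pairs supplying the ``for all $k'\le k$'' premise) and then invokes Corollary~\ref{th:e-constant}. The only difference is that you spell out the quantifier bookkeeping more explicitly than the paper does.
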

\begin{proof}
\begin{itemize}
\item[($\Rightarrow$)] Let $c\di{k} d$ be two asymptotic configurations. Since $F$ is pre-expansive, it is also $k$-expansive, thus by Lemma~\ref{lem:lr}, $(l_t^d(c))_{t\in\N}$ is not lower bounded and $(r_t^d(c))_{t\in\N}$ is not upper bounded.
\item[($\Leftarrow$)] If for every $k\in\N$ and every pair $c\di{k} d$ we have that $(l_t^d(c))_{t\in\N}$ is not lower bounded and $(r_t^d(c))_{t\in\N}$ is not upper bounded, then by Lemma~\ref{lem:lr}, $F$ is $k$-expansive, and by Corollary~\ref{th:e-constant}, we conclude that $F$ is pre-expansive.
\end{itemize}
\end{proof}

Left and right propagation determines also expansivity. The next lemma can be proven by using the techniques from the last lemmas.

\begin{lemm}\label{th:exp}
$F$ is positively expansive if and only if for any pair of different configurations $c,d$, if $c_{]-\infty,0]}=d_{]-\infty,0]}$ then $(l_t^d(c))_{t\in\N}$ is not lower bounded, and if $c_{[0,\infty[}=d_{[0,\infty[}$ then $(r_t^d(c))_{t\in\N}$ is not upper bounded.
\end{lemm}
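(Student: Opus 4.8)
The plan is to prove both implications with expansivity constant $2^{-m}$ where $m=\max\{l,r\}$, just as in Lemma~\ref{lem:lr}. Since the Cantor distances take only the values $0$ and $2^{-n}$, the expansivity constant may always be normalised to a power of two, and the inequality $\dist(F^t(c),F^t(d))\ge 2^{-m}$ is equivalent to $F^t(c)$ and $F^t(d)$ differing somewhere in $B_m(0)=[-m,m]$; this is the event I track throughout.

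For necessity ($\Rightarrow$) I would show that expansivity forces both boundedness conditions by a pure translation argument (which does not even need the agreement hypotheses). Assume $F$ is expansive with constant $2^{-m}$ and that for some pair $c\neq d$ the sequence $(l_t^d(c))_{t}$ is bounded below by $N$; this means $F^t(c)$ and $F^t(d)$ agree on $]-\infty,N-1]$ for every $t$, i.e.\ all their difference positions are $\ge N$. Setting $s=N-m-1$ and using that $F$ commutes with $\sigma_s$, the differences between $F^t(\sigma_s(c))$ and $F^t(\sigma_s(d))$ all sit at positions $\ge m+1$ for every $t$, hence $\dist(F^t(\sigma_s(c)),F^t(\sigma_s(d)))\le 2^{-(m+1)}<2^{-m}$ for all $t$; since $\sigma_s$ is a bijection the pair is distinct, contradicting expansivity. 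The case of $(r_t^d(c))_{t}$ bounded above is symmetric, shifting the other way. In particular the statement holds for the restricted pairs appearing in the lemma.

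For sufficiency ($\Leftarrow$) I would mimic the proof of Lemma~\ref{lem:lr}. Given $c\neq d$, if they already differ inside $[-m,m]$ we are done at $t=0$; otherwise $c_{[-m,m]}=d_{[-m,m]}$ and I introduce the hybrids
\begin{align*}
c^l(i)&=c(i)\ \text{if } i<0, & c^l(i)&=d(i)\ \text{if } i\ge0,\\
c^r(i)&=d(i)\ \text{if } i<0, & c^r(i)&=c(i)\ \text{if } i\ge0.
\end{align*}
By the case assumption $c^r$ coincides with $d$ on $]-\infty,m]\supseteq\ ]-\infty,0]$ and $c^l$ coincides with $d$ on $[-m,\infty[\ \supseteq[0,\infty[$, so the two hypotheses of the statement apply: if $c^r\neq d$ then $(l_t^d(c^r))_{t}$ is not lower bounded, and if $c^l\neq d$ then $(r_t^d(c^l))_{t}$ is not upper bounded; and at least one of $c^l\neq d$, $c^r\neq d$ holds because $c\neq d$ while they agree on $[-m,m]$. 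Combining unboundedness with Lemma~\ref{th:locality} (whose half-line hypotheses are met) yields a first time $t_r$ at which $F^{t_r}(c^r)$ and $F^{t_r}(d)$ differ inside the right edge $[m-r,m]$, and a first time $t_l$ for $c^l$ inside the left edge $[-m,-m+l]$, both edges lying in $[-m,m]$ since $l,r\le m$. Taking $\overline t=\min\{t_l,t_r\}$ and the agreements $F^{\overline t}(c)_{[0,m]}=F^{\overline t}(c^r)_{[0,m]}$ and $F^{\overline t}(c)_{[-m,0]}=F^{\overline t}(c^l)_{[-m,0]}$, the orbit of $c$ inherits from whichever hybrid realises the minimum a difference with $F^{\overline t}(d)$ located in $[-m,m]$, giving $\dist(F^{\overline t}(c),F^{\overline t}(d))\ge 2^{-m}$.

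The delicate point, identical to the one already settled in Lemma~\ref{lem:lr}, is this last window-agreement step: one must prove that up to time $\overline t$ neither the differences originating on the left (those of $c$ against $c^r$, at positions $\le -m-1$) nor those originating on the right (those of $c$ against $c^l$, at positions $\ge m+1$) have crossed the centre, so that on $[0,m]$ the orbit of $c$ still matches that of $c^r$ and on $[-m,0]$ it matches that of $c^l$. This is exactly where the minimality of $t_l$ and $t_r$ together with the speed bound $m=\max\{l,r\}$ are used, and it is the only part that requires genuine care; everything else is bookkeeping with Lemma~\ref{th:locality}. Note finally that Corollary~\ref{th:e-constant} and Lemma~\ref{lem:leftright} are the pre-expansive analogues, so the present statement is their two-sided counterpart for full expansivity.
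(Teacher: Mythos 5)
Your proposal is correct and takes essentially the same approach the paper intends: the paper gives no explicit proof of this lemma, saying only that it "can be proven by using the techniques from the last lemmas," and your two directions are exactly those techniques --- the translation/shift argument of Lemma~\ref{lem:lr}(1) for necessity, and the hybrid-configuration argument with Lemma~\ref{th:locality} and the minimal times $t_l,t_r$ of Lemma~\ref{lem:lr}(2) for sufficiency. The window-agreement step you flag as delicate is indeed the same claim the paper itself asserts without further detail in the proof of Lemma~\ref{lem:lr} ("by the choice of $m$\dots"), so deferring to it is consistent with the paper's own level of rigor, and the claim does hold (it can be checked by induction on $t\le\overline{t}$ using the minimality of $t_l,t_r$).
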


The last two lemmas show a similarity in information propagation between pre-expansivity and positive expansivity: the differences between two configurations are spread both to the left and to the right and thus there is a sensitivity to initial conditions in all directions. 
 We can formalize this following the directional dynamics setting of~\cite{Sablik08} (we could also use the more general viewpoint of~\cite{DelacourtPST11} but we prefer to keep a lighter setting for clarity of exposition). Let ${\alpha\in\R}$. A CA $F$ is said to be pre-expansive in \emph{direction $\alpha$} if there is some ${\delta>0}$ such that
\[\forall c,d\in Q^\Z, c\not=d\text{ and }c\asy d
  \Rightarrow \exists t\in\N,\dist(\sigma_{\lceil\alpha t\rceil}\circ F^t(c), \sigma_{\lceil\alpha t\rceil}\circ F^t(d))>\delta.\]
In particular, pre-expansivity in direction $0$ is the same as pre-expansivity. Similarly $F$ is said to be sensitive to initial conditions in direction $\alpha$ if there is some ${\delta>0}$ such that 
\[\forall c\in Q^\Z, \forall\epsilon>0, \exists d\in Q^\Z\exists t\in\N : \dist(c,d)<\epsilon \wedge\dist(\sigma_{\lceil\alpha t\rceil}\circ F^t(c), \sigma_{\lceil\alpha t\rceil}\circ F^t(d))>\delta.\]

\begin{rem}\label{rem}
  Lemma~\ref{lem:leftright} generalizes to the directional dynamics setting: \\
  $F$ is pre-expansive in direction $\beta$ if and only if for all $c\asy d$, ${(l_t^d(c)-\lceil\beta t\rceil)_{t\in\N}}$ is not lower bounded and ${(r_t^d(c)-\lceil\beta t\rceil)_{t\in\N}}$ is not upper bounded.\\
  In particular, if the CA neighborhood is $N\subseteq\{l,\ldots,r\}$, then its set of pre-expansivity directions is included in $]-r,-l[$.
\end{rem}

\begin{prop}
  \label{prop:directional}
  Let $F$ be any CA which is pre-expansive in some direction $\alpha$. Then the following holds:
  \begin{enumerate}
  \item $F$ is sensitive to initial conditions in any direction;
  \item if $F$ is also pre-expansive in direction ${\alpha'>\alpha}$ then it is pre-expansive in any direction $\beta$ with ${\alpha\leq\beta\leq\alpha'}$.
  \end{enumerate}
\end{prop}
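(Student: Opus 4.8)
The plan is to read both items off the directional reformulation in Remark~\ref{rem}, which characterizes pre-expansivity in a direction $\beta$ by the fact that, for every $c\asy d$, the sequence $l_t^d(c)-\lceil\beta t\rceil$ is not lower bounded and $r_t^d(c)-\lceil\beta t\rceil$ is not upper bounded. I would first record that pre-expansivity in direction $\alpha$ forces pre-injectivity: given $c\asy d$ with $c\neq d$ and $F(c)=F(d)$, translating the pair far from the origin (exactly as in the proof of Proposition~\ref{prop:general}(5)) makes $\dist$ arbitrarily small at time $0$ and equal to $0$ at all later times, contradicting the definition. Consequently $F^t(c)\neq F^t(d)$ for all $t$, so the propagation sequences are defined for every $t$.

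Item~2 is the easy half and follows by pure monotonicity of the ceiling. Fix $\beta$ with $\alpha\le\beta\le\alpha'$ and a pair $c\asy d$. For $t\ge0$, $\alpha\le\beta$ gives $\lceil\alpha t\rceil\le\lceil\beta t\rceil$, hence $l_t^d(c)-\lceil\beta t\rceil\le l_t^d(c)-\lceil\alpha t\rceil$; the right-hand side is not lower bounded by pre-expansivity in direction $\alpha$, so neither is the left-hand side. Symmetrically, $\beta\le\alpha'$ gives $r_t^d(c)-\lceil\beta t\rceil\ge r_t^d(c)-\lceil\alpha' t\rceil$, which is not upper bounded by pre-expansivity in direction $\alpha'$. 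By Remark~\ref{rem}, $F$ is pre-expansive in direction $\beta$.

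For item~1 I would fix an arbitrary direction $\gamma$, set $s=\max\{l,r\}$, and use the one-sided speed bounds coming from locality: $r_{t+1}^d(c)\le r_t^d(c)+s$ and $l_{t+1}^d(c)\ge l_t^d(c)-s$ (a front may collapse arbitrarily fast through cancellations but can only \emph{advance} at bounded speed). Given $c$ and $\epsilon=2^{-n}$, I build $d$ by flipping $c$ at a single far cell $z_0$ with $|z_0|>n$, so $c\di{1}d$ and $\dist(c,d)<\epsilon$, choosing $z_0>0$ when $\gamma\ge\alpha$ and $z_0<0$ otherwise. In the case $\gamma\ge\alpha$, track the integer sequence $\psi(t)=\lceil\gamma t\rceil-l_t^d(c)$: we have $\psi(0)=-z_0<0$, and along a subsequence witnessing that $l_t^d(c)-\lceil\alpha t\rceil$ is not lower bounded, together with $\lceil\gamma t\rceil\ge\lceil\alpha t\rceil$, we get $\psi(t)\to+\infty$, so $\psi$ is not bounded above. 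The advance bound $l_{t+1}^d(c)\ge l_t^d(c)-s$ makes the increments of $\psi$ bounded above by a constant $C$ depending only on $s$ and $\gamma$, so at the first time $t^\ast$ with $\psi(t^\ast)\ge0$ one gets $0\le\psi(t^\ast)<C$, i.e. the difference position $l_{t^\ast}^d(c)$ lies within distance $C$ of $\lceil\gamma t^\ast\rceil$. This yields $\dist(\sigma_{\lceil\gamma t^\ast\rceil}\circ F^{t^\ast}(c),\sigma_{\lceil\gamma t^\ast\rceil}\circ F^{t^\ast}(d))>2^{-C}$, giving sensitivity in direction $\gamma$ with constant $\delta=2^{-C}$ independent of $c$ and $\epsilon$. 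The case $\gamma<\alpha$ is symmetric, tracking $\chi(t)=r_t^d(c)-\lceil\gamma t\rceil$ and using $r_{t+1}^d(c)\le r_t^d(c)+s$.

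The main obstacle is this first-passage argument in item~1. The delicate point is that a propagation front has bounded speed only in its expanding direction, so one must track the correct front ($l_t$ for $\gamma\ge\alpha$, $r_t$ for $\gamma<\alpha$) and exploit exactly that one-sided Lipschitz bound to ensure the moving observation point $\lceil\gamma t\rceil$ cannot be overtaken without a difference landing in a bounded window around it. Verifying the one-sided speed bounds from locality, and checking that the sign choice for $z_0$ makes $\psi(0)$ negative while the $\alpha$-pre-expansivity keeps $\psi$ unbounded above, are the steps requiring the most care.
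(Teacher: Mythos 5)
Your proof is correct and follows essentially the same route as the paper's: item~2 by monotonicity of ${t\mapsto\lceil\beta t\rceil}$ squeezed between the two expansive directions via Remark~\ref{rem}, and item~1 by the same first-passage argument in which a single far-away difference must eventually cross the moving observation point $\lceil\gamma t\rceil$, with the bounded propagation speed of the front guaranteeing that at the first crossing time the difference lies in a window of bounded size around it. Your one deviation is in fact a refinement: you use only the one-sided speed bound ${l_{t+1}^d(c)\ge l_t^d(c)-s}$, whereas the paper invokes the two-sided bound ${|l_{t+1}^d(c)-l_t^d(c)|\le{}}$ radius, which is inaccurate as stated (the left front can jump rightward arbitrarily far through cancellations), although only the one-sided half is actually needed there as well.
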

\begin{proof}
  We first deduce that if ${\alpha\leq\beta\leq\alpha'}$ and $\alpha'$ is a direction of pre-expansivity, then for any ${c\neq d}$ with ${c\asy d}$:
  \[l_t^d(c)-\lceil\beta t\rceil\leq l_t^d(c)-\lceil\alpha t\rceil\]
  is not lower bounded and 
  \[r_t^d(c)-\lceil\beta t\rceil\geq r_t^d(c)-\lceil\alpha' t\rceil\]
  is not upper bounded so $\beta$ is also a direction of pre-expansivity.

  Finally, consider a direction ${\beta\geq\alpha}$ (the case ${\beta\leq\alpha}$ is symmetric), some configuration $c$ and some ${\epsilon>0}$. Take any ${d\neq c}$ with ${c\asy d}$ and ${l_0^d(c)}$ large enough so that ${\dist(c,d)\leq\epsilon}$. Since ${(l_t^d(c)-\lceil\alpha t\rceil)_{t\in\N}}$ is not lower bounded and ${\beta\geq\alpha}$ then there is some $t$ with ${l_t^d(c)-\lceil\beta t\rceil\leq 0}$. Let $t_0$ be the smallest such $t$. Since ${|l_{t+1}^d(c) -l_t^d(c)|}$ is bounded by the radius of $F$ we deduce that there exists a constant $M$, depending only on $\beta$ and the radius of $F$, such that ${M\leq l_{t_0}^d(c)-\lceil\beta t\rceil\leq 0}$. Said differently ${\dist(\sigma_{\lceil\beta t_0\rceil}\circ F^{t_0}(c), \sigma_{\lceil\beta t_0\rceil}\circ F^{t_0}(d))\geq 2^{M}}$. We have thus shown that $F$ is sensitive in direction $\beta$.
\end{proof}

\subsection{Hierarchy of expansive-like properties}
 
The following proposition shows that the simplest form of linearity
is not sufficient to achieve the separation between positive expansivity and
pre-expansivity. 
Let us first introduce some notation.

\begin{prop}\label{prop:cyclic1D}
  Let $\Z_n$ be the group of integers modulo $n$ with addition,
  and let $F$ be a one-dimensional linear CA over $\Z_n$. Then $F$ is
  $1$-expansive if and only if it is positively expansive.
\end{prop}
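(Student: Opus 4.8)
The nontrivial direction is \emph{$1$-expansive $\Rightarrow$ expansive}; the converse is immediate, since expansivity is a universal quantification over all pairs $c\neq d$, hence in particular over pairs with $c\di{1}d$. The overall plan is to reduce to a prime-power alphabet, pass to the polynomial representation of the linear rule, and then feed the propagation criterion of Lemma~\ref{th:exp}. First I would decompose: writing $n=\prod_i p_i^{k_i}$, iterating Lemma~\ref{lem:groupdecomp} conjugates $F$, through a \emph{cellwise} state bijection, to a product $\prod_i F_i$ with each $F_i$ linear over $\Z_{p_i^{k_i}}$. A cellwise bijection preserves both the Cantor distance and the number of differing cells exactly, so it preserves $1$-expansivity and expansivity. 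By Proposition~\ref{prop:general}(4) (taking $k=1$), $F$ is $1$-expansive iff every $F_i$ is; and by Lemma~\ref{th:exp} a product of expansive CA is expansive, because a nonzero one-sided configuration of the product is one-sided and nonzero in some component, whose propagation sequence is then unbounded, forcing that of the product. So it suffices to treat a single $1$-expansive linear $F$ over $\Z_{p^k}$.

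Over $\Z_{p^k}$, additivity forces $F(c)_z=\sum_i a_i c_{z+i}$ with $a_i\in\Z_{p^k}$, so $F$ is convolution by a Laurent polynomial $P$ and $F^t$ by $P^t$; in particular $F^t(c^a)\leftrightarrow a\,P^t$. The key algebraic input is the socle $p^{k-1}\Z_{p^k}\cong\Z_p$, on which $F$ acts exactly as its reduction $\bar P = P\bmod p$ over the \emph{field} $\Z_p$ (since $p^{k-1}a_i = p^{k-1}(a_i\bmod p)$). Applying $1$-expansivity via Lemma~\ref{lem:lr} to the pair $(c^{p^{k-1}},\overline{0})$, the support of $p^{k-1}P^t$ equals $\operatorname{supp}(\bar P^{\,t})$, whose extreme degrees over $\Z_p$ are $t\,d_{\min}(\bar P)$ and $t\,d_{\max}(\bar P)$; unboundedness of the propagation sequences forces $d_{\min}(\bar P)<0<d_{\max}(\bar P)$, i.e. $\bar P$ genuinely has terms of both signs. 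I would then obtain expansivity through Lemma~\ref{th:exp}: given a nonzero configuration $e$ supported on $[1,\infty[$ (left support being symmetric), let $j=\min_z v_p(e_z)$ (the minimal $p$-adic valuation occurring in $e$) and set $e'=p^{\,k-1-j}e$. Then $e'$ is nonzero, supported inside $\operatorname{supp}(e)$, socle-valued, and $\operatorname{supp}(P^t\!*e')\subseteq\operatorname{supp}(P^t\!*e)$. Being socle-valued, $P^t\!*e'$ corresponds over the field $\Z_p$ to $\bar P^{\,t}\bar{e'}$ with $\bar{e'}\neq 0$ a one-sided series, whose lowest degree is $t\,d_{\min}(\bar P)+\min\operatorname{supp}(\bar{e'})\to-\infty$. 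Hence $l_t^{\overline{0}}(e)$ is not lower bounded, and symmetrically $r_t^{\overline{0}}(e)$ is not upper bounded, so $F$ is expansive by Lemma~\ref{th:exp}.

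The main obstacle is exactly what distinguishes expansivity from pre-expansivity in this setting: $\Z_{p^k}$ has zero divisors, so leading coefficients of $P^t$ may vanish and cancellations may occur, and moreover expansivity quantifies over \emph{arbitrary}, infinitely-supported configurations rather than asymptotic ones, so one cannot simply track a single leftmost perturbation. Projecting onto the socle by multiplying by $p^{\,k-1-j}$ circumvents both difficulties simultaneously, since it lands in a submodule on which $F$ is linear over a field, where supports add with no cancellation. The only points requiring care are that this projection does not empty the support (guaranteed by the choice $j=\min_z v_p(e_z)$, which leaves a unit-valued, hence nonzero, socle coefficient) and that it can only shrink, never enlarge, $\operatorname{supp}(P^t\!*e)$.
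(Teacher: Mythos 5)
Your proof is correct and follows essentially the same route as the paper's: reduction to prime-power alphabets via Lemma~\ref{lem:groupdecomp}, extraction of the mod-$p$ propagation behaviour from $1$-expansivity applied to the socle spot configuration $c^{p^{k-1}}$ (your degree arithmetic for $\bar P^{\,t}$ is exactly the paper's unit-tracking functions $l_t^U,r_t^U$ in polynomial language), the same factor-out-$p^j$ device for an arbitrary one-sided configuration (your minimal valuation $j$ is the paper's maximal common power of $p$), and the same conclusion via Lemma~\ref{th:exp}. The only difference is presentational: the paper divides by $p^j$ and tracks coprimality with $p$ through explicit sums, whereas you multiply by $p^{k-1-j}$ to land in the socle and invoke no-cancellation of extreme degrees over the field $\Z_p$.
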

\begin{proof}
  First by Proposition~\ref{prop:general} if $F$ is positively expansive it is
  in particular $1$-expansive.

  For the other direction, it is sufficient to consider the case
  ${n=p^k}$ with $p$ a prime number by Lemma~\ref{lem:groupdecomp}
  because if some ${F_1\times F_2}$ is $1$-expansive then both $F_1$
  and $F_2$ must be $1$-expansive.
  
  By commutation with shifts we have $l^{\overline{0}}_t(\sigma_n(c^a)) = -n+l^{\overline{0}}_t(c^a)$ and the analogous for $r_t$. 
  Moreover ${c^a = a\cdot c^1}$ because we are on a cyclic group.

  Let us define $l_t^U(c^a)=min\{i\in\Z\ |\gcd(F^t(c^a)_i,p)=1\}$.
  The next properties hold.
  \begin{enumerate}
  \item \emph{If $\gcd(a,p)\not =1$, then $l_t^U(c^a)=\infty$ (respectively $r_t^U(c^a)=-\infty$)}. 
In fact, in  this case the entire evolution of $F$ over $c^a$ is composed by multiples of $a$, which are multiples of $p$ too.
  \item \emph{If $\gcd(a,p)=1$, then $l_t^U(c^a)=l_t^U(c^1)$ (respectively $r_t^U(c^a)=r_t^U(c^1)$)}. 
In fact, in this case $F^t(c^a)_i=aF^t(c^1)_i$ is coprime with $p$ if and only if $F^t(c^1)_i$ is.
  \item \emph{$l_t^{\overline{0}}(c^a)\le l_t^U(c^1)$ (respectively $r_t^{\overline{0}}(c^a)\ge r_t^U(c^1)$)}. 
In fact, if $F^t(c^1)_i$ is coprime with $p$, then $F^t(c^a)_i=aF^t(c^1)_i$ is not null.
  \item \emph{$l_t^{\overline{0}}(c^{p^{k-1}})=l^U_t(c^1)$ (respectively $r_t^{\overline{0}}(c^{p^{k-1}})=r^U_t(c^1)$)}. 
In fact, $F^t(c^{p^{k-1}})_i=p^{k-1}F^t(c^1)_i=0\Leftrightarrow \gcd(F^t(c^1)_i,p)\not=1$.
  \end{enumerate}
  From the last assertion and Lemma~\ref{lem:lr} (first item) we have that $l^U_t(c^1)$ is not lower bounded and $r^U_t(c^1)$ is not upper bounded.

  Now let us take a configuration $v\in(\Z_{p^k})^\Z$, such that $v_{i}=0$ for every $i< 0$ and $v_0\neq 0$.
  Let us define $j=max\{i\in\{0,...,k\}\ |\ \forall x\in\Z, p^i | v(x)\}$, and let us consider $u=v/p^j$.
  In this way, there is $y\geq 0$ with ${u(y)\neq 0}$ and
  ${\gcd(p,u(y))=1}$. Let $y$ be the smallest integer with this property.
  
  $$F^t(u)_{l^U_t(c^1)+y}=\sum_{x=0}^{l^U_t(c^1)+y+rt}F^t(c^{u_x})_{l^U_t(c^1)+y-x}$$

  But $u(x)$ is a multiple of $p$ when $x< y$, thus ${F^t(c^{u_x})_i =
    0 \bmod p}$ for any $i$ and any ${x<y}$.
  For $x>y$, $F^t(c^{u_x})_{l^U_t(c^1)+y-x}$ is also a multiple of $p$ because the smallest index for which $F^t(c^{u_x})$ is coprime with $p$ is $l^U_t(c^{u_x})$ which is greater than or equal to $l^U_t(c^1)$.
  In this way we conclude that \[F^t(u)_{l^U_t(c^1)+y}=F^t(c^{u_y})_{l^U_t(c^1)}\bmod p\] is coprime with $p$ and is non null. 
  Therefore $F^t(v)_{l^U_t(c^1)+y}$ is non null as well.
  This implies that $l_t^{\overline{0}}(v)$ is not lower bounded.
  Symmetrically, $r_t^{\overline{0}}(w)$ is not upper bounded when $w$ is any configuration equal to zero on positive coordinates. 
  Lemma~\ref{th:exp} concludes that $F$ is positively expansive.
\end{proof}

To establish a separation between positive expansivity and pre-expansivity, we
will focus on linear CA obtained by what is often called ``second
order method'' in the literature \cite{Margolus1984}.  The idea is to
turn any CA into a reversible one by memorizing one step of history
and combining, in a reversible way, the memorized past step into the
produced future step. The interest of this construction for our
purpose is that positive expansivity is excluded from the beginning because no
non-trivial CA can be positively expansive and reversible at the same time
\cite{posexponetoone}.

\newcommand\SO[2]{\mathcal{SO}\bigl(#1,#2\bigr)}
Let $Q=\{0,\ldots,n-1\}$ be equipped with some group law $\gplus$ and
consider some CA $F$ over state set $Q$. 
The second-order CA associated to
$F$ and $\gplus$, denoted $\SO{F}{\gplus}$ is the CA over state set
${Q\times Q}$, which is conjugated through the natural bijection ${Q^\Z\times
  Q^\Z\rightarrow (Q\times Q)^\Z}$ to the map:
\[(c,d) \mapsto \bigl(d,F(d)\gpconf c\bigr)\]

The following proposition shows that second order construction is
useful to separate positive expansivity from $1$-expansivity.
Some of the results in the next proposition can be deduced from more general results in~\cite{phdLukkarila,phdDiLena,kitchens}, but we develop a new specific proof here.

\begin{prop}
  \label{prop:secondorder}
  Let $\gplus$ be a group law over $Q$ with neutral element $0$ and $F$ be a CA over $Q$ which
  is linear for $\gplus$. It
  holds:
  \begin{enumerate}
  \item $\SO{F}{\gplus}$ is bijective and linear for the law $\gplus\times\gplus$;
  \item if $F$ is LR-permutive then $\SO{F}{\gplus}$ is $\Z$-expansive
    and $1$-expansive;
  \item if $F$ is LR-permutive then for any $m>0$ the subshift of
    traces ${T_m\bigl((Q\times Q)^\Z\bigr)}$ is a vertex SFT.
  \end{enumerate}
\end{prop}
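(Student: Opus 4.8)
The plan is to treat the three items in turn, writing $G=\SO{F}{\gplus}$ and using the conjugated form $G(c,d)=\bigl(d,F(d)\gpconf c\bigr)$ throughout. For item~1, linearity is a one-line computation: applying $G$ to $(c_1,d_1)\gpconf(c_2,d_2)=(c_1\gpconf c_2,d_1\gpconf d_2)$ and expanding $F(d_1\gpconf d_2)=F(d_1)\gpconf F(d_2)$ (linearity of $F$) yields the same configuration as $G(c_1,d_1)\gpconf G(c_2,d_2)$, after reordering terms using commutativity of $\gplus$. For bijectivity I exhibit the inverse explicitly: $H(a,b)=\bigl(b\gpconf(-F(a)),a\bigr)$ satisfies $H\circ G=\mathrm{id}=G\circ H$, and since $H$ is given by a local rule (it reads $a$ in the neighborhood of $F$ and $b$ at the same cell), it is itself a CA by the Curtis--Hedlund theorem; hence $G$ is reversible.

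For item~2, note first that $G^t(c,d)=(x_t,x_{t+1})$ where $x_0=c$, $x_1=d$ and $x_{n+1}=F(x_n)\gpconf x_{n-1}$ for all $n\in\Z$ (the recurrence runs backwards as $x_{n-1}=x_{n+1}\gpconf(-F(x_n))$ because $G$ is reversible). Write $\phi_L(a)=f(a,0,\dots,0)$ and $\phi_R(a)=f(0,\dots,0,a)$; by linearity these are endomorphisms of $(Q,\gplus)$, and LR-permutivity makes them bijective, hence automorphisms. To prove $\Z$-expansivity with constant $2^{-(l+r)}$, by linearity it suffices to show that the only orbit $(x_n)_{n\in\Z}$ which vanishes on the window $[-m,m]$ for every $n$, with $m=l+r$, is the zero orbit. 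Assuming the whole strip $[-m,m]\times\Z$ vanishes, I extend the zero region outwards one column at a time: solving the recurrence $x_{n+1}(z)=f(x_n(z-l),\dots,x_n(z+r))\gplus x_{n-1}(z)$ at the cell $z=M+1-r$ for its rightmost input via $\phi_R$ forces $x_n(M+1)=0$ (for $m=l+r$ all other inputs and the two neighbouring time-slices already lie in the known zero region), and symmetrically $\phi_L$ pushes the zero region to the left. By induction $x_n\equiv\overline{0}$ for all $n$, the desired contradiction. This step genuinely uses the bi-infinite orbit (through the term $x_{n-1}$), which is why it yields $\Z$-expansivity rather than positive expansivity.

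For the $1$-expansivity half of item~2, I will instead invoke Lemma~\ref{lem:lr}(2): it is enough to check, for every pair differing in a single cell, that the right propagation sequence is not upper bounded and the left one is not lower bounded. By linearity and shift-invariance I reduce to an orbit whose initial data $x_0,x_1$ are supported in $\{0\}$. Tracking the rightmost nonzero position $R(x_t)$, the key point is that the advancing front cannot be cancelled: since $x_0,x_1$ sit at a single common cell, one checks inductively that $R(x_{t-1})<R(x_t)+l$, so the leading value $\phi_L(x_t(R(x_t)))$ of $F(x_t)$ at position $R(x_t)+l$ is never met by a nonzero value of $x_{t-1}$; hence $R(x_{t+1})=R(x_t)+l$ and $R(x_t)\to+\infty$. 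Thus $r^{\overline{0}}_t$ is not upper bounded, and symmetrically (through $\phi_R$) $l^{\overline{0}}_t\to-\infty$, so Lemma~\ref{lem:lr}(2) gives $1$-expansivity with constant $2^{-\max\{l,r\}}$. I expect the single-cell hypothesis to be essential: with two or more differences placed at suitable positions the two fronts can collide and annihilate, which is exactly the mechanism preventing $\SO{F}{\gplus}$ from being pre-expansive in general.

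Finally, for item~3 I will identify a candidate SFT and show it equals the trace subshift. Let $Y\subseteq\bigl((Q\times Q)^{[-m,m]}\bigr)^\N$ be the set of sequences $(w_t)$ such that consecutive symbols are coherent (the part of $w_t$ recording $x_{t+1}|_{[-m,m]}$ equals the part of $w_{t+1}$ recording the same row) and such that the recurrence $x_{t+1}(z)=f(x_t(z-l),\dots,x_t(z+r))\gplus x_{t-1}(z)$ holds at every interior cell $z\in[-m+l,m-r]$. These are finitely many constraints on three consecutive coordinates, so $Y$ is an SFT, and $T_m\bigl((Q\times Q)^\Z\bigr)\subseteq Y$ is immediate. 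The reverse inclusion is the heart of the matter: given $(w_t)\in Y$, I must reconstruct an actual orbit realizing it. The plan is to propagate the strip data along the two characteristic directions using the inverse automorphisms $\phi_L^{-1}$ and $\phi_R^{-1}$ — each permutive inversion determines one further column from the already-known columns together with two neighbouring time-slices — thereby filling the forward light cones, and then to use reversibility to recover $x_0$ and $x_1$ on all of $\Z$. The main obstacle is precisely this reconstruction for an arbitrary window size $m>0$: I must show that the two outward extensions are globally consistent (never prescribing conflicting values) and together cover the whole space-time lattice, which is where the interplay of the left- and right-moving characteristics, together with the uniform solvability guaranteed by LR-permutivity, has to be controlled carefully.
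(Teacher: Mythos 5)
Your treatment of items 1 and 2 is essentially correct. Item 1 follows the paper's own route (exhibit the inverse, check componentwise linearity); the only caveat is that your formula $H(a,b)=\bigl(b\gpconf(-F(a)),a\bigr)$ and your linearity computation both use commutativity of $\gplus$ — for a general group law the inverse must act on the left, $(a,b)\mapsto\bigl((-F(a))\gpconf b,\,a\bigr)$, which is how the paper writes it. For item 2 your $\Z$-expansivity argument is genuinely different from the paper's and is clean: the paper tracks the leftmost/rightmost non-null cell (differences with non-null second component propagate forward in time, those with non-null first component propagate under $\Psi^{-1}$) and then appeals to Lemma~\ref{th:exp}, whereas you show directly that a bi-infinite orbit of the second-order recurrence that vanishes on a strip of width $l+r$ must vanish everywhere, pushing the zero region outwards column by column with the two permutivity bijections; this correctly isolates where reversibility (the $x_{n-1}$ term) is used. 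Your $1$-expansivity proof (linearity reduces to spot configurations, then front tracking via $\phi_L,\phi_R$ and Lemma~\ref{lem:lr}) is the same idea as the paper's.

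Item 3 is where there is a genuine gap: you do not prove it. Your candidate $Y$ is the right SFT — coherence plus the recurrence at interior cells is equivalent to the paper's Assertion 2 condition, since LR-permutivity lets one solve the boundary equations uniquely for the relevant extension values — and the inclusion $T_m\bigl((Q\times Q)^\Z\bigr)\subseteq Y$ is trivial, as you note. But the reverse inclusion, that every locally admissible sequence is realized by an actual configuration, is the entire content of the statement, and you leave it as a ``plan'' whose key step you yourself flag as an unresolved obstacle.

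Moreover, the plan as sketched cannot be completed in the form you describe. The outward permutive extension determines the column $m+j$ (and symmetrically $-m-j$) only from time roughly $j$ onwards, because each new value $x_t(m+j)$ is solved from an equation involving both $x_{t+1}$ and $x_{t-1}$ on already-known columns. Hence no row $x_t$ is ever determined on all of $\Z$: row $t$ is pinned down only on a window of width roughly $2m+2t$. There is therefore no complete pair of consecutive rows to which reversibility could be applied to ``recover $x_0$ and $x_1$''; that step is circular, and the part of space-time outside the strip and the two cones is genuinely underdetermined (indeed, since a reversible CA cannot be positively expansive, there exist nonzero configurations with null forward trace, supported exactly there). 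The missing idea is the paper's superposition argument, which uses linearity in an essential way: Assertions 1L and 1R construct, for each word $u$, a configuration whose orbit produces $u$ (paired with the null word on the other component) at a prescribed time at the boundary positions $[m+1,m+r]$ or $[-m-l,-m-1]$, while being null at all earlier times on a region containing the window; then, by induction on the prefix length, a configuration realizing $w_0,\ldots,w_{n-1}$ is turned into one realizing $w_0,\ldots,w_n$ by adding (via $\gpconf$) two such correction configurations, which fix the $n$-th symbol without disturbing the previous ones; a compactness argument finally passes from finite prefixes to infinite traces. Without this device — or an equivalent mechanism for making the infinitely many free choices outside the light cones consistently — item 3 remains unproven.
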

\begin{proof}
  \begin{enumerate}
  \item It is sufficient to check that the CA over the state
  set ${Q\times Q}$ is conjugated to the following map:
  \[(c,d)\mapsto \bigl(\overline{\iota}(F(c))\gpconf d,c\bigr)\]
  the inverse of $\SO{F}{\gplus}$, where $\iota$ denotes the
  inverse function for the group law $\gplus$. Moreover
  $\SO{F}{\gplus}$ is linear for $\gplus\times\gplus$ because it is
  component-wise linear for $\gplus$.

  \item Let us suppose that $F$ is LR-permutive with neighborhood $\{-l,...,r\}$ and denote
  $\Psi=\SO{F}{\gplus}$. 
  To prove $\Z$-expansivity of $\Psi$ it is sufficient to notice that $\Psi$ propagates to left and right when the second
  $Q$-component is non-null and $\Psi^{-1}$ propagates to left and right when the first $Q$-component is non-null.
  In fact, let us consider a configuration $c\in(Q\times Q)^\Z$
  equal to $(0,0)$ on negative coordinates but such that $c(0)\neq
  (0,0)$. 
  If the second $Q$-component of $c(0)$ is non-null then, and since $F(0,...,0)=0$, the
  leftmost non-null cell of $\Psi(c)$ is at position $-r$ and it is its second $Q$-component which is non-null, i. e., if the leftmost difference from (0,0) is in the second $Q$-component, this will be always like this and the difference will propagate to the left.
  The same holds symmetrically for propagation to the right.
  In the same way, and given the form of $\Psi^{-1}$, differences in the first $Q$-component will propagate to the left and right through $\Psi^{-1}$, thus by lemma~\ref{th:exp}, $\Psi$ is $\Z$-expansive.\\
  Now let us take $c^{(a,b)}$ (recall it is the configuration equal to $(a,b)$ at 0 and $(0,0)$ everywhere else).
  By the previous arguments, if $b\neq 0$, $(l_t^{\overline{(0,0)}}(c^{(a,b)}))_{t\in\N}$ is not lower bounded and $(r_t^{\overline{(0,0)}}(c^{(a,b)}))_{t\in\N}$ is not upper bounded.
  But if $b=0$ and $a\neq 0$, then $(\Psi(c))(0)=(0,a)$ and null everywhere else, then again $(l_t^{\overline{(0,0)}}(c^{(a,b)}))_{t\in\N}$ is not lower bounded and $(r_t^{\overline{(0,0)}}(c^{(a,b)}))_{t\in\N}$ is not upper bounded.
  Therefore, $\Psi$ (and $\Psi^{-1}$) is $1$-expansive.
  
  \item The proof of the third item will be performed in two steps. To
  simplify notations, for any pair of words ${u,v\in Q^\ast}$ of the
  same length, we will denote by $u\choose v$ the word over alphabet
  ${Q\times Q}$ whose projection on the first (resp. second) component
  is $u$ (resp. $v$).
  \begin{description}
  \item{{\bf Assertion 1L:}} \emph{For every word $u\in Q^r$ there exists a configuration $c$ such that $\Psi^t(c)|_{[0,r-1]}={ u\choose 0^r}$ and $\Psi^k(c)_i=(0,0)$ for every $0\le k\le t$ and $i<(t-k)r$.}
  
  \emph{Proof of Assertion 1L.} By induction on $t$. If $t=0$ it is obvious, we just take $c$ equal to ${ u\choose 0^r}$ at $[0,r-1]$ and $(0,0)$ everywhere else.
  Now, since $F$ is LR-permutive, given a word $w\in Q^{l+r}$, let us define the permutation $\tau_w(a)=f(wa)$ for every $a\in Q$.
  Given a word $u\in Q^r$, we inductively define another word $v\in Q^r$ as follows: $v_0=\tau^{-1}_{0^{l+r}}(u_0)$, $v_{i+1}=\tau^{-1}_{0^{l+r-i}v_{[0,i]}}(u_{i+1})$.
  In this way, $f(0^{l+r}v)=u$.
  By induction hypothesis, there exists a configuration $c$ such that $\Psi^t(c)|_{[0,r-1]}={ v\choose 0^r}$ and $\Psi^k(c)_i=(0,0)$ for every $0\le k\le t$ and $i<(t-k)r$.
  We take $d=\sigma_{-r}(c)$, then $\Psi^t(d)|_{[r,2r-1]}={ v\choose 0^r}$, and $(0,0)$ to the left of $r$; and $\Psi^{t+1}(d)|_{[0,r-1]}={ u\choose 0^r}$, and $(0,0)$ to the left of $0$.
  Moreover, $\Psi^k(d)_i=(0,0)$ for every $0\le k\le t+1$ and $i<(t+1-k)r$.
  \item{{\bf Assertion 1R:}} \emph{For every word $u\in Q^l$ there exists a configuration $c$ such that $\Psi^t(c)|_{[-l+1,0]}={ u\choose 0^l}$ and $\Psi^k(c)_i=(0,0)$ for every $0\le k\le t$ and $i>(k-t)l$.}

  \emph{The proof of Assertion 1R is analogous to the proof of Assertion 1L.}
  
  \item{{\bf Assertion 2:}} \emph{A sequence $\bigl(w_t\bigr)_{t=0}^n$, with $w_t\in (Q\times Q)^{2m+1}$, is a finite subsequence of a trace in $T_m\bigl((Q\times Q)^\Z\bigr)$ if and only if for any $t$, there are extensions $w_R\in (Q\times Q)^r$ and $w_L\in (Q\times Q)^l$ verifying}
  \[\psi(w_L\cdot w_t\cdot w_R)= w_{t+1}\]
  where $\psi$ denotes the action of $\Psi$ over finite words.
  
  \emph{Proof of Assertion 2.} In one direction, it is clear, so let $\bigl(w_t\bigr)_{t=0}^n$ be a sequence such that for any $t$, there are extensions $w_R\in (Q\times Q)^r$ and $w_L\in (Q\times Q)^l$ verifying $\psi(w_L\cdot w_t\cdot w_R)= w_{t+1}$, and let us prove that it is a subsequence of a trace of $\Psi$.
We perform the proof by induction on $n$. If $n=0$ there is nothing to prove, of course any sequence of length 1 can be part of a trace.
Now let $c$ be a configuration such that $\Psi^k(c)|_{[-m,m]}=w_k$, for every $k\in\{0,...,n-1\}$.
By locality, only the values of $c$ between $-m-nl$ and $m+nr$ are relevant to this hypothesis, and we take $c(i)=(0,0)$ outside these limits.
Let $w_R$ and $w_L$ be such that $\psi(w_L\cdot w_{n-1}\cdot w_R)= w_n$.
Let us remark that the first $Q$-component of $w_L$ and $w_R$ can be chosen arbitrarily, given the form of $\Psi$.
We will suppose that $\pi_1(w_L)=\pi_1(\Psi^{n-1}(c)_{[-m-l,-m-1]})$ and $\pi_1(w_R)=\pi_1(\Psi^{n-1}(c)_{[m+1,m+r]})$, thus we can take ${u_L\choose 0^l}=w_L-\Psi^{n-1}(c)|_{[-m-l,-m-1]}$ and ${u_R\choose 0^r}=w_R-\Psi^{n-1}(c)|_{[m+1,m+r]}$.
We take from Assertion 1L a configuration $c^R$ that produces the word ${u_R\choose 0^r}$ at time $n-1$ at position ${[m+1,m+r]}$ and $(0,0)$ to the left of the light cone that starts at $m+nr$ with slope $-1/r$.
From Assertion 1R we take a configuration $c^L$ that produces the word ${u_L\choose 0^r}$ at time $n-1$ at position ${[-m-l,-m-1]}$ and $(0,0)$ to the right of the light cone that starts at $-m-nl$ with slope $1/l$.
By linearity, $\Psi^{n-1}(c^L\gpconf c^R\gpconf c)|_{[-m-l,m+r]}=w_Lw_{n-1}w_R$, and then $\Psi^n(c_L\gpconf c_R\gpconf c)|_{[-m,m]}=w_n$, moreover $\Psi^k(c_L\gpconf c_R\gpconf c)|_{[-m,m]}=w_k$, for every $0\le k<n$, which completes the proof.
  \end{description}
  \end{enumerate}
  \end{proof}

We will now give an example of pre-expansive CA which is not
positively expansive. 

\begin{exam}[$\Psi$]
  Let ${Q=\{0,1,2\}}$, $+$ be the addition modulo $3$, and $F_3$ be the
  CA defined over $Q^\Z$ by ${F_3 = \sigma \overline{+} \sigma_{-1}}$. We define
  $\Psi$ as the second order construction applied to $F_3$. 
  \[\Psi=\SO{F_3}{+}\]
  Figure~\ref{fig:stdiag} shows a space-time diagram of $\Psi$.
\end{exam}

\begin{figure}[hbp!]
  \centering
  \includegraphics[width=.9\textwidth]{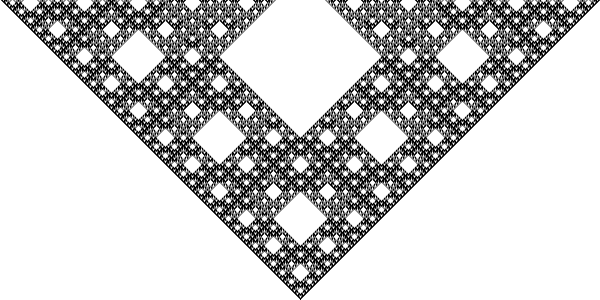}
  \caption{Space-time diagram of $\Psi$ starting from a configuration with a single non-zero cell.}
  \label{fig:stdiag}
\end{figure}

\newcommand\stateZ[2]{}
\newcommand\stateO[2]{\draw[fill=white!30!black] (#1,#2) rectangle  +(1,1); \draw (#1,#2)+(.5,.5) node {$1,0$};}
\newcommand\stateZO[2]{\draw[fill=white!40!black] (#1,#2) rectangle +(1,1);\draw (#1,#2)+(.5,.5) node {$2,0$};}
\newcommand\stateOO[2]{\draw[fill=white!50!black] (#1,#2) rectangle +(1,1);\draw (#1,#2)+(.5,.5) node {$0,1$};}
\newcommand\stateZZO[2]{\draw[fill=white!60!black] (#1,#2) rectangle +(1,1);\draw (#1,#2)+(.5,.5) node {$1,1$};}
\newcommand\stateOZO[2]{\draw[fill=white!70!black] (#1,#2) rectangle +(1,1);\draw (#1,#2)+(.5,.5) node {$2,1$};}
\newcommand\stateZOO[2]{\draw[fill=white!80!black] (#1,#2) rectangle +(1,1);\draw (#1,#2)+(.5,.5) node {$0,2$};}
\newcommand\stateOOO[2]{\draw[fill=white!90!black] (#1,#2) rectangle +(1,1);\draw (#1,#2)+(.5,.5) node {$1,2$};}
\newcommand\stateZZZO[2]{\draw (#1,#2) rectangle +(1,1);\draw (#1,#2)+(.5,.5) node {$2,2$};}

\begin{figure}
  \begin{center}
    \tiny \hbox{
      \begin{tikzpicture}[scale=.4]
        \draw[thick,->] (2,-.5) -- (2,2) node[midway,sloped,above] {time} ;
        \draw[dotted] (1.5,0) -- (13,0);

\stateZ{0}{0}\stateZ{0}{1}\stateZ{0}{2}\stateZ{0}{3}\stateZ{0}{4}\stateZ{0}{5}
\stateZ{1}{0}\stateZ{1}{1}\stateZ{1}{2}\stateZ{1}{3}\stateZ{1}{4}\stateZ{1}{5}
\stateZ{2}{0}\stateZ{2}{1}\stateZ{2}{2}\stateZ{2}{3}\stateZ{2}{4}\stateZ{2}{5}\stateZOO{2}{6}
\stateZ{3}{0}\stateZ{3}{1}\stateZ{3}{2}\stateZ{3}{3}\stateZ{3}{4}\stateZOO{3}{5}\stateZO{3}{6}
\stateZ{4}{0}\stateZ{4}{1}\stateZ{4}{2}\stateZ{4}{3}\stateZOO{4}{4}\stateZO{4}{5}\stateZ{4}{6}
\stateZ{5}{0}\stateZ{5}{1}\stateZ{5}{2}\stateZOO{5}{3}\stateZO{5}{4}\stateZOO{5}{5}\stateZO{5}{6}
\stateZ{6}{0}\stateZ{6}{1}\stateZOO{6}{2}\stateZO{6}{3}\stateOO{6}{4}\stateO{6}{5}\stateZOO{6}{6}
\stateZO{7}{0}\stateZOO{7}{1}\stateZO{7}{2}\stateZ{7}{3}\stateZ{7}{4}\stateZOO{7}{5}\stateZO{7}{6}
\stateZ{8}{0}\stateZ{8}{1}\stateZOO{8}{2}\stateZO{8}{3}\stateOO{8}{4}\stateO{8}{5}\stateZOO{8}{6}
\stateZ{9}{0}\stateZ{9}{1}\stateZ{9}{2}\stateZOO{9}{3}\stateZO{9}{4}\stateZOO{9}{5}\stateZO{9}{6}
\stateZ{10}{0}\stateZ{10}{1}\stateZ{10}{2}\stateZ{10}{3}\stateZOO{10}{4}\stateZO{10}{5}\stateZ{10}{6}
\stateZ{11}{0}\stateZ{11}{1}\stateZ{11}{2}\stateZ{11}{3}\stateZ{11}{4}\stateZOO{11}{5}\stateZO{11}{6}
\stateZ{12}{0}\stateZ{12}{1}\stateZ{12}{2}\stateZ{12}{3}\stateZ{12}{4}\stateZ{12}{5}\stateZOO{12}{6}
\stateZ{13}{0}\stateZ{13}{1}\stateZ{13}{2}\stateZ{13}{3}\stateZ{13}{4}\stateZ{13}{5}
\stateZ{14}{0}\stateZ{14}{1}\stateZ{14}{2}\stateZ{14}{3}\stateZ{14}{4}\stateZ{14}{5}

      \end{tikzpicture} \hskip 1cm
      \begin{tikzpicture}[scale=.4]
        \draw[thick,->] (1,-.5) -- (1,2) node[midway,sloped,above] {time} ;
        \draw[dotted] (.5,0) -- (13,0);
\stateZ{0}{0}\stateZ{0}{1}\stateZ{0}{2}\stateZ{0}{3}\stateZ{0}{4}\stateZ{0}{5}
\stateZ{1}{0}\stateZ{1}{1}\stateZ{1}{2}\stateZ{1}{3}\stateZ{1}{4}\stateZ{1}{5}\stateZOO{1}{6}
\stateZ{2}{0}\stateZ{2}{1}\stateZ{2}{2}\stateZ{2}{3}\stateZ{2}{4}\stateZOO{2}{5}\stateZO{2}{6}
\stateZ{3}{0}\stateZ{3}{1}\stateZ{3}{2}\stateZ{3}{3}\stateZOO{3}{4}\stateZO{3}{5}\stateOO{3}{6}
\stateZ{4}{0}\stateZ{4}{1}\stateZ{4}{2}\stateZOO{4}{3}\stateZO{4}{4}\stateZ{4}{5}\stateZ{4}{6}
\stateZ{5}{0}\stateZ{5}{1}\stateZOO{5}{2}\stateZO{5}{3}\stateZOO{5}{4}\stateZO{5}{5}\stateOO{5}{6}
\stateZ{6}{0}\stateZOO{6}{1}\stateZO{6}{2}\stateOO{6}{3}\stateO{6}{4}\stateZOO{6}{5}\stateZO{6}{6}
\stateZOO{7}{0}\stateZO{7}{1}\stateZ{7}{2}\stateZ{7}{3}\stateZOO{7}{4}\stateZO{7}{5}\stateZ{7}{6}
\stateZ{8}{0}\stateZOO{8}{1}\stateZO{8}{2}\stateOO{8}{3}\stateO{8}{4}\stateZOO{8}{5}\stateZO{8}{6}
\stateZ{9}{0}\stateZ{9}{1}\stateZOO{9}{2}\stateZO{9}{3}\stateZOO{9}{4}\stateZO{9}{5}\stateOO{9}{6}
\stateZ{10}{0}\stateZ{10}{1}\stateZ{10}{2}\stateZOO{10}{3}\stateZO{10}{4}\stateZ{10}{5}\stateZ{10}{6}
\stateZ{11}{0}\stateZ{11}{1}\stateZ{11}{2}\stateZ{11}{3}\stateZOO{11}{4}\stateZO{11}{5}\stateOO{11}{6}
\stateZ{12}{0}\stateZ{12}{1}\stateZ{12}{2}\stateZ{12}{3}\stateZ{12}{4}\stateZOO{12}{5}\stateZO{12}{6}
\stateZ{13}{0}\stateZ{13}{1}\stateZ{13}{2}\stateZ{13}{3}\stateZ{13}{4}\stateZ{13}{5}\stateZOO{13}{6}
\stateZ{14}{0}\stateZ{14}{1}\stateZ{14}{2}\stateZ{14}{3}\stateZ{14}{4}\stateZ{14}{5}
      \end{tikzpicture}}   
    \vskip 1cm
           \includegraphics[width=.8\textwidth]{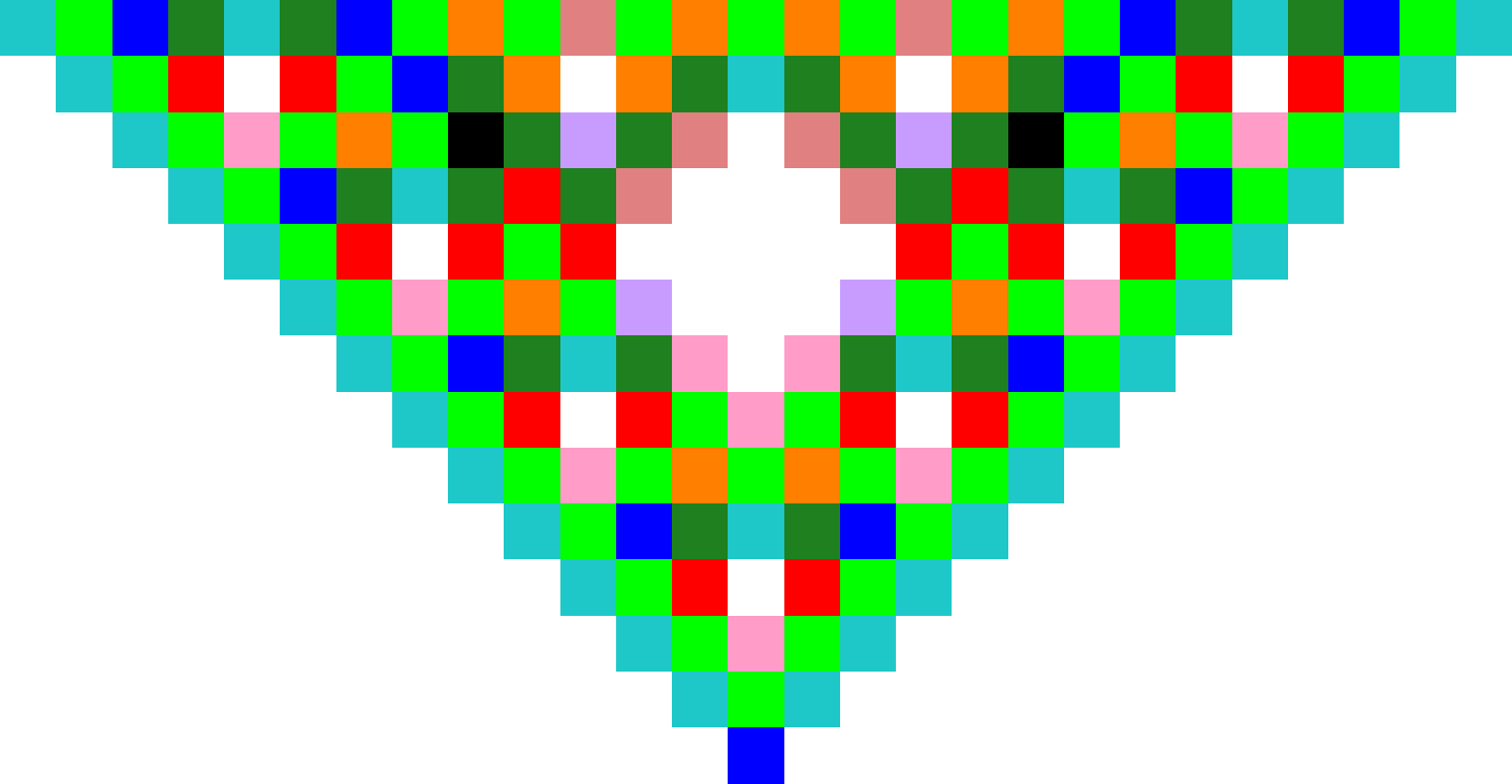}
  \end{center}
  \caption{In the first row some space-time diagrams of $\Psi$ are shown (state $(0,0)$ is
    represented by empty space).
    The third diagram shows the morphisms representing the dependence of $\Psi^t(c^{(a,b)})_i$ on $(a,b)$, for each space-time coordinate $(i,t)$: in particular, white means no dependence and red means bijective dependence.}
\label{fig:stdg}
\end{figure}


To establish the pre-expansivity of $\Psi$ we will study its dependency
structure, \textit{i.e.} how the value of the cell at position $z$ and
time $t$ depends on value of cells at other positions and earlier times.
To express these space-time relations we denote by $\Psi_z^t$ the map
$\sigma_z\circ \Psi^t$ and by $\oplus$ the component-wise addition modulo
$3$ over ${\{0,1,2\}\times\{0,1,2\}}$ naturally extended to configurations of $(Q\times Q)^\Z$ and then
naturally extended to functions on such configurations.

Following section~\ref{sec:abelian}, and lemma~\ref{lem:multiscaleidentities}, the matrix that represents $\Psi$ is:

$$\left(\begin{array}{rr} 0 & 1\\ 1 & x^{-1}+x\end{array}\right).$$

Its characteristic polynomial is $p(\lambda)=\lambda^2-\lambda(x^{-1}+x)-1$, which gives the relation $\Psi^2=\Psi\circ\sigma_{-1} \oplus id \oplus \Psi\circ\sigma$. 
Lemma~\ref{lem:multiscaleidentities} proves that this imply the existence of a multi-scale additive identity, next lemma gives the precise shape of this identity in the special case of $\Psi$.

\begin{lemm}
  \label{lem:relation}
  Let $c$ be a configuration in $(Q\times Q)^\Z$.
  Then, for any $k\geq 0$, any $t\geq 0$ and any $z\in\Z$ we have:
  \[\Psi_z^{2\cdot 3^k+t} (c) = \Psi_z^t \oplus
  \Psi_{z-3^k}^{3^k+t}\oplus \Psi_{z+3^k}^{3^k+t} (c).\]
\end{lemm}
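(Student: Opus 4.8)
The plan is to recast $\Psi$ as a $2\times 2$ matrix over a Laurent polynomial ring and to exploit the Frobenius endomorphism available in characteristic $3$. First I would reduce to the case $t=0$, $z=0$. Since $\Psi$ is linear for $\oplus$ (it is a second-order construction, hence linear by Proposition~\ref{prop:secondorder}) and commutes with every shift, it suffices to establish the single operator identity
\[\Psi^{2\cdot 3^k} = \mathrm{id}\;\oplus\;\sigma_{-3^k}\circ\Psi^{3^k}\;\oplus\;\sigma_{3^k}\circ\Psi^{3^k}.\]
Applying this identity to the configuration $\Psi^t(c)$ and then the shift $\sigma_z$ recovers the general statement: $\sigma_z$ and $\Psi^t$ are both homomorphisms for $\oplus$ and commute with $\Psi$ and with the shifts, so $\sigma_z\circ\sigma_{\pm 3^k}=\sigma_{z\pm 3^k}$ and $\Psi^{3^k}\circ\Psi^t=\Psi^{3^k+t}$ turn the three right-hand terms into exactly $\Psi_z^t(c)$, $\Psi_{z-3^k}^{3^k+t}(c)$ and $\Psi_{z+3^k}^{3^k+t}(c)$.

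Write $F_3=\sigma_1+\sigma_{-1}$ as an element of the commutative ring $\mathcal R=\Z_3[\sigma_1,\sigma_{-1}]$ of Laurent polynomials acting on $Q^\Z$. Then the conjugated map $(c,d)\mapsto(d,F_3(d)+c)$ of $\Psi=\SO{F_3}{+}$ is represented by
\[M=\begin{pmatrix}0&1\\1&F_3\end{pmatrix},\]
so that $\Psi^t$ corresponds to $M^t$ and the identity above becomes the ring identity $M^{2\cdot 3^k}=I+(\sigma_1^{3^k}+\sigma_1^{-3^k})\,M^{3^k}$ in $\mathcal R[M]$. The point to record here is that $\mathcal R[M]$ is a \emph{commutative} $\Z_3$-algebra, because $M$ commutes with the scalars of $\mathcal R$ and with itself; consequently the cubing map $x\mapsto x^3$ is a ring endomorphism of $\mathcal R[M]$, and freshman's dream $(a+b)^3=a^3+b^3$ holds throughout.

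The core is a self-similarity at every scale $3^k$, which I would prove by induction on $k$:
\[(M^{3^k})^2=(\sigma_1^{3^k}+\sigma_1^{-3^k})\,M^{3^k}+I.\]
The base case $k=0$ is Cayley--Hamilton for $M$, whose trace is $F_3$ and determinant is $-1$, giving $M^2=F_3M+I$. For the inductive step, cube the relation at level $k-1$ inside $\mathcal R[M]$: the left-hand side becomes $(M^{3^{k-1}})^6=M^{2\cdot 3^k}=(M^{3^k})^2$, while the right-hand side becomes $(\sigma_1^{3^{k-1}}+\sigma_1^{-3^{k-1}})^3(M^{3^{k-1}})^3+I$ by freshman's dream applied to the two commuting summands; a second application collapses $(\sigma_1^{3^{k-1}}+\sigma_1^{-3^{k-1}})^3$ to $\sigma_1^{3^k}+\sigma_1^{-3^k}$, yielding the relation at level $k$.

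The hard part is exactly this recognition that the quadratic Cayley--Hamilton relation reproduces itself verbatim one scale up under the characteristic-$3$ Frobenius, which simultaneously raises $M$ to $M^{3^k}$ and collapses $(\sigma_1+\sigma_1^{-1})^{3^k}$ to $\sigma_1^{3^k}+\sigma_1^{-3^k}$; everything else is bookkeeping. The only genuine subtlety is to keep the computation inside a commutative ring (namely $\mathcal R[M]$, not among arbitrary matrices over $\mathcal R$) so that Frobenius is multiplicative.
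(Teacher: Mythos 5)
Your proposal is correct and takes essentially the same route as the paper's proof: establish the base quadratic relation $\Psi^2 = \mathrm{Id}\oplus\sigma_{-1}\circ\Psi\oplus\sigma_{1}\circ\Psi$ (you derive it from Cayley--Hamilton of the companion matrix, the paper by direct check), lift it to scale $3^k$ using the characteristic-$3$ Frobenius (your induction-by-cubing is precisely the paper's one-line invocation of Lucas' Lemma), and then compose with $\Psi_z^t$ by linearity and shift-invariance. The only difference is presentational: you make explicit the commutative $\Z_3$-algebra $\mathcal{R}[M]$ in which the freshman's-dream step is legitimate, a point the paper leaves implicit.
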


\begin{proof}
  First it is straightforward to check that
  \[\Psi^2_0 (c) =  \Psi_{-1}^1 \oplus Id \oplus\Psi_1^1 (c).\]
  Then, by property of the Frobenius endomorphism, we have ${(a+b)^{3^k} = a^{3^k} + b^{3^k}}$
  when doing the arithmetics modulo 3. This identity naturally extends to
  $\oplus$ and therefore we have:
  \[\Psi^{2\cdot 3^k}_0 (c) =  \Psi_{-3^k}^{3^k} \oplus
  Id \oplus\Psi_{3^k}^{3^k} (c).\]
  Finally, by linearity of both $\sigma$ and $\Psi$ with respect to
  $\oplus$ we can compose both sides of the above equality by $\Psi_z^t$
  and the lemma follows.
\end{proof}

Using the above lemma, and arithmetics modulo 3, we can show that $\Psi$ has a simple dependency
structure at some space-time locations.

\begin{lemm}
  \label{lem:gponctual}
  Let us consider a configuration $c=c^{(a,b)}$ for some pair $(a,b)\in Q\times Q$. 
  For any integers $k\geq 0$ and ${M\geq 1}$, let $d_{M,k}=\Psi^{M\cdot3^{k+1}}(c)$. Then we have:
  \begin{itemize}
  \item $d_{M,k}(-M\cdot3^{k+1}+2\cdot3^k) = d_{M,k}(M\cdot3^{k+1}-2\cdot3^k) = \phi(a,b)$
  \item $d_{M,k}(i) = (0,0)$ for ${(M-1)\cdot 3^{k+1}\leq |i| < (M-1)\cdot 3^{k+1} + 3^k}$
  \end{itemize}
  where $\phi$ is an automorphism of $Q\times Q$ which does not depend on $k$ nor on $M$. 
\end{lemm}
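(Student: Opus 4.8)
The plan is to exploit the recursion from Lemma~\ref{lem:relation} iteratively, peeling off one factor of $3$ at a time. Writing $\Psi_z^{2\cdot3^k+t} = \Psi_z^t\oplus\Psi_{z-3^k}^{3^k+t}\oplus\Psi_{z+3^k}^{3^k+t}$ and reading it at the appropriate space-time points, I expect each application to split the dependency at scale $3^k$ into three copies at the next smaller scale, while the ``gap'' positions where $c=c^{(a,b)}$ is null get carried along. Since $c$ has a single non-zero cell at the origin, $\Psi_z^s(c)$ at a point $z$ is non-zero only if the backward light cone of radius $s$ reaches the origin; the arithmetic modulo $3$ together with Lucas' identity $(a+b)^{3^k}=a^{3^k}+b^{3^k}$ is precisely what makes the intermediate terms vanish or combine cleanly, leaving a single surviving term of the form $\Psi^{M\cdot3^{k+1}}$ evaluated at a symmetric pair of positions.

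Concretely, first I would establish the base case, treating $\Psi^{3}(c)$ or a small power directly from the explicit rule of $\Psi=\SO{F_3}{+}$, identifying the automorphism $\phi$ that describes how a single seed $(a,b)$ propagates to the characteristic diagonal positions. The automorphism $\phi$ should arise as the effect of one ``block'' of evolution and, crucially, be independent of the scale $k$ precisely because Lucas' Lemma turns the scale-$3^k$ dynamics into an exact rescaling of the scale-$1$ dynamics: raising to the power $3^k$ commutes with $\oplus$, so $\Psi_0^{2\cdot3^k}=\Psi_{-3^k}^{3^k}\oplus Id\oplus\Psi_{3^k}^{3^k}$ is the verbatim $3^k$-dilation of the identity $\Psi_0^2=\Psi_{-1}^1\oplus Id\oplus\Psi_1^1$. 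This self-similarity is what lets the same $\phi$ serve at every scale.

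Next I would run the induction on $M$. For the positions $\pm(M\cdot3^{k+1}-2\cdot3^k)$, the point is that after $M\cdot3^{k+1}$ steps the ``front'' of the evolution lies at the edge of the light cone, and the recursion relates $d_{M,k}$ to $d_{M-1,k}$ (or to evolutions at scale $k$), so that the value at the extremal position is governed by a single surviving branch carrying the seed through $\phi$. The vanishing claim for $(M-1)\cdot3^{k+1}\le|i|<(M-1)\cdot3^{k+1}+3^k$ I would get from the same light-cone/arithmetic bookkeeping: these are the positions just inside the previous-generation front, where the three contributing terms either fall outside their own cones or cancel modulo $3$.

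The main obstacle I anticipate is the combinatorial bookkeeping of \emph{which} of the three terms in each recursive splitting actually contributes at a given target position, and verifying that the unwanted terms genuinely vanish rather than merely cancel by coincidence. Keeping the indexing consistent between the two parameters $M$ and $k$ — and checking that the symmetric positions $\pm(M\cdot3^{k+1}-2\cdot3^k)$ line up exactly with the recursion's shifts of $\pm3^k$ — is delicate, and the null-window claim requires pinning down the light-cone boundary with the correct constants. I would manage this by organizing the argument as a clean double induction and invoking Lemma~\ref{lem:relation} with carefully chosen $t$ and $z$ at each step, rather than expanding everything explicitly.
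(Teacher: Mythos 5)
Your overall skeleton --- iterate Lemma~\ref{lem:relation}, settle the case $M=1$ first, then induct on $M$ using light-cone vanishing and cancellation modulo $3$ --- is exactly the paper's strategy, and your description of the induction on $M$ (apply the relation at scale $3^{k+1}$ so that two of the three terms die outside their backward cones, leaving a single branch carrying $\phi(a,b)$) is the paper's step almost verbatim.

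The genuine gap is in the base case. You plan to compute $\phi$ only at scale $k=0$ (from $\Psi^3(c)$) and then transport it to all $k$ by claiming that Lucas' identity makes the scale-$3^k$ dynamics a ``verbatim $3^k$-dilation'' of the scale-$1$ dynamics. That principle is false: for example $\Psi^3(c^{(a,b)})$ takes the nonzero value $(b,a)$ at positions $\pm 2$ and vanishes at $0$, so its support is $\{\pm1,\pm2,\pm3\}$, not a subset of $3\Z$; the scale-$3$ diagram is not a rescaled copy of the scale-$1$ diagram. What dilates is only the \emph{form} of the recursion --- which is precisely Lemma~\ref{lem:relation} and nothing more. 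To obtain a $k$-independent value at the space-time point $(\pm3^k,\,3^{k+1})$ one must unfold the relation until every surviving term is itself $k$-independent, and the key such input, absent from your proposal, is the light-cone edge value: $\Psi^n(c^{(a,b)})(\pm n)=(0,b)$ for every $n$ (a one-line recurrence, since reaching $\pm n$ at time $n$ forces the extreme shift at each step). With that in hand, two applications of Lemma~\ref{lem:relation} express $d_{1,k}(\pm3^k)$ as a sum of three edge values plus the seed, namely $(0,b)+2\cdot(0,b)+(0,b)+(a,b)=(a,2b)$, giving $\phi(a,b)=(a,2b)$ --- manifestly independent of $k$ and an automorphism. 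This computation is not optional bookkeeping: the lemma is true only because this sum happens to be invertible as a function of $(a,b)$ (had it come out to $(0,b)$ or $(0,0)$ the statement would fail), and nothing in your outline forces that outcome.
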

\begin{proof}
  Let's first show the two items for ${M=1}$ and let ${d_k=d_{1,k}}$.
  Denote by $c_z^t$ the state ${\bigl(\Psi^t(c)\bigr)(z)}$. First, by a
  simple recurrence we can show that ${c_{-n}^n=c_n^n=\pi(a,b)}$
  where $\pi$ is the projection ${\pi(a,b)=(0,b)}$.
  Then, applying Lemma~\ref{lem:relation} on $\Psi_{-3^k}^{2\cdot 3^k +
    3^k}$, we obtain:
  \[d_k(-3^k) = c_{-2\cdot 3^k}^{2\cdot 3^k} + c_{-3^k}^{3^k} +
  c_0^{2\cdot 3^k}.\]
  Applying Lemma~\ref{lem:relation} to $\Psi_0^{2\cdot 3^k+0}$ we then have:
  \begin{eqnarray*}
d_k(-3^k)& =& c_{-2\cdot 3^k}^{2\cdot 3^k} + 2\cdot c_{-3^k}^{3^k} +
  c_{3^k}^{3^k} + c_0^0\\
&=& (0,b)+2(0,b)+(0,b)+(a,b)\\
&=& (a,2b)
  \end{eqnarray*} 
  where ${\phi(a,b) = (a,2b)}$ is an automorphism.
  The same equality holds for $d_k(3^k)$ by
  symmetry and the first item of the lemma is shown.

  For the second item, first note that ${c_z^t=(0,0)}$ whenever
  ${z<-t}$ or ${z>t}$ because $\Psi$ has radius $1$. Consider any $i$
  with ${-3^k< i < 3^k}$. Applying Lemma~\ref{lem:relation} on $\Psi_i^{2\cdot 3^k +
    3^k}$, we obtain:
  \[d_k(i) = c_{i-3^k}^{2\cdot 3^k} + c_{i}^{3^k} +
  c_{i+3^k}^{2\cdot 3^k}.\]
  Applying Lemma~\ref{lem:relation} to $\Psi_{i-3^k}^{2\cdot 3^k+0}$ and
  $\Psi_{i+3^k}^{2\cdot 3^k+0}$ we further get:
  \[d_k(i) = c_{i-2\cdot 3^k}^{3^k} + c_{i-3^k}^0 + 3\cdot c_{i}^{3^k}
  + c_{i+3^k}^0 +  c_{i+2\cdot 3^k}^{3^k}.\]
  From what we said before and doing the arithmetics modulo $3$ we
  deduce ${d_k(i)=(0,0)}$ and the lemma follows.

  Now, proceeding by induction on $M$, suppose we have:
  \begin{itemize}
  \item $d_{M,k}(-M\cdot3^{k+1}+2\cdot3^k) = d_{M,k}(M\cdot3^{k+1}-2\cdot3^k) = \phi(a,b)$
  \item $d_{M,k}(i) = (0,0)$ for ${(M-1)\cdot 3^{k+1}\leq |i| < M\cdot 3^{k+1} -2\cdot3^k}$
  \end{itemize}
  Writing ${(M+1)\cdot3^{k+1} = 2\cdot 3^{k+1} + t}$ with ${t=(M-1)\cdot3^{k+1}}$ and applying\break Lemma~\ref{lem:relation} to $\Psi_z^{2\cdot 3^{k+1} + t}$ with ${z=(M+1)\cdot3^{k+1}-2\cdot3^k}$ we get:
  \begin{eqnarray*}
  d_{M+1,k}((M+1)3^{k+1}-2\cdot3^k) &=& d_{M,k}(M\cdot3^{k+1}-2\cdot3^k) + c_{(M+1)3^{k+1}-2\cdot3^k}^t \\& &+ d_{M,k}((M+2)3^{k+1}-2\cdot3^k)\\
    &= &d_{M,k}(M\cdot3^{k+1}-2\cdot3^k)\\
    &=& \phi(a,b) =(a,2b)
  \end{eqnarray*}
  Applying again Lemma~\ref{lem:relation} but with ${z=(M+1)\cdot3^{k+1}-2\cdot3^k - j}$ where\break ${-3^k\leq j < 0}$ we deduce $d_{M+1,k}(i) = (0,0)$ for ${M\cdot3^{k+1}\leq i < (M+1)\cdot 3^{k+1} -2\cdot3^k}$. By symmetry ${z\mapsto -z}$ we obtain the corresponding equalities and we finally have:
  \begin{itemize}
  \item $d_{M+1,k}(-(M+1)\cdot3^{k+1}+2\cdot3^k) = d_{M+1,k}((M+1)\cdot3^{k+1}-2\cdot3^k) = \phi(a,b)$
  \item $d_{M+1,k}(i) = (0,0)$ for ${M\cdot 3^{k+1}\leq |i| < (M+1)\cdot 3^{k+1} -2\cdot3^k}$
  \end{itemize}
  which completes the induction step. The lemma follows.
\end{proof}

The last lemma assures that finite configurations will always produce at least one non 0 cell at arbitrary large positions in the line.

\begin{prop}\label{prop:prexpexample}
  $\Psi$ is pre-expansive in direction $\alpha$ if and only if ${\alpha\in]-1,1[}$.
  \label{prop:gprexp}
\end{prop}
\begin{proof}
  Let $c$ be any configuration with $c\asy\overline{(0,0)}$.
  Denote
  $l=l_0^{\overline{(0,0)}}(c)$ and $r=r_0^{\overline{(0,0)}}(c)$ and consider any $k$ such that
  ${3^k>\max(|l|,|r|)}$ and ${M}$ such that ${|\alpha|<1-\frac{2}{3M}}$. By a finite number of applications of
  Lemma~\ref{lem:gponctual}, and by linearity and translation
  invariance of $\Psi$, we have:
  \begin{align*}
    \bigl(\Psi^{M\cdot3^{k+1}}(c)\bigr)(r-M\cdot3^{k+1}+2\cdot3^k) &= \phi\bigl(c(r)\bigr)\\
    \bigl(\Psi^{M\cdot3^{k+1}}(c)\bigr)(l+M\cdot3^{k+1}-2\cdot3^k) &= \phi\bigl(c(l)\bigr).\\
  \end{align*}
  Since $\phi$ is a permutation of $Q$ sending $(0,0)$ to itself and
  since $c(r)$ and $c(l)$ are both different from $(0,0)$ we deduce
  that ${l_{M\cdot3^{k+1}}^{\overline{(0,0)}}(c)<r-M\cdot3^{k+1}+2\cdot3^k}$ and ${r_{3^{k+1}}^{\overline{(0,0)}}(c)>l+M\cdot3^{k+1}-2\cdot3^k}$ for any
  $k$ large enough. This shows that ${\bigl(l_t(c)-\lceil\alpha t\rceil\bigr)_{t\in\N}}$ is not lower-bounded and
  ${\bigl(r_t(c)-\lceil\alpha t\rceil\bigr)_{t\in\N}}$ is not upper-bounded and concludes the proof by
  a directional version of Lemma~\ref{lem:leftright}.
\end{proof}

We now give an example of CA that is $1$-expansive but not pre-expansive.

\begin{exam}[$\Upsilon$]
  Let ${Q=\{0,1\}}$, $+$ be the addition modulo $2$, and $F_2$ be the
  CA defined over $Q^\Z$ by ${F_2 = \sigma \overline{+} \sigma_{-1}}$. We define
  $\Upsilon$ as the second order construction applied to $F_2$:
  \[\Upsilon=\SO{F_2}{+}.\]
\end{exam}

$\Upsilon$ has the same matrix as $\Psi$, the same characteristic polynomial, and a similar multi-scale identity, but arithmetics modulo 2 gives a different shape to the time space diagram as we can see in figure~\ref{fig:stdh}.

\renewcommand\stateZ[2]{}
\renewcommand\stateO[2]{\draw (#1,#2) rectangle +(1,1); \draw (#1,#2)+(.5,.5) node {$1,0$};}
\renewcommand\stateZO[2]{\draw[fill=white!80!black] (#1,#2) rectangle +(1,1); \draw (#1,#2)+(.5,.5) node {$0,1$};}
\renewcommand\stateOO[2]{\draw[fill=white!90!black] (#1,#2) rectangle +(1,1); \draw (#1,#2)+(.5,.5) node {$1,1$};}

\newcommand\stateB[2]{\draw (#1,#2) rectangle +(1,1); \draw (#1,#2)+(.5,.5) node {$id$};}
\newcommand\stateAB[2]{\draw[fill=white!80!black] (#1,#2) rectangle +(1,1); \draw (#1,#2)+(.5,.5) node {$\pi$};}
\newcommand\stateBB[2]{\draw[fill=white!90!black] (#1,#2) rectangle +(1,1); \draw (#1,#2)+(.5,.5) node {$e$};}

\begin{figure}
  \begin{center}
    \tiny \hbox{
      \begin{tikzpicture}[scale=.4]
        \draw[thick,->] (2,-.5) -- (2,2) node[midway,sloped,above] {time} ;
        \draw[dotted] (1.5,0) -- (13,0);
        \stateZ{0}{0}\stateZ{0}{1}\stateZ{0}{2}\stateZ{0}{3}\stateZ{0}{4}\stateZ{0}{5}
        \stateZ{1}{0}\stateZ{1}{1}\stateZ{1}{2}\stateZ{1}{3}\stateZ{1}{4}\stateZ{1}{5}
        \stateZ{2}{0}\stateZ{2}{1}\stateZ{2}{2}\stateZ{2}{3}\stateZ{2}{4}\stateZO{2}{5}
        \stateZ{3}{0}\stateZ{3}{1}\stateZ{3}{2}\stateZ{3}{3}\stateZO{3}{4}\stateOO{3}{5}
        \stateZ{4}{0}\stateZ{4}{1}\stateZ{4}{2}\stateZO{4}{3}\stateOO{4}{4}\stateOO{4}{5}
        \stateZ{5}{0}\stateZ{5}{1}\stateZO{5}{2}\stateOO{5}{3}\stateOO{5}{4}\stateOO{5}{5}
        \stateZ{6}{0}\stateZO{6}{1}\stateOO{6}{2}\stateOO{6}{3}\stateOO{6}{4}\stateOO{6}{5}
        \stateOO{7}{0}\stateOO{7}{1}\stateOO{7}{2}\stateOO{7}{3}\stateOO{7}{4}\stateOO{7}{5}
        \stateZ{8}{0}\stateZO{8}{1}\stateOO{8}{2}\stateOO{8}{3}\stateOO{8}{4}\stateOO{8}{5}
        \stateZ{9}{0}\stateZ{9}{1}\stateZO{9}{2}\stateOO{9}{3}\stateOO{9}{4}\stateOO{9}{5}
        \stateZ{10}{0}\stateZ{10}{1}\stateZ{10}{2}\stateZO{10}{3}\stateOO{10}{4}\stateOO{10}{5}
        \stateZ{11}{0}\stateZ{11}{1}\stateZ{11}{2}\stateZ{11}{3}\stateZO{11}{4}\stateOO{11}{5}
        \stateZ{12}{0}\stateZ{12}{1}\stateZ{12}{2}\stateZ{12}{3}\stateZ{12}{4}\stateZO{12}{5}
        \stateZ{13}{0}\stateZ{13}{1}\stateZ{13}{2}\stateZ{13}{3}\stateZ{13}{4}\stateZ{13}{5}
        \stateZ{14}{0}\stateZ{14}{1}\stateZ{14}{2}\stateZ{14}{3}\stateZ{14}{4}\stateZ{14}{5}
      \end{tikzpicture}\hskip 1cm
      \begin{tikzpicture}[scale=.4]
        \draw[thick,->] (2,-.5) -- (2,2) node[midway,sloped,above] {time} ;
        \draw[dotted] (1.5,0) -- (13,0);
        \stateZ{0}{0}\stateZ{0}{1}\stateZ{0}{2}\stateZ{0}{3}\stateZ{0}{4}\stateZ{0}{5}
        \stateZ{1}{0}\stateZ{1}{1}\stateZ{1}{2}\stateZ{1}{3}\stateZ{1}{4}\stateZ{1}{5}
        \stateZ{2}{0}\stateZ{2}{1}\stateZ{2}{2}\stateZ{2}{3}\stateZ{2}{4}\stateZ{2}{5}
        \stateZ{3}{0}\stateZ{3}{1}\stateZ{3}{2}\stateZ{3}{3}\stateZ{3}{4}\stateZO{3}{5}
        \stateZ{4}{0}\stateZ{4}{1}\stateZ{4}{2}\stateZ{4}{3}\stateZO{4}{4}\stateO{4}{5}
        \stateZ{5}{0}\stateZ{5}{1}\stateZ{5}{2}\stateZO{5}{3}\stateO{5}{4}\stateZO{5}{5}
        \stateZ{6}{0}\stateZ{6}{1}\stateZO{6}{2}\stateO{6}{3}\stateZO{6}{4}\stateO{6}{5}
        \stateO{7}{0}\stateZO{7}{1}\stateO{7}{2}\stateZO{7}{3}\stateO{7}{4}\stateZO{7}{5}
        \stateZ{8}{0}\stateZ{8}{1}\stateZO{8}{2}\stateO{8}{3}\stateZO{8}{4}\stateO{8}{5}
        \stateZ{9}{0}\stateZ{9}{1}\stateZ{9}{2}\stateZO{9}{3}\stateO{9}{4}\stateZO{9}{5}
        \stateZ{10}{0}\stateZ{10}{1}\stateZ{10}{2}\stateZ{10}{3}\stateZO{10}{4}\stateO{10}{5}
        \stateZ{11}{0}\stateZ{11}{1}\stateZ{11}{2}\stateZ{11}{3}\stateZ{11}{4}\stateZO{11}{5}
        \stateZ{12}{0}\stateZ{12}{1}\stateZ{12}{2}\stateZ{12}{3}\stateZ{12}{4}\stateZ{12}{5}
        \stateZ{13}{0}\stateZ{13}{1}\stateZ{13}{2}\stateZ{13}{3}\stateZ{13}{4}\stateZ{13}{5}
        \stateZ{14}{0}\stateZ{14}{1}\stateZ{14}{2}\stateZ{14}{3}\stateZ{14}{4}\stateZ{14}{5}
      \end{tikzpicture}}
    \vskip 1cm
    \hbox{\begin{tikzpicture}[scale=.4]
        \draw[thick,->] (2,-.5) -- (2,2) node[midway,sloped,above] {time} ;
        \draw[dotted] (1.5,0) -- (13,0);
    \stateZ{0}{0}\stateZ{0}{1}\stateZ{0}{2}\stateZ{0}{3}\stateZ{0}{4}\stateZ{0}{5}
      \stateZ{1}{0}\stateZ{1}{1}\stateZ{1}{2}\stateZ{1}{3}\stateZ{1}{4}\stateZ{1}{5}
      \stateZ{2}{0}\stateZ{2}{1}\stateZ{2}{2}\stateZ{2}{3}\stateZ{2}{4}\stateZO{2}{5}
      \stateZ{3}{0}\stateZ{3}{1}\stateZ{3}{2}\stateZ{3}{3}\stateZO{3}{4}\stateO{3}{5}
      \stateZ{4}{0}\stateZ{4}{1}\stateZ{4}{2}\stateZO{4}{3}\stateO{4}{4}\stateZO{4}{5}
      \stateZ{5}{0}\stateZ{5}{1}\stateZO{5}{2}\stateO{5}{3}\stateZO{5}{4}\stateO{5}{5}
      \stateZ{6}{0}\stateZO{6}{1}\stateO{6}{2}\stateZO{6}{3}\stateO{6}{4}\stateZO{6}{5}
      \stateZO{7}{0}\stateO{7}{1}\stateZO{7}{2}\stateO{7}{3}\stateZO{7}{4}\stateO{7}{5}
      \stateZ{8}{0}\stateZO{8}{1}\stateO{8}{2}\stateZO{8}{3}\stateO{8}{4}\stateZO{8}{5}
      \stateZ{9}{0}\stateZ{9}{1}\stateZO{9}{2}\stateO{9}{3}\stateZO{9}{4}\stateO{9}{5}
      \stateZ{10}{0}\stateZ{10}{1}\stateZ{10}{2}\stateZO{10}{3}\stateO{10}{4}\stateZO{10}{5}
      \stateZ{11}{0}\stateZ{11}{1}\stateZ{11}{2}\stateZ{11}{3}\stateZO{11}{4}\stateO{11}{5}
      \stateZ{12}{0}\stateZ{12}{1}\stateZ{12}{2}\stateZ{12}{3}\stateZ{12}{4}\stateZO{12}{5}
      \stateZ{13}{0}\stateZ{13}{1}\stateZ{13}{2}\stateZ{13}{3}\stateZ{13}{4}\stateZ{13}{5}
      \stateZ{14}{0}\stateZ{14}{1}\stateZ{14}{2}\stateZ{14}{3}\stateZ{14}{4}\stateZ{14}{5}
    \end{tikzpicture}\hskip 1cm
    \begin{tikzpicture}[scale=.4]
        \draw[thick,->] (2,-.5) -- (2,2) node[midway,sloped,above] {time} ;
        \draw[dotted] (1.5,0) -- (13,0);
    \stateZ{0}{0}\stateZ{0}{1}\stateZ{0}{2}\stateZ{0}{3}\stateZ{0}{4}\stateZ{0}{5}
      \stateZ{1}{0}\stateZ{1}{1}\stateZ{1}{2}\stateZ{1}{3}\stateZ{1}{4}\stateZ{1}{5}
      \stateZ{2}{0}\stateZ{2}{1}\stateZ{2}{2}\stateZ{2}{3}\stateZ{2}{4}\stateAB{2}{5}
      \stateZ{3}{0}\stateZ{3}{1}\stateZ{3}{2}\stateZ{3}{3}\stateAB{3}{4}\stateBB{3}{5}
      \stateZ{4}{0}\stateZ{4}{1}\stateZ{4}{2}\stateAB{4}{3}\stateBB{4}{4}\stateB{4}{5}
      \stateZ{5}{0}\stateZ{5}{1}\stateAB{5}{2}\stateBB{5}{3}\stateB{5}{4}\stateBB{5}{5}
      \stateZ{6}{0}\stateAB{6}{1}\stateBB{6}{2}\stateB{6}{3}\stateBB{6}{4}\stateB{6}{5}
      \stateB{7}{0}\stateBB{7}{1}\stateB{7}{2}\stateBB{7}{3}\stateB{7}{4}\stateBB{7}{5}
      \stateZ{8}{0}\stateAB{8}{1}\stateBB{8}{2}\stateB{8}{3}\stateBB{8}{4}\stateB{8}{5}
      \stateZ{9}{0}\stateZ{9}{1}\stateAB{9}{2}\stateBB{9}{3}\stateB{9}{4}\stateBB{9}{5}
      \stateZ{10}{0}\stateZ{10}{1}\stateZ{10}{2}\stateAB{10}{3}\stateBB{10}{4}\stateB{10}{5}
      \stateZ{11}{0}\stateZ{11}{1}\stateZ{11}{2}\stateZ{11}{3}\stateAB{11}{4}\stateBB{11}{5}
      \stateZ{12}{0}\stateZ{12}{1}\stateZ{12}{2}\stateZ{12}{3}\stateZ{12}{4}\stateAB{12}{5}
      \stateZ{13}{0}\stateZ{13}{1}\stateZ{13}{2}\stateZ{13}{3}\stateZ{13}{4}\stateZ{13}{5}
      \stateZ{14}{0}\stateZ{14}{1}\stateZ{14}{2}\stateZ{14}{3}\stateZ{14}{4}\stateZ{14}{5}
    \end{tikzpicture}}
  \end{center}
  
  \caption{Some space-time diagrams of $\Upsilon$ (state $(0,0)$ is not represented). The right bottom figure represents the morphism transforming the state at the central cell into the state at each space-time coordinate.}
\label{fig:stdh}
\end{figure}
\begin{prop}
  $\Upsilon$ is not $k$-expansive when $k\geq 2$ and in particular $\Upsilon$ is not pre-expansive.
  \label{prop:hnoprexp}
\end{prop}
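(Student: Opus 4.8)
The plan is to exploit the purely algebraic form of $\Upsilon=\SO{F_2}{+}$ and to exhibit explicit \emph{glider} configurations that translate rigidly, so that they never perturb the central window. Writing a configuration as a pair $(a,b)$ of finitely supported maps $\Z\to\F_2$ (the ``memory'' and the ``current'' layers), the second-order rule reads $\Upsilon(a,b)=(b,F_2(b)\gpconf a)$, and since $F_2=\sigma\gpconf\sigma_{-1}$ acts as multiplication by the Laurent polynomial $X+X^{-1}$ over $\F_2$ (with $X=\sigma$ the left shift, so $X^{-1}=\sigma_{-1}$), the whole dynamics becomes polynomial arithmetic modulo $2$.

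The crucial observation is a characteristic-$2$ identity: any configuration whose current layer is the right shift of its memory layer, i.e. $b=\sigma_{-1}(a)=X^{-1}a$, is a speed-$1$ right glider, namely $\Upsilon(a,b)=\sigma_{-1}(a,b)$. Indeed, substituting $b=X^{-1}a$ one computes $F_2(b)=(X+X^{-1})X^{-1}a=(1+X^{-2})a$, whence $F_2(b)\gpconf a=X^{-2}a=\sigma_{-1}(b)$, the two copies of $a$ cancelling exactly because $1+1=0$ in $\F_2$. Hence $\Upsilon^t(a,b)=\sigma_{-t}(a,b)$ for all $t$, and the support of the configuration simply drifts to the right at unit speed.

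First I would use this to build, for every $k\ge 2$, a glider with exactly $k$ nonzero cells supported away from the origin. Taking $a$ supported on $\{2,\dots,k\}$ and $b=X^{-1}a$ supported on $\{3,\dots,k+1\}$ produces the cells $(1,0)$ at position $2$, cells $(1,1)$ at positions $3,\dots,k$, and $(0,1)$ at position $k+1$ — exactly $k$ nonzero cells, all at positions $\ge 2$. The overlap region producing the $(1,1)$ cells is precisely what lets us realize every $k\ge 2$ and not merely the even values; note also that a single cell can never satisfy $b=X^{-1}a$, in agreement with $\Upsilon$ being $1$-expansive (Proposition~\ref{prop:secondorder}).

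Finally I would conclude from the propagation lemmas. For such a glider $c$ we have $\Upsilon^t(c)=\sigma_{-t}(c)$, so its support is $\{2+t,\dots,k+1+t\}$ and the left propagation sequence is $l_t^{\overline{0}}(c)=2+t$, which is bounded below. By the first item of Lemma~\ref{lem:lr} this is incompatible with $k$-expansivity, so $\Upsilon$ is not $k$-expansive for any $k\ge 2$ (equivalently $T_1(c)=T_1(\overline{0})$ while $c\di{k}\overline{0}$). Since pre-expansivity implies $k$-expansivity for every $k$ by Proposition~\ref{prop:general}, $\Upsilon$ is not pre-expansive. The only genuinely creative step is the glider identity; the cell-count bookkeeping and the invocation of Lemma~\ref{lem:lr} are routine. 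Alternatively, one may establish only the cases $k=2$ and $k=3$ and then invoke the additivity of the set $I$ of non-$k$-expansive indices (first item of Proposition~\ref{prop:linear}), since $2$ and $3$ generate all integers $\ge 2$ under addition.
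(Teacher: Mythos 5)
Your proof is correct and is essentially the paper's own argument: the paper likewise exhibits a rigid speed-one glider with exactly $k$ nonzero cells (a $(0,1)$ cell, then a block of $(1,1)$'s, then a $(1,0)$ cell --- the mirror image of your configuration, travelling left instead of right, i.e.\ the case $b=\sigma_1(a)$ rather than $b=\sigma_{-1}(a)$) and concludes from the propagation lemmas that boundedness of one propagation sequence contradicts $k$-expansivity. Your explicit polynomial identity justifying the glider and the alternative ending via Proposition~\ref{prop:linear} are just extra packaging of the same idea.
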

\begin{proof}
  Let $k\geq 2$ be fixed and for each $z\in \Z$ define the configuration $c^z$ by:
  \[c^z(z') =
  \begin{cases}
    (0,0) &\text{ if }z'<z,\\
    (0,1) &\text{ if }z'=z,\\
    (1,1) &\text{ if }z<z'\le z+k-2,\\
    (1,0) &\text{ if }z'=z+k-1,\\
    (0,0) &\text{ if }z'\geq z+k.\\
  \end{cases}
  \]
  We have $c^z\di{k}\overline{(0,0)}$ and it is straightforward to
  check that $\Upsilon(c^z) = c^{z-1}$. We conclude that $\Upsilon$ is not
  $k$-expansive by Lemma~\ref{lem:leftright}.
\end{proof}

With these two examples we have proven two of the items of Theorem 1, this together with the preliminary results of section 3 and 4 allow us to conclude.

\begin{proof}[Proof of Theorem \ref{theo:examples}]
  Let $F$ be any irreversible and positively expansive CA. It holds:
  \begin{itemize}
  \item $\Psi$ is pre-expansive (by Proposition~\ref{prop:gprexp}) and it is
    reversible and not positively expansive (by
    Proposition~\ref{prop:secondorder} and \cite{posexponetoone});
  \item therefore $\Psi\times F$ is pre-expansive and irreversible and not positively expansive;
  \item $\Upsilon$ is $1$-expansive and reversible (by Proposition~\ref{prop:secondorder})
    but it is not pre-expansive (by Proposition~\ref{prop:hnoprexp});
  \item therefore $\Upsilon\times F$ is $1$-expansive and irreversible and not pre-expansive;
  \item finally Proposition~\ref{prop:onexpnonsurj} gives an example of $1$-expansive CA which is not surjective.
  \end{itemize}
\end{proof}








\subsection{Non linear examples: multiplication CA} 

Now we exhibit a non linear family of reversible CA that will provide us with new examples of pre-expansive CA. This family has already been considered with different points of view in the literature \cite{blanchardmaass,Kari12} and underlined for its links with some Furstenberg problems in ergodic theory \cite{Pivato2009}.

Given two natural numbers $k$ and $k'$, let us consider the cellular automaton $F_{k,k'}$ on the state set $\Z_m$, with $m=kk'$, defined as follows 

\[F_{k,k'}(c)_i=kc_i\%m+\lfloor kc_{i+1}/m\rfloor\]

where $i\%j$ denotes ${i\bmod j}$ with operation precedence as follows: ${a+b\%c}$ means ${a + (b\bmod c)}$ and ${ab\%c}$ means ${(ab) \bmod c}$. Note that ${F_{k,k'}(c)_i}$ is always in $\Z_m$ because ${kc_i\%m \leq k(k'-1)}$ and ${\lfloor kc_{i+1}/m\rfloor<k}$. $F_{k,k'}$ can be seen as a multiplication by $k$ in base $kk'$, and the fact that there is no carry propagation ensures that it is a CA.
 Figure~\ref{fig:32} shows the evolution of a finite configuration in a background of $0$ under $F_{3,2}$.

\begin{figure}
  \centering
  \begin{tabular}{cp{.5cm}c}
  \includegraphics[width=.4\textwidth]{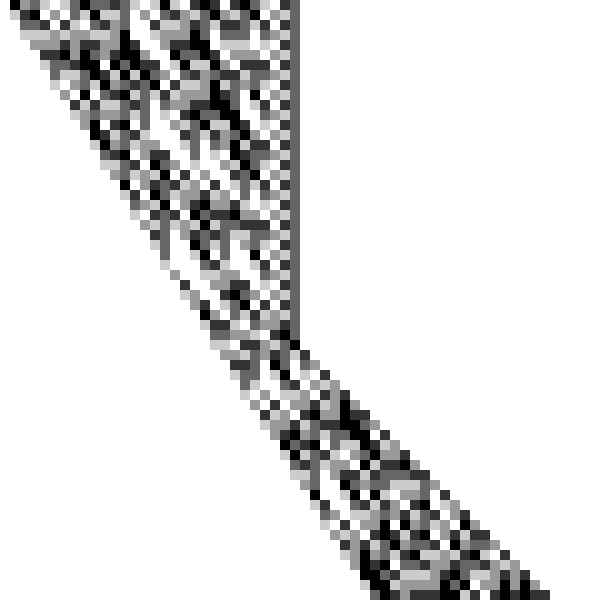} && \includegraphics[width=.4\textwidth]{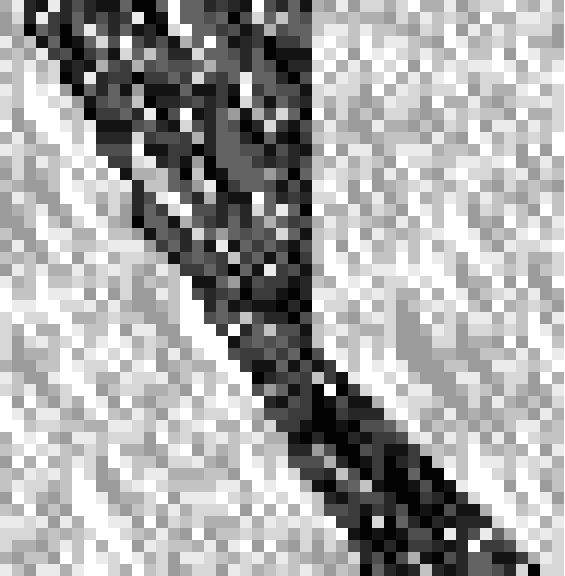}\\
  (a) && (b)\\
  \end{tabular}
  \caption{. (a) Space-time diagram of $F_{3,2}$ starting from a configuration with a finite number of non-zero cells (time goes from bottom to top, states are represented by tones of grey from white for 0 to black for 5).
  (b) Space-time diagrams of $F_{3,2}$ starting from \emph{two} random asymptotic configurations (the coincident cells are colored with light grey tones, the effect of the perturbation is colored with dark grey tones)}
  \label{fig:32}
\end{figure}

\begin{prop}\label{lem:bij}
$F_{k,k'}$ is bijective and $F^{-1}_{k,k'}=F_{k',k}\circ \sigma^{-1}$.
\end{prop}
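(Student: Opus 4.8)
The plan is to show that $F_{k,k'}$ acts as a transparent ``digit shuffle'' once each state in $\Z_m$ is written in the mixed radix determined by the factorisation $m=kk'$. Concretely, for $c\in(\Z_m)^\Z$ I would write each cell value as
\[c_i = u_i\,k' + v_i,\qquad u_i=\floor{c_i/k'}\in\{0,\dots,k-1\},\quad v_i = c_i \bmod k' \in\{0,\dots,k'-1\}.\]
Since $kc_i = u_i\,m + k v_i$ with $0\le k v_i \le m-k<m$, the two terms of the local rule read off the two digits exactly: $kc_i\bmod m = k v_i$ and $\floor{kc_i/m}=u_i$. Hence the rule collapses to
\[F_{k,k'}(c)_i = k v_i + u_{i+1},\]
so that in the image the remainder modulo $k$ is $u_{i+1}$ and the quotient by $k$ is $v_i$. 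This single reduction is the heart of the argument and the only place where the ``no carry propagation'' hypothesis is used; everything after it is bookkeeping.

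Next I would run the same reading for $G:=F_{k',k}$, which uses the complementary decomposition $d_i = u_i'\,k + v_i'$ with $u_i'=\floor{d_i/k}\in\{0,\dots,k'-1\}$ and $v_i'=d_i\bmod k\in\{0,\dots,k-1\}$, giving $G(d)_i = k' v_i' + u_{i+1}'$. Setting $d=F_{k,k'}(c)$, the identity $d_i = k v_i + u_{i+1}$ exhibits the $G$-digits of $d$ as $u_i'=v_i$ and $v_i'=u_{i+1}$. Applying $\sigma^{-1}$ (recall $\sigma^{-1}(d)_i=d_{i-1}$) shifts these to $\hat u_i=v_{i-1}$ and $\hat v_i=u_i$, whence
\[G\bigl(\sigma^{-1}(d)\bigr)_i = k'\hat v_i + \hat u_{i+1} = k' u_i + v_i = c_i.\]
Thus $(F_{k',k}\circ\sigma^{-1})\circ F_{k,k'}=\mathrm{Id}$, which already yields injectivity of $F_{k,k'}$.

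To upgrade this left inverse into the announced two-sided inverse I would exploit the symmetry of the computation under $k\leftrightarrow k'$: the boxed identity above, applied to the pair $(k',k)$ in place of $(k,k')$, reads $(F_{k,k'}\circ\sigma^{-1})\circ F_{k',k}=\mathrm{Id}$. Because shifts commute with every cellular automaton, $F_{k,k'}\circ\sigma^{-1}\circ F_{k',k} = F_{k,k'}\circ F_{k',k}\circ\sigma^{-1}$, so this is precisely $F_{k,k'}\circ(F_{k',k}\circ\sigma^{-1})=\mathrm{Id}$. The two one-sided identities together show that $F_{k',k}\circ\sigma^{-1}$ is a genuine two-sided inverse, whence $F_{k,k'}$ is bijective with $F_{k,k'}^{-1}=F_{k',k}\circ\sigma^{-1}$.

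I expect the only real obstacle to be keeping the digit bookkeeping and the orientation conventions consistent: which cell ($i$ or $i+1$) contributes the carry, which factor plays the ``$\bmod$'' role for $F$ versus $G$, and the exact effect of $\sigma^{-1}$. A sign slip in any of these would produce an inverse that is off by a shift, so I would state the digit decompositions and the relation $\sigma^{-1}(d)_i=d_{i-1}$ explicitly before composing. (Alternatively, once injectivity is secured one could invoke the Garden of Eden theorem on the amenable group $\Z$ to obtain surjectivity and then identify the inverse, but the symmetry argument is self-contained and avoids that machinery.)
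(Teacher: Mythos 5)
Your proof is correct, and at the structural level it follows the same strategy as the paper: establish a composition identity between $F_{k,k'}$ and $F_{k',k}$, then exploit the $k\leftrightarrow k'$ symmetry to get bijectivity and the inverse formula. The difference is in the computational core. The paper proves $F_{k,k'}\circ F_{k',k}=\sigma$ by a direct chain of floor/mod manipulations (expanding $kF_{k',k}(c)_i\%m$ and $\lfloor kF_{k',k}(c)_{i+1}/m\rfloor$ term by term), then reads off that $F_{k,k'}$ is surjective and $F_{k',k}$ is injective and swaps the roles of $k$ and $k'$ to conclude. You instead verify the symmetric identity $F_{k',k}\circ\sigma^{-1}\circ F_{k,k'}=\mathrm{Id}$ via the mixed-radix decomposition $c_i=u_ik'+v_i$, under which the rule collapses to the digit shuffle $F_{k,k'}(c)_i=kv_i+u_{i+1}$. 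This buys two things the paper's computation does not make visible: the verification becomes pure bookkeeping rather than floor/mod algebra, and it isolates exactly where the ``no carry propagation'' hypothesis enters, namely in $kc_i=u_im+kv_i$ with $0\le kv_i\le m-k<m$. Your conclusion is also slightly more explicit than the paper's terse ending: you combine the two one-sided identities into a two-sided inverse using the fact that shifts commute with cellular automata, whereas the paper leaves the passage from ``both are bijective'' to the formula $F^{-1}_{k,k'}=F_{k',k}\circ\sigma^{-1}$ implicit. Both arguments are complete; yours is an equivalent but more transparent rendering of the same underlying fact.
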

\begin{proof}
We will show that $F_{k,k'}\circ F_{k',k}=\sigma$.
\begin{eqnarray*}
F_{k,k'}\circ F_{k',k}(c)_i&=& (kF_{k',k}(c)_i)\%m+\left\lfloor \frac{kF_{k',k}(c)_{i+1}}{m}\right\rfloor\\
&=& 0+c_{i+1}-c_{i+1}\% k+\left\lfloor \frac{(mc_{i+1})\%(mk) +k\lfloor c_{i+2}/k\rfloor}{m}\right\rfloor\\
&=& c_{i+1}-c_{i+1}\% k+\left\lfloor \frac{m(c_{i+1}-k\lfloor c_{i+1}/k\rfloor) +k\lfloor c_{i+2}/k\rfloor}{m}\right\rfloor\\
&=& c_{i+1}-c_{i+1}\% k+c_{i+1}-k\left\lfloor \frac{c_{i+1}}{k}\right\rfloor+\left\lfloor\frac{c_{i+2}-c_{i+2}\%k}{m}\right\rfloor\\
&=& c_{i+1}+0+0
\end{eqnarray*}
from here we obtain that $F_{k,k'}$ is surjective and $F_{k',k}$ is injective, but exchanging the roles of $k$ and $k'$ we obtain that both are bijective and we are done.
\end{proof}

Next lemma establishes some elementary bounds on the way perturbations propagates through $\Z$.

\begin{lemm}\label{lem:tech}
  Given a sequence $a\asy \overline{0}$ in $\{-m+1,\dots,0,\dots,m-1\}^\Z$, we define the rational number:
  \[g(a)=\sum_{i\in\Z} a_im^{-i},\]
  then, given two configurations $c\asy d\in\Z_m^\Z$ with ${c\neq d}$, the next properties hold.
  \begin{enumerate}
  \item\label{it:ly} If '$-$' represents the substraction in $\Z$, then $|g(F_{k,k'}(c)-F_{k,k'}(d))|=k|g(c-d)|$.
  \item\label{it:dist} If $i=l_0^d(c)$, and $j=r_0^d(c)$, then $m^{-j}\le|g(c-d)|<m^{-i+1}$, moreover, $-j$ is the biggest integer such that $|g(c-d)|/m^{-j}\in\Z$.
  \item\label{it:l} $l_t^d(c)< r_0^d(c)+1- t \frac{\textrm{log}(k)}{\textrm{log}(m)}$.
  \item\label{it:r} $l_0^d(c)-1- t \frac{\textrm{log}(k)}{\textrm{log}(m)}< r_t^d(c)$.
  \end{enumerate}
\end{lemm}
\begin{proof}
  \begin{enumerate}
  \item The result is directly obtained from the definition of $F_{k,k'}$ and $g$.
    We can see that  $g(F_{k,k'}(c)-F_{k,k'}(d))$ is equal to:
    \begin{eqnarray*}
      &=& \sum_{i\in\Z} (kc_i\%m+\lfloor c_{i+1}/k'\rfloor-kd_i\%m-\lfloor d_{i+1}/k'\rfloor)m^{-i}\\
      &=&\sum_{i\in\Z} (kc_{i+1}\%m-kd_{i+1}\%m)m^{-(i+1)}+\sum_{i\in\Z} (\lfloor c_{i+1}/k'\rfloor-\lfloor d_{i+1}/k'\rfloor) m^{-i}\\
      &=& \sum_{i\in\Z} (kc_{i+1}\% m+m\lfloor kc_{i+1}/m\rfloor-(kd_{i+1}\% m+m\lfloor kd_{i+1}/m\rfloor))m^{-(i+1)}\\
      &=& \sum_{i\in\Z} k(c_{i+1}-d_{i+1})m^{-(i+1)}\\
      &=& kg(c-d). 
    \end{eqnarray*}
  \item $|g(c-d)|=|\sum_{n=i}^j (c_n-d_n)m^{-n}|$ from where the result is clear.
  \item From item~\ref{it:ly} and \ref{it:dist}, we have that $m^{-l_t^d(c)+1}>|g(F^t(c)-F^t(d))|=k^t|g(c-d)|\ge k^tm^{-r_0^d(c)}$, thus $m^{-l_t^d(c)+1}>k^tm^{-r_0^d(c)}$.
  \item Analogously, we have that $m^{-r_t^d(c)}\le|g(F^t(c)-F^t(d))|=k^t|g(c-d)|<k^tm^{-l_0^d(c)+1}$, thus $m^{-r_t^d(c)}<k^tm^{-l_0^d(c)+1}$, from where we can conclude.
  \end{enumerate}
\end{proof}

 \begin{theo}
   \label{theo:pq}
   Suppose that $p_1,\ldots,p_I$ are distinct prime numbers and $o$ and $q$ are co-prime integers (possibly $1$) which are also co-prime with $p_1\cdots p_I$.
   If $k=op_1^{e_1}\cdots p_I^{e_I}$ and $k'=p_1^{e'_1}\cdots p_I^{e'_I}q$, then $F_{k,k'}$ is $\alpha$-pre-expansive if and only if
   \begin{itemize}
   \item $\alpha\in]-\log_m(k),-min\{\frac{e_i}{e_i+e'_i}:i\in\{1,\dots,I\}\}[$ and $q=1$, or
   \item $\alpha\in]-\log_m(k),0[$ and $q\not=1$.
   \end{itemize}
   
 \end{theo}
 \begin{proof}   
   We need to prove that $l_t^d(c)-\lceil \beta t\rceil$ is not lower bounded and $r_t^d(c)-\lceil \beta t\rceil$ is not upper bounded for every $c\asy d$ if and only if $\beta$ in the given interval.
   The fact about $l_t^d(c)$ can be obtained directly from Lemma~\ref{lem:tech}: first $l_t^d(c)-\lceil \beta t\rceil<r_0^d(c)+1-t\textrm{log}_m(k)-\lceil \beta t\rceil$ is not lower bounded when $\beta>-\textrm{log}_m(k)$. Conversely, if $\beta\leq-\textrm{log}_m(k)$ then $l_t^d(c)-\lceil \beta t\rceil$ is lower bounded when considering $d=\overline{0}$ and $c$ the configuration with $c_0=1$ and $c_z=0$ for ${z\neq 0}$, because in this case ${g\bigl(F^t_{k,k'}(c)-F^t_{k,k'}(d)\bigr)= k^t}$ and ${t\textrm{log}_m(k)-1\leq -l_t^d(c)\leq t\textrm{log}_m(k)+1}$.
   
   In order to prove the second fact, we will first study the way $r_t^d(c)$ varies in time.
   Let us fist suppose that $q=q_1^{f_1}\cdots q_J^{f_J}$.
   Let us suppose, without loss of generality, that $r_0^d(c)=0$, and let $n=|g(c-d)|$, which must be a natural number.
   Finally, let us assume that $n=o''p_1^{e''_1}\cdots p_I^{e''_I}q_1^{f''_1}\cdots q_J^{f''_J}$, where $o''$ is co-prime with $p_1\cdots p_Iq_1\cdots q_J$.

   From Lemma~\ref{lem:tech}(\ref{it:ly},\ref{it:dist}), we know that $s_t=-r_t^d(c)$ is the largest natural number such that $m^{s_t} | k^tn$.
   But $m^{s_t}=o^{s_t}p_1^{s_t(e_1+e'_1)}\cdots p_I^{s_t(e_I+e'_1)}q_1^{s_tf_1}\cdots q_J^{s_tf_J}$ and $k^tn=o''o^tp_1^{e''_1+te_1}\cdots p_I^{e''_I+te_I}q_1^{f''_1}\cdots q_J^{f''_J}$, thus $m^{s_t} | k^tn$ if and only if
   \begin{enumerate}
   \item $o^{s_t} | o''o^t$
   \item $\forall i\in\{1,\dots,I\}, s_t(e_i+e'_i)\le te_i+e''_i$
   \item $\forall j\in\{1,\dots,J\}, s_tf_j\le f''_j$
   \end{enumerate}

   The first condition is asymptotically true, because we know from Lemma~\ref{lem:tech}(\ref{it:r}) that $s_t<-l_0^d(c)+1+t \textrm{log}_m(k)$ which is smaller than $t$ for bigger enough $t$.
   We distinguish two cases.
   \begin{description}
   \item[($q=1$)] In this case, condition 3 is empty, $I\ge 1$, and 
     $$ s_t= \textrm{min}\left\{\left\lfloor\frac{te_i+e''_i}{e_i+e'_i}\right\rfloor:i\in\{1,\dots,I\}\right\}, \textrm{for big enough $t$},$$
     thus $r_t^d(c)-\lceil\beta t\rceil=-s_t-\lceil\beta t\rceil$ is not upper bounded if and only if $\beta < -min\{\frac{e_i}{e_i+e'_i}:i\in\{1,\dots,I\}\}$.
     \item[($q>1$)] In this case $J\ge 1$, and the condition 3 is non empty, thus for some $j\in\{1,\dots,J\}$,  $-s_t-\lceil\beta t\rceil\ge -\frac{f''_j}{f_j}-\lceil\beta t\rceil$ which is not upper bounded if and only if $\beta<0$.
   \end{description}
 \end{proof}

It is said that two integers $n$ and $m$ are \textit{multiplicatively dependent} if there are positive integers $a$ and $b$ such that ${n^a=m^b}$, and \textit{multiplicatively independent} otherwise.

 \begin{coro}
   $F_{k,k'}$ has directions of pre-expansivity if and only if $k$ and $k'$ are multiplicatively independent. 
 \end{coro}
 \begin{proof}
   First, multiplicative independence of $k$ and $k'$ is equivalent to multiplicative independence of $k$ and $m=kk'$.
   Let us first suppose there are positive integers $a$ and $b$ such that ${k^a=m^b}$.
   It is clear that $a>b$ in this case.
Then  $m^b=o^bp_1^{be_1+be'_1}\cdots p_I^{be_I+be'_I}q^b=o^ap_1^{ae_1}\cdots p_I^{ae_I}$, and we deduce that $q=o=1$ and $\frac{b}{a}=\frac{e_i}{e_i+e'_i}=\log_m(k)$ for every $i$.
Then from theorem~\ref{theo:pq} there is no $\alpha$ such that $F_{k,k'}$ is $\alpha$-expansive.


Suppose now that $k$ and $k'$ are multiplicatively independent and let ${p_1,\ldots,p_I}$ be the common primes in the decomposition of $k$ and $k'$ so that we have $k=op_1^{e_1}\cdots p_I^{e_I}$ and $k'=p_1^{e'_1}\cdots p_I^{e'_I}q$ where $o$ and $q$ are co-prime (but possibly $1$) and co-prime with ${p_1\cdots p_I}$ as in the hypothesis of Theorem~\ref{theo:pq}.
First, if ${q\neq 1}$ then Theorem~\ref{theo:pq} shows that $F_{k,k'}$ has a non-empty set of directions of pre-expansivity.

Suppose now that ${q=1}$, then it holds 
   \[\frac{\log(k)}{\log(m)}=\frac{\log(o)+\sum_ie_i\log(p_i)}{\log(o)+\sum_i(e_i+e'_i)\log(p_i)}\geq\frac{M\log(o) + \sum_iM(e_i+e'_i)\log(p_i)}{\log(o)+\sum_i(e_i+e'_i)\log(p_i)}= M\]
   where ${M = \min_i\frac{e_i}{e_i+e'_i}}$ because ${M\leq 1}$ and for all $i$ we have ${M\frac{e_i+e'_i}{e_i}\leq 1}$. The equality holds only if ${o=1}$ and ${M=\frac{e_i}{e_i+e'_i}}$ for all $i$, which is not the case since $k$ and $m$ are multiplicatively independent. We deduce by Theorem~\ref{theo:pq} that the set of directions of pre-expansivity of ${F_{k,k'}}$ is not empty.
 \end{proof}

 Let us interpret this result in a more geometrical way. 
 Theorem~\ref{theo:pq} establishes that perturbations will diffuse (at least sparsely) inside a cone whose left slope is $-\log_m(k)$ and right slope is $0$ or $-M$ where $M=-min\{\frac{e_i}{e_i+e'_i}:i\in\{1,\dots,I\}$. 
 Lemma~\ref{lem:tech} expresses moreover that perturbations are actually  always present close to the left boundary of the cone (with a tolerance equal to the size of the initial perturbation). In the case $q\not=1$, perturbations also accumulate on the right boundary of the cone since the CA is one-way.
 
 In~\cite{blanchardmaass}, it was proved that $F_{k,k'}$ is left-expansive if and only if $q=1$, where \emph{left-expansive} means $\{l_t^d(c)\}_{t\in\N}$ not lower bounded for every $d\not=c$.
 The proof is performed through an argument using interesting results about the entropy of left-expansive CA.
Let us stress that left-expansivity and directional pre-expansivity are independent properties (none of them implies the other), and that any reversible CA has directions of left expansivity (any direction greater than the radius of the CA). 
Left expansivity asks for left-propagation of any perturbation, including perturbations over infinitely many cells.
That is why Lemma~\ref{lem:tech} cannot be applied in this case, and in fact these perturbation may not propagate to the left when $q\not =1$.
The conflictive cases are those corresponding to configurations which are identical up to some cell at which one configuration has a $n\not=0$ followed by 0s and the other has $n-1$ followed by $m-1$s.
Until this boundary, both configurations will evolve in the same way, thus when $q\not=1$, the cell where their differences start, which is equal to the right boundary of the first with respect to $\overline{0}$, will stop shifting to the left.

On the other hand, if $k$ and $k'$ are multiplicatively dependent, there are natural numbers $a$, $b$ such that $F^a_{k,k'}=\sigma_b$, that is, $F_{k,k'}$ is a rational power of the shift (thus left-expansive), this means that all the perturbations propagate in a unique direction, in this case the direction is $\alpha=-\frac{b}{a}$, to the left. This direction is actually a direction of equicontinuity (see \cite{Sablik08}), a strong form of non-sensitivity and prevents such CA from having any direction of pre-expansivity (see Proposition~\ref{prop:directional}).



As a final remark, let us recall that \cite{Kari12} establishes that $F_{3,2}$ is a universal pattern generator: precisely, from every initial configuration $c$ on $\Z_6^\Z$, and every finite pattern $p$ there exists an iteration $t$ such that $p$ appear in $F^t_{3,2}(c)$ at some position.
The theorem is stated for $k=3$ and $k'=2$ but the proof only uses that $log_k(k')$ is irrational, which is exactly the condition for $F_{k,k'}$ to admit pre-expansive directions.



\section{Cellular automata over the free group}
\label{sec:freegroup}

Some of the properties proved in the last section come from the fact
that the graph $(\Z,\{(i,i+j)\ |\ j\in V\})$ can be always disconnected by extracting a
finite part from $\Z$. 
The graph of any free group where the edges are given by any finite neighborhood has
this feature, and we would be able to extend some of the previous
properties to the case of a cellular automaton over the free group.
In particular, the pre-expansivity constant is strictly related with the
neighborhood size, as in $\Z$, and it does not depend on $k$ for a
$k$-expansive CA.
We denote by $\F_n$ the free group with $n$ generators ($\F_1$ is
$\Z$).

\begin{prop}\label{lem:expFG}
If $F$ is a cellular automaton with a neighborhood $V\subseteq B_r(0)$ of radius $r$ over the free group $\F_n$ and it is $k'$-expansive for all $k'\le k$, then $F$ is $k$-expansive with pre-expansivity constant equal to $2^{-r}$.
\end{prop}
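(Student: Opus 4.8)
The plan is to obtain the canonical constant exactly as in the second item of Lemma~\ref{lem:lr}: I would show that for every pair $c\di{k}d$ the trace $T_r$ already separates them, so that a difference appears inside $B_r(0)$ at some time and $\dist$ reaches the value $2^{-r}$ (matching the convention of Lemma~\ref{lem:lr}). The only structural change is that the two rays of $\Z$ are replaced by the \emph{branches} of the tree $\F_n$, namely the connected components of $\F_n\setminus B_r(0)$ in the Cayley graph whose edges join cells at distance $\le r$. The geometric fact specific to the free group is that two distinct branches are at distance $>r$: if $d(x,x')\le r$ then $x\sim x'$ is an edge avoiding $B_r(0)$, so $x$ and $x'$ lie in the same component. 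Equivalently, $B_r(0)$ is a separator that no neighbourhood-$r$ edge can bypass. As a base case, if $c$ and $d$ differ inside $B_r(0)$ we are done, so I assume $c=d$ on $B_r(0)$; then the $k$ differences are spread among finitely many branches $\mathcal B_1,\dots,\mathcal B_s$.

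First I would establish the free-group analogue of Lemma~\ref{th:locality}: if the orbits of $c$ and $d$ agree on $B_r(0)$ up to some time, then every difference stays inside the branch in which it started. Indeed, at the first time a difference appears at a cell $y\notin\bigcup_i\mathcal B_i$, the local rule forces $y$ to lie within distance $r$ of a cell that carried a difference one step earlier, hence within distance $r$ of some $\mathcal B_\ell$; by the separation property this places $y$ either in $\mathcal B_\ell$ itself or in $B_r(0)$. Thus, as long as $B_r(0)$ carries no difference, no difference can migrate to a fresh branch nor cross from one branch to another, so the difference set remains $\subseteq\bigcup_i\mathcal B_i$ at all times.

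The dynamical input comes from expansivity together with shift invariance, in the spirit of the first item of Lemma~\ref{lem:lr}. Fixing $m$ with $2^{-m}$ below the expansivity constant, $k$-expansivity gives that $F^t(c')$ and $F^t(d')$ differ inside $B_m(0)$ for some $t$, for every $c'\di{k}d'$; applying this to $\sigma_w(c)\di{k}\sigma_w(d)$ and using $F^t\circ\sigma_w=\sigma_w\circ F^t$ shows that for \emph{every} $w\in\F_n$ the orbits of $c$ and $d$ differ somewhere in $B_m(w)$ at some time. Now suppose, for contradiction, that $T_r(c)=T_r(d)$, i.e. the orbits agree on $B_r(0)$ forever. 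By the previous paragraph the differences stay inside $\bigcup_i\mathcal B_i$; choosing $w$ deep inside a \emph{difference-free} branch $\mathcal B'$ (which exists since $\F_n$ has infinitely many branches when $n\ge 2$, only finitely many of which meet the differences) with $d(w,\bigcup_i\mathcal B_i)>m$, the ball $B_m(w)$ is disjoint from $\bigcup_i\mathcal B_i$ and hence never sees a difference, contradicting the statement derived above. Therefore $T_r(c)\neq T_r(d)$, which is the claimed $k$-expansivity with constant $2^{-r}$.

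The step I expect to be the main obstacle is the recombination of the branches for a \emph{general}, possibly non-linear, CA, exactly where the $\Z$ proof of Lemma~\ref{lem:lr} uses the clean left/right spatial split of $B_r(0)$: one must rule out that two branches send differences into $B_r(0)$ at the same instant and cancel. The argument above sidesteps this entirely for $n\ge 2$ by exploiting a spare, difference-free branch rather than tracking each branch to the centre (this is why only $k$-expansivity, and not the full hypothesis on all $k'\le k$, is invoked there); the remaining case $n=1$ is $\F_1=\Z$, already settled by Lemma~\ref{lem:lr}, with the opposite ray playing the role of the spare branch after shifting. Minor care is also needed for the tree-metric estimate ensuring that $B_m(w)$ avoids every difference-carrying branch, which holds because in a tree $d(w,\mathcal B_i)=d(w,B_r(0))+\Theta(1)$ when $w$ lies in another branch, and this tends to infinity as $w$ goes deep into $\mathcal B'$.
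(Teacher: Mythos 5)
Your overall architecture (tree separation into branches, confinement of differences under the assumption $T_r(c)=T_r(d)$, and shift-plus-expansivity to force a difference into a faraway ball) is reasonable, and both the confinement step and the commutation step are correct. The fatal step is the parenthetical claim that ``$\F_n$ has infinitely many branches when $n\ge 2$''. With your definition of a branch --- a connected component of $\F_n\setminus B_r(0)$ in the graph joining cells at distance at most $r$ --- there are only \emph{finitely} many branches: every component contains an element at distance exactly $r+1$ from the identity (from any element of the component, walk along the tree geodesic towards the identity; each step has length $1$ and stays outside $B_r(0)$ until length $r+1$ is reached), and there are only $2n(2n-1)^{r}$ such elements. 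The proposition quantifies over every $k$, and as soon as $k$ is at least this finite number, the $k$ differences may meet every branch; then no difference-free branch $\mathcal{B}'$ exists and your choice of $w$ is impossible. Enlarging the separator to $B_R(0)$ to create more components does not repair this: the confinement argument would then need the orbits to agree on $B_R(0)$, which only yields separation by $T_R$, i.e. a constant $2^{-R}$ depending on $k$, whereas the whole content of the statement is the uniform constant $2^{-r}$.

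This is exactly where the paper works harder, and it explains why the hypothesis is ``$k'$-expansive for all $k'\le k$'' rather than just ``$k$-expansive'' (your remark that you only invoke $k$-expansivity is, in hindsight, the warning sign). The paper splits the differences over the $2n$ branches $R_s$ hanging from the identity (one per generator or inverse $s$) and builds, for each $s$, the hybrid configuration $c^s$ equal to $c$ on $R_s$ and to $d$ elsewhere; it shows inductively, under the assumption $T_r(c)=T_r(d)$, that $F^t(c^s)$ agrees with $F^t(c)$ on $R_s$ and with $F^t(d)$ outside, so the differences of the pair $(c^s,d)$ --- which number $k_s\le k$ --- stay confined to $R_s$ forever. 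It then shifts by $ms$ \emph{into} the branch, so that the differences of the pair $\bigl(\sigma_{-ms}(c^s),\sigma_{-ms}(d)\bigr)$ live in $ms+R_s$, disjoint from $B_m(0)$, and applies $k_s$-expansivity to this pair to reach the contradiction. Since each branch is treated by its own hybrid, no spare branch is ever needed and the argument is uniform in $k$. Your proof would be repaired by adopting this per-branch decomposition (at the cost of genuinely using the hypothesis for all $k'\le k$), rather than relying on the existence of a difference-free branch.
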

\begin{proof}
The proof takes the ideas of Lemma~\ref{lem:lr}.
Let $c\di{k}d$ be two configurations in $\F_n$.
Let us call $S$ the set of standard generators of $\F_n$, including their inverses (i. e. $|S|=2n$), and for each $s\in S$, let us call $R_s$ the branch of $\F_n$ that hangs from $s$; we mean the set of elements whose shortest description in terms of $S$ starts with $s$.
In this way $\F_n=\{0\}\sqcup\left(\sqcup_{s\in S} R_s\right)$.

Now let us define $D=\{i\in\F_n\ |\ c(i)\neq d(i)\}$ and $D_s=D\cap R_s$.
We want to prove that $T_r(c)\neq T_r(d)$ so let us suppose the opposite.
This implies that $D=\sqcup_{s\in S}D_s$, and we can consider $k_s=|D_s|\le |D|=k$.
As in the case of Lemma~\ref{lem:lr}, we define configurations $c^s$ which are equal to $d$ everywhere except on branch $s$, as follows.

\begin{equation*}
c^s(i)=\left\{\begin{array}{ll} c(i)&\textrm{if } i\in R_s\\ d(i)&\textrm{otherwise}\end{array}\right.
\end{equation*}

At the beginning, $c^s$ differs from $d$ only on branch $R_s$.
We will see that this will be always the case.
Let us suppose that, for some $t\in\N$, $F^t(c^s)_i=F^t(c)_i$ for all $i\in R_s$ and $F^t(c^s)_j=F^t(d)_j$, for all $j\not\in R_s$.
We remark that, we assumed that if $j\in B_r(0)$, $F^t(c^s)_j=F^t(d)_j=F^t(c)_j$.
Since $\F_n$ is a tree and $F$ is a CA of radius $r$, if $i\in R_s$, $B_r(i)\subset R_s\cup B_r(0)$, thus $F^{t+1}(c^s)_i=F^{t+1}(c)_i$. 
If $j\in \{0\}\sqcup\left(\sqcup_{s'\in S\setminus\{s\}} R_{s'}\right)$, $B_r(j)\subset \left(\sqcup_{s'\in S\setminus\{s\}} R_{s'}\right)\cup B_r(0)$, then $F^{t+1}(c^s)_j=F^{t+1}(d)_j$.

We conclude that $T_r(c^s)=T_r(c)=T_r(d)$, for every $s\in S$.

But we know, by hypothesis, that $F$ is $k'$-expansive, let us take its pre-expansivity constant as $\epsilon=2^{-m}$.
Let us consider now the configuration $\sigma_{-ms}(c^s)$, the shift of $c^s$ by $ms\in\F_n$.
By construction, $\sigma_{-ms}(c^s)$ and $\sigma_{-ms}(d)$ are equal over $B_m(0)$.  
By the $k'$-expansivity of $F$, there exists a time $t$ and $j\in B_m(0)$ such that $F^t(\sigma_{-ms}(c^s))_j\neq F^t(\sigma_{-ms}(d))_j$.
But, as shown before, ${F^t(c^s)}$ and ${F^t(d)}$ differ only on branch $R_s$.
Therefore ${F^t(\sigma_{-ms}(c^s))}$ and ${F^t(\sigma_{-ms}(d))}$ differ only on branch $R_{ms}$ which is disjoint from $B_m(0)$: this is a contradiction.
\end{proof}

The last proposition shows that being $k$-expansive for every $k\in\N$ is enough to being pre-expansive, as in $\F_1=\Z$.
But not all the properties survive from $\F_1$ to $\F_n$, when $n>1$; $k$-expansivity is possible for infinitely many $k$'s in $\F_n$ even without pre-expansivity, as the next example shows.

\begin{exam}[$\Lambda_n$]
  Let ${Q=\{0,1\}}$, $+$ be the addition modulo $2$, and $\Lambda_n$ be the
  CA defined over $Q^{\F_n}$ by 
  $$\Lambda_n(c)_i=c(i)+\sum_{j\in S} c(i+j).$$
\end{exam}

In this CA, a spot will produce a wave of 1s advancing with velocity 1 over the boundary of a ball, as the next lemma establishes.

\begin{lemm}\label{lem:couches}
If ${i,j\in\F_n}$ are such that $\norm{i}=\norm{j}$, then for every $t\in\N$ $\Lambda_n^t(c^1)_i=\Lambda_n^t(c^1)_j$, moreover $\Lambda_n^{\norm{i}}(c^1)_i=1$.
\end{lemm}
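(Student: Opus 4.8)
The plan is to recognise $\Lambda_n$ as the ``sum over the closed ball of radius one'' rule on the Cayley graph of $\F_n$, which is the $2n$-regular tree, and to exploit the homogeneity of that tree. Both assertions will then follow by induction on $t$ once I record the one structural fact I need: in the $2n$-regular tree rooted at $0$, every $x$ with $\norm{x}=k\geq 1$ has exactly one neighbour $x+s$ of norm $k-1$ (namely for $s$ equal to the inverse of the last letter of the reduced word of $x$, which cancels) and exactly $2n-1$ neighbours of norm $k+1$ (all other $s\in S$ produce no cancellation), while the $2n$ neighbours of $0$ all have norm $1$. This is precisely where the free-group hypothesis enters: there are no ``horizontal'' edges joining two elements of equal norm, so the update at $x$ mixes only values sitting at norms $\norm{x}-1$, $\norm{x}$ and $\norm{x}+1$.

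First I would establish the radial symmetry (the first assertion) by induction on $t$: I claim $\Lambda_n^t(c^1)_x$ depends only on $\norm{x}$, and I write $f_t(k)$ for this common value. The base case $t=0$ is immediate, with $f_0(0)=1$ and $f_0(k)=0$ for $k\geq 1$. For the inductive step, fix $x$ with $\norm{x}=k$ and expand
\[
\Lambda_n^{t+1}(c^1)_x = \Lambda_n^t(c^1)_x + \sum_{s\in S}\Lambda_n^t(c^1)_{x+s}.
\]
Using the induction hypothesis together with the neighbour counts above, the right-hand side equals $f_t(0)+2n\,f_t(1)$ when $k=0$ and $f_t(k)+f_t(k-1)+(2n-1)f_t(k+1)$ when $k\geq 1$; computed modulo $2$ these reduce to $f_t(0)$ and to $f_t(k-1)+f_t(k)+f_t(k+1)$ respectively. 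In every case the result is independent of the chosen $x$ of norm $k$, which closes the induction and produces the recursion $f_{t+1}(k)=f_t(k-1)+f_t(k)+f_t(k+1)$ (modulo $2$) for $k\geq 1$. This proves the first assertion.

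It then remains to show $f_t(t)=1$ for all $t$, which is the ``moreover'' applied at $k=\norm{x}=t$. Here I would invoke the light-cone bound: since $\Lambda_n$ has radius $1$ and $c^1$ is supported at $0$, the configuration $\Lambda_n^t(c^1)$ is supported in $B_t(0)$, so $f_t(k)=0$ whenever $k>t$. Substituting $k=t+1$ into the recursion gives $f_{t+1}(t+1)=f_t(t)+f_t(t+1)+f_t(t+2)=f_t(t)$, and since $f_0(0)=1$ a one-line induction yields $f_t(t)=1$ for every $t$, as required.

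The only genuine obstacle is the first assertion; once radial symmetry holds, the leading-edge value is the short light-cone computation above. I could alternatively argue the symmetry conceptually, by noting that $\Lambda_n$ commutes with the action on configurations induced by any root-fixing automorphism of the $2n$-regular tree, that such automorphisms act transitively on each sphere $\{x:\norm{x}=k\}$, and that they fix $c^1$; but I prefer the explicit induction, as it is self-contained and furnishes exactly the recursion that the second assertion consumes. The point to stay careful about, in either route, is that this homogeneity is a tree phenomenon: in a Cayley graph with cycles, two elements of equal norm may have different neighbour profiles, and $\Lambda_n^t(c^1)$ would in general cease to depend only on the norm.
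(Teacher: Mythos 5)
Your proof is correct, and it rests on exactly the same structural fact as the paper's: in the Cayley graph of $\F_n$, every element of norm $k\geq 1$ has exactly one neighbour of norm $k-1$ and $2n-1$ neighbours of norm $k+1$, so the totalistic mod-$2$ rule cannot distinguish cells of equal norm. Where you differ is in the shape of the induction. The paper runs a single induction on the radius $l$, proving simultaneously that values on $B_l(0)$ at all times $t\le l$ depend only on the norm and that the sphere of radius $l$ carries the value $1$ at time $l$; inside that induction it must already invoke the light cone (its Case 1) just to extend the symmetry statement to the newly added sphere, and the leading-edge value $1$ comes out of the same case analysis. You instead prove radial symmetry for all cells and all times by a clean induction on $t$ alone, which packages the whole dynamics into the one-dimensional recursion $f_{t+1}(k)=f_t(k-1)+f_t(k)+f_t(k+1)\bmod 2$ (with $f_{t+1}(0)=f_t(0)$), and only afterwards use the light cone ($f_t(k)=0$ for $k>t$) to read off $f_{t+1}(t+1)=f_t(t)=1$. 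Your decomposition is slightly cleaner and more informative: the symmetry step needs no support bound at all, the two assertions of the lemma are disentangled, and the explicit recursion shows that the radialized dynamics is just the three-cell XOR rule on the norm coordinate, so in principle the entire profile $f_t(k)$ (a Pascal-triangle-mod-$2$ object), not only the leading edge, is determined. The paper's version buys a single self-contained induction at the cost of the interleaved case analysis. Both arguments depend on the absence of edges within a sphere, i.e.\ on the tree structure, as you correctly emphasize.
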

\begin{proof}
We prove by induction on $l$ that for every $t\le l$ and every $x$, $y\in B_l(0)$, $\left[\norm{i}=\norm{j}\Rightarrow \Lambda_n^t(c^1)_i=\Lambda_n^t(c^1)_j\right]$ and that $\Lambda_n^l(c^1)_i=1$ if $\norm{i}=l$. 

For $l=0$ is clear since in this case $i=0=j$ and $\Lambda_n^0(c^1)_0=1$.
Now let us suppose it true for some $l$, and let us prove it for $l+1$.
\begin{description}
\item{Case 1,} $t\le l$. By the induction hypothesis, we only need to verify the property for $i,j\in B_{l+1}(0)-B_l(0)$, but $\Lambda_n^t(c^1)_i=0=\Lambda_n^t(c^1)_j$ because at time $t\le l$ no perturbation at 0 has the time to arrive to these cells.
\item{Case 2,} $t=l+1$. We first remark that any cell $i$ in $\F_n$ has exactly $2n-1$ neighbors farther and exactly one neighbor closer than $i$ to $0$; we also remark that the local rule of $\Lambda_n$ is totalistic, only the quantity of neighbors at a given state counts.
If $i,j\in B_l(0)$, all of their neighbors are in $B_{l+1}(0)$, thus by Case 1, their state at time $l$ depends only on their distance to 0, thus $\Lambda_n^{l+1}(c^1)_i=\Lambda_n^{l+1}(c^1)_j$.
If $i,j\in B_{l+1}(0)\setminus B_{l}(0)$, then their neighbors outside $B_l(0)$ and themselves have all state 0 at time $l$; their unique neighbors in $B_l(0)$ have both state 1, by induction hypothesis. Thus, by the definition of $\Lambda_n$,  $\Lambda_n^{l+1}(c^1)_i=\Lambda_n^{l+1}(c^1)_j=1$.
\end{description}
\end{proof}

\begin{prop}
  $\Lambda_n$ is $k$-expansive for every $k$ odd with pre-expansivity constant equal to 1, and it is not $2$-expansive if $n\ge 2$.
\end{prop}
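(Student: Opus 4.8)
The plan is to treat the two assertions separately, using throughout that $\Lambda_n$ is linear over $\F_2$ and commutes with the shift. By linearity (Proposition~\ref{prop:linear}), $k$-expansivity reduces to examining configurations $c\di{k}\overline{0}$, i.e. configurations carrying exactly $k$ cells in state $1$, say at positions $x_1,\dots,x_k$. Writing $c=\sum_i c^{x_i}$ (sum over $\F_2$ of single-$1$ configurations) and combining shift-invariance ($\sigma_{x_i^{-1}}c^{x_i}=c^1$) with Lemma~\ref{lem:couches}, the orbit decomposes as
\[\Lambda_n^t(c)_z=\sum_{i=1}^k h_t\bigl(d(x_i,z)\bigr),\qquad h_t(\rho):=\Lambda_n^t(c^1)_x\ \text{for any }x\text{ with }\norm{x}=\rho,\]
where $h_t(\rho)$ is well defined by the radial symmetry of Lemma~\ref{lem:couches}. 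The two facts I will use about $h$ are exactly the ones recorded there: $h_t(\rho)=0$ whenever $\rho>t$, and $h_t(\rho)=1$ when $\rho=t$ (the leading edge). At the origin this gives $\Lambda_n^t(c)_0=\sum_i h_t(\norm{x_i})$.

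For the positive part, fix $k$ odd and group the seeds into shells by norm: let $m_\rho$ be the number of indices $i$ with $\norm{x_i}=\rho$, so $\sum_\rho m_\rho=k$ is odd and hence some $m_\rho$ is odd. Let $\rho_0$ be the \emph{smallest} radius with $m_{\rho_0}$ odd. The key observation is that evaluating at time $t=\rho_0$ collapses every term but one: shells with $\rho>\rho_0$ contribute $h_{\rho_0}(\rho)=0$ (the front has not reached the origin), shells with $\rho<\rho_0$ have $m_\rho$ even and so contribute $m_\rho h_{\rho_0}(\rho)\equiv 0\pmod 2$, while the shell at $\rho_0$ contributes $m_{\rho_0}h_{\rho_0}(\rho_0)=m_{\rho_0}\equiv 1\pmod 2$. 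Thus $\Lambda_n^{\rho_0}(c)_0=1\neq 0$, so $c$ and $\overline{0}$ differ at the origin at time $\rho_0$, giving $k$-expansivity with constant $\tfrac12$ (consistently with Proposition~\ref{lem:expFG}, whose bound is $2^{-r}=2^{-1}$). The point is that the \emph{innermost} odd shell, not the outermost one, is what to look at: its leading edge reaches the center uncontaminated because all strictly inner shells pair off modulo $2$.

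For the negative part I will exhibit, for every $m$, a pair at distance $2$ whose orbit is identically $\overline{0}$ on $B_m(0)$; letting $m\to\infty$ then refutes $2$-expansivity for any candidate constant. Here $n\geq 2$ is essential: fixing two distinct generators $a,b$, put the two $1$'s at $x_1=b^{m+1}a$ and $x_2=b^{m+1}a^{-1}$, symmetric about the far vertex $b^{m+1}$. Since $\F_n$ is a tree, for every $z\in B_m(0)$ the geodesics from $z$ to $x_1$ and to $x_2$ coincide up to $b^{m+1}$ and then split into the two length-one steps $a$ and $a^{-1}$, so $d(z,x_1)=d(z,x_2)$; consequently $\Lambda_n^t(c)_z=h_t(d(z,x_1))+h_t(d(z,x_2))=0$ for all $t$ and all $z\in B_m(0)$, that is $T_m(c)=T_m(\overline{0})$. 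In $\F_1=\Z$ this construction is impossible—one cannot place the branch point off the line with the origin behind it—matching the fact that rule $150$ is $\mathrm{LR}$-permutive and hence fully expansive on $\Z$.

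The only genuinely delicate point is the tree-geometry bookkeeping behind the superposition formula: one must verify that $\Lambda_n^t(c^{x_i})_z$ depends on $(x_i,z)$ only through $d(x_i,z)$ (shift-invariance plus the radial symmetry of Lemma~\ref{lem:couches}), and that in the negative construction the two branches beyond $b^{m+1}a^{\pm1}$ are disjoint from $B_m(0)$ so that equidistance holds for the whole ball. Everything else is the elementary leading-edge/parity computation above, so I expect no real obstacle once these distance identities are stated cleanly.
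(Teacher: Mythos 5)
Your proof is correct and follows essentially the same route as the paper's: for odd $k$ you use the radial symmetry of Lemma~\ref{lem:couches} to cancel the even shells modulo $2$ and detect the leading edge of the innermost odd shell at the origin, and for the failure of $2$-expansivity you place two cells symmetrically about a distant vertex so that the tree structure makes them equidistant from every point of $B_m(0)$. The paper's negative construction uses $ms\pm s'$ where you use $b^{m+1}a^{\pm 1}$, but this is the same idea; your $h_t$ superposition formula is just a clean packaging of linearity plus Lemma~\ref{lem:couches}, which the paper carries out inline.
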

\begin{proof}
Let $c\di{k}\overline 0$.
Let $D=\{i\ |\ c(i)\neq 0\}$ and let $D_l=D\cap(B_l(0)\setminus B_{l-1}(0))$.
It is clear that $c=\sum_{i\in D}\sigma_{-i}(c^1)$.
Since $|D|=k$ is odd, there exists some $l$ such that $|D_l|$ is odd, let us take $\overline{l}$ as the smallest one.
For every $x,y\in D_l$, $T_0(\sigma_{-x}(c^1))=T_0(\sigma_{-y}(c^1))$, thanks to lemma~\ref{lem:couches}.
Therefore, given a cell $y\in D_{\overline{l}}$, 
\begin{eqnarray*}
\Lambda_n^{\overline{l}}(c)_0
&=&\Lambda_n^{\overline{l}}(\sum_{l\in\N}\sum_{x\in D_l}\sigma_{-x}(c^1))_0\\
&=&\Lambda_n^{\overline{l}}(\sum_{l=0}^{\overline{l}}\sum_{x\in D_l}\sigma_{-x}(c^1))_0\\
&=& \sum_{l=0}^{\overline{l}}\Lambda_n^{\overline{l}}(\sum_{x\in D_l}\sigma_{-x}(c^1))_0\\
&=& \Lambda_n^{\overline{l}}(\sum_{x\in D_{\overline{l}}}\sigma_{-x}(c^1))_0\\
&=& \Lambda_n^{\overline{l}}(\sigma_{-y}(c^1))_0.
\end{eqnarray*}
By Lemma~\ref{lem:couches}, this last term is equal to 1 which proves the $k$-expansivity when $k$ is odd.

The second part of the proposition is almost direct from lemma~\ref{lem:couches}.
In fact, let $m\in \N$ be any natural number and let us take $z=ms$ for some fixed generator $s$.
Now let $s'$ be another generator, different from $s$ and $-s$ and define $x=z+s'$ and $y=z-s'$.
This imply that $\norm{x}=\norm{y}=\norm{z}+1=m+1$.
Lemma~\ref{lem:couches} says that $T_0(\sigma_{x}(c^1))=T_0(\sigma_{y}(c^1))$, but also that $T_m(\sigma_{x}(c^1))=T_m(\sigma_{y}(c^1))$, because $x$ and $y$ are equidistant from $z$, as well as from all the other members of $B_m(0)$.
\end{proof}

\section{Cellular Automata on $\Z^n$, with $n\ge 2$}
\label{sec:2D}

Expansivity is not possible in dimension ${n\geq 2}$ or more, due to
combinatorial reasons: the number of possible $n$-dimensional patterns
grows too quickly to be uniformly conveyed into a 1-dimensional array
without loss (see \cite{Pivato11} for a general result of inexistence of
expansive CA). This argument does not apply to pre-expansivity because only finite differences
have to be propagated. Nevertheless, in Abelian  CA the information propagates in a very regular way, and pre-expansivity is impossible as we will show.

\subsection{No pre-expansivity for Abelian CA in dimension 2 or higher}

\begin{theo}
  \label{thm:abelian2D}
  No Abelian CA of dimension $d\geq 2$ is pre-expansive.
\end{theo}
\begin{proof}
  First, if $G$ is the Abelian group of the theorem it can be
  decomposed in a direct product $G=G_p\times G'$ where $G_p$ is a
  finite $p$-group for some prime $p$ and $G'$ is a group
  whose order $m$ is such that $p$ doesn't divide $m$ (structure
  theorem for finite Abelian groups, see
  \cite{textbookabeliangroups}). Then $F$ is isomorphic to ${F_p\times
    F'}$ according to Lemma~\ref{lem:groupdecomp}, where $F_p$ is
  linear over $G_p$. Moreover if $F$ is pre-expansive, then $F_p$ must
  also be pre-expansive (by Proposition~\ref{prop:general}). It is
  therefore sufficient to show the Theorem for $p$-groups.

  Now consider $F$ of dimension $d\geq 2$ linear over a $p$-group and
  some $m\geq0$. By Lemma~\ref{lem:prefijotraza}, we know that the
  trace $T_m$ of a finite configuration of size $n$ is determined by its prefix of size $\lambda(n)$ where
  ${\lambda\in O(n)}$. The number of such finite configurations grows
  like $\alpha^{n^d}$ for some $\alpha>0$ and the number of prefixes of $T_m$ of length $\lambda(n)$
  grows like $\beta^{\lambda(n)}$ for some $\beta>0$ which depends only on $m$, $G_p$ and $d$. 
  Since $d\geq 2$
  and $\lambda$ is linear we deduce for $n$ large enough that two
  finite configurations of size $n$ have the same trace
  $T_m$. Therefore $T_m$ is not pre-injective and by
  Proposition~\ref{prop:general}, $F$ is not pre-expansive.
\end{proof}

Note that this does not avoid a priori the existence of a linear CA
which is $k$-expansive for any $k\in \N$ or for infinitely many $k$.

\subsection{Simple Abelian CA}

In general, in a CA with neighborhood $V\subset \G$, we can remark that the influence of the cell $0$ is restricted to the set generated by linear combinations of $-V$.
More precisely, at time $t$, its influence is restricted to the following set:

\[ -V_t(0)=\left\{ \sum_{i=1}^t v_i\ |\ (\forall i\in\{1,..,t\})\ v_i\in -V\right\}
\]

A perturbation in a cell $u\in\G$ can produce a change in the state of cells in ${-V_t(u)=u-V_t(0)}$ up to time $t$.

If $\G$ is commutative, for example $\G=\Z^n$ and $V=\{v_1,..,v_m\}$, this set can be computed as follows.

\begin{eqnarray*}
-V_t(0)&=& \left\{ \sum_{k=1}^m n_k(-v_k)\ |\ \sum_{k=1}^m n_k=t \textrm{ and for each } k, n_k\in \N\right\}\\
&=&\left\{ \sum_{k=1}^m \frac{n_k}{t}(-tv_k)\ |\ \sum_{k=1}^m n_k=t \textrm{ and for each } k, n_k\in \N\right\}\\
&\subseteq &\left\{ \sum_{k=1}^m \lambda_k(-tv_k)\ |\ \sum_{k=1}^m \lambda_k=1\textrm{ and for each } k, \lambda_k\in[0,1],\right\}\\
&\subseteq & co(-tV)
\end{eqnarray*}

Where $tV=\{t v\ |\ v\in V\}$, and $co(\cdot)$ stands for the \emph{convex hull} (in $\R^n$).

In the simpler case where $G_p=\Z_p$, any linear CA $F$ can be expressed as

$$ F=\sum_{z\in V} a_z\sigma_z, $$

where $(a_z)_{z\in V}$ is a sequence of elements of $\Z_p$.
When $p$ is prime, the Frobenius endomorphism gives strong self-similar properties to linear CAs, more precisely:
\[ F^{p^k}=\sum_{z\in V} a_z\sigma_{p^kz}.\]

More generally, consider any Abelian CA $F$ with states $G_p$ such that ${p\cdot g=0}$ for all $g\in G_p$ and such that 

$$ F=\sum_{z\in V} h_z\circ\sigma_z, $$
where $h_z$ are commuting automorphisms of $G_p$. The Binomial formula and the fact that $p$ divides all ${p^k\choose i}$ for ${0<i<p^k}$ gives:
\[ F^{p^k}=\sum_{z\in V} h_z^{p^k}\sigma_{p^kz}.\]
Since the $h_z$ are automorphisms of $G_p$ there are infinitely many $k$ such that ${h_z=h_z^{p^k}}$ for all ${z\in V}$ and therefore
\[ F^{p^k}=\sum_{z\in V} h_z\sigma_{p^kz}.\]

These particular cases suggest the following definition.

\begin{defi}
  An Abelian CA ${F=\sum_{z\in V} h_z\circ\sigma_z}$ is \emph{simple} if it verifies
  \begin{equation}
    F^{M}=\sum_{z\in V} h_z\sigma_{Mz}\label{eq:Lucas}
  \end{equation}
  for arbitrarily large $M$.
\end{defi}

The next lemma establishes that the constant of $k$-expansivity in a simple Abelian CA on $\Z_p$ depends only on the radius of the neighborhood.
The radius of a neighborhood $V$ is the smallest integer $r$ such that $V\subseteq B_r(0)$.


\begin{lemm}[Amplification]
Let $F$ be a simple Abelian CA with neighborhood $V\subset\Z^n$ of radius $r$.
If there exists a configuration $c\not =\overline{0}$ such that $T_{r}(c)=0$, then for any $m\ge r$ there exists a configuration $c'\not =\overline{0}$ such that $T_m(c')=0$. 
\end{lemm}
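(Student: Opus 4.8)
The plan is to exploit the self-similarity of linear CA over $\Z_p$ encoded by equation~\eqref{eq:Lucas}: rescaling the lattice by a factor $p^k$ turns $F^{p^k}$ into an exact copy of $F$. So I would fix $m\ge r$, choose $k$ with $p^k>m-r$, and define the \emph{inflated} configuration $c'$ by $c'(w)=c(z)$ whenever $w=p^kz$ for some $z\in\Z^n$, and $c'(w)=0$ otherwise. Since $c\neq\overline{0}$, choosing $z_0$ with $c(z_0)\neq0$ gives $c'(p^kz_0)=c(z_0)\neq0$, so $c'\neq\overline{0}$ (and $c'$ inherits finite support from $c$, should that be required).

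First I would establish the rescaling identity $F^{p^k\cdot q}(c')=c'_q$ for every $q\ge0$, where $c'_q$ denotes the inflation of $F^q(c)$, i.e. $c'_q(p^kz)=F^q(c)(z)$ and $0$ elsewhere. By \eqref{eq:Lucas} we have $F^{p^k}=\sum_{z\in V}a_z\sigma_{p^kz}$, so $F^{p^k}(c')(w)=\sum_{z\in V}a_z\,c'(w+p^kz)$; since $w+p^kz\equiv w\pmod{p^k}$ (componentwise), $F^{p^k}$ only mixes cells whose positions are congruent modulo $p^k$, hence it maps $p^k\Z^n$-supported configurations to $p^k\Z^n$-supported ones. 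Evaluating at $w=p^kv$ gives $F^{p^k}(c')(p^kv)=\sum_{z\in V}a_z\,c(z+v)=F(c)(v)$, so $F^{p^k}$ sends the inflation of any $d$ to the inflation of $F(d)$; the identity then follows by induction on $q$.

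Next I would handle an arbitrary time $s$ by Euclidean division $s=p^kq+j$ with $0\le j<p^k$, which yields $F^s(c')=F^j(c'_q)$. Writing $c'_q=\sum_{z}\sigma_{-p^kz}\bigl(c^{F^q(c)(z)}\bigr)$, using linearity and the commutation of $F^j$ with shifts, and the fact that $c^a=a\cdot c^1$, I get $F^j(c'_q)(w)=\sum_{z}F^q(c)(z)\,\bigl[F^j(c^{1})\bigr](w-p^kz)$, a locally finite sum. As $F^j$ has neighborhood contained in $B_{jr}(0)$, the term indexed by $z$ vanishes unless $\|w-p^kz\|_\infty\le jr$. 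For $w\in B_m(0)$ this forces $\|p^kz\|_\infty\le jr+m\le(p^k-1)r+m$, hence $\|z\|_\infty\le r+\tfrac{m-r}{p^k}<r+1$, that is $z\in B_r(0)$.

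Finally, the hypothesis $T_r(c)=0$ says exactly that $F^q(c)(z)=0$ for every $q\ge0$ and every $z\in B_r(0)$. Thus every blob that could influence a cell of $B_m(0)$ at time $s$ carries value $0$, so $F^s(c')(w)=0$ for all $w\in B_m(0)$ and all $s\ge0$, i.e. $T_m(c')=0$, as required. The main obstacle is the light-cone bookkeeping inside a single block of $p^k$ steps: although each inflated blob may spread by up to $(p^k-1)r$ before the next rescaling time, one must verify that the cells of $B_m(0)$ it can reach only receive contributions from blobs originating in $B_r(0)$. This is precisely what the choice $p^k>m-r$ secures, via the equivalent inequality $(p^k-1)r+m<(r+1)p^k$.
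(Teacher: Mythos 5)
Your proof is correct and takes essentially the same approach as the paper: inflate $c$ by a factor $p^k$ via the identity $F^{p^k}=\sum_{z\in V}a_z\sigma_{p^kz}$ (Lucas), observe that at multiples of $p^k$ the orbit of $c'$ is the inflation of the orbit of $c$, and close with a light-cone estimate showing $B_m(0)$ stays null. The only difference is cosmetic bookkeeping: you trace backwards which inflated blobs can reach $B_m(0)$ within $j\le p^k-1$ steps (whence your condition $p^k>m-r$), while the paper traces forwards which cells the nonzero blobs at positions $p^kx$, $x\notin B_r(0)$, can reach, obtaining that everything outside $B_{p^k-1}(0)\supseteq B_m(0)$ is the only affected region under the choice $m\le p^k-1$; both bounds are valid.
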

\begin{proof}
Let $c$ be such that $T_{r}(c)=0$ and let $M$ be such that $m\le M-1$.
We define $c'$ by $c'_{Mx}=c_{x}$ for every $x\in\Z^n$ and 0 elsewhere.

From Equation~\ref{eq:Lucas}, it is easy to see that $F^{tM}(c')_{Mx}=F^t(c)_{x}$, and 0 elsewhere.
Therefore, for every $t\in\N$ and every $v\in B_{r}(0)$, $F^{tM}(c')_{Mv}=0$.

Now, between iterations $tM$ and $(t+1)M$, we know, from the former remarks, that only cells in $\Omega=\displaystyle{\bigcup_{x\not \in B_{r}(0)} (Mx-V_{M}(0))}$ can have a state different from 0.
Since $-V\subseteq B_r(0)$, we have that $-V_{M}(0)\subseteq B_{rM}(0)$, and the complement of $\Omega$ contains $B_{M-1}(0)$, which is what we were looking for, in fact,
\begin{eqnarray*}
y&\in &\Omega=\left(\displaystyle{\bigcup_{x\not \in B_{r}} Mx-V_{M}(0)}\right)\\
&\Rightarrow & (\exists x\not \in B_{r}(0))(\exists v\in -V_{M}(0))\ y=Mx+v\\
&\Rightarrow & \norm{y}\ge \norm{Mx}-\norm{v}\ge M(r+1)-Mr=M\\
&\Rightarrow & y\not \in B_{M-1}(0). 
\end{eqnarray*}
\end{proof}

The next corollary shows that in this case again the expansivity constant depends only on the neighborhood radius.
Let us remark that here it is a little bit stronger than in the 1-dimensional case because it does not need $k'$ expansivity for every $k'\le k$.

\begin{coro}\label{coro:amplif}
  Let $F$ be a linear CA in $\Z_p$. It holds:
  \begin{itemize}
  \item $F$ is $k$-expansive, if and only if $F$ is
    $k$-expansive with pre-expansivity constant $2^{-r}$;
  \item $F$ is $k$-expansive for all $k\in\N$ if and only if $F$ is pre-expansive.
  \end{itemize}
\end{coro}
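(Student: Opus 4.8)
The plan is to translate both equivalences into the trace dictionary for linear cellular automata and then invoke the Amplification Lemma. Recall from the proof of Proposition~\ref{prop:linear} that, by linearity, $F$ is $k$-expansive if and only if there is some $m$ such that $T_m(c)\neq 0$ for every $c\di{k}\overline{0}$, and that $F$ is pre-expansive if and only if the analogous statement holds for every nonzero $c\asy\overline{0}$. Under this reading, being ``$k$-expansive with pre-expansivity constant $2^{-r}$'' means precisely that one may take $m=r$: no configuration $c\di{k}\overline{0}$ satisfies $T_r(c)=0$.

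For the first item, the implication ($\Leftarrow$) is trivial, as fixing the constant only strengthens $k$-expansivity. For ($\Rightarrow$) I argue by contraposition. Suppose $F$ is not $k$-expansive with constant $2^{-r}$, so there is some $c\di{k}\overline{0}$ with $T_r(c)=0$. The crucial point is that the configuration $c'$ built in the Amplification Lemma is obtained by transplanting the values of $c$ onto a dilated sublattice; hence it has exactly as many non-zero cells as $c$, so $c'\di{k}\overline{0}$ as well. Applying the Amplification Lemma thus yields, for every $m\geq r$, a configuration $c'\di{k}\overline{0}$ with $T_m(c')=0$, while for $m<r$ the original $c$ already works since $T_r(c)=0$ forces $T_m(c)=0$ by restriction to a smaller ball. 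Consequently, for every $m$ there exists a $k$-differing configuration with trivial trace, which by the dictionary above means $F$ is not $k$-expansive. This proves the contrapositive.

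For the second item, ($\Leftarrow$) is exactly Proposition~\ref{prop:general}, item~1. For ($\Rightarrow$), assume $F$ is $k$-expansive for every $k$. By the first item, it is then $k$-expansive with the same constant $2^{-r}$ for every $k$; that is, no $c\di{k}\overline{0}$ has $T_r(c)=0$, for any $k$. Since every nonzero $c\asy\overline{0}$ differs from $\overline{0}$ in some finite number $k\geq 1$ of positions, no nonzero configuration asymptotic to $\overline{0}$ can have $T_r(c)=0$. By linearity this is exactly the statement that $T_r$ is pre-injective, so Proposition~\ref{prop:general}, item~3, yields pre-expansivity. This is the $d$-dimensional counterpart of Corollary~\ref{th:e-constant}.

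The one genuinely delicate ingredient is the \emph{uniformity} of the expansivity constant in $k$, supplied by the Amplification Lemma: without it, $k$-expansivity for all $k$ might hold only with constants $2^{-r_k}$ degrading as $k$ grows, which would obstruct the passage to pre-expansivity. Granting that uniformity, the remainder is routine manipulation of the trace characterisation.
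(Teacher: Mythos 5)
Your proof is correct and follows essentially the route the paper intends: the paper states this corollary without an explicit proof, as an immediate consequence of the Amplification Lemma combined with the trace characterisation of $k$-expansivity and pre-expansivity from Proposition~\ref{prop:linear} and Proposition~\ref{prop:general}. You also correctly supply the one detail the paper leaves implicit, namely that the configuration $c'$ constructed in the proof of the Amplification Lemma (dilation onto the sublattice $p^k\Z^n$) has exactly as many non-zero cells as $c$, so the lemma really does preserve the class $\di{k}\overline{0}$ --- without this refinement the lemma as literally stated would only rule out pre-expansivity, not $k$-expansivity for a fixed $k$.
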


With this lemma we can establish $k$-expansivity just by looking at $T_{r}$.
We will show a CA in that setting which is 1-expansive, 3-expansive and non 2-expansive,
and another which is non 1-expansive (and so non $k$-expansive for every $k$).

\subsubsection{The rule $\oplus_2$ with von Neumann neighborhood in $\Z^2$}

The rule that simply sums the state of its 5 neighbors in the von Neumann neighborhood: $(0,0), (0,1), (1,0), (0,-1), (-1,0)$ is not 2-expansive.
This can be seen through a simple picture: let us suppose that we start with the configuration $c$ that has a `1' in cell $(-2^k,2^{k-1})$ and in cell $(2^k,2^{k-1})$.
By symmetry, the vertical line $\{0\}\times \Z$ will be always null.
Thus, at iterations $t2^k$ only cells at $2^k(\Z\setminus\{0\})\times \Z$ will be activated.
Between iterations $t2^k$ and $(t+1)2^k$ these cells cannot influence the ball $B_{2^{k-1}-1}(0,0)$  
and this ball will have a null trace: $T_{2^{k-1}}(c)=0$.


In order to establish the 3-expansivity of this CA, we will start by proving some lemmas that describe the form of the traces $T_1(\sigma_z(c^1))$ of the evolution of the configuration $c^1$ at the different points of $\Z^2$.
In order to achieve this, we start by computing the partial traces $T_0(\sigma_z(c^1))|_{[0,2^{k}-1]}$ and $T_0(\sigma_z(c^1))|_{[2^{k},2^{k+1}-1]}$.
We first give a way for compute them, and afterwards we prove they are effectively the partial traces.

\begin{defi}\label{def:subst}
Given $k\ge 0$ and $z\in B_{2^k-1}(0,0)$, we recursively define $u_k(z)$ and $v_k(z)$ as follows.
Let us define $S_k=\{(0,0),(0,2^k),(2^k,0),(0,-2^k),(-2^k,0)\}$, the active cells of iteration $2^k$.
\begin{eqnarray*}
u_0(z)&=&v_0(z)=1;\\
u_k(z)&=&\left\{\begin{array}{ll}
u_{k-1}(z)v_{k-1}(z) & \textrm{if }z\in B_{2^{k-1}-1}(0,0)\\
0^{2^{k-1}}u_{k-1}(z-x) & \textrm{if }z\in B_{2^{k-1}-1}(x)\setminus B_{2^{k-1}-1}(0,0)\textrm{ and } x\in S_{k-1}\\
0^{2^k}&\textrm{otherwise}
\end{array}\right.\\
v_k(z)&=&\left\{\begin{array}{ll}
u_{k-1}(z)u_{k-1}(z) & \textrm{if }z\in B_{2^{k-1}-1}(0,0)\\
u_{k-1}(z-x)u_{k-1}(z-x) & \textrm{if }z\in B_{2^{k-1}-1}(x)\setminus B_{2^{k-1}-1}(0,0)\textrm{ and } x\in S_k\\
0^{2^k}&\textrm{otherwise}
\end{array}\right.
\end{eqnarray*}
\end{defi}

\begin{lemm}\label{teo:uv}
If $z\in B_{2^k-1}(0,0)$, then $u_k(z)$ and $v_k(z)$ represent the trace of $z$ from $0$ to $2^k-1$ and from $2^k$ to $2^{k+1}-1$ respectively.
\end{lemm}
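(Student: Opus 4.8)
The plan is to prove, by induction on $k$ and treating $u_k$ before $v_k$ at each level, that for every $z\in B_{2^k-1}(0,0)$ the word $u_k(z)$ equals the trace $\bigl(F^t(c^1)(z)\bigr)_{t=0}^{2^k-1}$ and $v_k(z)$ equals $\bigl(F^t(c^1)(z)\bigr)_{t=2^k}^{2^{k+1}-1}$, where $F$ denotes the rule $\oplus_2$. Two facts drive everything. First, finite propagation speed: since $F$ has radius $1$, $F^t(c^1)(z)=0$ whenever $\norm{z}>t$; this accounts for every $0^{2^{k-1}}$ prefix and for the vanishing in the ``otherwise'' rows. Second, the scaling relation from Equation~\ref{eq:Lucas}: for $p=2$ we have $F^{2^j}=\sum_{v\in V}\sigma_{2^j v}$, so $F^{2^j}(c^1)$ is the sum of the five spot configurations located at the points of $S_j$, and hence by linearity and shift-invariance
\[F^{t+2^j}(c^1)(z)=\sum_{x\in S_j}F^t(c^1)(z-x)\qquad(\star)\]
for all $t,j\ge0$ and all $z$, the sum being taken modulo $2$.

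The base case $k=0$ is immediate. For $u_k$, its first $2^{k-1}$ entries are the trace over $[0,2^{k-1}-1]$, which by induction equals $u_{k-1}(z)$ when $z\in B_{2^{k-1}-1}(0,0)$ and is $0^{2^{k-1}}$ otherwise (finite speed); its last $2^{k-1}$ entries are the trace over $[2^{k-1},2^k-1]$. When $z\in B_{2^{k-1}-1}(0,0)$ this last block is exactly $v_{k-1}(z)$ by induction, yielding the first clause of Definition~\ref{def:subst}. When $z\notin B_{2^{k-1}-1}(0,0)$ I rewrite the block via $(\star)$ with $j=k-1$ as $\sum_{x\in S_{k-1}}u_{k-1}(z-x)$ and use that the $\ell^1$-balls $B_{2^{k-1}-1}(x)$ around the four outer points of $S_{k-1}$ are pairwise disjoint: at most one term survives, giving either $0^{2^{k-1}}u_{k-1}(z-x)$ (second clause) or $0^{2^k}$ (third clause).

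The delicate part is $v_k$. Using $(\star)$ with $j=k$ together with the just-proved statement for $u_k$,
\[v_k(z)=\sum_{x\in S_k,\ z-x\in B_{2^k-1}(0,0)}u_k(z-x)\pmod 2,\]
and I substitute the three-case description of each $u_k(z-x)$. Near the central point of its cluster (the point $0$ in the first clause, an outer point $x_0\in S_k$ in the second) exactly one summand is of the ``$u_{k-1}\,v_{k-1}$'' type and supplies $u_{k-1}(z)$ (resp.\ $u_{k-1}(z-x_0)$) as the first half; every other surviving summand is of the ``$0^{2^{k-1}}u_{k-1}$'' type and only affects the second half. The task is to show these corrections turn the second half $v_{k-1}(\cdot)$ into $u_{k-1}(\cdot)$. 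This is a modulo-$2$ cancellation: expanding $v_{k-1}$ itself by $(\star)$ as $\sum_{x\in S_{k-1}}u_{k-1}(\cdot-x)$, the index $0$ gives the wanted $u_{k-1}(\cdot)$, while each term indexed by an outer point $p$ of $S_{k-1}$ is cancelled by the contribution of the outer point of $S_k$ lying on the same axis direction, the two indices being tied by $x+y^*=p$ and their nonvanishing conditions both reducing to $z\in B_{2^{k-1}-1}(p)$. In the second clause the same pairing applies after translating by $x_0$ (say $x_0=2^k e_1$ by a symmetry of the square); here the hypothesis $z\in B_{2^k-1}(0,0)$ is essential, since it forces $w_0=z-x_0$ onto the origin side of $x_0$, which is exactly what pushes the transverse terms $u_{k-1}(w_0\pm 2^{k-1}e_2)$ outside $B_{2^{k-1}-1}(0,0)$ so that they vanish.

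I expect this $v_k$ cancellation to be the main obstacle: one must determine, clause by clause, which indices $x\in S_k$ contribute and of which type, verify the direction-preserving pairing between the outer points of $S_k$ and of $S_{k-1}$, and check that the domain constraint $z\in B_{2^k-1}(0,0)$ kills every unpaired cross-term. The geometric inputs are elementary facts about $\ell^1$-balls (the outer balls of each $S_j$ are pairwise disjoint, and a ball around an outer point of $S_k$ meets $B_{2^k-1}(0,0)$ only on its origin-facing side), but marshalling them into a clean modulo-$2$ cancellation is where the care lies.
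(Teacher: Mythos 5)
Your proof is correct, and it shares the paper's skeleton: induction on $k$, the Lucas-lemma renormalization (your identity $(\star)$ is Equation~\ref{eq:Lucas} applied to $c^1$), and radius-one propagation together with disjointness of the $\ell^1$-balls; your treatment of $u_k$ is essentially identical to the paper's. Where you genuinely diverge is the $v_k$ step. The paper works with the \emph{actual} configuration at the intermediate time $2^k+2^{k-1}$ — in which the inner children such as $(2^{k-1},0)$ have already cancelled physically — and then argues geometrically (Figure~\ref{fig:zones}(b)) that in each half-window only one active cell can reach $z$, so the trace is directly a translated copy of $u_{k-1}$. You never compute that intermediate active set; you keep the formal superposition $v_k(z)=\sum_{x\in S_k}u_k(z-x)$ over the five spots, expand $v_{k-1}$ once more by $(\star)$, and cancel modulo $2$, pairing each outer point of $S_{k-1}$ with the outer point of $S_k$ on the same axis. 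These are two sides of the same coin (your algebraic cancellations are exactly the physical cancellations behind the paper's active set), but your organization buys something concrete: it pins down exactly where the hypothesis $z\in B_{2^k-1}(0,0)$ enters, namely to kill the transverse cross-terms $u_{k-1}\bigl(z-x_0\mp(0,2^{k-1})\bigr)$. The paper passes over this with the assertion that at time $2^k+2^{k-1}$ ``the active cells are far again,'' which is true only thanks to that hypothesis: the cells $(2^k,\pm 2^{k-1})$ \emph{are} active at that time and do come within distance $<2^{k-1}$ of points of $B_{2^{k-1}-1}(x_0)$ lying outside $B_{2^k-1}(0,0)$. The price of your route is the clause-by-clause bookkeeping you acknowledge at the end — including the ``otherwise'' clause of Definition~\ref{def:subst} for $v_k$, which you do not discuss explicitly and where all surviving terms must cancel in pairs — but each of these verifications goes through by the same pairing mechanism, so the plan is sound.
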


\begin{proof}
When $k=0$, $B_{0}(0,0)=\{(0,0)\}$, and the trace of $(0,0)$ is constant and equal to 1.

Let us suppose the lemma true for $k-1\ge 0$.
Let $z\in B_{2^k-1}(0,0)$.

Figure~\ref{fig:zones}(a) depicts the first two cases in the definition of $u_k(z)$, the last one corresponds to cells in the diagonals segments, which remains null by symmetry .
\begin{itemize}
\item{Case 1)} $z\in B_{2^{k-1}-1}(0,0)$.
In this case, the induction hypothesis says that the trace until $2^{k}-1$ is given by $u_{k-1}(z)v_{k-1}(z)$
\item{Case 2)} $z\in B_{2^k-1}(0,0)\setminus B_{2^{k-1}-1}(0,0)$.
From 0 to $2^{k-1}-1$, $z$ has not been touched yet, thus its trace until $2^{k-1}-1$ is $0^{2^{k-1}}$.
At iteration $2^{k-1}$, only the cells in $S_{k-1}$ are in state 1, thus $z$ is influenced by only one of the cells in $S_{k-1}$, say $x$, its trace from $2^{k-1}$ to $2^k-1$ is equal to the trace of the cell $z-x$ from 0 to $2^{k-1}-1$, thus by induction again, it is equal to $u_{k-1}(z-x)$.
\item{Case 3)} If $z$ belong to none of these balls, it belongs to one of the two diagonals lines that pass through $(0,0)$, and its trace is null.
\end{itemize}

Figure~\ref{fig:zones}(b) depicts the three cases in the definition of $v_k(z)$.
\begin{figure}
\begin{tabular}{cp{1cm}c}
\includegraphics[width=3cm]{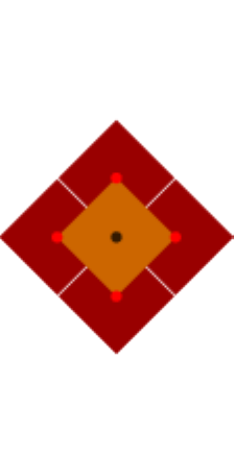} &&
\includegraphics[width=6cm]{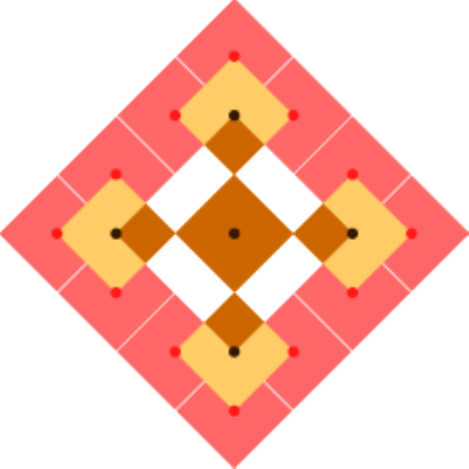}\\
(a) && (b)\\
\end{tabular}
\caption{(a) Represents the evolution from iteration 0 to $2^k$. Non-null cells at iterations 0 and  $2^{k-1}$ are marked with dark brown and red dots, respectively (cell (0,0) is active at both instants). The balls of radius $2^{k-1}$ around these cells are colored with similar lighter colors.
(b) Represents the evolution from iteration $2^k$ to $2^{k+1}$. Dark brown dots represent non-null cells at iterations $2^k$ and $2^k+2^{k-1}$, while red dots represent the cells which are non-null at iteration $2^k+ 2^{k-1}$. The balls of radius $2^{k-1}$ around these cells are colored with similar lighter colors. Faded colors represent the cells outside the ball of radius $2^k$.
}
\label{fig:zones}
\end{figure}
\begin{itemize}
\item{Case 1)} $z\in B_{2^{k-1}1}(0,0)$. At iteration $2^k$, only the cells in $S_{k}$ are in state 1. 
Therefore, from $2^k$ to $2^k+2^{k-1}-1$, $z$ will be influenced only by the cell $(0,0)$, and its trace will be equal to $u_{k-1}(z)$.
At iteration $2^k+2^{k-1}$, the active cells corresponds to \emph{red} and \emph{brown} cells in Figure~\ref{fig:zones}(b), and again only cell $(0,0)$ reaches $z$, the process is repeated.
\item{Case 2)} $z\in B_{2^k-1}(0,0)\setminus B_{2^{k-1}-1}(0,0)$. At iteration $2^k$, only the cells in $S_{k}$ are in state 1, thus, before iteration $2^k+2^{k-1}$, $z$ is touched only if it is at distance less than $2^{k-1}$ from one of the cells in $S_k$, say $x$, (orange zone in Figure~\ref{fig:zones}(b)), then its trace is equal to $u_{k-1}(z-x)$.
At iteration $2^k+2^{k-1}$, the active cells are far again, and $z$ is influenced only by $x$ again.
\item{Case 3)} If $z$ belong to none of these balls, its trace is null.
\end{itemize}
\end{proof}

This lemma proves that the traces can be obtained through the substitution $u\rightarrow uv$ and $v\rightarrow uu$.
The basic $u$ and $v$ for a given cell $z$ are obtained at iteration $2^{k+1}$ if $B_{2^k-1}(0,0)$ is the smallest ball containing $z$.
From the next lemma we can conclude the 1-expansivity of this automaton with pre-expansivity constant equal to $2^{-1}$.

\begin{lemm}\label{lem:odd}
If $i+j$ is odd and smaller than $2^k$, then the trace of the cell $z=(i,j)$, $T_{0}(\sigma_{z}(c^1))$ is not null and its first non null index is odd and smaller than $2^k$, in particular $u_k(i,j)$ is not null.
\end{lemm}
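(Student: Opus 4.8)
The plan is to prove the statement by induction on $k$, exploiting the recursive structure of Definition~\ref{def:subst} rather than counting lattice walks directly. By Lemma~\ref{teo:uv}, the word $u_k(z)$ is exactly the prefix $T_0(\sigma_z(c^1))|_{[0,2^k-1]}$ of the trace, so it suffices to prove that for every $z=(i,j)$ with $i+j$ odd and $\norm{z}<2^k$ the word $u_k(z)$ is non-null and that its first non-null index is odd; the bound ``smaller than $2^k$'' is then automatic, since $u_k(z)$ has length $2^k$. Note that $i+j$ and $\norm{z}=|i|+|j|$ have the same parity, so the hypothesis says precisely that $\norm{z}$ is odd. The base case $k=1$ forces $\norm{z}=1$, i.e. $z$ is one of the four unit neighbors of the origin, and a direct reading of Definition~\ref{def:subst} (with $x=z\in S_0$) gives $u_1(z)=0^{1}u_0(0,0)=01$, whose first non-null index is $1$.

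The key preliminary observation is a geometric covering fact guaranteeing that cells of odd weight never fall into the ``otherwise'' case of Definition~\ref{def:subst}. If $i+j$ is odd then $z$ lies off the two diagonals $\{i=j\}$ and $\{i=-j\}$ (on each of which $i+j$ is even), hence $z$ belongs to exactly one of the open quadrants $\{i>|j|\}$, $\{i<-|j|\}$, $\{j>|i|\}$, $\{j<-|i|\}$. I would then show that in the quadrant $\{i>|j|\}$ one always has $z\in B_{2^{k-1}-1}\bigl((2^{k-1},0)\bigr)$: if $i\le 2^{k-1}$ the $\ell^1$-distance is $2^{k-1}-i+|j|\le 2^{k-1}-1$ because $|j|\le i-1$, and if $i>2^{k-1}$ it equals $\norm{z}-2^{k-1}\le(2^k-1)-2^{k-1}=2^{k-1}-1$. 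The other three quadrants follow by the symmetry of the rule under the lattice symmetries sending axes to axes. Moreover the four outer balls $B_{2^{k-1}-1}(x)$, $x\in S_{k-1}\setminus\{(0,0)\}$, are pairwise disjoint, their centers being at distance $2^k$ while the sum of two radii is only $2^k-2$; hence the containing ball, and so the relevant $x\in S_{k-1}$, is unique.

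For the inductive step ($k\ge 2$) I would split along the cases of Definition~\ref{def:subst}. If $z\in B_{2^{k-1}-1}(0,0)$ (Case~1) then $u_k(z)=u_{k-1}(z)v_{k-1}(z)$, and since $\norm{z}<2^{k-1}$ and $i+j$ is odd, the induction hypothesis gives that $u_{k-1}(z)$ is non-null with odd first non-null index, which is inherited by the prefix $u_{k-1}(z)$ of $u_k(z)$. Otherwise $\norm{z}\ge 2^{k-1}$ and, by the covering fact, $z\in B_{2^{k-1}-1}(x)\setminus B_{2^{k-1}-1}(0,0)$ for a unique $x\in S_{k-1}\setminus\{(0,0)\}$ (Case~2), so $u_k(z)=0^{2^{k-1}}u_{k-1}(z-x)$. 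Here $x$ has coordinate sum $\pm 2^{k-1}$, which is even for $k\ge 2$, so $z-x$ again has odd coordinate sum, while $\norm{z-x}\le 2^{k-1}-1<2^{k-1}$; the induction hypothesis then yields that $u_{k-1}(z-x)$ is non-null with odd first non-null index $\tau$. Consequently $u_k(z)$ is non-null and its first non-null index is $2^{k-1}+\tau$, which is odd because $2^{k-1}$ is even. This closes the induction.

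The main obstacle I anticipate is the geometric bookkeeping of the second paragraph: one must verify simultaneously that odd-weight cells are covered by the outer balls (so Case~3 never applies to them), that the covering ball is unique (so Case~2 is unambiguous), and that the parity shift $2^{k-1}$ introduced by prepending $0^{2^{k-1}}$ is even exactly when needed, which is why the base case $k=1$, where $2^{k-1}=1$ is odd, must be handled separately.
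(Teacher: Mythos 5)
Your proof is correct and takes essentially the same route as the paper's: an induction on $k$ driven by the substitution of Definition~\ref{def:subst} and Lemma~\ref{teo:uv}, using that the centers in $S_{k-1}$ have even coordinate sum (so oddness of $z$ is inherited by $z-x$) and that the prepended block $0^{2^{k-1}}$ shifts the first non-null index by an even amount for $k\ge 2$. The only difference is that you spell out, with an explicit $\ell^1$ computation, the covering fact that odd cells off the diagonals always fall in exactly one outer ball, which the paper asserts without proof.
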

\begin{proof}
For $k=1$, the trace of the odd cells inside $B_1(0,0)$ from 0 to 1 is $u=01$, the result holds.
Let us assume the result true for $k\ge 1$, and let $z=(i,j)$ be an odd cell in $B_{2^{k+1}-1}(0,0)\setminus B_{2^{k}-1}(0,0)$.
Since the cell is odd, it is not over the diagonals, and it belongs to the ball of radius $2^k$ of one of the four cells of $S_k\setminus\{(0,0)\}$, say $x$.
Then, by lemma~\ref{teo:uv}, its trace from $2^k$ to $2^{k+1}-1$ is given by $u_{k}(z-x)$.
Since the cells of $S_k$ are even, $(z-x)$ is also odd, and the conclusion follows by induction.
\end{proof}

Now we will prove several properties that will be useful to prove 3-expansivity.

\begin{lemm}\label{lem:v}
Given $k>0$, if $z\in B_{2^k-1}(0,0)$ then $v_k(z)$ is a square.
\end{lemm}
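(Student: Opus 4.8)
The plan is to obtain the claim directly from Definition~\ref{def:subst}, with essentially no computation: for $k>0$ the word $v_k(z)$ is specified by a three-way case distinction, and in each branch it is \emph{visibly} a concatenation of a word with an identical copy of itself, i.e.\ a square in the sense of combinatorics on words.

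Concretely, I would go through the three cases in order. First, if $z\in B_{2^{k-1}-1}(0,0)$, the definition gives $v_k(z)=u_{k-1}(z)u_{k-1}(z)$, the square of $u_{k-1}(z)$. Second, if $z\in B_{2^{k-1}-1}(x)\setminus B_{2^{k-1}-1}(0,0)$ for some $x\in S_k$, then $v_k(z)=u_{k-1}(z-x)u_{k-1}(z-x)$, the square of $u_{k-1}(z-x)$. Third, in the remaining case $v_k(z)=0^{2^k}$; since $2^k=2\cdot 2^{k-1}$ this is $\bigl(0^{2^{k-1}}\bigr)\bigl(0^{2^{k-1}}\bigr)$, the square of $0^{2^{k-1}}$. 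As these cases are exhaustive, $v_k(z)$ is a square in every case.

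There is no genuine obstacle here: the lemma merely records that the substitution rule always produces $v$ as $uu$. The only point worth checking is a length bookkeeping argument—each of the words being doubled ($u_{k-1}(z)$, $u_{k-1}(z-x)$, or $0^{2^{k-1}}$) has length $2^{k-1}$, so $v_k(z)$ has the expected length $2^k$ and, for $k>0$, the doubled word is non-empty, making the square non-trivial.
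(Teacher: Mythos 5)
Your proof is correct and is essentially the paper's own argument: the paper simply states that the claim ``is clear from Definition~\ref{def:subst} and the fact that $k>0$,'' which is exactly the case-by-case reading of the definition you spell out (each branch of $v_k$ for $k>0$ is visibly of the form $ww$, including the null case $0^{2^k}=0^{2^{k-1}}0^{2^{k-1}}$). Your version just makes the same observation explicit.
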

\begin{proof}
It is clear from Definition~\ref{def:subst} and the fact that $k>0$.
\end{proof}

\begin{lemm}\label{lem:u}
Given $k>0$, if $z\in B_{2^k-1}(0,0)\setminus\{(0,0)\}$ and $u_k(z)\not = 0^{2^k}$, then $u_k(z)$ is not a square.
\end{lemm}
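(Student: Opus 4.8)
The plan is to argue by induction on $k$, using the recursive structure of $u_k$ from Definition~\ref{def:subst} together with Lemma~\ref{lem:v}. The guiding observation is to read ``square'' uniformly at every level: a word of length $2^k$ is a square exactly when its first $2^{k-1}$ letters equal its last $2^{k-1}$ letters. Since in the central case $u_k(z)=u_{k-1}(z)v_{k-1}(z)$ with each half of length $2^{k-1}$, deciding whether $u_k(z)$ is a square reduces to deciding whether its two halves are equal. The base case $k=1$ is immediate: the only cells $z\in B_1(0,0)\setminus\{(0,0)\}$ are the four neighbours, for each of which $z\in B_0(x)\setminus B_0(0,0)$ with $x\in S_0$, giving $u_1(z)=0^{2^0}\,u_0(z-x)=01$, which is non-null and not a square.

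For the inductive step (with $k\geq 2$) I would split according to the three cases defining $u_k(z)$. In the \emph{outer} case, where $z\in B_{2^{k-1}-1}(x)\setminus B_{2^{k-1}-1}(0,0)$ for some $x\in S_{k-1}$, we have $u_k(z)=0^{2^{k-1}}u_{k-1}(z-x)$; the hypothesis $u_k(z)\neq 0^{2^k}$ forces $u_{k-1}(z-x)\neq 0^{2^{k-1}}$, so the all-zero first half differs from the non-null second half and $u_k(z)$ is not a square. The \emph{diagonal} case gives $u_k(z)=0^{2^k}$ and is ruled out by hypothesis, so it does not occur.

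The \emph{central} case $z\in B_{2^{k-1}-1}(0,0)$ is where the two lemmas interact, and it is the step I expect to require the most care. Here $u_k(z)=u_{k-1}(z)v_{k-1}(z)$, so I must exclude $u_{k-1}(z)=v_{k-1}(z)$. If $u_{k-1}(z)=0^{2^{k-1}}$, then $v_{k-1}(z)\neq 0^{2^{k-1}}$ (otherwise $u_k(z)$ would be null), and the halves differ. Otherwise $u_{k-1}(z)\neq 0^{2^{k-1}}$, and since $z\neq(0,0)$ lies in $B_{2^{k-1}-1}(0,0)$, the induction hypothesis applies and asserts that $u_{k-1}(z)$ is \emph{not} a square, whereas Lemma~\ref{lem:v} (valid as $k-1>0$) asserts that $v_{k-1}(z)$ \emph{is} a square. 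Two words of the same length with opposite square/non-square status cannot coincide, so $u_{k-1}(z)\neq v_{k-1}(z)$ and $u_k(z)$ is not a square, completing the induction.

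The one genuinely delicate point is recognizing that Lemma~\ref{lem:v} and the inductive hypothesis supply \emph{complementary} information about the two halves in the central case — the second half is always a square while the first half never is — which is precisely what closes the argument; the remaining cases only require observing that one half is all zeros while the other is not. Fixing the uniform meaning of ``square'' so that this dichotomy is directly comparable is the crux, after which the proof is a clean three-way case split.
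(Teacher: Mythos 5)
Your proof is correct and follows essentially the same route as the paper's: induction on $k$ with base case $u_1(z)=01$, a case split according to Definition~\ref{def:subst}, the induction hypothesis showing the first half $u_{k-1}(z)$ is null or not a square, and Lemma~\ref{lem:v} showing the second half $v_{k-1}(z)$ is a square. Your write-up is in fact slightly more careful than the paper's (you make the even-length reading of ``square'' explicit and treat the null-first-half subcase separately), but the underlying argument is identical.
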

\begin{proof}
By induction on $k$.
For $k=1$, if $z$ is inside the von Neumann neighborhood of $(0,0)$, thus $u_1(z)=01$ which is not a square.
Let us suppose the assertion true for $k-1\ge 1$.
Since $k\ge 1$, from Definition~\ref{def:subst}, we recognize two cases for $u_k(z)$.
\begin{description}
\item[Case 1: $u_k(z)=0^{2^k}u_{k-1}(z-x)$, for some $x\in S_{k-1}$.]
The only way for $u_k(z)$ to be a square is to be equal to $0^{2^k}$.
\item[Case 2: $u_k(z)=u_{k-1}(z)v_{k-1}(z)$.] By the induction hypothesis $u$ is either null or not a square. From Lemma~\ref{lem:v} $v_{k-1}(z)$ is a square, then $u_k(z)$ is a square if and only if $u_k(z)=0^{2^k}$.
\end{description}
\end{proof}

\begin{lemm}\label{lem:diag}
If $|i|=|j|$ and $(i,j)\not=0$, then the trace of the cell $(i,j)$, $T_{0}(\sigma_{(i,j)}(c^1))$ is equal to $0^\omega$.
\end{lemm}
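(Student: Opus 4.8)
The plan is to set aside the substitution bookkeeping of $u_k,v_k$ and argue directly with the polynomial representation of the linear rule over $\Z_2$. I would encode a finitely supported configuration of $\Z_2^{\Z^2}$ as a Laurent polynomial in $X,Y$ over $\Z_2$, so that $c^1$ is the constant polynomial $1$ and one application of the von Neumann sum is multiplication by $P(X,Y)=1+X+X^{-1}+Y+Y^{-1}$. Since $c^1$ has finite support, $F^t(c^1)$ is finitely supported and is represented exactly by $P^t$. Then $T_0(\sigma_{(i,j)}(c^1))=0^\omega$ is precisely the assertion that the coefficient of $X^iY^j$ in $P^t$ vanishes for every $t\geq 0$.

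First I would pass to diagonal coordinates via the monomial substitution $X\mapsto AB$, $Y\mapsto AB^{-1}$. The exponent map $(i,j)\mapsto(i+j,i-j)$ has determinant $-2\neq0$, hence is injective, so coefficients are preserved and $X^iY^j$ becomes $A^{i+j}B^{i-j}$. A one-line computation gives
\[\tilde P(A,B)=P(AB,AB^{-1})=1+(A+A^{-1})(B+B^{-1}).\]
The diagonal $|i|=|j|$ splits into the case $i=j$ (so $i-j=0$, target monomial $A^{2i}B^0$ with $i\neq0$) and $i=-j$ (so $i+j=0$, target monomial $A^0B^{2i}$). Thus it suffices to show that the $B^0$-part and the $A^0$-part of $\tilde P^t$ are both the constant $1$.

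The key computation would use the freshman's dream over $\Z_2$: writing $u=A+A^{-1}$ and $w=B+B^{-1}$,
\[\tilde P^t=(1+uw)^t=\sum_{m=0}^t\binom{t}{m}u^mw^m \pmod 2.\]
Extracting the $B^0$-coefficient of $w^m=\sum_p\binom{m}{p}B^{m-2p}$ forces $m$ even, with coefficient $\binom{m}{m/2}$, so the $B^0$-part of $\tilde P^t$ equals $\sum_{m\text{ even}}\binom{t}{m}\binom{m}{m/2}(A+A^{-1})^m \bmod 2$. The crux is that the central binomial coefficient $\binom{2p}{p}$ is even for every $p\geq1$: by Kummer's theorem the $2$-adic valuation of $\binom{2p}{p}$ equals the number of carries in adding $p+p$ in base $2$, i.e. the number of $1$'s in the binary expansion of $p$, which is at least $1$ when $p\geq1$ (equivalently one checks it by Lucas' lemma). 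Hence every $m>0$ term vanishes mod $2$ and only $m=0$ survives, giving $B^0$-part $\equiv1$. By the $A\leftrightarrow B$ symmetry of $\tilde P$ the same holds for the $A^0$-part. This forces the coefficient of $A^{2i}B^0$ (resp. $A^0B^{2i}$) to be $0$ for $i\neq0$, which is exactly the claimed nullity on the two diagonals, and so the trace is $0^\omega$.

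The main obstacle is spotting the right algebraic invariant: once one observes that in diagonal coordinates the rule factors as $1+(A+A^{-1})(B+B^{-1})$, the freshman's dream collapses $\tilde P^t$ into a sum of products $u^mw^m$ and everything reduces to the parity of central binomial coefficients, which is the genuinely non-formal input. An alternative staying closer to the paper's machinery would be to prove by induction on $k$ that every nonzero cell with $|i|=|j|$ always falls into the third (``otherwise'') case of Definition~\ref{def:subst}, whence $u_k(z)=v_k(z)=0^{2^k}$ for all $k$; this works but demands the tedious geometric verification that such a cell is never within distance $2^{k-1}-1$ of an off-origin element of $S_{k-1}$, precisely the bookkeeping the polynomial argument avoids.
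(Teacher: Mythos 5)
Your proof is correct, and it takes a genuinely different route from the paper's. The paper settles this lemma in one sentence by appealing to the substitution machinery it has just built (Definition~\ref{def:subst} and Lemma~\ref{teo:uv}): a cell with $|i|=|j|$ falls, at every scale, into the ``otherwise'' clause of the substitution, so its partial traces $u_k,v_k$ are assigned $0^{2^k}$ for all $k$ --- this is precisely the induction you sketch at the end as your alternative, including the distance computation showing a diagonal cell is never within $2^{k-1}-1$ of an off-origin element of $S_{k-1}$. Your main argument bypasses that machinery entirely: you represent $F^t(c^1)$ by the Laurent polynomial $P^t$ with $P=1+X+X^{-1}+Y+Y^{-1}$ over $\Z_2$, pass to diagonal coordinates via $X\mapsto AB$, $Y\mapsto AB^{-1}$ (coefficient-preserving because $(i,j)\mapsto(i+j,i-j)$ is injective), factor the rule as $\tilde P=1+(A+A^{-1})(B+B^{-1})$, and extract the $A^0$- and $B^0$-parts of $\tilde P^t$ using the binomial expansion reduced mod $2$ (calling it the ``freshman's dream'' is a misnomer --- it is the plain binomial theorem --- but the computation is right) together with the evenness of $\binom{2p}{p}$ for $p\geq 1$, for which your Kummer/Lucas justification is sound (or simply $\binom{2p}{p}=2\binom{2p-1}{p-1}$). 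Every step checks out. As for what each approach buys: the paper's proof is essentially free in context, since the substitution structure is needed anyway for Lemmas~\ref{lem:odd} and~\ref{lem:even} and for the $3$-expansivity argument, but as written it rests on an asserted geometric fact; yours is self-contained, isolates the genuine arithmetic content (parity of central binomial coefficients), and actually proves a little more --- the $B^0$- and $A^0$-parts of $\tilde P^t$ are \emph{exactly} $1$ for every $t$, so the origin's trace is identically $1$ while both full diagonals vanish, with no induction on powers of two --- an identity that would transfer to any linear rule admitting such a product factorization in skewed coordinates.
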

\begin{proof}
Cells in the diagonals systematically falls in the boundaries of the zones given by the substitution, then their traces are systematically assigned equal to 0.
\end{proof}

\begin{lemm}\label{lem:even}
If $i+j$ is even, smaller than $2^k$ and the trace of the cell $z=(i,j)$, $T_{0}(\sigma_{z}(c^1))$ is not null, then the first non null index of the trace is even and it is smaller than $2^k$.
\end{lemm}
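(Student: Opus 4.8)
The plan is to prove, by \emph{strong} induction on $k$, the following statement about the word $u_k(z)$, which by Lemma~\ref{teo:uv} is exactly the trace of $z$ on the time interval $[0,2^k-1]$: for every $z=(i,j)\in B_{2^k-1}(0,0)$ with $i+j$ even, if $u_k(z)\neq 0^{2^k}$ then the first non-null index of $u_k(z)$ is even. The conclusion ``smaller than $2^k$'' is then automatic, since $u_k(z)$ only records the times $0,\dots,2^k-1$; and under the hypothesis $i+j<2^k$ the first non-null time of the whole trace already appears within this window (exactly as in the odd case), so it coincides with the first non-null index of $u_k(z)$. Cells excluded by the hypothesis — those with null trace, e.g. the diagonal cells of Lemma~\ref{lem:diag} — never need to be considered. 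This mirrors the induction of Lemma~\ref{lem:odd}, the crucial new difficulty being that an even cell may have $u_k(z)=0^{2^k}$ while its trace becomes non-null only inside the second half $v_k(z)$, a situation that cannot occur for odd cells where Lemma~\ref{lem:odd} guarantees $u_k$ is already non-null.

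For the base cases $k=0,1$ the only relevant even cell of $B_{2^k-1}(0,0)$ is the origin (the corners lie on the diagonals and have null trace by Lemma~\ref{lem:diag}), and $u_k(0,0)=1\cdots1$ has first non-null index $0$, which is even. For the inductive step I would split according to the two non-trivial cases of the recursive definition of $u_{k+1}$ in Definition~\ref{def:subst}. If $z$ lies in the outer annulus $B_{2^{k+1}-1}(0,0)\setminus B_{2^k-1}(0,0)$, then $u_{k+1}(z)=0^{2^k}u_k(z-x)$ for some $x\in S_k$; since every element of $S_k$ has even coordinate sum, $z-x$ is again even, and the induction hypothesis at level $k$ gives that the first non-null index of $u_k(z-x)$ is even, whence that of $u_{k+1}(z)$ equals $2^k+(\text{even})$, which is even because $2^k$ is even.

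If instead $z\in B_{2^k-1}(0,0)$, then $u_{k+1}(z)=u_k(z)v_k(z)$. When $u_k(z)\neq 0^{2^k}$ the first non-null index lies in this prefix and is even by the induction hypothesis at level $k$. When $u_k(z)=0^{2^k}$ but $u_{k+1}(z)\neq 0^{2^{k+1}}$, the first non-null index lies in $v_k(z)$; here Lemma~\ref{lem:v} together with the defining formula gives $v_k(z)=u_{k-1}(z')\,u_{k-1}(z')$ with $z'\in\{z\}\cup\{z-x:x\in S_k\}$, so that $z'$ again has even coordinate sum. The first non-null index of $u_{k-1}(z')$ is even by the induction hypothesis at level $k-1$, hence so is that of $v_k(z)$, and adding the even offset $2^k$ keeps it even. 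This is exactly the step that forces the induction to be strong (it reaches back to level $k-1$) and that uses the square structure of $v_k$.

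The main obstacle is precisely this last sub-case: controlling the parity when the trace vanishes on all of $u_k(z)$. The bookkeeping that makes everything close is the observation that both kinds of offsets produced by the substitution are even — every spatial shift belongs to $S_k=\{(0,0),(\pm2^k,0),(0,\pm2^k)\}$ and so has even coordinate sum, and every temporal shift is a power $2^k$ (even for $k\ge1$) — so that both the parity of $i+j$ and the parity of the first non-null index are preserved at each level of the recursion. Assembling the cases yields the induction step, and specialising to a $k$ with $z\in B_{2^k-1}(0,0)$ gives the lemma.
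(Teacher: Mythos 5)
Your core induction on the words $u_k$ is correct: both the annulus case (where $u_{k+1}(z)=0^{2^k}u_k(z-x)$ with $x\in S_k$ even) and the inner case, including the delicate sub-case $u_k(z)=0^{2^k}$ but $u_{k+1}(z)\neq 0^{2^{k+1}}$, where you use $v_k(z)=u_{k-1}(z')u_{k-1}(z')$ and strong induction down to level $k-1$. This restructures the paper's proof, which inducts directly on the statement about traces, so that cells of the inner ball are absorbed by the induction hypothesis and only the annulus case needs an argument; the paper therefore has no analogue of your case B, while your reformulation in terms of the words $u_k$ makes that case unavoidable.

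The genuine gap is in your reduction of the lemma to that claim, namely the sentence asserting that the first non-null time of the whole trace ``already appears within this window (exactly as in the odd case)''. That assertion is exactly the ``smaller than $2^k$'' half of the conclusion, and it is not automatic: it is equivalent to the implication ``trace of $z$ non-null $\Rightarrow u_k(z)\neq 0^{2^k}$'', i.e.\ a trace which is null up to time $2^k$ is null forever. In the odd case this \emph{is} automatic, but only because Lemma~\ref{lem:odd} proves the stronger fact that $u_k(z)$ is never null for odd $z$; for even cells that fact is false: all diagonal cells have $u_k(z)=0^{2^k}$, and so do non-diagonal even cells such as $z=(3,1)$, whose parent $(3,1)-(4,0)=(-1,1)$ lies on a diagonal. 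So to make your reduction valid you must prove that every even cell with $u_k(z)=0^{2^k}$ has a trace that is null \emph{forever} (and is therefore excluded by the hypothesis of the lemma). Via the substitution this means proving $u_k(z)=0^{2^k}\Rightarrow v_k(z)=0^{2^k}$, after which $v_{k+1}(z)=u_k(z)u_k(z)$, $u_{k+1}(z)=u_k(z)v_k(z)$, etc., kill all later blocks. That implication is immediate when $z\in B_{2^{k-1}-1}(0,0)$, but when $z\in B_{2^{k-1}-1}(x)\setminus B_{2^{k-1}-1}(0,0)$ with $x\in S_k$, the definition computes $u_k(z)$ from the parent $z-x'$ with $x'\in S_{k-1}$ while $v_k(z)$ uses the \emph{different} parent $z-x$ with $x\in S_k$, so one needs $u_{k-1}(z-x')=0\Rightarrow u_{k-1}(z-x)=0$ --- a statement that no quoted lemma provides and that your argument never addresses. (To be fair, the paper's own proof is terse at the same spot: its induction step tacitly assumes that the $S_k$-parent of an ``attained'' cell is itself attained; but its inductive statement at least carries the bound $<2^k$, whereas you declare it free.) A small additional slip: for $k=1$ the non-origin cells of $B_1(0,0)$ are $(\pm1,0)$ and $(0,\pm1)$, which are odd cells rather than ``corners on the diagonals''; this is harmless, since parity already excludes them.
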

\begin{proof}
For $k=0$, the trace of the even cell inside $B_0(0,0)$ from 0 to 0 is $u=1$, the result holds.
Let us assume the result true for $k\ge 0$, and let $z=(i,j)$ be an even cell in $B_{2^{k+1}-1}(0,0)\setminus B_{2^{k}-1}(0,0)$.
Since the cell is attained, from Lemma~\ref{lem:diag}, it is not over the diagonals, then it belongs to the ball of radius $2^k$ of one of the four cells in $S_k$, say $x$.
Then, its trace from $2^k$ to $2^{k+1}-1$ is given by $u_k(z-x)$.
Since the cells in $S_k$ are even, $z-x$ is also even, and the conclusion follows by induction.
\end{proof}

Now we are ready to prove that this automaton is 3-expansive.

\begin{lemm}\label{teo:3exp}
If $z_1$, $z_2$ and $z_3$ are three different cells and $T_{0}(\sigma_{z_1}(c^1))+T_{0}(\sigma_{z_2}(c^1))+T_{0}(\sigma_{z_3}(c^1))=0^\omega$, then there exists $z\in\{z_1,z_2,z_3\}$ such that $T_{0}(\sigma_{z}(c^1))=0^\omega$.
\end{lemm}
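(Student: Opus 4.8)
The plan is to argue by contradiction: assume all three traces are non-null and show this forces two of them to coincide, which then makes the third equal to their sum $0^\omega$, contradicting non-nullity. First I would dispose of the degenerate cases. If some $z_i$ lies on a diagonal, then by Lemma~\ref{lem:diag} its trace is $0^\omega$ and we are done; so I may assume every $z_i$ is off the diagonals and hence, by Lemmas~\ref{lem:odd} and~\ref{lem:even}, that each $T_0(\sigma_{z_i}(c^1))$ is non-null with a well-defined first non-null index $\tau_i<\infty$ whose parity equals that of the coordinate-sum of $z_i$. If two of the traces were equal, say those of $z_1$ and $z_2$, then over $\Z_2$ the hypothesis $T_0(\sigma_{z_1}(c^1))+T_0(\sigma_{z_2}(c^1))+T_0(\sigma_{z_3}(c^1))=0^\omega$ would give $T_0(\sigma_{z_3}(c^1))=0^\omega$ and again we are done. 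It therefore remains to rule out three \emph{pairwise distinct} non-null traces summing to $0^\omega$, which I assume from now on.

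Next I would extract rigid combinatorial information from the minimal birth time $\tau=\min_i\tau_i$. At time $\tau$ the cells attaining the minimum contribute $1$ and the others contribute $0$; since the sum vanishes mod $2$, an even number of cells is born at $\tau$, hence exactly two, say $z_1,z_2$ with $\tau_1=\tau_2=\tau<\tau_3$ (and, by Lemmas~\ref{lem:odd}--\ref{lem:even}, $z_1$ and $z_2$ share the same coordinate-parity). On the interval $[0,\tau_3-1]$ the cell $z_3$ is still null, so the vanishing sum forces $T_0(\sigma_{z_1}(c^1))$ and $T_0(\sigma_{z_2}(c^1))$ to agree there; evaluating at time $\tau_3$, where $z_3$ contributes $1$, shows that these two traces first differ \emph{exactly} at $\tau_3$. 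Thus I reduce the problem to the following configuration: two distinct off-diagonal cells born simultaneously at time $\tau$ whose traces first disagree at a strictly later time $\tau_3$, that being also the birth time of a third cell.

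The heart of the argument is a self-similar descent driven by the substitutive structure of Definition~\ref{def:subst} and Lemma~\ref{teo:uv}. The key structural facts are that the block of a trace over $[2^k,2^{k+1}-1]$ is a square (Lemma~\ref{lem:v}), while the block over $[0,2^k-1]$ is a non-square for every off-origin cell (Lemma~\ref{lem:u}); concretely, the peripheral case of the substitution says that a cell $z$ born in a shell $S_{k-1}$ reproduces, from its birth onward and for a full window of length $2^{k-1}$, the time-$0$ trace of the strictly closer cell $z-x$ (where $x\in S_{k-1}$ is its source), and the square property guarantees this copy is clean. I would choose $k$ so that $\tau$ sits at scale $2^{k-1}$ and replace $z_1,z_2,z_3$ by their de-shelled counterparts $z_i-x_i$, which lie strictly closer to the origin; this turns the common birth time into $0$ after rescaling and lowers the first-disagreement time from $\tau_3$ to $\tau_3-\tau$, while the linearity of the trace together with Lemma~\ref{teo:uv} transports the relation $\sum_i T_0(\sigma_{z_i}(c^1))=0^\omega$ into an analogous relation among the lower-level cells. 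The non-square property (Lemma~\ref{lem:u}) is what keeps the two descending cells genuinely off-origin at each stage, since the origin alone has all-ones (hence square) blocks. Iterating yields an infinite descent on $\tau_3$, whose only possible terminus is a pair of cells both reduced to the origin; since the origin is unique, the two cells would have to coincide, contradicting their distinctness, and this contradiction completes the proof.

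The main obstacle I anticipate is making the descent genuinely well-founded and bookkeeping-consistent. The delicate point is that the substitution uses two different shell radii, namely $S_{k-1}$ for the $u_k$-blocks and $S_k$ for the $v_k$-blocks, so the de-shelling of the three cells must be organised to respect both decompositions simultaneously and to map the triple to a bona fide smaller instance, keeping the cells pairwise distinct, off the diagonals, and summing to $0^\omega$. Handling the cells that fall on block boundaries (where the trace is declared null, as in Lemma~\ref{lem:diag}) and verifying that the reduced birth times really do strictly decrease at each step are exactly where Lemmas~\ref{lem:v} and~\ref{lem:u} must be invoked with care; assembling these local facts into one clean induction is the technical crux.
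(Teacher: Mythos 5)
Your opening reductions are fine and match the paper's setup: discarding diagonal cells, disposing of the case of two equal traces, and the parity argument showing that exactly two of the three cells share the minimal birth time $\tau_1=\tau_2=\tau<\tau_3$. Your de-shelling descent also corresponds to one of the paper's two cases. But there is a genuine gap, located exactly at what you call the ``technical crux'', and it is not mere bookkeeping. First, the relation $u_k(z)=0^{2^{k-1}}u_{k-1}(z-x)$ identifies only a \emph{finite window} of the trace of $z$ with a finite window of the trace of $z-x$; beyond time $2^k$ the trace of $z$ is governed by the block $v_k(z)$, which de-shells through the \emph{other} shell $S_k$, not $S_{k-1}$. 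Consequently the hypothesis $\sum_i T_{0}(\sigma_{z_i}(c^1))=0^\omega$ does \emph{not} transport to $\sum_i T_{0}(\sigma_{z_i-x_i}(c^1))=0^\omega$, so your infinite descent on infinite traces is not well defined. This is precisely why the paper proves a strengthened finite-block statement (if $u_k(z_1)+u_k(z_2)+u_k(z_3)=0^{2^k}$ then some $u_k(z_i)=0^{2^k}$, argued with $k$ minimal): the finite version is what descends cleanly, and the infinite statement follows from it via Lemmas~\ref{lem:odd} and~\ref{lem:even}.

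Second, and decisively: the descent inevitably reaches a configuration that de-shelling cannot touch, namely when the two synchronously born cells $z_1,z_2$ lie in the inner ball $B_{2^{k-1}-1}(0,0)$ while $z_3$ lies in the annulus $B_{2^k-1}(0,0)\setminus B_{2^{k-1}-1}(0,0)$ (in your time language, $\tau<2^{k-1}\leq\tau_3$). There the inner cells do not de-shell at scale $k$, so no ``smaller instance'' exists, and your claimed terminus --- both cells reduced to the origin, contradicting distinctness --- is not what occurs (note also that distinctness is not preserved by de-shelling: two distinct annulus cells can de-shell to the same cell, although that sub-case is harmless since it forces the third block to vanish). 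The paper resolves this stuck configuration by an argument your proposal never makes: writing $u_k(z_i)=u_{k-1}(z_i)v_{k-1}(z_i)$ for $i=1,2$ and $u_k(z_3)=0^{2^{k-1}}u_{k-1}(z_3-x)$, the vanishing sum forces $v_{k-1}(z_1)+v_{k-1}(z_2)=u_{k-1}(z_3-x)$; the left-hand side is a sum of two squares, hence a square (Lemma~\ref{lem:v}), while the right-hand side is a nonzero non-square (Lemma~\ref{lem:u}) --- contradiction. You cite Lemmas~\ref{lem:v} and~\ref{lem:u} only to ``keep the descending cells off-origin'', which is not their role; this square-versus-non-square contradiction is the heart of the paper's proof, and without it your induction does not close.
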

\begin{proof}
We will prove a stronger assertion:
\begin{quote}
If $z_1$, $z_2$ and $z_3$ are three different cells in $B_{2^k-1}(0,0)$ and $u_k(z_1)+u_k(z_2)+u_k(z_3)=0^{2^k}$, then there exists $z\in\{z_1,z_2,z_3\}$ such that $u_k(z)=0^{2^k}$.
\end{quote}
It is stronger because from Lemmas~\ref{lem:odd} and~\ref{lem:even}, if $z\in B_{2^k-1}(0,0)$ and $u_k(z)=0^{2^k}$, then $T_{0}(\sigma_{z}(c^1))=0^\omega$.

By contradiction, let $z_1$, $z_2$ and $z_3$ be three different cells in a ball $B_{2^k-1}(0,0)$ such that $u_k(z_1)+u_k(z_2)+u_k(z_3)=0^{2^k}$ and for all $i\in\{1,2,3\}$, $u_k(z_i)\not=0^{2^k}$.
Let us choose these cells such that $k$ is as small as possible.

Let be $t_{z_1}$, $t_{z_2}$ and $t_{z_3}$ the indices where the respective traces equals 1 by the first time.
It is clear that two of these numbers are equal and smaller than the third.
Let us suppose that $t_{z_1}=t_{z_2}<t_{z_3}$.
\begin{description}
\item[Case 1: $z_1$, $z_2\in B_{2^{k-1}-1}(0,0)$.] In this case, from Lemma~\ref{teo:uv}, the trace of $z_3$ from $2^{k-1}$ to $2^{k}-1$ is equal to $u_k(z_3-x)$, for some $x\in S_k$, thus $u_k(z_3)=0^{2^{k-1}}u_{k-1}(z_3-x)$.
On the other hand, $u_k(z_i)=u_{k-1}(z_i)v_{k-1}(z_i)$ for $i\in\{1,2\}$.
From Lemma~\ref{lem:v}, $v_{k-1}(z_1)$ and $v_{k-1}(z_2)$ are squares if $k>0$, which is the case, since $B_{0}(0,0)$ contains only one cell.
Thus $v_{k-1}(z_1)+v_{k-1}(z_2)$ is also a square, then it cannot be equal to $u_{k-1}(z_3-x)$ which is not a square thanks to Lemma~\ref{lem:u}.
\item[Case 2: $z_1$, $z_2$, $z_3\not\in B_{2^{k-1}-1}(0,0)$.] In this case, from Lemma~\ref{teo:uv}, for each $i\in\{1,2,3\}$ there exists some $x_i\in S_{k-1}$ such that $u_k(z_i)=0^{2^{k-1}}u_{k-1}(z_i-x_i)$.
Of course $z_i-x_i\in B_{2^{k-1}-1}(0,0)$ for each $i\in\{1,2,3\}$ and $u_{k-1}(z_1-x_1)+u_{k-1}(z_2-x_2)+u_{k-1}(z_3-x_3)=0^{2^{k-1}}$, which contradicts the minimality of $k$.
\end{description}
\end{proof}

\begin{theo}
The rule $\oplus_2$ with von Neumann neighborhood in $\Z^2$ is 1-expansive, 3-expansive but not $2k$-expansive for every $k\in\N$.
\end{theo}
\begin{proof}
  We have stated that the rule is not 2-expansive, through a counter example, and this fact can be extended to every even number $2k$, thanks to Proposition~\ref{prop:linear}.\\
  Since $B_1(0,0)$ always contains odd cells, from Lemma~\ref{lem:odd} we know that $T_1(\sigma_z(c^1)$ is never null, which proves that the CA is 1-expansive.\\
  Finally, Lemma~\ref{teo:3exp} shows that if three cells produce a null trace of radius 1, then one of them has a null trace of radius 0, this means that this cell is even, and its four neighbors are odd. 
When looking at the neighbors of these cells, their sum is also null, for each if its neighbors.
Thus at least one cell must have even neighbors with a null trace, but in this case, two of these cells are odd, and its four even neighbors cannot equal the odd neighbors of the first cell, and then the trace of radius 1 of the sum of the three cells cannot be null.
\end{proof}


\subsubsection{The rule $\oplus_2$ with triangular neighborhood}

The last rule is 1- and 3-expansive, now we present a linear rule which is
even not 1-expansive. Thanks to Proposition~\ref{prop:linear}, it
implies in particular that is not $k$-expansive, for any $k\in \N$.
It correspond to addition modulo 2 as the last one but with a triangle shaped neighborhood: $N=\{(-1,1),(1,1),(0,0),(0,-1)\}$.
We only need to prove that it is not 1-expansive with expansivity constant equal to $2^{-2}$ thanks to Corollary~\ref{coro:amplif}.

\renewcommand\stateZ[2]{}
\renewcommand\stateO[2]{\draw[fill=red] (#1,#2) rectangle +(1,1);}
\renewcommand\stateZO[2]{\draw[fill=blue] (#1,#2) rectangle +(1,1);}
\renewcommand\stateOO[2]{\draw[fill=red] (#1,#2) rectangle +(1,1);}
\renewcommand\stateZZO[2]{}
\renewcommand\stateOZO[2]{\draw[fill=red] (#1,#2) rectangle +(1,1);}
\begin{figure}
  \begin{center}
    \begin{tikzpicture}[scale=.1]
\stateZ{0}{0}\stateZ{0}{1}\stateZ{0}{2}\stateZ{0}{3}\stateZ{0}{4}\stateZ{0}{5}\stateZ{0}{6}\stateZ{0}{7}\stateZ{0}{8}\stateZ{0}{9}\stateZ{0}{10}\stateZ{0}{11}\stateZ{0}{12}\stateZ{0}{13}\stateZ{0}{14}\stateZ{0}{15}\stateZ{0}{16}\stateZ{0}{17}\stateZ{0}{18}\stateZ{0}{19}\stateZ{0}{20}\stateZ{0}{21}\stateZ{0}{22}\stateZ{0}{23}\stateZ{0}{24}\stateZ{0}{25}\stateZ{0}{26}\stateZ{0}{27}\stateZ{0}{28}\stateZ{0}{29}\stateZ{0}{30}\stateZ{0}{31}\stateZ{0}{32}\stateZ{0}{33}\stateZ{0}{34}\stateZ{0}{35}\stateZ{0}{36}\stateZ{0}{37}\stateZ{0}{38}\stateZ{0}{39}\stateZ{0}{40}\stateZ{0}{41}\stateZ{0}{42}\stateZ{0}{43}\stateZ{0}{44}\stateZ{0}{45}\stateZ{0}{46}\stateZ{0}{47}\stateZ{0}{48}\stateZ{0}{49}\stateZ{0}{50}\stateZ{0}{51}
\stateZ{1}{0}\stateOO{1}{1}\stateZ{1}{2}\stateZ{1}{3}\stateZ{1}{4}\stateZ{1}{5}\stateZ{1}{6}\stateZ{1}{7}\stateZ{1}{8}\stateZ{1}{9}\stateZ{1}{10}\stateZ{1}{11}\stateZ{1}{12}\stateZ{1}{13}\stateZ{1}{14}\stateZ{1}{15}\stateZ{1}{16}\stateZ{1}{17}\stateZ{1}{18}\stateZ{1}{19}\stateZ{1}{20}\stateZ{1}{21}\stateZ{1}{22}\stateZ{1}{23}\stateZ{1}{24}\stateZ{1}{25}\stateZ{1}{26}\stateZ{1}{27}\stateZ{1}{28}\stateZ{1}{29}\stateZ{1}{30}\stateZ{1}{31}\stateZ{1}{32}\stateZ{1}{33}\stateZ{1}{34}\stateZ{1}{35}\stateZ{1}{36}\stateZ{1}{37}\stateZ{1}{38}\stateZ{1}{39}\stateZ{1}{40}\stateZ{1}{41}\stateZ{1}{42}\stateZ{1}{43}\stateZ{1}{44}\stateZ{1}{45}\stateZ{1}{46}\stateZ{1}{47}\stateZ{1}{48}\stateZ{1}{49}\stateZ{1}{50}\stateZ{1}{51}
\stateZ{2}{0}\stateZ{2}{1}\stateOO{2}{2}\stateOO{2}{3}\stateZ{2}{4}\stateZ{2}{5}\stateZ{2}{6}\stateZ{2}{7}\stateZ{2}{8}\stateZ{2}{9}\stateZ{2}{10}\stateZ{2}{11}\stateZ{2}{12}\stateZ{2}{13}\stateZ{2}{14}\stateZ{2}{15}\stateZ{2}{16}\stateZ{2}{17}\stateZ{2}{18}\stateZ{2}{19}\stateZ{2}{20}\stateZ{2}{21}\stateZ{2}{22}\stateZ{2}{23}\stateZ{2}{24}\stateZ{2}{25}\stateZ{2}{26}\stateZ{2}{27}\stateZ{2}{28}\stateZ{2}{29}\stateZ{2}{30}\stateZ{2}{31}\stateZ{2}{32}\stateZ{2}{33}\stateZ{2}{34}\stateZ{2}{35}\stateZ{2}{36}\stateZ{2}{37}\stateZ{2}{38}\stateZ{2}{39}\stateZ{2}{40}\stateZ{2}{41}\stateZ{2}{42}\stateZ{2}{43}\stateZ{2}{44}\stateZ{2}{45}\stateZ{2}{46}\stateZ{2}{47}\stateZ{2}{48}\stateZ{2}{49}\stateZ{2}{50}\stateZ{2}{51}
\stateZ{3}{0}\stateOO{3}{1}\stateZ{3}{2}\stateZO{3}{3}\stateZ{3}{4}\stateZ{3}{5}\stateZ{3}{6}\stateZ{3}{7}\stateZ{3}{8}\stateZ{3}{9}\stateZ{3}{10}\stateZ{3}{11}\stateZ{3}{12}\stateZ{3}{13}\stateZ{3}{14}\stateZ{3}{15}\stateZ{3}{16}\stateZ{3}{17}\stateZ{3}{18}\stateZ{3}{19}\stateZ{3}{20}\stateZ{3}{21}\stateZ{3}{22}\stateZ{3}{23}\stateZ{3}{24}\stateZ{3}{25}\stateZ{3}{26}\stateZ{3}{27}\stateZ{3}{28}\stateZ{3}{29}\stateZ{3}{30}\stateZ{3}{31}\stateZ{3}{32}\stateZ{3}{33}\stateZ{3}{34}\stateZ{3}{35}\stateZ{3}{36}\stateZ{3}{37}\stateZ{3}{38}\stateZ{3}{39}\stateZ{3}{40}\stateZ{3}{41}\stateZ{3}{42}\stateZ{3}{43}\stateZ{3}{44}\stateZ{3}{45}\stateZ{3}{46}\stateZ{3}{47}\stateZ{3}{48}\stateZ{3}{49}\stateZ{3}{50}\stateZ{3}{51}
\stateZ{4}{0}\stateZ{4}{1}\stateZ{4}{2}\stateZ{4}{3}\stateZO{4}{4}\stateZO{4}{5}\stateZ{4}{6}\stateZ{4}{7}\stateZ{4}{8}\stateZ{4}{9}\stateZ{4}{10}\stateZ{4}{11}\stateZ{4}{12}\stateZ{4}{13}\stateZ{4}{14}\stateZ{4}{15}\stateZ{4}{16}\stateZ{4}{17}\stateZ{4}{18}\stateZ{4}{19}\stateZ{4}{20}\stateZ{4}{21}\stateZ{4}{22}\stateZ{4}{23}\stateZ{4}{24}\stateZ{4}{25}\stateZ{4}{26}\stateZ{4}{27}\stateZ{4}{28}\stateZ{4}{29}\stateZ{4}{30}\stateZ{4}{31}\stateZ{4}{32}\stateZ{4}{33}\stateZ{4}{34}\stateZ{4}{35}\stateZ{4}{36}\stateZ{4}{37}\stateZ{4}{38}\stateZ{4}{39}\stateZ{4}{40}\stateZ{4}{41}\stateZ{4}{42}\stateZ{4}{43}\stateZ{4}{44}\stateZ{4}{45}\stateZ{4}{46}\stateZ{4}{47}\stateZ{4}{48}\stateZ{4}{49}\stateZ{4}{50}\stateZ{4}{51}
\stateZ{5}{0}\stateZ{5}{1}\stateZ{5}{2}\stateZO{5}{3}\stateZ{5}{4}\stateZO{5}{5}\stateZ{5}{6}\stateZO{5}{7}\stateZ{5}{8}\stateZ{5}{9}\stateZ{5}{10}\stateZ{5}{11}\stateZ{5}{12}\stateZ{5}{13}\stateZ{5}{14}\stateZ{5}{15}\stateZ{5}{16}\stateZ{5}{17}\stateZ{5}{18}\stateZ{5}{19}\stateZ{5}{20}\stateZ{5}{21}\stateZ{5}{22}\stateZ{5}{23}\stateZ{5}{24}\stateZ{5}{25}\stateZ{5}{26}\stateZ{5}{27}\stateZ{5}{28}\stateZ{5}{29}\stateZ{5}{30}\stateZ{5}{31}\stateZ{5}{32}\stateZ{5}{33}\stateZ{5}{34}\stateZ{5}{35}\stateZ{5}{36}\stateZ{5}{37}\stateZ{5}{38}\stateZ{5}{39}\stateZ{5}{40}\stateZ{5}{41}\stateZ{5}{42}\stateZ{5}{43}\stateZ{5}{44}\stateZ{5}{45}\stateZ{5}{46}\stateZ{5}{47}\stateZ{5}{48}\stateZ{5}{49}\stateZ{5}{50}\stateZ{5}{51}
\stateZ{6}{0}\stateZ{6}{1}\stateZ{6}{2}\stateZ{6}{3}\stateZ{6}{4}\stateZ{6}{5}\stateZO{6}{6}\stateZO{6}{7}\stateZO{6}{8}\stateZO{6}{9}\stateZ{6}{10}\stateZ{6}{11}\stateZ{6}{12}\stateZ{6}{13}\stateZ{6}{14}\stateZ{6}{15}\stateZ{6}{16}\stateZ{6}{17}\stateZ{6}{18}\stateZ{6}{19}\stateZ{6}{20}\stateZ{6}{21}\stateZ{6}{22}\stateZ{6}{23}\stateZ{6}{24}\stateZ{6}{25}\stateZ{6}{26}\stateZ{6}{27}\stateZ{6}{28}\stateZ{6}{29}\stateZ{6}{30}\stateZ{6}{31}\stateZ{6}{32}\stateZ{6}{33}\stateZ{6}{34}\stateZ{6}{35}\stateZ{6}{36}\stateZ{6}{37}\stateZ{6}{38}\stateZ{6}{39}\stateZ{6}{40}\stateZ{6}{41}\stateZ{6}{42}\stateZ{6}{43}\stateZ{6}{44}\stateZ{6}{45}\stateZ{6}{46}\stateZ{6}{47}\stateZ{6}{48}\stateZ{6}{49}\stateZ{6}{50}\stateZ{6}{51}
\stateZ{7}{0}\stateZ{7}{1}\stateZ{7}{2}\stateZO{7}{3}\stateZ{7}{4}\stateZO{7}{5}\stateZ{7}{6}\stateZO{7}{7}\stateZ{7}{8}\stateZ{7}{9}\stateZ{7}{10}\stateZO{7}{11}\stateZ{7}{12}\stateZ{7}{13}\stateZ{7}{14}\stateZ{7}{15}\stateZ{7}{16}\stateZ{7}{17}\stateZ{7}{18}\stateZ{7}{19}\stateZ{7}{20}\stateZ{7}{21}\stateZ{7}{22}\stateZ{7}{23}\stateZ{7}{24}\stateZ{7}{25}\stateZ{7}{26}\stateZ{7}{27}\stateZ{7}{28}\stateZ{7}{29}\stateZ{7}{30}\stateZ{7}{31}\stateZ{7}{32}\stateZ{7}{33}\stateZ{7}{34}\stateZ{7}{35}\stateZ{7}{36}\stateZ{7}{37}\stateZ{7}{38}\stateZ{7}{39}\stateZ{7}{40}\stateZ{7}{41}\stateZ{7}{42}\stateZ{7}{43}\stateZ{7}{44}\stateZ{7}{45}\stateZ{7}{46}\stateZ{7}{47}\stateZ{7}{48}\stateZ{7}{49}\stateZ{7}{50}\stateZ{7}{51}
\stateZ{8}{0}\stateZ{8}{1}\stateZ{8}{2}\stateZ{8}{3}\stateZO{8}{4}\stateZO{8}{5}\stateZ{8}{6}\stateZ{8}{7}\stateZO{8}{8}\stateZO{8}{9}\stateZ{8}{10}\stateZ{8}{11}\stateZO{8}{12}\stateZO{8}{13}\stateZ{8}{14}\stateZ{8}{15}\stateZ{8}{16}\stateZ{8}{17}\stateZ{8}{18}\stateZ{8}{19}\stateZ{8}{20}\stateZ{8}{21}\stateZ{8}{22}\stateZ{8}{23}\stateZ{8}{24}\stateZ{8}{25}\stateZ{8}{26}\stateZ{8}{27}\stateZ{8}{28}\stateZ{8}{29}\stateZ{8}{30}\stateZ{8}{31}\stateZ{8}{32}\stateZ{8}{33}\stateZ{8}{34}\stateZ{8}{35}\stateZ{8}{36}\stateZ{8}{37}\stateZ{8}{38}\stateZ{8}{39}\stateZ{8}{40}\stateZ{8}{41}\stateZ{8}{42}\stateZ{8}{43}\stateZ{8}{44}\stateZ{8}{45}\stateZ{8}{46}\stateZ{8}{47}\stateZ{8}{48}\stateZ{8}{49}\stateZ{8}{50}\stateZ{8}{51}
\stateZ{9}{0}\stateZ{9}{1}\stateZ{9}{2}\stateZO{9}{3}\stateZ{9}{4}\stateZ{9}{5}\stateZ{9}{6}\stateZO{9}{7}\stateZ{9}{8}\stateOO{9}{9}\stateZ{9}{10}\stateZO{9}{11}\stateZ{9}{12}\stateZO{9}{13}\stateZ{9}{14}\stateZO{9}{15}\stateZ{9}{16}\stateOO{9}{17}\stateZ{9}{18}\stateZ{9}{19}\stateZ{9}{20}\stateZ{9}{21}\stateZ{9}{22}\stateZ{9}{23}\stateZ{9}{24}\stateZ{9}{25}\stateZ{9}{26}\stateZ{9}{27}\stateZ{9}{28}\stateZ{9}{29}\stateZ{9}{30}\stateZ{9}{31}\stateZ{9}{32}\stateZ{9}{33}\stateZ{9}{34}\stateZ{9}{35}\stateZ{9}{36}\stateZ{9}{37}\stateZ{9}{38}\stateZ{9}{39}\stateZ{9}{40}\stateZ{9}{41}\stateZ{9}{42}\stateZ{9}{43}\stateZ{9}{44}\stateZ{9}{45}\stateZ{9}{46}\stateZ{9}{47}\stateZ{9}{48}\stateZ{9}{49}\stateZ{9}{50}\stateZ{9}{51}
\stateZ{10}{0}\stateZ{10}{1}\stateZ{10}{2}\stateZ{10}{3}\stateZ{10}{4}\stateZ{10}{5}\stateZ{10}{6}\stateZ{10}{7}\stateZ{10}{8}\stateZ{10}{9}\stateOO{10}{10}\stateOO{10}{11}\stateZO{10}{12}\stateZO{10}{13}\stateZO{10}{14}\stateZO{10}{15}\stateZO{10}{16}\stateZO{10}{17}\stateOO{10}{18}\stateOO{10}{19}\stateZ{10}{20}\stateZ{10}{21}\stateZ{10}{22}\stateZ{10}{23}\stateZ{10}{24}\stateZ{10}{25}\stateZ{10}{26}\stateZ{10}{27}\stateZ{10}{28}\stateZ{10}{29}\stateZ{10}{30}\stateZ{10}{31}\stateZ{10}{32}\stateZ{10}{33}\stateZ{10}{34}\stateZ{10}{35}\stateZ{10}{36}\stateZ{10}{37}\stateZ{10}{38}\stateZ{10}{39}\stateZ{10}{40}\stateZ{10}{41}\stateZ{10}{42}\stateZ{10}{43}\stateZ{10}{44}\stateZ{10}{45}\stateZ{10}{46}\stateZ{10}{47}\stateZ{10}{48}\stateZ{10}{49}\stateZ{10}{50}\stateZ{10}{51}
\stateZ{11}{0}\stateZ{11}{1}\stateZ{11}{2}\stateZO{11}{3}\stateZ{11}{4}\stateZ{11}{5}\stateZ{11}{6}\stateZO{11}{7}\stateZ{11}{8}\stateOO{11}{9}\stateZ{11}{10}\stateZO{11}{11}\stateZ{11}{12}\stateZO{11}{13}\stateZ{11}{14}\stateZO{11}{15}\stateZ{11}{16}\stateOO{11}{17}\stateZ{11}{18}\stateZ{11}{19}\stateZ{11}{20}\stateZ{11}{21}\stateZ{11}{22}\stateZ{11}{23}\stateZ{11}{24}\stateZ{11}{25}\stateZ{11}{26}\stateZ{11}{27}\stateZ{11}{28}\stateZ{11}{29}\stateZ{11}{30}\stateZ{11}{31}\stateZ{11}{32}\stateZ{11}{33}\stateZ{11}{34}\stateZ{11}{35}\stateZ{11}{36}\stateZ{11}{37}\stateZ{11}{38}\stateZ{11}{39}\stateZ{11}{40}\stateZ{11}{41}\stateZ{11}{42}\stateZ{11}{43}\stateZ{11}{44}\stateZ{11}{45}\stateZ{11}{46}\stateZ{11}{47}\stateZ{11}{48}\stateZ{11}{49}\stateZ{11}{50}\stateZ{11}{51}
\stateZ{12}{0}\stateZ{12}{1}\stateZ{12}{2}\stateZ{12}{3}\stateZO{12}{4}\stateZO{12}{5}\stateZ{12}{6}\stateZ{12}{7}\stateZO{12}{8}\stateZO{12}{9}\stateZ{12}{10}\stateZ{12}{11}\stateZO{12}{12}\stateZO{12}{13}\stateZ{12}{14}\stateZ{12}{15}\stateZ{12}{16}\stateZ{12}{17}\stateZ{12}{18}\stateZ{12}{19}\stateZ{12}{20}\stateZ{12}{21}\stateZ{12}{22}\stateZ{12}{23}\stateZ{12}{24}\stateZ{12}{25}\stateZ{12}{26}\stateZ{12}{27}\stateZ{12}{28}\stateZ{12}{29}\stateZ{12}{30}\stateZ{12}{31}\stateZ{12}{32}\stateZ{12}{33}\stateZ{12}{34}\stateZ{12}{35}\stateZ{12}{36}\stateZ{12}{37}\stateZ{12}{38}\stateZ{12}{39}\stateZ{12}{40}\stateZ{12}{41}\stateZ{12}{42}\stateZ{12}{43}\stateZ{12}{44}\stateZ{12}{45}\stateZ{12}{46}\stateZ{12}{47}\stateZ{12}{48}\stateZ{12}{49}\stateZ{12}{50}\stateZ{12}{51}
\stateZ{13}{0}\stateZ{13}{1}\stateZ{13}{2}\stateZO{13}{3}\stateZ{13}{4}\stateZO{13}{5}\stateZ{13}{6}\stateZO{13}{7}\stateZ{13}{8}\stateZ{13}{9}\stateZ{13}{10}\stateZO{13}{11}\stateZ{13}{12}\stateZO{13}{13}\stateZ{13}{14}\stateZO{13}{15}\stateZ{13}{16}\stateZ{13}{17}\stateZ{13}{18}\stateZ{13}{19}\stateZ{13}{20}\stateZ{13}{21}\stateZ{13}{22}\stateZ{13}{23}\stateZ{13}{24}\stateZ{13}{25}\stateZ{13}{26}\stateZ{13}{27}\stateZ{13}{28}\stateZ{13}{29}\stateZ{13}{30}\stateZ{13}{31}\stateZ{13}{32}\stateZ{13}{33}\stateZ{13}{34}\stateZ{13}{35}\stateZ{13}{36}\stateZ{13}{37}\stateZ{13}{38}\stateZ{13}{39}\stateZ{13}{40}\stateZ{13}{41}\stateZ{13}{42}\stateZ{13}{43}\stateZ{13}{44}\stateZ{13}{45}\stateZ{13}{46}\stateZ{13}{47}\stateZ{13}{48}\stateZ{13}{49}\stateZ{13}{50}\stateZ{13}{51}
\stateZ{14}{0}\stateZ{14}{1}\stateZ{14}{2}\stateZ{14}{3}\stateZ{14}{4}\stateZ{14}{5}\stateZO{14}{6}\stateZO{14}{7}\stateZO{14}{8}\stateZO{14}{9}\stateZ{14}{10}\stateZ{14}{11}\stateZ{14}{12}\stateZ{14}{13}\stateZO{14}{14}\stateZO{14}{15}\stateZO{14}{16}\stateZO{14}{17}\stateZ{14}{18}\stateZ{14}{19}\stateZ{14}{20}\stateZ{14}{21}\stateZ{14}{22}\stateZ{14}{23}\stateZ{14}{24}\stateZ{14}{25}\stateZ{14}{26}\stateZ{14}{27}\stateZ{14}{28}\stateZ{14}{29}\stateZ{14}{30}\stateZ{14}{31}\stateZ{14}{32}\stateZ{14}{33}\stateZ{14}{34}\stateZ{14}{35}\stateZ{14}{36}\stateZ{14}{37}\stateZ{14}{38}\stateZ{14}{39}\stateZ{14}{40}\stateZ{14}{41}\stateZ{14}{42}\stateZ{14}{43}\stateZ{14}{44}\stateZ{14}{45}\stateZ{14}{46}\stateZ{14}{47}\stateZ{14}{48}\stateZ{14}{49}\stateZ{14}{50}\stateZ{14}{51}
\stateZ{15}{0}\stateZ{15}{1}\stateZ{15}{2}\stateZO{15}{3}\stateZ{15}{4}\stateZO{15}{5}\stateZ{15}{6}\stateZO{15}{7}\stateZ{15}{8}\stateZ{15}{9}\stateZ{15}{10}\stateZO{15}{11}\stateZ{15}{12}\stateZO{15}{13}\stateZ{15}{14}\stateZO{15}{15}\stateZ{15}{16}\stateZ{15}{17}\stateZ{15}{18}\stateZO{15}{19}\stateZ{15}{20}\stateZ{15}{21}\stateZ{15}{22}\stateZ{15}{23}\stateZ{15}{24}\stateZ{15}{25}\stateZ{15}{26}\stateZ{15}{27}\stateZ{15}{28}\stateZ{15}{29}\stateZ{15}{30}\stateZ{15}{31}\stateZ{15}{32}\stateZ{15}{33}\stateZ{15}{34}\stateZ{15}{35}\stateZ{15}{36}\stateZ{15}{37}\stateZ{15}{38}\stateZ{15}{39}\stateZ{15}{40}\stateZ{15}{41}\stateZ{15}{42}\stateZ{15}{43}\stateZ{15}{44}\stateZ{15}{45}\stateZ{15}{46}\stateZ{15}{47}\stateZ{15}{48}\stateZ{15}{49}\stateZ{15}{50}\stateZ{15}{51}
\stateZ{16}{0}\stateZ{16}{1}\stateZ{16}{2}\stateZ{16}{3}\stateZO{16}{4}\stateZO{16}{5}\stateZ{16}{6}\stateZ{16}{7}\stateZ{16}{8}\stateZ{16}{9}\stateZ{16}{10}\stateZ{16}{11}\stateZO{16}{12}\stateZO{16}{13}\stateZ{16}{14}\stateZ{16}{15}\stateZO{16}{16}\stateZO{16}{17}\stateZ{16}{18}\stateZ{16}{19}\stateZO{16}{20}\stateZO{16}{21}\stateZ{16}{22}\stateZ{16}{23}\stateZ{16}{24}\stateZ{16}{25}\stateZ{16}{26}\stateZ{16}{27}\stateZ{16}{28}\stateZ{16}{29}\stateZ{16}{30}\stateZ{16}{31}\stateZ{16}{32}\stateZ{16}{33}\stateZ{16}{34}\stateZ{16}{35}\stateZ{16}{36}\stateZ{16}{37}\stateZ{16}{38}\stateZ{16}{39}\stateZ{16}{40}\stateZ{16}{41}\stateZ{16}{42}\stateZ{16}{43}\stateZ{16}{44}\stateZ{16}{45}\stateZ{16}{46}\stateZ{16}{47}\stateZ{16}{48}\stateZ{16}{49}\stateZ{16}{50}\stateZ{16}{51}
\stateZ{17}{0}\stateOO{17}{1}\stateZ{17}{2}\stateZO{17}{3}\stateZ{17}{4}\stateZ{17}{5}\stateZ{17}{6}\stateZ{17}{7}\stateZ{17}{8}\stateZ{17}{9}\stateZ{17}{10}\stateZO{17}{11}\stateZ{17}{12}\stateZ{17}{13}\stateZ{17}{14}\stateZO{17}{15}\stateZ{17}{16}\stateOO{17}{17}\stateZ{17}{18}\stateZO{17}{19}\stateZ{17}{20}\stateZO{17}{21}\stateZ{17}{22}\stateZO{17}{23}\stateZ{17}{24}\stateZ{17}{25}\stateZ{17}{26}\stateZ{17}{27}\stateZ{17}{28}\stateZ{17}{29}\stateZ{17}{30}\stateZ{17}{31}\stateZ{17}{32}\stateOO{17}{33}\stateZ{17}{34}\stateZ{17}{35}\stateZ{17}{36}\stateZ{17}{37}\stateZ{17}{38}\stateZ{17}{39}\stateZ{17}{40}\stateZ{17}{41}\stateZ{17}{42}\stateZ{17}{43}\stateZ{17}{44}\stateZ{17}{45}\stateZ{17}{46}\stateZ{17}{47}\stateZ{17}{48}\stateZ{17}{49}\stateZ{17}{50}\stateZ{17}{51}
\stateZ{18}{0}\stateZ{18}{1}\stateOO{18}{2}\stateOO{18}{3}\stateZ{18}{4}\stateZ{18}{5}\stateZ{18}{6}\stateZ{18}{7}\stateZ{18}{8}\stateZ{18}{9}\stateZ{18}{10}\stateZ{18}{11}\stateZ{18}{12}\stateZ{18}{13}\stateZ{18}{14}\stateZ{18}{15}\stateZ{18}{16}\stateZ{18}{17}\stateOO{18}{18}\stateOO{18}{19}\stateZO{18}{20}\stateZO{18}{21}\stateZO{18}{22}\stateZO{18}{23}\stateZO{18}{24}\stateZO{18}{25}\stateZ{18}{26}\stateZ{18}{27}\stateZ{18}{28}\stateZ{18}{29}\stateZ{18}{30}\stateZ{18}{31}\stateZ{18}{32}\stateZ{18}{33}\stateOO{18}{34}\stateOO{18}{35}\stateZ{18}{36}\stateZ{18}{37}\stateZ{18}{38}\stateZ{18}{39}\stateZ{18}{40}\stateZ{18}{41}\stateZ{18}{42}\stateZ{18}{43}\stateZ{18}{44}\stateZ{18}{45}\stateZ{18}{46}\stateZ{18}{47}\stateZ{18}{48}\stateZ{18}{49}\stateZ{18}{50}\stateZ{18}{51}
\stateZ{19}{0}\stateOO{19}{1}\stateZ{19}{2}\stateZ{19}{3}\stateZ{19}{4}\stateZ{19}{5}\stateZ{19}{6}\stateZ{19}{7}\stateZ{19}{8}\stateZ{19}{9}\stateZ{19}{10}\stateZO{19}{11}\stateZ{19}{12}\stateZ{19}{13}\stateZ{19}{14}\stateZO{19}{15}\stateZ{19}{16}\stateOO{19}{17}\stateZ{19}{18}\stateZO{19}{19}\stateZ{19}{20}\stateZO{19}{21}\stateZ{19}{22}\stateZO{19}{23}\stateZ{19}{24}\stateZ{19}{25}\stateZ{19}{26}\stateZO{19}{27}\stateZ{19}{28}\stateZ{19}{29}\stateZ{19}{30}\stateZ{19}{31}\stateZ{19}{32}\stateOO{19}{33}\stateZ{19}{34}\stateZO{19}{35}\stateZ{19}{36}\stateZ{19}{37}\stateZ{19}{38}\stateZ{19}{39}\stateZ{19}{40}\stateZ{19}{41}\stateZ{19}{42}\stateZ{19}{43}\stateZ{19}{44}\stateZ{19}{45}\stateZ{19}{46}\stateZ{19}{47}\stateZ{19}{48}\stateZ{19}{49}\stateZ{19}{50}\stateZ{19}{51}
\stateZ{20}{0}\stateZ{20}{1}\stateZ{20}{2}\stateZ{20}{3}\stateZ{20}{4}\stateZ{20}{5}\stateZ{20}{6}\stateZ{20}{7}\stateZ{20}{8}\stateZ{20}{9}\stateZ{20}{10}\stateZ{20}{11}\stateZO{20}{12}\stateZO{20}{13}\stateZ{20}{14}\stateZ{20}{15}\stateZO{20}{16}\stateZO{20}{17}\stateZ{20}{18}\stateZ{20}{19}\stateZO{20}{20}\stateZO{20}{21}\stateZ{20}{22}\stateZ{20}{23}\stateZ{20}{24}\stateZ{20}{25}\stateZ{20}{26}\stateZ{20}{27}\stateZO{20}{28}\stateZO{20}{29}\stateZ{20}{30}\stateZ{20}{31}\stateZ{20}{32}\stateZ{20}{33}\stateZ{20}{34}\stateZ{20}{35}\stateZO{20}{36}\stateZO{20}{37}\stateZ{20}{38}\stateZ{20}{39}\stateZ{20}{40}\stateZ{20}{41}\stateZ{20}{42}\stateZ{20}{43}\stateZ{20}{44}\stateZ{20}{45}\stateZ{20}{46}\stateZ{20}{47}\stateZ{20}{48}\stateZ{20}{49}\stateZ{20}{50}\stateZ{20}{51}
\stateZ{21}{0}\stateZ{21}{1}\stateZ{21}{2}\stateZ{21}{3}\stateZ{21}{4}\stateZ{21}{5}\stateZ{21}{6}\stateZ{21}{7}\stateZ{21}{8}\stateZ{21}{9}\stateZ{21}{10}\stateZO{21}{11}\stateZ{21}{12}\stateZO{21}{13}\stateZ{21}{14}\stateZO{21}{15}\stateZ{21}{16}\stateZ{21}{17}\stateZ{21}{18}\stateZO{21}{19}\stateZ{21}{20}\stateZO{21}{21}\stateZ{21}{22}\stateZO{21}{23}\stateZ{21}{24}\stateZ{21}{25}\stateZ{21}{26}\stateZO{21}{27}\stateZ{21}{28}\stateZO{21}{29}\stateZ{21}{30}\stateZO{21}{31}\stateZ{21}{32}\stateZ{21}{33}\stateZ{21}{34}\stateZO{21}{35}\stateZ{21}{36}\stateZO{21}{37}\stateZ{21}{38}\stateZO{21}{39}\stateZ{21}{40}\stateZ{21}{41}\stateZ{21}{42}\stateZ{21}{43}\stateZ{21}{44}\stateZ{21}{45}\stateZ{21}{46}\stateZ{21}{47}\stateZ{21}{48}\stateZ{21}{49}\stateZ{21}{50}\stateZ{21}{51}
\stateZ{22}{0}\stateZ{22}{1}\stateZ{22}{2}\stateZ{22}{3}\stateZ{22}{4}\stateZ{22}{5}\stateZ{22}{6}\stateZ{22}{7}\stateZ{22}{8}\stateZ{22}{9}\stateZ{22}{10}\stateZ{22}{11}\stateZ{22}{12}\stateZ{22}{13}\stateZO{22}{14}\stateZO{22}{15}\stateZO{22}{16}\stateZO{22}{17}\stateZ{22}{18}\stateZ{22}{19}\stateZ{22}{20}\stateZ{22}{21}\stateZO{22}{22}\stateZO{22}{23}\stateZO{22}{24}\stateZO{22}{25}\stateZ{22}{26}\stateZ{22}{27}\stateZ{22}{28}\stateZ{22}{29}\stateZO{22}{30}\stateZO{22}{31}\stateZO{22}{32}\stateZO{22}{33}\stateZ{22}{34}\stateZ{22}{35}\stateZ{22}{36}\stateZ{22}{37}\stateZO{22}{38}\stateZO{22}{39}\stateZO{22}{40}\stateZO{22}{41}\stateZ{22}{42}\stateZ{22}{43}\stateZ{22}{44}\stateZ{22}{45}\stateZ{22}{46}\stateZ{22}{47}\stateZ{22}{48}\stateZ{22}{49}\stateZ{22}{50}\stateZ{22}{51}
\stateZ{23}{0}\stateZ{23}{1}\stateZ{23}{2}\stateZ{23}{3}\stateZ{23}{4}\stateZ{23}{5}\stateZ{23}{6}\stateZ{23}{7}\stateZ{23}{8}\stateZ{23}{9}\stateZ{23}{10}\stateZO{23}{11}\stateZ{23}{12}\stateZO{23}{13}\stateZ{23}{14}\stateZO{23}{15}\stateZ{23}{16}\stateZ{23}{17}\stateZ{23}{18}\stateZO{23}{19}\stateZ{23}{20}\stateZO{23}{21}\stateZ{23}{22}\stateZO{23}{23}\stateZ{23}{24}\stateZ{23}{25}\stateZ{23}{26}\stateZO{23}{27}\stateZ{23}{28}\stateZO{23}{29}\stateZ{23}{30}\stateZO{23}{31}\stateZ{23}{32}\stateZ{23}{33}\stateZ{23}{34}\stateZO{23}{35}\stateZ{23}{36}\stateZO{23}{37}\stateZ{23}{38}\stateZO{23}{39}\stateZ{23}{40}\stateZ{23}{41}\stateZ{23}{42}\stateZO{23}{43}\stateZ{23}{44}\stateZ{23}{45}\stateZ{23}{46}\stateZ{23}{47}\stateZ{23}{48}\stateZ{23}{49}\stateZ{23}{50}\stateZ{23}{51}
\stateZ{24}{0}\stateZ{24}{1}\stateZ{24}{2}\stateZ{24}{3}\stateZ{24}{4}\stateZ{24}{5}\stateZ{24}{6}\stateZ{24}{7}\stateZ{24}{8}\stateZ{24}{9}\stateZ{24}{10}\stateZ{24}{11}\stateZO{24}{12}\stateZO{24}{13}\stateZ{24}{14}\stateZ{24}{15}\stateZ{24}{16}\stateZ{24}{17}\stateZ{24}{18}\stateZ{24}{19}\stateZO{24}{20}\stateZO{24}{21}\stateZ{24}{22}\stateZ{24}{23}\stateZO{24}{24}\stateZO{24}{25}\stateZ{24}{26}\stateZ{24}{27}\stateZO{24}{28}\stateZO{24}{29}\stateZ{24}{30}\stateZ{24}{31}\stateZO{24}{32}\stateZO{24}{33}\stateZ{24}{34}\stateZ{24}{35}\stateZO{24}{36}\stateZO{24}{37}\stateZ{24}{38}\stateZ{24}{39}\stateZO{24}{40}\stateZO{24}{41}\stateZ{24}{42}\stateZ{24}{43}\stateZO{24}{44}\stateZO{24}{45}\stateZ{24}{46}\stateZ{24}{47}\stateZ{24}{48}\stateZ{24}{49}\stateZ{24}{50}\stateZ{24}{51}
\stateZ{25}{0}\stateZ{25}{1}\stateZ{25}{2}\stateZ{25}{3}\stateZ{25}{4}\stateZ{25}{5}\stateZ{25}{6}\stateZ{25}{7}\stateZ{25}{8}\stateZ{25}{9}\stateZ{25}{10}\stateZO{25}{11}\stateZ{25}{12}\stateZ{25}{13}\stateZ{25}{14}\stateZ{25}{15}\stateZ{25}{16}\stateZ{25}{17}\stateZ{25}{18}\stateZO{25}{19}\stateZ{25}{20}\stateZ{25}{21}\stateZ{25}{22}\stateZO{25}{23}\stateZ{25}{24}\stateOO{25}{25}\stateZ{25}{26}\stateZO{25}{27}\stateZ{25}{28}\stateZO{25}{29}\stateZ{25}{30}\stateZO{25}{31}\stateZ{25}{32}\stateOO{25}{33}\stateZ{25}{34}\stateZO{25}{35}\stateZ{25}{36}\stateZO{25}{37}\stateZ{25}{38}\stateZO{25}{39}\stateZ{25}{40}\stateOO{25}{41}\stateZ{25}{42}\stateZO{25}{43}\stateZ{25}{44}\stateZO{25}{45}\stateZ{25}{46}\stateZO{25}{47}\stateZ{25}{48}\stateOO{25}{49}\stateZ{25}{50}\stateZ{25}{51}
\stateZ{26}{0}\stateZ{26}{1}\stateZ{26}{2}\stateZ{26}{3}\stateZ{26}{4}\stateZ{26}{5}\stateZ{26}{6}\stateZ{26}{7}\stateZ{26}{8}\stateZ{26}{9}\stateZ{26}{10}\stateZ{26}{11}\stateZ{26}{12}\stateZ{26}{13}\stateZ{26}{14}\stateZ{26}{15}\stateZ{26}{16}\stateZ{26}{17}\stateZ{26}{18}\stateZ{26}{19}\stateZ{26}{20}\stateZ{26}{21}\stateZ{26}{22}\stateZ{26}{23}\stateZ{26}{24}\stateZ{26}{25}\stateOO{26}{26}\stateOO{26}{27}\stateZO{26}{28}\stateZO{26}{29}\stateZO{26}{30}\stateZO{26}{31}\stateZO{26}{32}\stateZO{26}{33}\stateOO{26}{34}\stateOO{26}{35}\stateZO{26}{36}\stateZO{26}{37}\stateZO{26}{38}\stateZO{26}{39}\stateZO{26}{40}\stateZO{26}{41}\stateOO{26}{42}\stateOO{26}{43}\stateZO{26}{44}\stateZO{26}{45}\stateZO{26}{46}\stateZO{26}{47}\stateZO{26}{48}\stateZO{26}{49}\stateOO{26}{50}\stateOO{26}{51}
\stateZ{27}{0}\stateZ{27}{1}\stateZ{27}{2}\stateZ{27}{3}\stateZ{27}{4}\stateZ{27}{5}\stateZ{27}{6}\stateZ{27}{7}\stateZ{27}{8}\stateZ{27}{9}\stateZ{27}{10}\stateZO{27}{11}\stateZ{27}{12}\stateZ{27}{13}\stateZ{27}{14}\stateZ{27}{15}\stateZ{27}{16}\stateZ{27}{17}\stateZ{27}{18}\stateZO{27}{19}\stateZ{27}{20}\stateZ{27}{21}\stateZ{27}{22}\stateZO{27}{23}\stateZ{27}{24}\stateOO{27}{25}\stateZ{27}{26}\stateZO{27}{27}\stateZ{27}{28}\stateZO{27}{29}\stateZ{27}{30}\stateZO{27}{31}\stateZ{27}{32}\stateOO{27}{33}\stateZ{27}{34}\stateZO{27}{35}\stateZ{27}{36}\stateZO{27}{37}\stateZ{27}{38}\stateZO{27}{39}\stateZ{27}{40}\stateOO{27}{41}\stateZ{27}{42}\stateZO{27}{43}\stateZ{27}{44}\stateZO{27}{45}\stateZ{27}{46}\stateZO{27}{47}\stateZ{27}{48}\stateOO{27}{49}\stateZ{27}{50}\stateZ{27}{51}
\stateZ{28}{0}\stateZ{28}{1}\stateZ{28}{2}\stateZ{28}{3}\stateZ{28}{4}\stateZ{28}{5}\stateZ{28}{6}\stateZ{28}{7}\stateZ{28}{8}\stateZ{28}{9}\stateZ{28}{10}\stateZ{28}{11}\stateZO{28}{12}\stateZO{28}{13}\stateZ{28}{14}\stateZ{28}{15}\stateZ{28}{16}\stateZ{28}{17}\stateZ{28}{18}\stateZ{28}{19}\stateZO{28}{20}\stateZO{28}{21}\stateZ{28}{22}\stateZ{28}{23}\stateZO{28}{24}\stateZO{28}{25}\stateZ{28}{26}\stateZ{28}{27}\stateZO{28}{28}\stateZO{28}{29}\stateZ{28}{30}\stateZ{28}{31}\stateZO{28}{32}\stateZO{28}{33}\stateZ{28}{34}\stateZ{28}{35}\stateZO{28}{36}\stateZO{28}{37}\stateZ{28}{38}\stateZ{28}{39}\stateZO{28}{40}\stateZO{28}{41}\stateZ{28}{42}\stateZ{28}{43}\stateZO{28}{44}\stateZO{28}{45}\stateZ{28}{46}\stateZ{28}{47}\stateZ{28}{48}\stateZ{28}{49}\stateZ{28}{50}\stateZ{28}{51}
\stateZ{29}{0}\stateZ{29}{1}\stateZ{29}{2}\stateZ{29}{3}\stateZ{29}{4}\stateZ{29}{5}\stateZ{29}{6}\stateZ{29}{7}\stateZ{29}{8}\stateZ{29}{9}\stateZ{29}{10}\stateZO{29}{11}\stateZ{29}{12}\stateZO{29}{13}\stateZ{29}{14}\stateZO{29}{15}\stateZ{29}{16}\stateZ{29}{17}\stateZ{29}{18}\stateZO{29}{19}\stateZ{29}{20}\stateZO{29}{21}\stateZ{29}{22}\stateZO{29}{23}\stateZ{29}{24}\stateZ{29}{25}\stateZ{29}{26}\stateZO{29}{27}\stateZ{29}{28}\stateZO{29}{29}\stateZ{29}{30}\stateZO{29}{31}\stateZ{29}{32}\stateZ{29}{33}\stateZ{29}{34}\stateZO{29}{35}\stateZ{29}{36}\stateZO{29}{37}\stateZ{29}{38}\stateZO{29}{39}\stateZ{29}{40}\stateZ{29}{41}\stateZ{29}{42}\stateZO{29}{43}\stateZ{29}{44}\stateZ{29}{45}\stateZ{29}{46}\stateZ{29}{47}\stateZ{29}{48}\stateZ{29}{49}\stateZ{29}{50}\stateZ{29}{51}
\stateZ{30}{0}\stateZ{30}{1}\stateZ{30}{2}\stateZ{30}{3}\stateZ{30}{4}\stateZ{30}{5}\stateZ{30}{6}\stateZ{30}{7}\stateZ{30}{8}\stateZ{30}{9}\stateZ{30}{10}\stateZ{30}{11}\stateZ{30}{12}\stateZ{30}{13}\stateZO{30}{14}\stateZO{30}{15}\stateZO{30}{16}\stateZO{30}{17}\stateZ{30}{18}\stateZ{30}{19}\stateZ{30}{20}\stateZ{30}{21}\stateZO{30}{22}\stateZO{30}{23}\stateZO{30}{24}\stateZO{30}{25}\stateZ{30}{26}\stateZ{30}{27}\stateZ{30}{28}\stateZ{30}{29}\stateZO{30}{30}\stateZO{30}{31}\stateZO{30}{32}\stateZO{30}{33}\stateZ{30}{34}\stateZ{30}{35}\stateZ{30}{36}\stateZ{30}{37}\stateZO{30}{38}\stateZO{30}{39}\stateZO{30}{40}\stateZO{30}{41}\stateZ{30}{42}\stateZ{30}{43}\stateZ{30}{44}\stateZ{30}{45}\stateZ{30}{46}\stateZ{30}{47}\stateZ{30}{48}\stateZ{30}{49}\stateZ{30}{50}\stateZ{30}{51}
\stateZ{31}{0}\stateZ{31}{1}\stateZ{31}{2}\stateZ{31}{3}\stateZ{31}{4}\stateZ{31}{5}\stateZ{31}{6}\stateZ{31}{7}\stateZ{31}{8}\stateZ{31}{9}\stateZ{31}{10}\stateZO{31}{11}\stateZ{31}{12}\stateZO{31}{13}\stateZ{31}{14}\stateZO{31}{15}\stateZ{31}{16}\stateZ{31}{17}\stateZ{31}{18}\stateZO{31}{19}\stateZ{31}{20}\stateZO{31}{21}\stateZ{31}{22}\stateZO{31}{23}\stateZ{31}{24}\stateZ{31}{25}\stateZ{31}{26}\stateZO{31}{27}\stateZ{31}{28}\stateZO{31}{29}\stateZ{31}{30}\stateZO{31}{31}\stateZ{31}{32}\stateZ{31}{33}\stateZ{31}{34}\stateZO{31}{35}\stateZ{31}{36}\stateZO{31}{37}\stateZ{31}{38}\stateZO{31}{39}\stateZ{31}{40}\stateZ{31}{41}\stateZ{31}{42}\stateZ{31}{43}\stateZ{31}{44}\stateZ{31}{45}\stateZ{31}{46}\stateZ{31}{47}\stateZ{31}{48}\stateZ{31}{49}\stateZ{31}{50}\stateZ{31}{51}
\stateZ{32}{0}\stateZ{32}{1}\stateZ{32}{2}\stateZ{32}{3}\stateZ{32}{4}\stateZ{32}{5}\stateZ{32}{6}\stateZ{32}{7}\stateZ{32}{8}\stateZ{32}{9}\stateZ{32}{10}\stateZ{32}{11}\stateZO{32}{12}\stateZO{32}{13}\stateZ{32}{14}\stateZ{32}{15}\stateZO{32}{16}\stateZO{32}{17}\stateZ{32}{18}\stateZ{32}{19}\stateZO{32}{20}\stateZO{32}{21}\stateZ{32}{22}\stateZ{32}{23}\stateZ{32}{24}\stateZ{32}{25}\stateZ{32}{26}\stateZ{32}{27}\stateZO{32}{28}\stateZO{32}{29}\stateZ{32}{30}\stateZ{32}{31}\stateZ{32}{32}\stateZ{32}{33}\stateZ{32}{34}\stateZ{32}{35}\stateZO{32}{36}\stateZO{32}{37}\stateZ{32}{38}\stateZ{32}{39}\stateZ{32}{40}\stateZ{32}{41}\stateZ{32}{42}\stateZ{32}{43}\stateZ{32}{44}\stateZ{32}{45}\stateZ{32}{46}\stateZ{32}{47}\stateZ{32}{48}\stateZ{32}{49}\stateZ{32}{50}\stateZ{32}{51}
\stateZ{33}{0}\stateOO{33}{1}\stateZ{33}{2}\stateZ{33}{3}\stateZ{33}{4}\stateZ{33}{5}\stateZ{33}{6}\stateZ{33}{7}\stateZ{33}{8}\stateZ{33}{9}\stateZ{33}{10}\stateZO{33}{11}\stateZ{33}{12}\stateZ{33}{13}\stateZ{33}{14}\stateZO{33}{15}\stateZ{33}{16}\stateOO{33}{17}\stateZ{33}{18}\stateZO{33}{19}\stateZ{33}{20}\stateZO{33}{21}\stateZ{33}{22}\stateZO{33}{23}\stateZ{33}{24}\stateZ{33}{25}\stateZ{33}{26}\stateZO{33}{27}\stateZ{33}{28}\stateZ{33}{29}\stateZ{33}{30}\stateZ{33}{31}\stateZ{33}{32}\stateOO{33}{33}\stateZ{33}{34}\stateZO{33}{35}\stateZ{33}{36}\stateZ{33}{37}\stateZ{33}{38}\stateZ{33}{39}\stateZ{33}{40}\stateZ{33}{41}\stateZ{33}{42}\stateZ{33}{43}\stateZ{33}{44}\stateZ{33}{45}\stateZ{33}{46}\stateZ{33}{47}\stateZ{33}{48}\stateZ{33}{49}\stateZ{33}{50}\stateZ{33}{51}
\stateZ{34}{0}\stateZ{34}{1}\stateOO{34}{2}\stateOO{34}{3}\stateZ{34}{4}\stateZ{34}{5}\stateZ{34}{6}\stateZ{34}{7}\stateZ{34}{8}\stateZ{34}{9}\stateZ{34}{10}\stateZ{34}{11}\stateZ{34}{12}\stateZ{34}{13}\stateZ{34}{14}\stateZ{34}{15}\stateZ{34}{16}\stateZ{34}{17}\stateOO{34}{18}\stateOO{34}{19}\stateZO{34}{20}\stateZO{34}{21}\stateZO{34}{22}\stateZO{34}{23}\stateZO{34}{24}\stateZO{34}{25}\stateZ{34}{26}\stateZ{34}{27}\stateZ{34}{28}\stateZ{34}{29}\stateZ{34}{30}\stateZ{34}{31}\stateZ{34}{32}\stateZ{34}{33}\stateOO{34}{34}\stateOO{34}{35}\stateZ{34}{36}\stateZ{34}{37}\stateZ{34}{38}\stateZ{34}{39}\stateZ{34}{40}\stateZ{34}{41}\stateZ{34}{42}\stateZ{34}{43}\stateZ{34}{44}\stateZ{34}{45}\stateZ{34}{46}\stateZ{34}{47}\stateZ{34}{48}\stateZ{34}{49}\stateZ{34}{50}\stateZ{34}{51}
\stateZ{35}{0}\stateOO{35}{1}\stateZ{35}{2}\stateZO{35}{3}\stateZ{35}{4}\stateZ{35}{5}\stateZ{35}{6}\stateZ{35}{7}\stateZ{35}{8}\stateZ{35}{9}\stateZ{35}{10}\stateZO{35}{11}\stateZ{35}{12}\stateZ{35}{13}\stateZ{35}{14}\stateZO{35}{15}\stateZ{35}{16}\stateOO{35}{17}\stateZ{35}{18}\stateZO{35}{19}\stateZ{35}{20}\stateZO{35}{21}\stateZ{35}{22}\stateZO{35}{23}\stateZ{35}{24}\stateZ{35}{25}\stateZ{35}{26}\stateZ{35}{27}\stateZ{35}{28}\stateZ{35}{29}\stateZ{35}{30}\stateZ{35}{31}\stateZ{35}{32}\stateOO{35}{33}\stateZ{35}{34}\stateZ{35}{35}\stateZ{35}{36}\stateZ{35}{37}\stateZ{35}{38}\stateZ{35}{39}\stateZ{35}{40}\stateZ{35}{41}\stateZ{35}{42}\stateZ{35}{43}\stateZ{35}{44}\stateZ{35}{45}\stateZ{35}{46}\stateZ{35}{47}\stateZ{35}{48}\stateZ{35}{49}\stateZ{35}{50}\stateZ{35}{51}
\stateZ{36}{0}\stateZ{36}{1}\stateZ{36}{2}\stateZ{36}{3}\stateZO{36}{4}\stateZO{36}{5}\stateZ{36}{6}\stateZ{36}{7}\stateZ{36}{8}\stateZ{36}{9}\stateZ{36}{10}\stateZ{36}{11}\stateZO{36}{12}\stateZO{36}{13}\stateZ{36}{14}\stateZ{36}{15}\stateZO{36}{16}\stateZO{36}{17}\stateZ{36}{18}\stateZ{36}{19}\stateZO{36}{20}\stateZO{36}{21}\stateZ{36}{22}\stateZ{36}{23}\stateZ{36}{24}\stateZ{36}{25}\stateZ{36}{26}\stateZ{36}{27}\stateZ{36}{28}\stateZ{36}{29}\stateZ{36}{30}\stateZ{36}{31}\stateZ{36}{32}\stateZ{36}{33}\stateZ{36}{34}\stateZ{36}{35}\stateZ{36}{36}\stateZ{36}{37}\stateZ{36}{38}\stateZ{36}{39}\stateZ{36}{40}\stateZ{36}{41}\stateZ{36}{42}\stateZ{36}{43}\stateZ{36}{44}\stateZ{36}{45}\stateZ{36}{46}\stateZ{36}{47}\stateZ{36}{48}\stateZ{36}{49}\stateZ{36}{50}\stateZ{36}{51}
\stateZ{37}{0}\stateZ{37}{1}\stateZ{37}{2}\stateZO{37}{3}\stateZ{37}{4}\stateZO{37}{5}\stateZ{37}{6}\stateZO{37}{7}\stateZ{37}{8}\stateZ{37}{9}\stateZ{37}{10}\stateZO{37}{11}\stateZ{37}{12}\stateZO{37}{13}\stateZ{37}{14}\stateZO{37}{15}\stateZ{37}{16}\stateZ{37}{17}\stateZ{37}{18}\stateZO{37}{19}\stateZ{37}{20}\stateZ{37}{21}\stateZ{37}{22}\stateZ{37}{23}\stateZ{37}{24}\stateZ{37}{25}\stateZ{37}{26}\stateZ{37}{27}\stateZ{37}{28}\stateZ{37}{29}\stateZ{37}{30}\stateZ{37}{31}\stateZ{37}{32}\stateZ{37}{33}\stateZ{37}{34}\stateZ{37}{35}\stateZ{37}{36}\stateZ{37}{37}\stateZ{37}{38}\stateZ{37}{39}\stateZ{37}{40}\stateZ{37}{41}\stateZ{37}{42}\stateZ{37}{43}\stateZ{37}{44}\stateZ{37}{45}\stateZ{37}{46}\stateZ{37}{47}\stateZ{37}{48}\stateZ{37}{49}\stateZ{37}{50}\stateZ{37}{51}
\stateZ{38}{0}\stateZ{38}{1}\stateZ{38}{2}\stateZ{38}{3}\stateZ{38}{4}\stateZ{38}{5}\stateZO{38}{6}\stateZO{38}{7}\stateZO{38}{8}\stateZO{38}{9}\stateZ{38}{10}\stateZ{38}{11}\stateZ{38}{12}\stateZ{38}{13}\stateZO{38}{14}\stateZO{38}{15}\stateZO{38}{16}\stateZO{38}{17}\stateZ{38}{18}\stateZ{38}{19}\stateZ{38}{20}\stateZ{38}{21}\stateZ{38}{22}\stateZ{38}{23}\stateZ{38}{24}\stateZ{38}{25}\stateZ{38}{26}\stateZ{38}{27}\stateZ{38}{28}\stateZ{38}{29}\stateZ{38}{30}\stateZ{38}{31}\stateZ{38}{32}\stateZ{38}{33}\stateZ{38}{34}\stateZ{38}{35}\stateZ{38}{36}\stateZ{38}{37}\stateZ{38}{38}\stateZ{38}{39}\stateZ{38}{40}\stateZ{38}{41}\stateZ{38}{42}\stateZ{38}{43}\stateZ{38}{44}\stateZ{38}{45}\stateZ{38}{46}\stateZ{38}{47}\stateZ{38}{48}\stateZ{38}{49}\stateZ{38}{50}\stateZ{38}{51}
\stateZ{39}{0}\stateZ{39}{1}\stateZ{39}{2}\stateZO{39}{3}\stateZ{39}{4}\stateZO{39}{5}\stateZ{39}{6}\stateZO{39}{7}\stateZ{39}{8}\stateZ{39}{9}\stateZ{39}{10}\stateZO{39}{11}\stateZ{39}{12}\stateZO{39}{13}\stateZ{39}{14}\stateZO{39}{15}\stateZ{39}{16}\stateZ{39}{17}\stateZ{39}{18}\stateZ{39}{19}\stateZ{39}{20}\stateZ{39}{21}\stateZ{39}{22}\stateZ{39}{23}\stateZ{39}{24}\stateZ{39}{25}\stateZ{39}{26}\stateZ{39}{27}\stateZ{39}{28}\stateZ{39}{29}\stateZ{39}{30}\stateZ{39}{31}\stateZ{39}{32}\stateZ{39}{33}\stateZ{39}{34}\stateZ{39}{35}\stateZ{39}{36}\stateZ{39}{37}\stateZ{39}{38}\stateZ{39}{39}\stateZ{39}{40}\stateZ{39}{41}\stateZ{39}{42}\stateZ{39}{43}\stateZ{39}{44}\stateZ{39}{45}\stateZ{39}{46}\stateZ{39}{47}\stateZ{39}{48}\stateZ{39}{49}\stateZ{39}{50}\stateZ{39}{51}
\stateZ{40}{0}\stateZ{40}{1}\stateZ{40}{2}\stateZ{40}{3}\stateZO{40}{4}\stateZO{40}{5}\stateZ{40}{6}\stateZ{40}{7}\stateZO{40}{8}\stateZO{40}{9}\stateZ{40}{10}\stateZ{40}{11}\stateZO{40}{12}\stateZO{40}{13}\stateZ{40}{14}\stateZ{40}{15}\stateZ{40}{16}\stateZ{40}{17}\stateZ{40}{18}\stateZ{40}{19}\stateZ{40}{20}\stateZ{40}{21}\stateZ{40}{22}\stateZ{40}{23}\stateZ{40}{24}\stateZ{40}{25}\stateZ{40}{26}\stateZ{40}{27}\stateZ{40}{28}\stateZ{40}{29}\stateZ{40}{30}\stateZ{40}{31}\stateZ{40}{32}\stateZ{40}{33}\stateZ{40}{34}\stateZ{40}{35}\stateZ{40}{36}\stateZ{40}{37}\stateZ{40}{38}\stateZ{40}{39}\stateZ{40}{40}\stateZ{40}{41}\stateZ{40}{42}\stateZ{40}{43}\stateZ{40}{44}\stateZ{40}{45}\stateZ{40}{46}\stateZ{40}{47}\stateZ{40}{48}\stateZ{40}{49}\stateZ{40}{50}\stateZ{40}{51}
\stateZ{41}{0}\stateZ{41}{1}\stateZ{41}{2}\stateZO{41}{3}\stateZ{41}{4}\stateZ{41}{5}\stateZ{41}{6}\stateZO{41}{7}\stateZ{41}{8}\stateOO{41}{9}\stateZ{41}{10}\stateZO{41}{11}\stateZ{41}{12}\stateZO{41}{13}\stateZ{41}{14}\stateZO{41}{15}\stateZ{41}{16}\stateOO{41}{17}\stateZ{41}{18}\stateZ{41}{19}\stateZ{41}{20}\stateZ{41}{21}\stateZ{41}{22}\stateZ{41}{23}\stateZ{41}{24}\stateZ{41}{25}\stateZ{41}{26}\stateZ{41}{27}\stateZ{41}{28}\stateZ{41}{29}\stateZ{41}{30}\stateZ{41}{31}\stateZ{41}{32}\stateZ{41}{33}\stateZ{41}{34}\stateZ{41}{35}\stateZ{41}{36}\stateZ{41}{37}\stateZ{41}{38}\stateZ{41}{39}\stateZ{41}{40}\stateZ{41}{41}\stateZ{41}{42}\stateZ{41}{43}\stateZ{41}{44}\stateZ{41}{45}\stateZ{41}{46}\stateZ{41}{47}\stateZ{41}{48}\stateZ{41}{49}\stateZ{41}{50}\stateZ{41}{51}
\stateZ{42}{0}\stateZ{42}{1}\stateZ{42}{2}\stateZ{42}{3}\stateZ{42}{4}\stateZ{42}{5}\stateZ{42}{6}\stateZ{42}{7}\stateZ{42}{8}\stateZ{42}{9}\stateOO{42}{10}\stateOO{42}{11}\stateZO{42}{12}\stateZO{42}{13}\stateZO{42}{14}\stateZO{42}{15}\stateZO{42}{16}\stateZO{42}{17}\stateOO{42}{18}\stateOO{42}{19}\stateZ{42}{20}\stateZ{42}{21}\stateZ{42}{22}\stateZ{42}{23}\stateZ{42}{24}\stateZ{42}{25}\stateZ{42}{26}\stateZ{42}{27}\stateZ{42}{28}\stateZ{42}{29}\stateZ{42}{30}\stateZ{42}{31}\stateZ{42}{32}\stateZ{42}{33}\stateZ{42}{34}\stateZ{42}{35}\stateZ{42}{36}\stateZ{42}{37}\stateZ{42}{38}\stateZ{42}{39}\stateZ{42}{40}\stateZ{42}{41}\stateZ{42}{42}\stateZ{42}{43}\stateZ{42}{44}\stateZ{42}{45}\stateZ{42}{46}\stateZ{42}{47}\stateZ{42}{48}\stateZ{42}{49}\stateZ{42}{50}\stateZ{42}{51}
\stateZ{43}{0}\stateZ{43}{1}\stateZ{43}{2}\stateZO{43}{3}\stateZ{43}{4}\stateZ{43}{5}\stateZ{43}{6}\stateZO{43}{7}\stateZ{43}{8}\stateOO{43}{9}\stateZ{43}{10}\stateZO{43}{11}\stateZ{43}{12}\stateZO{43}{13}\stateZ{43}{14}\stateZO{43}{15}\stateZ{43}{16}\stateOO{43}{17}\stateZ{43}{18}\stateZ{43}{19}\stateZ{43}{20}\stateZ{43}{21}\stateZ{43}{22}\stateZ{43}{23}\stateZ{43}{24}\stateZ{43}{25}\stateZ{43}{26}\stateZ{43}{27}\stateZ{43}{28}\stateZ{43}{29}\stateZ{43}{30}\stateZ{43}{31}\stateZ{43}{32}\stateZ{43}{33}\stateZ{43}{34}\stateZ{43}{35}\stateZ{43}{36}\stateZ{43}{37}\stateZ{43}{38}\stateZ{43}{39}\stateZ{43}{40}\stateZ{43}{41}\stateZ{43}{42}\stateZ{43}{43}\stateZ{43}{44}\stateZ{43}{45}\stateZ{43}{46}\stateZ{43}{47}\stateZ{43}{48}\stateZ{43}{49}\stateZ{43}{50}\stateZ{43}{51}
\stateZ{44}{0}\stateZ{44}{1}\stateZ{44}{2}\stateZ{44}{3}\stateZO{44}{4}\stateZO{44}{5}\stateZ{44}{6}\stateZ{44}{7}\stateZO{44}{8}\stateZO{44}{9}\stateZ{44}{10}\stateZ{44}{11}\stateZO{44}{12}\stateZO{44}{13}\stateZ{44}{14}\stateZ{44}{15}\stateZ{44}{16}\stateZ{44}{17}\stateZ{44}{18}\stateZ{44}{19}\stateZ{44}{20}\stateZ{44}{21}\stateZ{44}{22}\stateZ{44}{23}\stateZ{44}{24}\stateZ{44}{25}\stateZ{44}{26}\stateZ{44}{27}\stateZ{44}{28}\stateZ{44}{29}\stateZ{44}{30}\stateZ{44}{31}\stateZ{44}{32}\stateZ{44}{33}\stateZ{44}{34}\stateZ{44}{35}\stateZ{44}{36}\stateZ{44}{37}\stateZ{44}{38}\stateZ{44}{39}\stateZ{44}{40}\stateZ{44}{41}\stateZ{44}{42}\stateZ{44}{43}\stateZ{44}{44}\stateZ{44}{45}\stateZ{44}{46}\stateZ{44}{47}\stateZ{44}{48}\stateZ{44}{49}\stateZ{44}{50}\stateZ{44}{51}
\stateZ{45}{0}\stateZ{45}{1}\stateZ{45}{2}\stateZO{45}{3}\stateZ{45}{4}\stateZO{45}{5}\stateZ{45}{6}\stateZO{45}{7}\stateZ{45}{8}\stateZ{45}{9}\stateZ{45}{10}\stateZO{45}{11}\stateZ{45}{12}\stateZ{45}{13}\stateZ{45}{14}\stateZ{45}{15}\stateZ{45}{16}\stateZ{45}{17}\stateZ{45}{18}\stateZ{45}{19}\stateZ{45}{20}\stateZ{45}{21}\stateZ{45}{22}\stateZ{45}{23}\stateZ{45}{24}\stateZ{45}{25}\stateZ{45}{26}\stateZ{45}{27}\stateZ{45}{28}\stateZ{45}{29}\stateZ{45}{30}\stateZ{45}{31}\stateZ{45}{32}\stateZ{45}{33}\stateZ{45}{34}\stateZ{45}{35}\stateZ{45}{36}\stateZ{45}{37}\stateZ{45}{38}\stateZ{45}{39}\stateZ{45}{40}\stateZ{45}{41}\stateZ{45}{42}\stateZ{45}{43}\stateZ{45}{44}\stateZ{45}{45}\stateZ{45}{46}\stateZ{45}{47}\stateZ{45}{48}\stateZ{45}{49}\stateZ{45}{50}\stateZ{45}{51}
\stateZ{46}{0}\stateZ{46}{1}\stateZ{46}{2}\stateZ{46}{3}\stateZ{46}{4}\stateZ{46}{5}\stateZO{46}{6}\stateZO{46}{7}\stateZO{46}{8}\stateZO{46}{9}\stateZ{46}{10}\stateZ{46}{11}\stateZ{46}{12}\stateZ{46}{13}\stateZ{46}{14}\stateZ{46}{15}\stateZ{46}{16}\stateZ{46}{17}\stateZ{46}{18}\stateZ{46}{19}\stateZ{46}{20}\stateZ{46}{21}\stateZ{46}{22}\stateZ{46}{23}\stateZ{46}{24}\stateZ{46}{25}\stateZ{46}{26}\stateZ{46}{27}\stateZ{46}{28}\stateZ{46}{29}\stateZ{46}{30}\stateZ{46}{31}\stateZ{46}{32}\stateZ{46}{33}\stateZ{46}{34}\stateZ{46}{35}\stateZ{46}{36}\stateZ{46}{37}\stateZ{46}{38}\stateZ{46}{39}\stateZ{46}{40}\stateZ{46}{41}\stateZ{46}{42}\stateZ{46}{43}\stateZ{46}{44}\stateZ{46}{45}\stateZ{46}{46}\stateZ{46}{47}\stateZ{46}{48}\stateZ{46}{49}\stateZ{46}{50}\stateZ{46}{51}
\stateZ{47}{0}\stateZ{47}{1}\stateZ{47}{2}\stateZO{47}{3}\stateZ{47}{4}\stateZO{47}{5}\stateZ{47}{6}\stateZO{47}{7}\stateZ{47}{8}\stateZ{47}{9}\stateZ{47}{10}\stateZ{47}{11}\stateZ{47}{12}\stateZ{47}{13}\stateZ{47}{14}\stateZ{47}{15}\stateZ{47}{16}\stateZ{47}{17}\stateZ{47}{18}\stateZ{47}{19}\stateZ{47}{20}\stateZ{47}{21}\stateZ{47}{22}\stateZ{47}{23}\stateZ{47}{24}\stateZ{47}{25}\stateZ{47}{26}\stateZ{47}{27}\stateZ{47}{28}\stateZ{47}{29}\stateZ{47}{30}\stateZ{47}{31}\stateZ{47}{32}\stateZ{47}{33}\stateZ{47}{34}\stateZ{47}{35}\stateZ{47}{36}\stateZ{47}{37}\stateZ{47}{38}\stateZ{47}{39}\stateZ{47}{40}\stateZ{47}{41}\stateZ{47}{42}\stateZ{47}{43}\stateZ{47}{44}\stateZ{47}{45}\stateZ{47}{46}\stateZ{47}{47}\stateZ{47}{48}\stateZ{47}{49}\stateZ{47}{50}\stateZ{47}{51}
\stateZ{48}{0}\stateZ{48}{1}\stateZ{48}{2}\stateZ{48}{3}\stateZO{48}{4}\stateZO{48}{5}\stateZ{48}{6}\stateZ{48}{7}\stateZ{48}{8}\stateZ{48}{9}\stateZ{48}{10}\stateZ{48}{11}\stateZ{48}{12}\stateZ{48}{13}\stateZ{48}{14}\stateZ{48}{15}\stateZ{48}{16}\stateZ{48}{17}\stateZ{48}{18}\stateZ{48}{19}\stateZ{48}{20}\stateZ{48}{21}\stateZ{48}{22}\stateZ{48}{23}\stateZ{48}{24}\stateZ{48}{25}\stateZ{48}{26}\stateZ{48}{27}\stateZ{48}{28}\stateZ{48}{29}\stateZ{48}{30}\stateZ{48}{31}\stateZ{48}{32}\stateZ{48}{33}\stateZ{48}{34}\stateZ{48}{35}\stateZ{48}{36}\stateZ{48}{37}\stateZ{48}{38}\stateZ{48}{39}\stateZ{48}{40}\stateZ{48}{41}\stateZ{48}{42}\stateZ{48}{43}\stateZ{48}{44}\stateZ{48}{45}\stateZ{48}{46}\stateZ{48}{47}\stateZ{48}{48}\stateZ{48}{49}\stateZ{48}{50}\stateZ{48}{51}
\stateZ{49}{0}\stateOO{49}{1}\stateZ{49}{2}\stateZO{49}{3}\stateZ{49}{4}\stateZ{49}{5}\stateZ{49}{6}\stateZ{49}{7}\stateZ{49}{8}\stateZ{49}{9}\stateZ{49}{10}\stateZ{49}{11}\stateZ{49}{12}\stateZ{49}{13}\stateZ{49}{14}\stateZ{49}{15}\stateZ{49}{16}\stateZ{49}{17}\stateZ{49}{18}\stateZ{49}{19}\stateZ{49}{20}\stateZ{49}{21}\stateZ{49}{22}\stateZ{49}{23}\stateZ{49}{24}\stateZ{49}{25}\stateZ{49}{26}\stateZ{49}{27}\stateZ{49}{28}\stateZ{49}{29}\stateZ{49}{30}\stateZ{49}{31}\stateZ{49}{32}\stateZ{49}{33}\stateZ{49}{34}\stateZ{49}{35}\stateZ{49}{36}\stateZ{49}{37}\stateZ{49}{38}\stateZ{49}{39}\stateZ{49}{40}\stateZ{49}{41}\stateZ{49}{42}\stateZ{49}{43}\stateZ{49}{44}\stateZ{49}{45}\stateZ{49}{46}\stateZ{49}{47}\stateZ{49}{48}\stateZ{49}{49}\stateZ{49}{50}\stateZ{49}{51}
\stateZ{50}{0}\stateZ{50}{1}\stateOO{50}{2}\stateOO{50}{3}\stateZ{50}{4}\stateZ{50}{5}\stateZ{50}{6}\stateZ{50}{7}\stateZ{50}{8}\stateZ{50}{9}\stateZ{50}{10}\stateZ{50}{11}\stateZ{50}{12}\stateZ{50}{13}\stateZ{50}{14}\stateZ{50}{15}\stateZ{50}{16}\stateZ{50}{17}\stateZ{50}{18}\stateZ{50}{19}\stateZ{50}{20}\stateZ{50}{21}\stateZ{50}{22}\stateZ{50}{23}\stateZ{50}{24}\stateZ{50}{25}\stateZ{50}{26}\stateZ{50}{27}\stateZ{50}{28}\stateZ{50}{29}\stateZ{50}{30}\stateZ{50}{31}\stateZ{50}{32}\stateZ{50}{33}\stateZ{50}{34}\stateZ{50}{35}\stateZ{50}{36}\stateZ{50}{37}\stateZ{50}{38}\stateZ{50}{39}\stateZ{50}{40}\stateZ{50}{41}\stateZ{50}{42}\stateZ{50}{43}\stateZ{50}{44}\stateZ{50}{45}\stateZ{50}{46}\stateZ{50}{47}\stateZ{50}{48}\stateZ{50}{49}\stateZ{50}{50}\stateZ{50}{51}
\stateZ{51}{0}\stateOO{51}{1}\stateZ{51}{2}\stateZ{51}{3}\stateZ{51}{4}\stateZ{51}{5}\stateZ{51}{6}\stateZ{51}{7}\stateZ{51}{8}\stateZ{51}{9}\stateZ{51}{10}\stateZ{51}{11}\stateZ{51}{12}\stateZ{51}{13}\stateZ{51}{14}\stateZ{51}{15}\stateZ{51}{16}\stateZ{51}{17}\stateZ{51}{18}\stateZ{51}{19}\stateZ{51}{20}\stateZ{51}{21}\stateZ{51}{22}\stateZ{51}{23}\stateZ{51}{24}\stateZ{51}{25}\stateZ{51}{26}\stateZ{51}{27}\stateZ{51}{28}\stateZ{51}{29}\stateZ{51}{30}\stateZ{51}{31}\stateZ{51}{32}\stateZ{51}{33}\stateZ{51}{34}\stateZ{51}{35}\stateZ{51}{36}\stateZ{51}{37}\stateZ{51}{38}\stateZ{51}{39}\stateZ{51}{40}\stateZ{51}{41}\stateZ{51}{42}\stateZ{51}{43}\stateZ{51}{44}\stateZ{51}{45}\stateZ{51}{46}\stateZ{51}{47}\stateZ{51}{48}\stateZ{51}{49}\stateZ{51}{50}\stateZ{51}{51}
    \end{tikzpicture}
  \end{center}
\caption{Simulation of $\oplus_2$ with a triangular neighborhood at iteration
  25 starting from an initial configuration with a single 1 inside 0s:
  cells in state 1 are in red,  cells that have been in state 1 between iteration 0 and 24 but not
  at 25 are in blue, and the others cells are not drawn.}
\label{fig:Cole}
\end{figure}
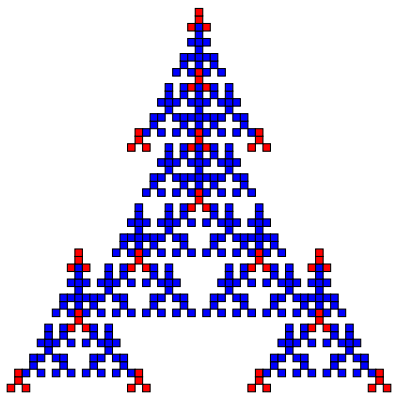

\begin{prop}
$T_{2}(\sigma_{(0,36)}(c^1))$ is null.
\end{prop}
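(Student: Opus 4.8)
The plan is to exploit the exact self-similarity of the space-time diagram of the single seed $c^1$ under this linear rule. Write $a^t_{(i,j)}=\bigl(F^t(c^1)\bigr)(i,j)$ and encode the diagram by the Laurent polynomial $\sum_{(i,j)}a^t_{(i,j)}u^iw^j=P^t$, where, in the orientation of Figure~\ref{fig:Cole} for which the seed spreads upwards, $P=uw+u^{-1}w+1+w^{-1}$ (the generating polynomial of the one-step images of $c^1$, with neighbour vectors $(1,1),(-1,1),(0,0),(0,-1)$). Equation~(\ref{eq:Lucas}) gives $P^2(u,w)=P(u^2,w^2)$, i.e. $F^2$ is $F$ with its neighbourhood dilated by $2$. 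This yields a halving recursion on the coefficients: $a^{2s}_{(i,j)}=a^s_{(i/2,j/2)}$ when $i,j$ are both even and $a^{2s}_{(i,j)}=0$ otherwise, while $a^{2s+1}$ is one application of $F$ to the dilated diagram, so that $a^{2s+1}_{(i,j)}=\sum a^s_{((i-\alpha)/2,(j-\beta)/2)}$, the sum running over the neighbours $(\alpha,\beta)$ for which $(i-\alpha,j-\beta)$ is even in both coordinates.

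From this recursion I would extract three elementary facts, each by a short induction on $t$. First, a parity obstruction: since no neighbour vector has odd first and even second coordinate, $a^t_{(i,j)}=0$ whenever $i$ is odd and $j$ is even. Second, vanishing of the central column above the origin: $a^t_{(0,j)}=0$ for every $t$ and every $j>0$; cleanly, writing $P=(1+w^{-1})+w(u+u^{-1})$ and extracting the $u^0$-part gives, modulo $2$, $\sum_k\binom{t}{k}\binom{k}{k/2}(1+w^{-1})^{t-k}w^k\equiv(1+w^{-1})^t$, because $\binom{k}{k/2}$ is even for $k\ge2$, and $(1+w^{-1})^t$ involves only non-positive powers of $w$. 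Third, a descent for the off-axis odd–odd cells: for $j$ odd, the recursion together with the second fact gives $a^{2s+1}_{(1,j)}=a^s_{(1,(j-1)/2)}$ (the companion term $a^s_{(0,(j-1)/2)}$ vanishing by the central-column fact as soon as $(j-1)/2>0$), while $a^{2s}_{(1,j)}=0$; hence $(1,j)$ has identically zero trace if and only if $(1,(j-1)/2)$ does.

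With these in hand the proof is a finite check of the (at most $25$) cells of $B_2((0,36))$, grouped by the parities of $(i,j)$ and using the symmetry $u\leftrightarrow u^{-1}$. The central-column cells $(0,34),\dots,(0,38)$ vanish by the second fact; the cells with $i$ odd and $j$ even, such as $(\pm1,34),(\pm1,36),(\pm1,38)$, vanish by the first; the cells with $i$ even reduce under one halving step to one of $(1,17),(1,18),(1,19)$, and the odd–odd cells $(\pm1,35),(\pm1,37)$ reduce by the third fact to $(1,17),(1,18)$. It thus remains to see that $(1,17),(1,18),(1,19)$ have identically zero trace, which follows by iterating the descent together with the parity fact: $18$ is even, $17\to8$, and $19\to9\to4$, each reaching an even value, and a cell $(1,m)$ with $m$ even is killed by the parity obstruction. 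The main obstacle — and the reason the particular value $36$ works — is precisely this last descent: the map $j\mapsto(j-1)/2$ strips a trailing $1$ from the binary expansion, so it reaches an even number (triggering the parity obstruction) before reaching the live value $1$ exactly when the relevant $m\in\{17,18,19\}$ is not of the form $2^a-1$; checking that all the numbers arising from the neighbourhood of $36$ descend to an even value before hitting $1$ is the crux of the computation.
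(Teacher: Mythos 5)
Your proof is correct, but it follows a genuinely different route from the paper's. The paper's proof is a single induction over dyadic time scales: assuming the trace is null up to time $2^k$, Lucas' lemma in the form of Equation~(\ref{eq:Lucas}) rewrites the configuration at time $2^k$ as a sum of four spot configurations, one at the original seed and three displaced by $2^k$ times a neighbourhood vector; a speed-of-light estimate for the triangular propagation cone shows the three displaced spots cannot reach the observation ball within the next $2^k$ steps, and the spot at the original seed is handled by the induction hypothesis, so the trace stays null up to $2^{k+1}$. No individual cell is ever examined, and the role of $36$ is only to place a power of $2$ in the window needed to make both the base case and the inductive step work. You use the same Lucas identity, but as a halving recursion on coefficients, and then argue cell by cell: the parity obstruction, the identically null central column (via $\binom{2m}{m}\equiv 0 \pmod 2$), and the descent $(1,j)\mapsto(1,(j-1)/2)$, followed by a finite check of $B_2((0,36))$. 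Your route is longer but yields strictly more information: it determines exactly which near-axis cells have identically null trace (for instance, $(1,j)$ with $j$ odd fails to be null precisely when $j=2^a-1$), and it locates the ``magic'' of $36$ in the binary expansions of the values $17,18,19$ arising from its neighbourhood, whereas the paper's induction certifies only the chosen ball and explains $36$ by its position relative to consecutive powers of $2$. One caveat on wording: the parity fact does not follow merely from ``no neighbour vector has odd first and even second coordinate,'' since sums of neighbour vectors can have that parity (e.g. $(1,1)+(0,-1)=(1,0)$); it does follow, as you indicate, from the halving recursion, because at even times all live cells have both coordinates even, and at odd times the recursion sums over single neighbour vectors of the required parity, of which there are none.
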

\begin{proof}
We will prove, by induction, that $T_{2}(\sigma_{(0,36)}(c^1))$ is null from 0 to $2^k$.
For $k=0$ to 5 is clear since cell $(0,36)$ is too far to touch $B_2(0,0)$.
At iteration $2^k$ the only active cells are  $(-2^k,-2^k+36)$, $(2^k,-2^k+36)$, $(0,2^k+36)$ and $(0,36)$.
The first three are too far to touch $B_2(0,0)$ from iterations $2^k$ to $2^{k+1}$.
By induction hypothesis $(0,36)$ does not attain $B_2(0,0)$ in $2^k$ iterations, thus $B_2(0,0)$ remains null until iteration $2^{k+1}$.
\end{proof}

\section{Open Problems}

We showed in this paper that dynamics like pre-expansivity
or $k$-expansivity can exist without necessarily implying positive
expansivity. We also showed that some combinations of the space
structure and the local rule structure forbid pre-expansivity
(Theorem~\ref{thm:abelian2D}).
However, we left many open questions concerning pre-expansivity and $k$-expansivity: 
\begin{itemize}
\item is there a pre-expansive cellular automata on $\Z^d$ when $d\geq 2$?
\item is there a $2$-expansive cellular automata on $\Z^2$? on the free group? more
  generally what is the set of integers $k$ such that a given group
  admits $k$-expansive cellular automata?
\item a lot of results are known on traces of (positively) expansive cellular automata. What can we say for pre-expansive CA? for instance are they always transitive? what are their mixing properties?
\item is the property of pre-expansivity decidable?
\end{itemize}

\section{Acknowledgment}

We would like to thank the anonymous referees who pointed out several technical errors and made valuable comments that certainly improved the present paper.

\section*{References}

\bibliography{biblio}

\begin{thebibliography}{10}
\expandafter\ifx\csname url\endcsname\relax
  \def\url#1{\texttt{#1}}\fi
\expandafter\ifx\csname urlprefix\endcsname\relax\def\urlprefix{URL }\fi
\expandafter\ifx\csname href\endcsname\relax
  \def\href#1#2{#2} \def\path#1{#1}\fi

\bibitem{hedlund}
G.~A. Hedlund, Endomorphisms and automorphisms of the shift dynamical system,
  Mathematical Systems Theory 3 (1969) 320--375.

\bibitem{kurkabook}
P.~K\r{u}rka, Topological and symbolic dynamics, Soci\'et\'e Math\'ematique de
  France, 2003.

\bibitem{ergca}
M.~Pivato, The ergodic theory of cellular automata, Int. J. General Systems
  41~(6) (2012) 583--594.

\bibitem{cagroups}
T.~Ceccherini-Silberstein, M.~Coornaert, Cellular Automata and Groups,
  Springer, 2010.

\bibitem{moore}
E.~F. Moore, Machine models of self-reproduction, in: A.~W. Burks (Ed.), Essays
  on Cellular Automata, University of Illinois Press, 1970, pp. 187--203.

\bibitem{myhill}
J.~Myhill, The {C}onverse of {M}oore's {G}arden-of-{E}den {T}heorem, in:
  Proceedings of the American Mathematical Society, Vol.~14, American
  Mathematical Society, 1963, pp. 658--686.

\bibitem{AIF_1999}
T.~G. Ceccherini-Silberstein, A.~Machi, F.~Scarabotti,
  \href{https://aif.centre-mersenne.org/item/AIF_1999__49_2_673_0}{Amenable
  groups and cellular automata}, Annales de l'Institut Fourier 49~(2) (1999)
  673--685.
\newblock \href {http://dx.doi.org/10.5802/aif.1686}
  {\path{doi:10.5802/aif.1686}}.
\newline\urlprefix\url{https://aif.centre-mersenne.org/item/AIF_1999__49_2_673_0}

\bibitem{Bartholdi_2010}
L.~Bartholdi, \href{http://dx.doi.org/10.4171/JEMS/196}{Gardens of eden and
  amenability on cellular automata}, Journal of the European Mathematical
  Society (2010) 241–248\href {http://dx.doi.org/10.4171/jems/196}
  {\path{doi:10.4171/jems/196}}.
\newline\urlprefix\url{http://dx.doi.org/10.4171/JEMS/196}

\bibitem{Bartholdi16}
L.~Bartholdi, \href{http://arxiv.org/abs/1605.09133}{Amenability of groups is
  characterized by myhill's theorem}, CoRR abs/1605.09133.
\newblock \href {http://arxiv.org/abs/1605.09133} {\path{arXiv:1605.09133}}.
\newline\urlprefix\url{http://arxiv.org/abs/1605.09133}

\bibitem{machi}
A.~Machi, F.~Mignosi, Garden of eden configurations for cellular automata on
  {Cayley} graphs of groups, SIAM journal on discrete mathematics 6~(1) (1993)
  44.

\bibitem{BoyleOpen}
M.~Boyle, Open problems in symbolic dynamics, Contemporary Mathematics 469
  (2008) 69--118.

\bibitem{blanchardmaass}
F.~Blanchard, A.~Maass, Dynamical properties of expansive one-sided cellular
  automata, Israel Journal of Mathematics 99 (1997) 149--174.

\bibitem{nasu}
M.~Nasu, Nondegenerate q-biresolving textile systems and expansive cellular
  automata of onesided full shifts, Transactions of the American Mathematical
  Society 358 (2006) 871--891.

\bibitem{shereshevsky}
M.~A. Shereshevsky, Expansiveness, entropy and polynomial growth for groups
  acting on subshifts by automorphisms, Indagationes Mathematicae 4~(2) (1993)
  203 -- 210.
\newblock \href
  {http://dx.doi.org/http://dx.doi.org/10.1016/0019-3577(93)90040-6}
  {\path{doi:http://dx.doi.org/10.1016/0019-3577(93)90040-6}}.

\bibitem{Pivato11}
M.~Pivato, Positive expansiveness versus network dimension in symbolic
  dynamical systems, Theor. Comput. Sci. 412~(30) (2011) 3838--3855.
\newblock \href {http://dx.doi.org/10.1016/j.tcs.2011.02.021}
  {\path{doi:10.1016/j.tcs.2011.02.021}}.

\bibitem{Kurka97}
P.~K{\r{u}}rka, Languages, equicontinuity and attractors in cellular automata,
  Ergodic Theory and Dynamical Systems 17 (1997) 417--433.

\bibitem{JadurYazlle}
C.~Jadur, J.~Yazlle, On the dynamics of cellular automata induced from a prefix
  code, Advances in Applied Mathematics 38~(1) (2007) 27 -- 53.
\newblock \href {http://dx.doi.org/http://dx.doi.org/10.1016/j.aam.2005.11.004}
  {\path{doi:http://dx.doi.org/10.1016/j.aam.2005.11.004}}.

\bibitem{dilena06}
P.~Di~Lena, Decidable properties for regular cellular automata, in: G.~Navarro,
  L.~Bertossi, Y.~Kohayakawa (Eds.), Fourth IFIP International Conference on
  Theoretical Computer Science- TCS 2006, Vol. 209 of IFIP International
  Federation for Information Processing, Springer US, 2006, pp. 185--196.
\newblock \href {http://dx.doi.org/10.1007/978-0-387-34735-6_17}
  {\path{doi:10.1007/978-0-387-34735-6_17}}.

\bibitem{phdLukkarila}
V.~Lukkarila, On undecidable dynamical properties of reversible one-dimensional
  cellular automata, Ph.D. thesis, Turku Centre for Computer Science (2010).

\bibitem{posexponetoone}
E.~M. Coven, M.~Keane, Every compact metric space that supports a positively
  expansive homeomorphism is finite, Vol. Volume 48 of Lecture Notes--Monograph
  Series, Institute of Mathematical Statistics, Beachwood, Ohio, USA, 2006, pp.
  304--305.
\newblock \href {http://dx.doi.org/10.1214/074921706000000310}
  {\path{doi:10.1214/074921706000000310}}.

\bibitem{Kari94}
J.~Kari, Reversibility and surjectivity problems of cellular automata, J.
  Comput. Syst. Sci. 48~(1) (1994) 149--182.

\bibitem{Amoroso72}
S.~Amoroso, Y.~N. Patt, Decision procedures for surjectivity and injectivity of
  parallel maps for tesselation structures, Journal of Computer and System
  Sciences 6 (1972) 448--464.

\bibitem{Sablik08}
M.~Sablik, Directional dynamics for cellular automata: {A} sensitivity to
  initial condition approach, Theor. Comput. Sci. 400~(1-3) (2008) 1--18.
\newblock \href {http://dx.doi.org/10.1016/j.tcs.2008.02.052}
  {\path{doi:10.1016/j.tcs.2008.02.052}}.

\bibitem{fractal}
J.~G{\"u}tschow, V.~Nesme, R.~F. Werner, The fractal structure of cellular
  automata on abelian groups, in: Proceedings of Automata 2010, 2010, pp.
  55--74.

\bibitem{abelianrandomization}
B.~Hellouin De~Menibus, V.~Salo, G.~Theyssier, Characterizing asymptotic
  randomization in abelian cellular automata, Ergodic Theory and Dynamical
  Systems (2018) 1--30\href {http://dx.doi.org/10.1017/etds.2018.75}
  {\path{doi:10.1017/etds.2018.75}}.

\bibitem{itoosatonasu}
M.~Itô, N.~Ôsato, M.~Nasu, Linear cellular automata over {Zm}, Journal of
  Computer and System Sciences 27~(1) (1983) 125 -- 140.
\newblock \href
  {http://dx.doi.org/http://dx.doi.org/10.1016/0022-0000(83)90033-8}
  {\path{doi:http://dx.doi.org/10.1016/0022-0000(83)90033-8}}.

\bibitem{Manzinimargara}
G.~Manzini, L.~Margara, A complete and efficiently computable topological
  classification of d-dimensional linear cellular automata over {Zm},
  Theoretical Computer Science 221~(1–2) (1999) 157 -- 177.
\newblock \href
  {http://dx.doi.org/http://dx.doi.org/10.1016/S0304-3975(99)00031-6}
  {\path{doi:http://dx.doi.org/10.1016/S0304-3975(99)00031-6}}.

\bibitem{kari00}
J.~Kari, Linear cellular automata with multiple state variables, in:
  H.~Reichel, S.~Tison (Eds.), STACS 2000, Springer Berlin Heidelberg, Berlin,
  Heidelberg, 2000, pp. 110--121.

\bibitem{moore98}
C.~Moore, Predicting non-linear cellular automata quickly by decomposing them
  into linear ones, Physica D 111 (1998) 27--41.

\bibitem{DelacourtPST11}
M.~Delacourt, V.~Poupet, M.~Sablik, G.~Theyssier, Directional dynamics along
  arbitrary curves in cellular automata, Theor. Comput. Sci. 412~(30) (2011)
  3800--3821.
\newblock \href {http://dx.doi.org/10.1016/j.tcs.2011.02.019}
  {\path{doi:10.1016/j.tcs.2011.02.019}}.

\bibitem{Margolus1984}
N.~Margolus, Physics-like models of computation, Physica D: Nonlinear Phenomena
  10~(1–2) (1984) 81 -- 95.
\newblock \href
  {http://dx.doi.org/http://dx.doi.org/10.1016/0167-2789(84)90252-5}
  {\path{doi:http://dx.doi.org/10.1016/0167-2789(84)90252-5}}.

\bibitem{phdDiLena}
P.~Di~Lena, Decidable and computational properties of cellular automata, Ph.D.
  thesis, University of Bologna (2007).

\bibitem{kitchens}
B.~P. Kitchens, Expansive dynamics on zero-dimensional groups, Ergodic Theory
  and Dynamical Systems 7 (1987) 249--261.
\newblock \href {http://dx.doi.org/10.1017/S0143385700003989}
  {\path{doi:10.1017/S0143385700003989}}.

\bibitem{Kari12}
J.~Kari, Universal pattern generation by cellular automata, Theor. Comput. Sci.
  429 (2012) 180--184.
\newblock \href {http://dx.doi.org/10.1016/j.tcs.2011.12.037}
  {\path{doi:10.1016/j.tcs.2011.12.037}}.

\bibitem{Pivato2009}
M.~Pivato, Ergodic theory of cellular automata, in: R.~A. Meyers (Ed.),
  Encyclopedia of Complexity and Systems Science, Springer New York, New York,
  NY, 2009, pp. 2980--3015.
\newblock \href {http://dx.doi.org/10.1007/978-0-387-30440-3_178}
  {\path{doi:10.1007/978-0-387-30440-3_178}}.

\bibitem{textbookabeliangroups}
L.~Fuchs, Abelian groups, International series of monographs in pure and
  applied mathematics, Pergamon Press, 1960.

\end{thebibliography}

\end{document}